\newcommand{\p}{\mathrm{par}}
\newcommand{\rank}{\mathrm{rank}}
\newcommand{\child}{\mathrm{child}}
\newcommand{\first}{\mathbf{first}}
\newcommand{\last}{\mathbf{last}}
\newcommand{\size}{\mathbf{size}}
\newcommand{\poly}{\operatorname{poly}}
\newcommand{\polylog}{\operatorname{polylog}}
\newcommand{\MST}{\operatorname{{\bf MST}}}
\newcommand{\cost}{\operatorname{{\bf MST}}}
\newcommand{\R}{\mathbb{R}}
\newcommand{\BE}{\mathbb{E}}
\newcommand{\pos}{\mathrm{pos}}
\newcommand{\NEXT}{\mathrm{NEXT}}
\newcommand{\cB}{\mathcal{B}}
\newcommand{\cC}{\mathcal{C}}
\newcommand{\cP}{\mathcal{P}}
\newcommand{\cQ}{\mathcal{Q}}
\newcommand{\cR}{\mathcal{R}}
\newcommand{\eps}{\varepsilon}
\newtheorem{theorem}{Theorem}
\newtheorem{lemma}{Lemma}
\newtheorem{corollary}{Corollary}
\newtheorem{observation}{Observation}
\newtheorem{proposition}{Proposition}
\newtheorem{claim}{Claim}
\theoremstyle{definition}
\newtheorem{definition}{Definition}
\newcounter{Frame}
\title{Massively Parallel Algorithms for High-Dimensional Euclidean Minimum Spanning Tree}
\author{
Rajesh Jayaram \\
Google Research\\
\texttt{rkjayaram@google.com}
\and
Vahab Mirrokni\\
Google Research\\
\texttt{mirrokni@google.com}
\and 
Shyam Narayanan\\
MIT\thanks{Work done as a student researcher at Google Research.}\\ 
\texttt{shyamsn@mit.edu}
\and 
Peilin Zhong\\
Google Research\\
\texttt{peilinz@google.com}
}
\date{\today}
\begin{document}

\maketitle

\begin{abstract}

We study the classic Euclidean Minimum Spanning Tree (MST) problem in the Massively Parallel Computation (MPC) model. Given a set $X \subset \mathbb{R}^d$ of $n$ points, the goal is to produce a spanning tree for $X$ with weight within a small factor of optimal. Euclidean MST is one of the most fundamental hierarchical geometric clustering algorithms, and with the proliferation of enormous high-dimensional data sets, such as massive transformer-based embeddings, there is now a critical demand for efficient distributed algorithms to cluster such data sets. 

In low-dimensional space, where $d = O(1)$, Andoni, Nikolov, Onak, and Yaroslavtsev [STOC '14] gave a constant round MPC algorithm that obtains a high accuracy $(1+\epsilon)$-approximate solution. However, the situation is much more challenging for high-dimensional spaces: the best-known algorithm to obtain a constant approximation requires $O(\log n)$ rounds. Recently Chen, Jayaram, Levi, and Waingarten [STOC '22] gave a $\tilde{O}(\log n)$ approximation algorithm in a constant number of rounds based on embeddings into tree metrics. However, to date, no known algorithm achieves both a constant number of rounds and approximation. 

In this paper, we make strong progress on this front by giving a constant factor approximation in $\tilde{O}(\log \log n)$ rounds of the MPC model. In contrast to tree-embedding-based approaches, which necessarily must pay $\Omega(\log n)$-distortion, our algorithm is based on a new combination of graph-based distributed MST algorithms and geometric space partitions. Additionally, although the approximate MST we return can have a large depth, we show that it can be modified to obtain a $\tilde{O}(\log \log n)$-round constant factor approximation to the Euclidean Traveling Salesman Problem (TSP) in the MPC model. Previously, only a $O(\log n)$ round was known for the problem.

\end{abstract}

\thispagestyle{empty}
\newpage
\parskip 7.2pt 
\pagenumbering{arabic}

\section{Introduction}

The minimum spanning tree (MST) problem is one of the most 
fundamental problems in combinatorial optimization, whose 
algorithmic study dates back to the work of Boruvka in 1926~\cite{boruuvka1926jistem}. Given a set of points and distances between the points, the goal is to compute a tree over the points of minimum total weight. The MST problem has received a tremendous amount of attention from the algorithm design community, leading to a large toolbox of methods for the problem~\cite{Charikar:2002,IT03,indyk2004algorithms,10.1145/1064092.1064116,har2012approximate,AndoniNikolov,andoni2016sketching,bateni,czumaj2009estimating,CzumajEFMNRS05,CzumajS04,chazelle00,chazellerubinfeld}.

In the offline setting, where the input graph $G = (V,E)$ is known in advance, an exact randomized algorithm running in time $O(|V| + |E|)$ is known. The version of the problem where the input lies in Euclidean space has also been extensively
studied, see~\cite{eppstein2000spanning} for a survey. In this setting, 
the vertices of the graph are points in $\R^d$, and the 
set of weighted edges is (implicitly given by) the set of all ${n \choose 2}$ pairs of vertices and the pairwise Euclidean distances. Despite the implicit representation, the best known runtime bound for computing an Euclidean MST exactly is $n^2$, even for low-dimensional inputs. For approximations, the best known runtime bound is $O(n^{2-2/(\lceil d/2\rceil +1) +\eps})$ for obtaining a $(1+\eps)$-approximate solution~\cite{agarwal1990euclidean},  and is $n^{1+1/c^2 }$ for an $O(c)$-approximation~\cite{Har-PeledIS13}. 

The Euclidean variant of the MST problem is particularly important in light of the tremendous success of modern \textit{embedding} models in machine learning \cite{mikolov2013efficient,van2008visualizing,he2016deep,devlin2018bert}. Such models encode a data point, such as an image, video, or text, into a vector in high-dimensional space, so that the semantic similarity between two data points is accurately represented by the Euclidean distance between the corresponding embedding vectors. 
In this setting, computing an MST is a well-known and successful technique for clustering data \cite{lai2009approximate,wang2009divide,zhong2015fast,grygorash2006minimum,bateni2017affinity}, which is a core component of many ML pipelines for large embedding datasets.

To deal with the sheer size of these modern embedding datasets, the typical approach is to implement algorithms in massively parallel computation systems such as MapReduce~\cite{dean2004mapreduce,dean2008mapreduce}, Spark~\cite{zaharia2010spark}, Hadoop~\cite{white2012hadoop}, Dryad~\cite{isard2007dryad} and others. The \emph{Massively Parallel Computation (MPC)} model~\cite{karloff2010model,goodrich2011sorting,beame2017communication,AndoniNikolov} is a computational model for these systems that balances accurate modeling with theoretical elegance. The MST problem in particular has been extensively studied in this model \cite{AndoniNikolov,karloff2010model,andoni2018parallel,lattanzi2011filtering,ahanchi2023massively,chen2022new}. For the non-Euclidean case, when the input is a graph with $n$ vertices and $m$ edges, and each machine only has $n^{\varepsilon}$ space for some constant $\varepsilon<1$, algorithms that obtain a constant approximation in $O(\log n)$ rounds and linear $O(m)$ total space are known using connected components algorithms \cite{karloff2010model,andoni2018parallel,behnezhad2019near,coy2022deterministic}. However, improving the round complexity is unlikely, as such an algorithm would refute the longstanding \textsc{1-Cycle vs. 2-Cycle} conjecture~\cite{yaroslavtsev2018massively,roughgarden2018shuffles,lkacki2018connected,assadi2019massively}.

The Euclidean MST problem, on the other hand, is not as well understood in the MPC model as its graph-based counterpart. For low-dimensional Euclidean space, where $d$ is a constant, \cite{AndoniNikolov} gave a $(1+\varepsilon)$ approximate algorithm using only $O(1)$ MPC rounds. However, the landscape becomes much more mysterious and challenging in the high dimensional-setting. 

One prevalent approach for the high-dimensional setting is the \textit{spanner method}: one first constructs a $c$-approximate Euclidean spanner (i.e., a sparse graph over the points whose shortest path distance approximates the Euclidean distance to a factor of $c$) for some constant $c \geq 1$. Such spanners can be constructed with $O(n^{1+1/c})$ edges in $O(1)$ MPC rounds~\cite{epasto2022massively,cohen2022massively}. With the spanner in hand, one can simply run the aforementioned graph-based MST algorithm to obtain a constant approximation in $O(\log n)$ rounds. However, in the Euclidean variant we have the benefit of the metric-space structure, and the one-cycle two-cycle lower bound does not apply. Thus, settling for $O(\log n)$ rounds for Euclidean MST is undesirable and perhaps unnecessary. 

To date, the only known method for obtaining MPC algorithms with a $o(\log n)$ round complexity for high-dimensional MST is the \textit{tree-embedding method}~\cite{bartal1996probabilistic}. In this method, one probabilistically embeds the Euclidean points $X \subset \R^d$ into a (log-depth) tree-metric and then computes the optimal MST in the tree metric. The latter is a simple object that can be computed in the MPC model in $O(1)$ rounds and $\tilde{O}(n)$ total space. Indyk \cite{indyk2004algorithms} gave a tree embedding that obtains a $O(d \log n)$ approximation and can be computed in the MPC model in $O(1)$ rounds and $\tilde{O}(n)$ total space. This was later improved by \cite{chen2022new} to a $\tilde{O}(\log n)$ approximation, resulting in a $O(1)$ round $\tilde{O}(\log n)$ approximation algorithm. However, it is known that the tree-embedding method must suffer a $\Omega(\log n)$ approximation in the worst case \cite{bartal1996probabilistic}.

In summary, one can obtain constant approximations in $O(\log n)$ rounds via the spanner method, or a $\tilde{O}(\log n)$ approximation in constant rounds via the tree embedding method. However, both methods individually face hard barriers to removing the $O(\log n)$ factor from their round complexity or approximation (respectively). A natural question is whether this trade-off is intrinsic: namely if any algorithm for high-dimensional Euclidean MST must always use at least $\Omega(\log n)$ rounds or pay a $\Omega(\log n)$ approximation. Specifically, in this work we address the following question:

\begin{quote}
 \begin{center}
  {\it  Is it possible to compute an $o(\log n)$-approximate Euclidean minimum spanning tree in $o(\log n)$ rounds of the MPC model?}
 \end{center}
 \end{quote}

 An even stronger question is whether a $O(1)$ approximation is possible in $O(1)$ rounds. 
As a positive signal in this direction, a recent result~\cite{chen2023streaming} demonstrated that the \textit{cost} of the MST can be estimated to a constant factor in $O(1)$ MPC rounds. However, their algorithm is a sampling-based estimator that is far removed from a procedure that can actually compute an approximate MST. Moreover, separations between the complexity of estimating the cost of a solution and producing that solution are ubiquitous in high-dimensional geometry (e.g. for metric MST in the sublinear query model \cite{indyk1999sublinear,czumaj2009estimating}). For computing the MST in a distributed setting, such a result was only known in the more powerful Congested Clique \cite{jurdzinski2018mst} model, however implementing this algorithm in the MPC model would require $\Omega(n)$ space per machine.

In this work, we provide a positive resolution to the above question by designing a fully scalable\footnote{Meaning that each machine has only $n^{\eps}$ space for a constant $\eps < 1$, see Preliminaries \ref{sec:prelims}.} MPC algorithm in $\tilde{O}(\log\log n)$ rounds for a constant approximation of the Euclidean MST. In addition, the total space required by our algorithm is at most $O(n^{1+\varepsilon}+nd)$ where $\varepsilon>0$ can be an arbitrarily small constant. Our result makes substantial progress towards the stronger goal of a $O(1)$ approximation in $O(1)$ rounds. Specifically, our main result is:

\begin{theorem}[see Theorem \ref{thm:mpc_mst}]\label{thm:MSTMainIntro}
Given a set $X \subset \mathbb{R}^d$ of $n$ points, there is a MPC algorithm which outputs an $O(1)$-approximate MST of $X$ with probability at least $0.99$ in at most $O(\log\log(n)\cdot \log\log\log(n))$ rounds.
The total space required is at most $O(nd+n^{1+\varepsilon})$ and the per-machine space is $O((nd)^\varepsilon)$, where $\varepsilon>0$ is an arbitrarily small constant.
\end{theorem}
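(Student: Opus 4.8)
The plan is to combine a geometric \emph{hierarchical space partition} with a parallel graph-MST primitive, trading the usual $\Theta(\log n)$ depth of Bor\r{u}vka-style contraction for a doubly-exponential speed-up. First I would fix a randomly shifted quadtree-style decomposition of $\R^d$ at geometrically decreasing scales $2^0, 2^1, \dots, 2^{L}$ with $L = O(\log(\text{aspect ratio}))$; as in low-dimensional Euclidean MST algorithms~\cite{AndoniNikolov}, one may assume after a standard snapping/scaling step that the aspect ratio is $\poly(n)$, so $L = O(\log n)$. The key structural fact I would establish is that the cost of the true MST can be approximated to a constant factor by a sum, over scales $i$, of $2^i$ times the number of ``connected clusters'' that survive at scale $i$ after contracting everything cheaper than $2^i$ — this is the usual charging of MST edges to the partition cells they cross, and it only loses a $O(d)$ or $O(1)$ factor depending on how the partition is randomized (a Locality-Sensitive-Hashing / sparse-partition family gives the dimension-independent bound). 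This reduces the problem to: for each scale $i$, identify the clusters and add a cheap (weight $O(2^i)$) edge set connecting clusters that are ``close'' at that scale.

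Next I would handle the within-scale connection problem. For a fixed scale $i$, the points fall into cells of diameter $O(2^i)$, and two cells need to be joined only if they contain points within distance $O(2^i)$; using the spanner constructions available in $O(1)$ MPC rounds~\cite{epasto2022massively,cohen2022massively} restricted to each scale, or directly a bucketed near-neighbor routine, one obtains in $O(1)$ rounds a graph $H_i$ on the cells whose edges are $O(2^i)$-weight and whose connectivity captures exactly which cells should be merged. The surviving object across all scales is then a single weighted graph $H = \bigcup_i H_i$ of total weight $O(1)\cdot\cost(X)$ and with $\tilde O(n)$ edges (each point participates in $O(\log n)$ scales, each contributing $\poly(1/\eps)$ or $\tilde O(1)$ neighbors), and an MST of $H$ is an $O(1)$-approximate MST of $X$. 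It remains to compute an MST of a weighted graph with $\tilde O(n)$ edges in the fully scalable MPC model quickly.

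The main obstacle — and the step where the $\log\log n$ (rather than $\log n$) round count must be won — is this last MST computation: generic MPC connectivity/MST on sparse graphs costs $\Theta(\log n)$ rounds and is blocked by the 1-cycle-vs-2-cycle barrier, so I cannot use it as a black box. Instead I would exploit that $H$ is not an arbitrary sparse graph but carries the geometric scale structure: process scales in \emph{batches} of size $\log\log n$ (or a slowly growing function), and within a batch use the fact that after contracting scale-$i$ clusters the number of clusters drops geometrically, so that a constant-round ``leader contraction'' / graph exponentiation step à la Bor\r{u}vka with doubling neighborhoods collapses $2^{\Theta(\log\log n)} = \polylog n$ levels of the hierarchy at once. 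Iterating over $O(\log n / \log\log n)$ batches, with an extra $O(\log\log\log n)$ factor absorbed by the graph-exponentiation subroutine's own recursion, gives the claimed $O(\log\log n \cdot \log\log\log n)$ rounds. Throughout, the total-space bound $O(nd + n^{1+\eps})$ is maintained because each point is replicated only across its $O(\log n)$ scale-cells and the spanner/near-neighbor subroutines are themselves linear-space; the per-machine bound $O((nd)^\eps)$ follows since all subroutines invoked are fully scalable. The success probability $0.99$ comes from a union bound over the random shift of the partition and the randomness in the near-neighbor/LSH step, each of which fails with probability $o(1)$ after a constant amount of independent repetition.
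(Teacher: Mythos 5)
Your setup (random shifted quadtree, threshold/scale decomposition with the MST cost charged as $\sum_i 2^i\cdot(\text{number of surviving clusters at scale }i)$, per-scale $2$-hop spanners with $n^{1+\eps}$ edges) matches the paper's, but the heart of the argument --- how to actually contract the per-scale cluster graphs in $o(\log n)$ rounds and still output a consistent tree --- is missing, and the round count you give does not add up. You propose to ``process scales in batches of size $\log\log n$'' and to ``iterate over $O(\log n/\log\log n)$ batches''; if those batches are processed sequentially, even at one round per batch you already spend $\Omega(\log n/\log\log n)$ rounds, which is far more than the claimed $O(\log\log n\cdot\log\log\log n)$. The paper avoids this by computing approximate components at all $O(\log n/\log\log n)$ checkpoint scales $1,\alpha,\alpha^2,\dots$ \emph{in parallel}, and only the $O(\log\log\log n)$ levels of a binary-search-style refinement between consecutive checkpoints sequentially, each costing $O(\log\log n)$ rounds of leader compression. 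Making the parallel computation consistent (so that the partitions are nested and the union of edges is a forest, not a graph with cycles) requires nontrivial machinery --- deleting spanner edges that cross the coarser checkpoint grid, and the over-merging step below --- none of which appears in your proposal.

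The second gap is the assertion that ``after contracting scale-$i$ clusters the number of clusters drops geometrically, so a constant-round leader contraction \dots collapses $\polylog n$ levels at once.'' Contracting a connected component of an arbitrary sparse graph is exactly the task blocked by the $1$-cycle-vs-$2$-cycle barrier; running leader compression for only $h=O(\log\log n)$ rounds leaves \emph{incomplete} components, and the paper's ``parallel paths'' example shows that if you simply keep these under-merged components (or cut edges between them at finer scales), the resulting tree can be an $\Omega(\sqrt{\alpha})=\Omega(\polylog n)$-approximation, not $O(1)$. The fix in the paper is to \emph{over-merge}: after $h$ rounds, any still-incomplete super-node must have originally spanned $2^{\Omega(h)}$ nonempty grid cells in expectation, so the quadtree lower bound $\MST \gtrsim t\cdot 2^{h}\ell/\polylog(n)$ pays for arbitrarily joining all $\ell$ incomplete super-nodes in a coarser cell at cost $t\alpha\ell$. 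Without this charging argument (or an equivalent one) you have no bound on the cost of the clusters that leader compression fails to finish, and without the consistency/nesting guarantees you cannot even form a valid spanning tree from the per-scale edge sets. These are the two ideas you would need to supply to make the proof go through.
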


At a high-level, our algorithm bypasses the $\Omega(\log n)$ barriers intrinsic to the spanner and tree-embedding methods by combining the two approaches.
Specifically, our algorithm builds a spanner and attempts to compute the MST of that spanner. However, while doing so we do \textit{not} forget about the original metric structure.
Specifically, our algorithm will exploit the metric structure by generating a spanning forest using \textit{both} the edges of the spanner and geometric space partitions (which underpin tree-embedding methods).

\textbf{Eucledian TSP.} We leverage our algorithm for MST to develop the first $O(1)$-approximation to the Euclidean traveling salesman problem (TSP) in $o(\log n)$ rounds. 
At a first glance, one may think that  $O(1)$-approximate TSP should directly follow from a $O(1)$-approximate MST since any shortcutted traverse of the $O(1)$-approximate MST gives an $O(1)$-approximate TSP.
However, the approximate MST that we computed may have diameter $\Theta(n)$, and all existing fully scalable MPC algorithms require at least $\Omega(\log(\text{diameter}))=\Omega(\log n)$ rounds to compute a traversal of the tree.
To resolve this issue, we develop a new $O(\log\log(n))$-round fully scalable MPC algorithm to compute a traverse of our approximate MST by utilizing the information of a hierarchical decomposition of the point set that we generate while computing the approximate MST. This results in the following:

\begin{theorem}[see Corollary~\ref{cor:tsp}]
Given a set $X \subset \mathbb{R}^d$ of $n$ points, there is a TSP algorithm which outputs an $O(1)$-approximate TSP of $X$ with probability at least $0.99$ in at most $O(\log\log(n)\cdot \log\log\log(n))$ rounds.
The total space required is at most $O(nd+n^{1+\varepsilon})$ and the per-machine space is $O((nd)^\varepsilon)$, where $\varepsilon>0$ is an arbitrarily small constant.    
\end{theorem}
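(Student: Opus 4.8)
The plan is to combine Theorem~\ref{thm:MSTMainIntro} with a round-efficient tree-traversal routine. First, run the algorithm of Theorem~\ref{thm:MSTMainIntro} to obtain an $O(1)$-approximate spanning tree $T$ of $X$; along with $T$ the algorithm also produces the laminar hierarchical decomposition $\mathcal{H}$ of $X$ that it builds internally. Recall the textbook reduction from spanning trees to tours: doubling every edge of $T$ yields an Eulerian multigraph whose Euler tour is a closed walk of length $2\cdot\mathrm{cost}(T)$ visiting every point of $X$; shortcutting this walk to its first-visit subsequence produces a Hamiltonian cycle whose length is, by the triangle inequality in $\mathbb{R}^d$, at most $2\cdot\mathrm{cost}(T)$. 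Since $\mathrm{cost}(T)=O(1)\cdot \mathrm{cost}(\mathrm{MST}(X))$, and since deleting one edge of an optimal tour leaves a spanning tree so that $\mathrm{cost}(\mathrm{MST}(X))\le \mathrm{cost}(\mathrm{TSP}(X))$, the resulting cycle is an $O(1)$-approximate TSP tour. Thus the whole problem reduces to: given $T$ and $\mathcal{H}$, compute in the MPC model a DFS pre-order of $T$ (equivalently, the shortcutted Euler tour) within the stated round and space budgets. The obstacle is that $T$ may have diameter $\Theta(n)$, so the classical list-ranking / Euler-tour technique costs $\Omega(\log n)$ rounds (and is conditionally necessary for general trees).

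The resolution is to traverse $T$ \emph{along} $\mathcal{H}$ rather than along $T$ itself. Write $\mathcal{H}$ as a family of successively coarser partitions $\mathcal{H}_L, \mathcal{H}_{L-1},\dots,\mathcal{H}_0=\{X\}$ of $X$, with $L=O(\log\log n\cdot\log\log\log n)$ levels, where every level-$\ell$ cluster is a disjoint union of level-$(\ell+1)$ clusters and — crucially, because the MST algorithm is Boruvka-like — $T$ restricted to every cluster of $\mathcal{H}$ is connected. Hence the edges of $T$ partition by level: for a level-$\ell$ cluster $C$ with children $C_1,\dots,C_k$, the $T$-edges inside $C$ but in no child form a spanning tree $\tau_C$ on the super-nodes $\{C_1,\dots,C_k\}$, and $T|_C$ is obtained by connecting the trees $T|_{C_i}$ according to $\tau_C$.

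I would then compute the traversal bottom-up. At the finest level each cluster is small enough to be gathered onto one machine, so a DFS pre-order of $T|_C$ is computed locally in $O(1)$ rounds. For $\ell=L-1$ down to $0$, in parallel over all level-$\ell$ clusters $C$: given DFS pre-orders of the children $T|_{C_i}$, assemble a DFS pre-order of $T|_C$ by performing a DFS of $\tau_C$ and, upon first entering super-node $C_i$ through a $T$-edge with endpoint $v\in C_i$, splicing in the child pre-order re-rooted at $v$; re-rooting a pre-order (a cyclic rotation of its Euler tour) and concatenating orderings are sorting/prefix-sum operations implementable in $O(1)$ MPC rounds. The step that is not immediate is the DFS of $\tau_C$ itself, which is again a spanning-tree traversal; here one invokes the structural guarantee of $\mathcal{H}$ that the super-node trees at a given level are small — each $\tau_C$ fits on one machine — or, more conservatively, recursively calls this same routine on the contracted instance, whose size has shrunk doubly-exponentially so the recursion terminates within $O(\log\log\log n)$ additional depth. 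Charging $O(\log\log\log n)$ rounds per level gives $O(\log\log n\cdot\log\log\log n)$ rounds overall; the auxiliary data is $O(n)$ machine words on top of the $O(nd+n^{1+\varepsilon})$ total space already used by the MST step, with per-machine space remaining $O((nd)^\varepsilon)$. Finally, output the Hamiltonian cycle induced by the level-$0$ pre-order $v_1,\dots,v_n$ of $T$; its cost is at most $2\cdot\mathrm{cost}(T)=O(1)\cdot\mathrm{cost}(\mathrm{TSP}(X))$, and the success probability $0.99$ is inherited verbatim from Theorem~\ref{thm:MSTMainIntro} since the traversal itself is deterministic.

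The main obstacle I expect is making the per-level merge genuinely run in $\polylog\log n$ rounds: one must show that the super-node trees $\tau_C$ never force a long chain of sequential re-rootings and splices, i.e., that the "merge complexity" at each level of $\mathcal{H}$ is bounded (small fan-in, or small contracted instances). This is precisely the property that must be extracted from the internals of the algorithm of Theorem~\ref{thm:MSTMainIntro}, and correspondingly $\mathcal{H}$ must be emitted and stored by that algorithm in a form that exposes it. A secondary point needing care is bookkeeping the entry-vertex and orientation information so that the spliced Euler tours remain globally consistent across all $L$ levels.
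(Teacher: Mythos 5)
Your high-level architecture --- reduce TSP to a shortcutted Euler tour of the $O(1)$-approximate MST, and traverse that tree along the hierarchical decomposition rather than along the tree itself (whose diameter may be $\Theta(n)$) --- is the same as the paper's, and your approximation argument is fine. However, there are two genuine gaps. First, you assert the hierarchy has $L=O(\log\log n\cdot\log\log\log n)$ levels and then sweep through the levels strictly sequentially from the bottom up. In fact the decomposition $\hat{\cP}_1\sqsupseteq\hat{\cP}_2\sqsupseteq\cdots\sqsupseteq\hat{\cP}_{\Delta}$ has one level per power of two up to the aspect ratio $\Delta=\Theta(n^2)$, i.e.\ $\Theta(\log n)$ levels; you have conflated the number of levels with the number of MPC rounds used to construct them. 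A sequential bottom-up sweep therefore costs $\Omega(\log n)$ rounds and blows the budget. The paper's fix is a balanced doubling scheme (Algorithm~\ref{alg:euler_tour}): in iteration $r$ it joins, in parallel, blocks of $2^{r-1}$ consecutive levels into blocks of $2^r$ levels via an $O(1)$-round Euler Tour Join, so only $O(\log L)=O(\log\log n)$ sequential phases are needed.

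Second, the splice you describe --- upon first entering super-node $C_i$ through terminal $v$, insert the child tour cyclically re-rooted at $v$ --- is precisely the naive approach the paper shows cannot work. A single rotation of the child tour cannot in general make the order in which the parent tour visits the neighbors of $C_i$ consistent with the order in which $C_i$'s internal Euler tour visits its terminal vertices, and any inconsistency forces some directed edge of the child tour to be used twice; Section~\ref{sec:techoverview} gives an explicit five-cluster counterexample. The resolution is to \emph{recompute} the parent super-node tour $\bar A$ so that each node's children are visited in the order dictated by the terminals' first appearances in the children's internal tours, which the paper does in $O(1)$ rounds via subtree sizes and signed prefix sums along the old tour (Lemma~\ref{lem:mpc_euler_by_children}, Theorem~\ref{thm:euler_join}). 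This consistency step is the main technical content of Section~\ref{sec:tsp}; your proposal flags it only as "bookkeeping \dots\ needing care" without supplying the idea that resolves it.
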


\subsection{Preliminaries}\label{sec:prelims}

\textbf{MPC Model.} In the Massively Parallel Computation (MPC) model, there are $p$ machines and each machine has local memory $s$; thus the total space available in the system is $p\cdot s$. 
The space is measured in words where each word has $O(\log(p s))$ bits. 
The input data has size $N$ and is distributed arbitrarily on $O(N/s)$ machines at the beginning of the computation.
If the total space satisfies $p\cdot s = O(N^{1+\gamma})$ for some $\gamma \geq 0$, and the local space satisfies $s = O(N^{\eps})$ for some constant $\eps \in (0,1)$, then the model is called the $(\gamma,\eps)$-MPC model~\cite{andoni2018parallel}.

The computation in the MPC model proceeds in rounds. 
In every round, each machine performs arbitrary local computation on the data stored in its memory.
At the end of each round, each machine sends some messages to the other machines.
Since each machine only has local memory $s$, the total size of messages sent or received by a machine in one round can not be larger than $s$.
For example, a machine can send a single message with size $s$ to an arbitrary machine, or it can send a size $1$ message to other $s$ machines, but it cannot send a size $s$ message to every machine in one round.
In the next round, each machine only holds received messages in its local memory.
At the end of the computation, the output is stored in a distributed way on the machines.
The parallel running time (number of rounds) of an MPC algorithm is the number of above computation rounds needed to finish the computation.

We consider $\eps\in(0,1)$ to be an arbitrary constant in this paper, i.e., our algorithms can work when the memory of each machine is $s=O(N^{\eps})$ for any constant $\eps\in(0,1)$.
Such algorithms are called \emph{fully scalable algorithms}~\cite{andoni2018parallel}.
Our goal is to develop fully scalable algorithms which only require a small number of rounds and a small total space.

\textbf{Basic Notation.} In the remainder of the paper, we use $X$ to denote a dataset of points that we wish to solve either Minimum Spanning Tree or Traveling Salesman Problem over. We use $n$ to denote the size of $X$ and $d$ to denote the dimensionality of $X$ (i.e., $n = |X|$ and $X \subset \R^d$). Additional notation is defined in the relevant sections.

\textbf{Euclidean Minimum Spanning Tree (MST) and  Travelling Salesman Problem (TSP)} In the Euclidean MST problem, the input is a set of $n$ points $X \subset \R^d$, where we assume that each coordinate $x_i$ of a point $x \in \R^d$ can be stored in a single word of space (i.e., $O(\log ps)$ bits). The points $X$ implicitly define a complete graph, where the vertices are $X$, and for any $x,y \in X$ the weight of the edge $(x,y)$ is $\|x-y\|_2$. Our goal will be to produce a spanning tree $T$ of this complete graph such that the weight of $T$, defined as $\sum_{(x_i, x_j) \in T} \| x_i - x_j \|_2$ is within a constant factor of the minimum spanning tree weight. We write $\MST(X)$ to denote the optimal MST weight.

In the Euclidean TSP problem, the input is the same as the MST problem.
But instead of outputting a spanning tree, we want to output a Hamiltonian cycle $C$ of $X$, i.e., each point appears on the cycle exactly once, such that the total length of the cycle $\sum_{(x_i,x_{j})\in C}\|x_i-x_{j}\|_2$ is minimized up to a constant factor.

\subsection{Technical Overview}\label{sec:techoverview}

\subsubsection{Approximate Euclidean MST}

Our starting point is a key structural fact about minimum spanning trees (observed in prior work on sublinear MST algorithms~\cite{chazellerubinfeld,czumaj2009estimating}) that links the cost of the minimum spanning tree of a weighted graph $G$ to the number of connected components in a sequence of auxiliary graphs. Namely, given a set of points $X \subset \R^d$ with pairwise distances in the range $(1,\Delta)$,\footnote{Note that we will later be able to assume that $\Delta \leq \poly(n)$.} and a distance threshold $t \geq 0$, we define the $t$-\emph{threshold graph} $G_t = (X,E_t)$ to be the graph where $(x,y) \in E_t$ if and only if $\|x-y\|_2 \leq t$. We write $\cP_t$ to denote the set of connected components of $G_t$.

Now consider the steps taken by Kruskal's MST algorithm: at the beginning, all vertices are in their own (singleton) connected component, and then at each step two connected components are merged by an edge of minimum weight. Thus, the number of edges added with weight in the range $(t,2t]$ is precisely $|\cP_t| - |\cP_{2t}|$, so
\begin{equation}\label{eqn:intro1}
    \cost(X) \leq  \sum_{i=0}^{\log (\Delta)-1}2^{i+1} (|\cP_{2^i}| - |\cP_{2^{i+1}}|)
    = n - \Delta + \sum_{i=0}^{\log (\Delta) } 2^i |\cP_{2^i}| \leq 2 \cost(X) 
\end{equation}
This suggests the following approach (which we call the \emph{ideal algorithm}): for each level $t = 2^i$ where $i = 1,2,\dots \log(\Delta)$, compute the set $\cP_t$ of connected components of $G_t$. Then, for each $t$ and every connected component $C \in \cP_t$, we contract the vertices in $C$ into a single super-node in the graph $G_{2t}$; call the resulting contracted graph $\bar{G}_{2t}$.   
Next, we run a \textit{unweighted} spanning forest algorithm on the $\bar{G}_{2t}$, and output every edge we found in this forest. Notice that this gives a valid spanning tree. Moreover, since each edge in $G_{2t}$ has weight at most $2t$, by the above this spanning tree will be a $2$-approximate MST. Because the sets $\cP_t$ are fixed (i.e., independent of the algorithm), the above procedure can be run in parallel for each value of $t$.

The first challenge to this approach is that the graph $G_t$ can be dense, namely, it may have $\Omega(n^2)$ edges. Since the total space available to our algorithm is only $O(n^{1+\eps})$, we will need to compress $G_t$. This is precisely what is accomplished by the \textit{spanner method}. Namely, for every $t$ one can create a graph $S_t$ that is an $O(1/\eps)$-approximate spanner of $G_t$ and has at most $n^{1+\eps}$ edges \cite{Har-PeledIS13}\footnote{\cite{Har-PeledIS13} actually achieves $\sim 1/\sqrt{\varepsilon}$ approximation. But we only consider $\varepsilon=O(1)$, and optimizing such dependence is not the focus of this paper.}; moreover, this construction is efficiently implementable in the MPC model (see, e.g., \cite{epasto2022massively,cohen2022massively}). Such a graph $S_t$ has the property that for every $x,y \in X$ with $\|x-y\|_2 \leq t$, there is a path of length at most $2$ between $x$ and $y$ in $S_t$, and for every edge $(w,z)$ in $S_t$ we have $\|w-z\|_2 \leq O(t/\epsilon)$. By adding the edges of $S_t$ into $S_{t'}$ for each $t' \geq t$, we ensure that the edges of the graphs $S_t$ are monotone increasing in $t$ (since we only consider $O(\log \Delta)$ values of $t$, this increases the number of edges by at most a $O(\log \Delta)$ factor). Let $\cP_t'$ be the set of connected components in $S_t$.  It follows from the spanner property that $|\cP_{O(t/\epsilon)}| \leq |\cP_t'| \leq |\cP_t|$. Thus, to obtain a $O(1/\epsilon)$ approximation, it will suffice to swap out $G_t$ with the spanner $S_t$ in the above ideal algorithm.

The second (more serious) challenge is computing the connected components of $S_t$ in the MPC model. Specifically, unless the one-cycle two-cycle conjecture is false, in general computing the connected components of a graph in the MPC model requires $\Omega(\log n)$ rounds. Thus, if we construct the spanner graphs $S_t$ and then forget about the original metric space that $S_t$ came from, then computing the connected components of $S_t$ in the MPC model will require $\Theta(\log n)$ rounds.\footnote{Note that this yields the complexity of the ``spanner method'' described earlier in the introduction.} Instead, our goal will be to run a connectivity algorithm while crucially using the metric structure of the original points $X \subset \R^d$. In what follows, we describe our approach to doing this. 

\paragraph{Approximately Computing Connected Components: Leader Compression with an Early Termination.}

Instead of running an MPC connectivity algorithm on $S_t$ as a black box, we will need to open the actual inner workings of the algorithm to analyze its interplay with the underlying metric structure of the graph. At a high level, our approach will be to cut off the execution of this algorithm early and show that the intermediate solution (set of connected subgraphs) that one obtains from this partial execution is good enough to compute a constant approximate MST. To this end, we will now describe a connectivity algorithm known as \textit{leader compression}. 

The leader compression algorithm proceeds in rounds. In each round, every vertex $u \in S_t$ flips a coin; the vertices that flip heads are called ``leaders'' and the vertices that flip tails are called ``followers''. Then, each follower vertex $u$ merges into a uniformly random leader vertex $v \in S_t$ such that $(u,v)$ is an edge. If no such edge to a leader vertex exists, $v$ is untouched on that step. Each edge $(u,v)$ that is merged in the process is contracted in the graph into a super-node, and then the process is repeated in the next round where each super-node flips a coin to be either a leader or a follower. Thus, at every time step, each super-node represents a connected sub-graph of $S_t$ which we may subsequently grow on later steps. Ultimately, each connected component in $S_t$ will be contracted into a single super-node. Since on each round, every vertex is merged into another vertex with probability at least $1/4$, the process will terminate after $O(\log n)$ rounds. However, since each round of leader compression takes $O(1)$ rounds in the MPC model, we cannot afford to run leader compression to completion. Instead, our approach will be to cut off the leader compression algorithm early and return the intermediate super-nodes obtained in the process.

To analyze the early stopping of leader compression, we observe two useful properties of this algorithm: firstly, the set of edges that are merged form a spanning forest of $S_t$, so we can use these edges for our approximate MST.\footnote{If $A,B \subset S_t$ are two super-nodes that are merged together during leader-compression, then we can pick any arbitrary edge between $A,B$ to be used for the spanning forest.} Secondly, after any $h \geq 1$ rounds of the process, for any connected component $C$ in $S_t$ with $m$ vertices, we expect there to be at most $m/2^{\Omega(h)}$ super-nodes in $C$. Call a super-node \textit{complete} if it is maximal, i.e. it contains its entire connected component. After $h$ rounds of leader compression, for every connected component $C$ we expect that either $C$ originally had size $2^{\Omega(h)}$, or $C$ is contained by a complete super-node.

Our main approach is then as follows:  we set $h = O(\log \log n)$, and run $h$ rounds of leader compression on $S_t$. We refer to a super-node remaining after $h$ rounds as an approximate connected component, and write $\hat{\cP}_t$ to denote the set of such approximate components at level $t$. We then attempt to run the \textit{ideal algorithm} described earlier, but using the set  $\hat{\cP}_t$  instead of the true set of connected components $\cP_t'$ of $S_t$. Since we terminated leader compression early, there may be many more approximate components than true connected components. However, for every true connected component $C \in \cP_t'$, if $C$ was not complete then leader compression at least reduced the number of vertices in $C$ by a factor of $2^{\Omega(h)}$. Our goal will be to use this fact to argue that for an \textit{incomplete} component $C$, the actual MST cost of $C$ is much larger than the cost we must pay for having under-merged $C$ (i.e., splitting $C$ into multiple approximate components).
To make this argument, we will make use of ideas from the tree-embedding literature. 

\paragraph{Using Tree-Embeddings to Handle Incomplete Components. }
Briefly, the idea of using tree-embeddings to generate approximate MSTs is as follows. First, for every $t=1,2,4,\dots,\Delta$, one can impose a randomly shifted hypergrid over $\R^d$ with side length $t/\sqrt{d}$. The random shift ensures that points $x,y$ that are much closer than $t/\sqrt{d}$ are unlikely to be split, and points $x,y$ with $\|x-y\|_2 > t$ will deterministically be split. If $n_t$ is the number of non-empty hyper-grid cells at level $t$ (i.e., cells that contain at least one point in $X$), then results from the tree-embedding literature~\cite{indyk2004algorithms,andoni2008earth,chen2022new} imply that with good probability:\footnote{Note that given a nested set of hyper-grid cells, one can easily compute an approximate spanning tree with cost at most $\sum_{t=2^i} t \cdot n_t$, see, e.g., \cite{indyk2004algorithms,chen2022new}.}

\begin{equation}\label{eqn:quadtree}
     \MST(X) \leq \sum_{t=2^i} t \cdot n_t \leq \polylog(n) \cdot \MST(X)
\end{equation}Equation $\ref{eqn:quadtree}$ is promising, as it relates the number of non-empty cells $n_t$ to the MST cost --- if we can show that leader compression returns significantly fewer approximate connected components than there were non-empty cells, then this would satisfy our earlier goal. We employ this result in the following way. First, we similarly impose a randomly shifted grid of size length $t/\sqrt{d}$, and perform an initial merging of all points inside of the same grid cell (i.e., points in the same grid cell are automatically merged together). Since the diameter of a grid cell is $t/\sqrt{d} \cdot \sqrt{d} = t$, these points will necessarily be in the same connected component in $S_t$.

After pre-merging points in the same grid cell, we next perform $h = O(\log \log n)$ rounds of leader compression, and again write $\hat{\cP}_t$ to denote the resulting set of approximate connected components. By the above, each connected component $C \in \cP_t'$ is either fully merged at this point, or we have reduced the number of super-nodes in $C$ by a $1/2^{\Omega(h)}$ factor. Since each grid cell was merged into a super-node before the start of leader-compression, it follows that $|\hat{\cP}_t| \leq |\cP_t'| + n_t/2^{\Omega(h)}$.

Now by Equation \ref{eqn:intro1} and the fact that the components $\cP_t'$ well-approximate $\cP_t$, we have $\MST(G) \approx \sum_{t= 2^i} t \cdot |\cP_t'|$. Thus, if we output $|\hat{\cP}_t|$ instead of $|\cP_t'|$ connected components, then at level $t$ we would be paying an additional cost of at most $t \cdot n_t/2^{\Omega(h)}$ in our spanning tree (note though that it is not clear yet that we can actually achieve this cost algorithmically, since we still need to find the edges to merge the components in $\hat{\cP}_t$). Then by Equation \ref{eqn:quadtree}, the fact that there are $n_t$ non-empty cells with side-length $t/\sqrt{d}$ implies that the actual MST cost $\MST(G)$ is at least $n_t \cdot t/(\sqrt{d} \cdot \polylog(n))$. 
Putting this together, the additional cost we pay is at most $\sqrt{d} \cdot \polylog(n)/2^{\Omega(h)} \cdot \MST(G)$ at level $t$. By standard dimensionality reduction for $\ell_2$, we may assume $d = \Theta(\log n)$, so for $h = O(\log \log n)$, this additional cost is at most $\MST(G)/(\log n)^{\Theta(1)}$. Since we only incur this additive cost for $O(\log n)$ geometrically increasing values of $t$, the total additive error incurred from under-merging components is still a small $\MST(G)/(\log n)^{\Theta(1)}$. 

\paragraph{Challenge: Maintaining Consistency of Approximate Connected Components.} If we ran the above procedure and simply counted the number of approximate connected components, by the above discussion this would be sufficient to obtain a constant approximation of the \emph{cost} of the minimum spanning tree. However, there are major issues in using this approach when trying to generate the actual tree. 
Specifically, to create a valid spanning tree when using the approximate components $\hat{\cP}_t$ in the \textit{ideal algorithm}, it is necessary that $\hat{\cP}_t$ is a refinement of $\hat{\cP}_{2t}$ for every $t$ a power of $2$; otherwise, we would be unable to generate the edges merging $\hat{\cP}_t$ into $\hat{\cP}_{2t}$ without creating cycles or leaving vertices disconnected. 

One possibility is to use the components in $\hat{\cP}_{t}$ as a starting point to generate the components in $\hat{\cP}_{2t}$. However, this would require us to create $\hat{\cP}_1, \hat{\cP}_2, \hat{\cP}_4, \dots$ sequentially, which would require $\log n$ rounds. Recall that the original \textit{ideal algorithm} did not have this issue, as the sets $\cP_t$ of connected components did not depend on the prior execution of the algorithm. Instead, our approach is to split the set of levels $\{1,2,4,8,\dots,\Delta\}$ into smaller chunks of $O(\log \log n)$ levels each, and show that the approximate components for each chunk can be computed in parallel. 
Specifically, we set $\alpha = (\log n)^{O(1)}$, and then define the ``checkpoint'' levels $\{1,\alpha,\alpha^2,\alpha^3,\dots,\Delta\}$. For every two checkpoint levels $t,t'$, we will compute $\hat{\cP}_t$  and $\hat{\cP}_{t'}$ independently and in parallel. The challenge is to do this while maintaining consistency between $\hat{\cP}_t$ and $\hat{\cP}_{\alpha t}$ for each 
checkpoint level $t$ .

 To ensure consistency between $\hat{\cP}_t$ and $\hat{\cP}_{\alpha t}$, we need to ensure that any pair $(u,v)$ of vertices merged in our leader compression algorithm $\hat{\cP}_t$ was also merged in $\hat{\cP}_{\alpha t}$. To this end, first recall that our leader compression algorithm for $S_t$ and $S_{\alpha t}$ started by merging all points in the same grid cell with side length $t/\sqrt{d}$ (resp., $\alpha t/\sqrt{d}$). Thus, if we enforce that our leader compression algorithm at level $t$ \textit{only} merges together pairs $u,v \in X$ that are in the same grid cell of side-length $\alpha t/\sqrt{d}$, then consistency will follow automatically. To enforce this condition, we simply modify the graph $S_t$ by removing any edge $(u,v) \in S_t$ which crosses the randomly shifted grid with side-length $\alpha t/\sqrt{d}$. Since $\alpha = (\log n)^{O(1)}$ is taken sufficiently large, because of the random shift any edge $(u,v)$ in $S_t$ is cut with probability at most $1/(\log n)^{O(1)}$, as we must have had $\|u-v\|_2 = O(t) = O( \alpha t / (\sqrt{d} \log^{O(1)} n))$. This allows us to show that the cost of omitting these cut edges in $S_t$ is small, and enables us to compute $\hat{\cP}_t$ independently of $\hat{\cP}_{\alpha t}$. Once we have computed $\hat{\cP}_t$ and $\hat{\cP}_{\alpha t}$ (for every checkpoint level $t$), we now hope to generate the intermediate sets $\hat{\cP}_{2t}, \hat{\cP}_{4t}, \dots \hat{\cP}_{\alpha t/2}$ while maintaining this consistency property. Since there are only $O(\log \alpha ) = O(\log \log n)$  intermediate sets, we can afford to compute them sequentially in $O((\log \log n)^2)$ rounds.

However, a challenge arises when attempting to compute the intermediate sets $\hat{\cP}_{2t}, \hat{\cP}_{4t}, \dots \hat{\cP}_{\alpha t/2}$. Namely, these sets still must be consistent with $\hat{\cP}_{\alpha t}$, so for any $2t \leq \tau\leq \alpha t/4$, we cannot merge two vertices $u,v \in S_{\tau} $ that are not in the same approximate connected component in $\hat{\cP}_{\alpha t}$. Unlike the situation with $S_t$, we cannot afford to cut edges in $S_\tau$ that cross the hyper-grid with side-length $\alpha t/\sqrt{d}$, since $\tau$ may no longer be significantly smaller than $\alpha t / \sqrt{d}$, so the probability that an edge in $S_\tau$ is cut is no longer small. Instead, one would need to cut every edge $(u,v)$ in $S_\tau$ such that $u,v$ were \textit{not} in the same approximate component in $\hat{\cP}_{\alpha t}$. Now if $\hat{\cP}_{\alpha t} = \cP_{\alpha t}$ were the true components in $S_t$, no such edges would exist. However, this will not be the case since we terminate leader compression early when constructing $\hat{\cP}_{\alpha t}$, and leader compression does \textit{not} have the property that each edge is equally likely to be contracted on a given round. 

In fact, removing such edges from $S_\tau$ which crosses the partition  $\hat{\cP}_{\alpha t}$ can significantly increase the number of connected components in $S_\tau$ beyond what we can afford. To see this, consider the following instance of ``parallel'' path graphs: set $k = (2^h \cdot \alpha)^{O(1)} = (\log n)^{O(1)}$ sufficiently large, and let $X_0 = \{e_1, 2  e_1, 3 e_1,\dots, k e_1 \}$ be a path graph, where $e_i \in \R^d$ is the standard basis vector for $i \in [d]$, and set $X_j = \{x  + \frac{\alpha}{\sqrt{2}} e_j \; | \; x \in X_0\}$, for each $j \in [k]$, and let $X = \cup_{j=0}^k X_j$. Now consider running leader compression at levels $\sqrt{\alpha}$ and $\alpha$; at level $\alpha$, each vertex $x \in X_j$ is connected to at $\Omega(k)$ other vertices in parallel paths $X_i$ for $i \neq j$, but is connected to at most $O(\alpha) \ll k/(\log n)^{O(1)}$ points in the same path $X_j$. It follows that on every step of leader compression, the probability that $x$ is merged with a point in the same path is $1/(\log n)^{O(1)} \ll 1/2^h$, so after $h$ rounds of leader compression at level $\alpha$ each point $x \in X_j$ will be in a super-node containing no other points from $X_j$ with probability $1-1/(\log n)^{O(1)}$; call such a point $x$ totally cut. Note that there will then be $\Omega(k^2)$ totally cut vertices. Now when running leader compression at level $\sqrt{\alpha}$, the graph $S_{\sqrt{\alpha}}$ only contains edges within the same path, so if we cut edges that were not merged at level $\alpha$ then every isolated vertex becomes a singleton in $S_{\sqrt{\alpha}}$, thus $|\hat{\cP}_{\sqrt{\alpha}}| = \Omega(k^2)$, so the cost of the MST produced will be at least $\sqrt{\alpha} \cdot |\hat{\cP}_{\sqrt{\alpha}}| = \Omega(\sqrt{\alpha} k^2) > \Omega(\sqrt{\alpha}) \cdot \MST(X)$, which is a bad approximation. Thus, it is not possible to obtain a constant factor approximation while using under-merged clusters.

\paragraph{The Solution to Inconsistency: Generate Over-Merged Clusters.}  
Our solution to the above issue is to generate approximate connected components that are  \emph{over-merged} instead of under-merged. In other words, each approximate component in the partition $\hat{\cP}_t$ that we output will contain a full connected component in the true partition $\cP_t$, and possibly more vertices as well.  Consider any checkpoint level $t$ --- when running leader compression on $S_t$, we 
have that guarantee that for every true connected component $C$, either we fully merge $C$ (i.e. $C$ is complete), or we split $C$ into at most $m / 2^{\Omega(h)}$ super-nodes, where $m$ was the number of hyper-grid cells with side length $t/\sqrt{d}$ that intersected $C$. Instead of simply outputting these under-merged super-nodes as-is, we first perform an over-merging step where we arbitrarily merge together \textit{every} ``incomplete'' super-node within in the same hyper-grid cell with the (larger) side length  $\alpha t/\sqrt{d}$.  Specifically, we can choose an arbitrary representative super-node $v$ that is incomplete within such a hyper-grid cell, and connect every other incomplete supernode $u$ in the same cell to $v$ via an arbitrary edge. Since $u,v$ were in the same hyper-grid cell, this edge will have weight at most $\alpha t$. Also, recall that we modified $S_t$ to remove edges crossing this larger hyper-grid; thus it follows that every resulting merged cluster fully contains at least one connected component in $S_t$ (i.e., we only over-merge). We then let $\hat{\cP}_t$ be this set of over-merged clusters. 

To argue that this over-merging of incomplete super-nodes does not increase the cost significantly, we note that for a super-node to be incomplete, it must have intersected at least $2^{\Omega(h)}$ hyper-grid cells of length $t/\sqrt{d}$ in expectation. By Equation \ref{eqn:quadtree}, if there are $\ell$ incomplete super-nodes in a cell, this means that $\MST(X) \geq t 2^h \ell / \poly(\log n)$. On the other hand, we pay a cost of $t \alpha$ to connect each incomplete super-node to the representative, for a total cost of $t \alpha \ell$.  But taking $h= O(\log \log n)$ sufficiently large, it follows that $t2^h \ell / \polylog(n) \gg t \alpha \ell$, thus we can afford this arbitrary over-merging step at level $t$. This handles the consistency between distinct checkpoint levels $1,\alpha,\alpha^2,\dots,\Delta$.

We must now consider maintaining consistency for the intermediate levels between checkpoints. Specifically, we run this over-merging algorithm to generate $\hat{\cP}_1, \hat{\cP}_\alpha, \hat{\cP}_{\alpha^2}, \dots$ in parallel. Then, for each $\alpha^k$, we generate the intermediate levels $\hat{\cP}_{2 \alpha^k}, \hat{\cP}_{4 \alpha^k}, \dots, \hat{\cP}_{\alpha^{k+1}/2}$ sequentially, where $\hat{\cP}_{2^j \alpha^k}$ is generated using $h = O(\log \log n)$ rounds of leader compression on top of the previous set of merged components in $\hat{\cP}_{2^{j-1} \alpha^k}$. Now by construction of the over-merging procedure that we used to generate $\hat{\cP}_{\alpha^{k+1}}$, the only way for an edge $(u,v)$ to be in a graph $S_\tau$, for any $\alpha^{k} < \tau < \alpha^{k+1}$ but \emph{not} have been merged in $\hat{\cP}_{\alpha^{k+1}}$ is if $(u,v)$ crossed the hyper-grid cell with side length $\alpha^{k+2} /\sqrt{d}$. But now this is acceptable, because the probability that such a cut occurs is at most $\sqrt{d}/\alpha = 1/(\log n)^{O(1)}$ over the random shift, so we can now safely remove these edges from $S_\tau$ to maintain consistency.

Note that the above algorithm result in $O\left((\log \log n)^2\right)$ rounds of MPC. However, we can further improve this by a careful application of binary-search. Namely, if we start with $\hat{\cP}_t$ and $\hat{\cP}_{\alpha t}$, we first generate $\hat{\cP}_{\alpha^{1/2} t}$, then in parallel generate $\hat{\cP}_{\alpha^{1/4} t}, \hat{\cP}_{\alpha^{3/4} t}$, and so on. This reduces the number of sequential rounds to $O(\log \log \alpha) = O(\log \log \log n)$, and since each round needs $O(\log \log n)$ rounds of leader compression, we use only $\tilde{O}(\log \log n)$ total rounds of MPC. However, at each intermediate step, we again need to over-merge for consistency reasons. This time, if we know $\hat{\cP}_{t/\gamma}$ and $\hat{\cP}_{t \cdot \gamma}$ and are trying to generate $\hat{\cP}_t$, we merge incomplete components in the same connected component of $\hat{\cP}_{t \cdot \gamma}$, so that we ensure consistency.

\paragraph{Generating the edges.} Given the above algorithm that maintains consistency, generating edges is now quite simple. 
Assuming that we have found the approximate connected components $\hat{\cP}_1, \hat{\cP}_2, \allowbreak\hat{\cP}_4,\allowbreak \dots$, we can generate edges merging $\hat{\cP}_t$ into $\hat{\cP}_{2t}$ for every $t = 1, 2, 4, \dots$, in parallel for each $t$.
For a fixed $t$, this can be done by performing $O(\log \log n)$ rounds of leader compression on $S_{2t}$
starting with $\hat{\cP}_t$. But this time, rather than just updating the connected components, we keep track of the edges we found.
While this will not generate all of the edges for this level, the same arguments as before will imply that leader compression finds $1 - \frac{1}{(\log n)^{\Theta(1)}}$ fraction of edges needed. Finally, to fully connect $\hat{\cP}_{2t}$, we can add arbitrary edges, and by the same argument used to analyze the tree embedding, the additional cost will not be too large.

\subsubsection{Approximate Euclidean TSP}
Recall that Euclidean TSP aims to find a cycle over the points such that each point is visited exactly once, and we want to minimize the total weight of the cycle.
A Euclidean MST is a $2$-approximation for the Euclidean TSP problem since a shortcut Euler tour of the MST gives a valid solution for TSP; here, an Euler tour of a tree is a directed circular tour on the tree such each undirectly tree edge $(u,v)$ appears exactly twice: once in each direction $(u,v)$ and $(v,u)$. Therefore, it suffices to compute an Euler tour of the approximate MST produced by our earlier algorithm.
However, this approximate MST could be a path with diameter $\Theta(n)$. Unfortunately, the best-known algorithm for computing an Euler tour algorithm requires $\Omega(\log(\text{diameter}))=\Omega(\log(n))$ rounds~\cite{andoni2018parallel}. Moreover, improvements on this are unlikely as an algorithm using $o(\log n)$ rounds algorithm would refute \textsc{1-Cycle vs. 2-Cycle} conjecture~\cite{yaroslavtsev2018massively,roughgarden2018shuffles,lkacki2018connected,assadi2019massively}.

Fortunately, in addition to the edges of approximate MST, our algorithm also outputs an $O(\log n)$-level hierarchical decomposition $\hat{\cP}_1,\hat{\cP}_2,\hat{\cP}_4,\hat{\cP}_8,\cdots,$ of the point set.
We will show that, though our approximate MST may have large diameter on the original point set, if we look at $\hat{\cP}_{t/2}$ for any fixed $t$, and regard each cluster in $\hat{\cP}_{t/2}$ as a node, then each cluster in $\hat{\cP}_t$ is merged from a subset of these nodes, and the tree introduced by the edges that we selected connecting these nodes (clusters in $\hat{\cP}_{t/2}$) has diameter at most $\polylog(n)$.

To see why this is true, consider how each cluster in $\hat{\cP}_t$ was constructed when merging clusters from $\hat{\cP}_{t/2}$.
Suppose $C\in\hat{\cP}_t$ is merged from $\cC = \{C_1,C_2,\cdots,C_k\} \subseteq \hat{\cP}_{t/2}$. 
To generate the edges, we ran $h=O(\log \log n)$ rounds of leader compression, and in each round we merged some subsets of $\cC$ by creating  edges between pairs $(C_i,C_j) \in \cC^2$.
Let $\hat{\cP}_{t}^{(i)}$ be the partition we obtained after $i$ rounds of leader compression.
Then each cluster in $\hat{\cP}_{t}^{(1)}$ corresponds to a star graph over a subset of $\cC$ (with the leader as the center), and similarly each cluster in $\hat{\cP}_{t}^{(2)}$ is a star graph over a set of clusters in $\hat{\cP}_{t}^{(1)}$, and so on.
Since the clusters in $\hat{\cP}_{t}^{(1)}$ correspond to a tree over $\cC$ with diameter $2$ (i.e. a star), this means that each merged tree over $C_1,C_2\cdots,C_k$ in $\hat{\cP}_{t}^{(2)}$ has diameter at most $3\cdot 2 +2$. 
Then by induction, the full tree over the vertex set $\cC$ in $\hat{\cP}_{t}^{(i)}$ has diameter $2\cdot 3^{i-1}-1$. Finally, after $O(\log\log n)$ rounds of leader compression, we create a star on the clusters in  $C_1,C_2,\cdots,C_k$ that were still unmerged, which will blow up the diameter by another factor of at most $2$. Thus, the tree $T = (\cC, E(T))$ of resulting from combining all these edges has diameter at most $3^{O(\log\log n)}=\polylog(n)$. We call this tree the \textit{super-node tree} over $\cC$

We will use this fact to obtain a Euler tour of this tree in $O(\log\log n)$ MPC rounds (e.g., by applying the Euler tour algorithm of~\cite{andoni2018parallel}, and using that the diameter is small). 
To this end, we now introduce a critical sub-problem, which we call the Euler Tour Join problem. 
Given any $t$ and $i \geq 0$, the input to the problem is the following:

\begin{enumerate}
\item A cluster $C \in \hat{\cP}_t$ that was merged from  $\cC = \{C_1,C_2,\cdots,C_k\} \in \hat{\cP}_{t/2^{2^{i}}}$, and an Euler tour $A$ of the super-node tree $T$ over $\cC$.  
\item For each $C_i$, suppose $C_i$ was merged from clusters $C_{i,1},C_{i,2},\cdots,C_{i,k_i} \in \hat{\cP}_{t/2^{2^{i+1}}}$. Then we are also given the super-node tree $T_i$ over $\{C_{i,1},C_{i,2},\cdots,C_{i,k_i}\}$, and an Euler tour $A_{i}$ of $T_i$.
\end{enumerate}
Given the above trees $T,T_1,T_2\cdots,T_k$ and tours $A,A_1,A_2,\cdots,A_k$, the goal of the of the Euler Tour Join problem is to compute (1) the super-node tree $T'$ over $\cup_{i=1}^k \{C_{i,1},C_{i,2},\dots,C_{i,k_i}\}$ which represents how $C$ was merged from the clusters in $\hat{\cP}_{t/2^{2^{i+1}}}$ and (2) an Euler tour $A'$ of $T'$. Notice that if we can solve this problem for any $i\geq0$, then we can run it for $i=0,1,2,\dots,\log \log (\Delta)$ sequentially, by pairing up groups of levels in $\{1,2,4,8,\dots,\Delta\}$, merging them, and the recusing on the merged groups. Thus, after $O(\log \log n)$ rounds, we would have a Euler tour of the full original dataset $X$. Thus, our main task now is to develop an $O(1)$ round fully-scalable MPC algorithm for the Euler Tour Join problem using total space $O(n)$.

We now describe the high level ideas behind our Euler Tour Join algorithm.
Let $C,C_1,\dots,C_k$, $T,T_1,\dots, T_k$, and $A,A_1,\dots,A_k$ be as above, and let $V=\cup_{i=1}^k \{C_{i,1},C_{i,2},\dots,C_{i,k_i}\}$. We will sometimes think of each $C_i$ as being composed of the the component sets $\{C_{i,1},C_{i,2},\dots,C_{i,k_i}\}$, and write $x \in C_i$ to denote that $x \in \{C_{i,1},C_{i,2},\dots,C_{i,k_i}\}$.
Now observe that for every edge $(C_i,C_j) \in E(T)$, there must have been a unique edge $(C_{i,a},C_{j,b})$, where $C_{i,a} \in C_i, C_{j,b} \in C_j$, used by the leader compression algorithm to connect $C_i,C_j$, between two clusters in $\hat{\cP}_{t/2^{2^{i+1}}}$. Thus, we can define a function $g:E(T) \to V^2$ that specifies this mapping (i.e., in the above example $g((C_i,C_j)) =(C_{i,a},C_{j,b})$).
It is easy to see that the tree $T' = (V,E(T'))$ with edges $\{g(C_i,C_j)\mid (C_i,C_j) \in E(T)\}\cup T_1\cup T_2\cup\cdots \cup T_k$ is a spanning tree of $V$. Thus, the challenge will be to compute an Euler tour of $T'$.
In what follows, any $x,y \in V$ such that $(x,y) \in E(T')$ and $X \in C_i, y \in C_j$ and $i \neq j$, we call both $x$ and $y$ \textit{terminal nodes}. In other words, terminal nodes are clusters from $V$ that connected two clusters from $\{C_1,\dots,C_k\}$.

Given the above, one natural idea to construct $A'$ is as follows.
We  follow along the Euler tour of $A$; each edge $(C_i,C_j)$ in $A$ corresponds to an edge $(u,v)=g(C_i,C_j) \in V^2$ between two terminals $u\in C_i,v\in C_j$ which means that our tour enters cluster $C_j$ via $v$ after leaving $C_i$.
Suppose the edge that follows $(C_i,C_j)$ in $A$ is $(C_j,C_l)$ and the corresponding terminals are $(x,y)$, then it means that our tour leaves $C_j$ via $x$ before entering $C_l$.
Then, it would be natural to simply plug the subtour $A_j$ that connects $v$ to $x$ in our final joined tour $A'$ to connect $(u,v)$ and $(x,y)$.
However, this approach fails, as it may use duplicated edges in $A_j$.
In fact, there exists a Euler tour $A$ such that if we visit terminals with respect to the ordering provided by $A$, we can never find a valid Euler tour $A'$.
As an example, suppose we have $5$ clusters 
\[C_1=\{i_1\},\quad C_2=\{i_2\},\quad C_3=\{i_3\},\quad C_4=\{i_4\}\quad ,C_5=\{i_{5,1},i_{5,2},i_{5,3},i_{5,4},i_{5,5}\}\]
where the inter cluster edges are $\{i_1,i_{5,1}\},\{i_2,i_{5,2}\},\{i_3,i_{5,3}\},\{i_4,i_{5,4}\}$, and $T_5$ has edges $\{i_{5,5},i_{5,2}\},\allowbreak\{i_{5,5},\allowbreak i_{5,3}\},\allowbreak\{i_{5,5},i_{5,4}\},\{i_{5,2},i_{5,1}\}$ with Euler tour $A_5:i_{5,1}\rightarrow i_{5,2}\rightarrow i_{5,5}\rightarrow i_{5,3}\rightarrow i_{5,5}\rightarrow i_{5,4}\rightarrow i_{5,5}\rightarrow i_{5,2}\rightarrow i_{5,1}$.
Let $T$ be a star with Euler tour $A:C_5 \rightarrow C_1\rightarrow C_5 \rightarrow C_4\rightarrow C_5\rightarrow C_2\rightarrow C_5\rightarrow C_3\rightarrow C_5$.
It is obvious that if one wants to follow the subpath $C_1\rightarrow C_5 \rightarrow C_4$, the inner path of $T_5$ must include $i_{5,1}\rightarrow i_{5,2}\rightarrow i_{5,5}\rightarrow i_{5,4}$ as a subsequence.
Similarly, if one wants to follow the subpath $C_2\rightarrow C_5\rightarrow C_3$, the inner path of $T_5$ must include $i_{5,2}\rightarrow i_{5,5}\rightarrow i_{5,3}$ as a subsequence.
Thus, we must use the (directed) $i_{5,2}\rightarrow i_{5,5}$ twice.

The main issue of the above approach is that the order of visiting the neighbors $(C_1,C_4,C_2,C_3)$ of $C_5$ in the Euler tour $A$ is not consistent with the order of visiting the terminals $(i_{5,1},i_{5,2},i_{5,3},i_{5,4})$ of $C_5$ in $A_5$.
Namely, $A_5$ suggests the order $(C_1,C_2,C_3,C_4)$.
Therefore, we need to compute a new Euler tour $\bar{A}$ such that the ordering of visiting the neighbors of each $C_i$ in $T$ is consistent with ordering of visiting terminals in $A_i$.

To this end, we develop a novel algorithm in the MPC model such that if each edge in $T$ is given a weight, and we are able to compute the total weight of every path from each node to the root of $T$, and we are also able to compute the size of each subtree of $T$, then we can efficiently compute the position of each directed edge of $T$ in the desired Euler tour $\bar{A}$ in parallel.
Fortunately, we show that above subtree sum problem and path weight sum problem can be solved efficiently using the known Euler tour $A$.
Therefore, can compute $\bar{A}$ in $O(1)$ rounds and $O(n)$ total space in the fully scalable setting.
Then, the remaining process of computing $A'$ becomes simple, we chop each Euler tour $A_i$ into paths between terminals.
Then we follow the ordering of edges in $\bar{A}$.
For a length $2$ path $C_i\rightarrow C_j\rightarrow C_l$  in $\bar{A}$, the relevant terminals are $(u,v)=g(C_i,C_j),(x,y)=g(C_j,C_l)$, then we insert the corresponding subsequence between $u$ and $x$ of Euler tour $A_j$ into the place between $(u,v)$ and $(x,y)$ in our final Euler tour $A'$.
Note that this sequence insertion subroutine can be efficiently implemented in the MPC model as shown by~\cite{andoni2018parallel}.

\subsection{Other Related Work}
The MST problem has been studied extensively in multiple models of sublinear computation, including streaming, distributed algorithms, and the sublinear query model.  
In the sublinear query model, the implicit input is the set of $\binom{n}{2}$ distances and the goal is to estimate 
the weight of the MST while making a sublinear number of queries to the distances. It is known that any algorithm which actually computes an approximate MST requires $\Omega(n^2)$ queries~\cite{indyk1999sublinear}, hence the focus on estimating the cost. To this end, Chazelle, Rubinfeld, and Trevisan~\cite{chazellerubinfeld} gave an algorithm based on estimating connected components, that gives a $(1+\epsilon)$-factor approximation for arbitrary graphs of maximum degree $D$ and edge weights in $[1,W]$ using 
at most $O(D W \epsilon^{-3})$ queries.
This result was improved by Czumaj and Sohler~\cite{czumaj2009estimating} for metric (e.g., Euclidean) MST, who
gave a $(1+\eps)$ approximation with $\tilde{O}(n \eps^{-8})$ queries. 

In the streaming model, the points $X \subset \R^d$ arrive in a stream, possibly with deletions, and the goal is to estimate the cost of the MST in small space (as outputting the MST would require $\Omega(n)$ space). The first algorithm for streaming Eucledian MST was due to Indyk~\cite{indyk2004algorithms}, who gave a $O(\log^2 n)$ approximation in $\polylog(n)$ space.  This was later improved by \cite{chen2022new} to a $\tilde{O}(\log n)$ approximation in $\polylog(n)$ space, and then a $\tilde{O}(1/\eps^2)$ approximation in $O(n^\eps)$ space by \cite{chen2023streaming}. The first two works employed tree embedding based approaches, whereas \cite{chen2023streaming} used a connected component-based estimator similar to those used in \cite{chazellerubinfeld,czumaj2009estimating}. For low-dimensional space, \cite{10.1145/1064092.1064116} gives a $(1+\eps)$ approximation with space exponential in the dimension, although the techniques in this paper rely heavily on the construction of exponentially sized $\eps$-nets.

The above streaming algorithms are linear sketches and therefore can be used in the MPC model to obtain constant round approximations of the cost of the MST (see e.g. \cite{AndoniNikolov} for a reduction from lienar sketching to the MPC). However, there is a substantial gap between estimating the cost of the MST and producing the MST. For instance, the estimators in the papers \cite{chazellerubinfeld, czumaj2009estimating,chen2023streaming} are based on sampling vertices and computing the size of their connected components, thus no edges or approximate tree structure can be derived from this approach. As described in Section \ref{sec:techoverview}, substantial challenges arise when attempting to construct a tree in a consistent (i.e., no cycles) and cost-effective way.

\section{Description of the MST Algorithm in the Offline Setting}\label{sec:offline_algorithm}

In this section, we give a description for generating an $O(1)$-approximate MST, that can be implemented in the MPC model. However, we defer the details of the MPC implementation to Section~\ref{sec:MPC_MST_implementation}.

We first assume WLOG that $d = \Theta(\log n)$, using the Johnson-Lindenstrauss lemma~\cite{JL}.
We will define two parameters $\alpha, \beta$ such that both $\alpha/\beta$ and $\beta$ are at least $(\log n)^C$ for some sufficiently large constant $C$. We also assume WLOG $\alpha = 2^{2^g}$ for some integer $g$ (which can always be done by replacing $\alpha$ with some $\alpha' \in [\alpha, \alpha^2]$).
Define $G$ to be the complete weighted graph on $X$, where $(x, y) \in X$ have an edge of weight $d(x, y)$. Define the \emph{threshold graph} $G_t$ to connect two points $x, y$ if $d(x, y) \le t$. We also define $\cP_t$ to be the partition of $X$ based on the connected components of $G_t$.

Finally, since we are only hoping for a constant-factor approximation, by Standard discretization methods (see, e.g., Proposition 1 in~\cite{chen2023streaming}), we may assume the aspect ratio is at most $\tilde{O}(n)$. More precisely, we may assume that the minimum distance between any two points is at least $\alpha^{100} = (\log n)^{O(1)}$, and the maximum distance between any two points is at most $O(n \cdot \alpha^{101}) \le n^2$ (for $n$ sufficiently large).
We shift the point set such that each coordinate of each point has value in $[0,\Delta]$ where $\Delta$ is a sufficiently large power of $2$ and $\Delta = O(n^2)$.

We also note a few definitions.

\begin{definition}
    Given a dataset $X$, a \emph{partition} of $X$ is a split of $X$ into one or more pairwise disjoint subsets of $X$, such that every element $x \in X$ is in exactly one of the subsets.

    Given two partitions $\cP, \cQ$ of $X$, we say that $\cP$ \emph{refines} $\cQ$ (or equivalently, $\cQ$ is \emph{refined by} $\cP$) if every partition component in $\cP$ is a subset of some partition component in $\cQ$. We use the notation $\cP \sqsupseteq \cQ$ (or $\cQ \sqsubseteq \cP$).
\end{definition}

\begin{definition}\label{def:adic_valuation}
    For a positive integer $n$, we define the \emph{2-adic valuation} $v_2(n)$ to be the largest nonnegative integer $k$ such that $2^k|n$.
\end{definition}

\paragraph{High level approach:} First, we will approximately generate $\cP_t$, the connected components of $G_t$, for $t = 1, \alpha, \alpha^2, \dots, \alpha^{H}$, for $H = \Theta(\log n/\log\log n)$. We will approximately compute each of these partitions $\cP_t$ in parallel, using $h = O(\log \log n)$ rounds. Next, given the approximate connected components for $G_{\alpha^k}$ and $G_{\alpha^{k+1}}$, we attempt to generate approximate connected components for $G_{2 \alpha^k}, G_{4 \alpha^k}, \dots, G_{\alpha^{k+1}/2}, G_{\alpha^{k+1}}$. Finally, we generate edges to form an approximately minimal spanning tree.

\paragraph{Quadtree and Spanner:}
We start off knowing a randomly shifted \emph{quadtree} $Q$. We recall the definition of a quadtree, along with some relevant notation.

\begin{definition}
    A \emph{randomly shifted Quadtree} is constructed as follows. 
    First, we choose a random vector $a = (a_1,a_2,\dots,a_d)\in\mathbb{R}^{d}$, where each coordinate $a_i$ is drawn uniformly from $[0,\Delta]$.
    For each $t$ that is a power of $2$ between $1$ and $2 \Delta$ (where $\Delta = \Theta(n^2)$ is a power of $2$), we use $a$ to generate a grid of side length $t$, which we call the \emph{grid at level $t$}. Specifically, there is a one-to-one correspondence between each grid cell in level $t$ and each vector $(c_1,c_2,\dots,c_d)\in\mathbb{Z}^d$, i.e., the corresponding cell denotes the set of points:
$
\{(x_1,x_2,\dots,x_d)\in\mathbb{R}^d \mid \forall i\in [d], x_i\in[c_i\cdot t-a_i,(c_{i}+1)\cdot t-a_i)\}\subset \mathbb{R}^d.
$
Finally, for any cell $c$ at level $t$, define $X_c$ to be the set of points in $X$ contained in the cell $c$.
\end{definition}

Note that the grids are nested, i.e., for every $t$, every cell of side length $t$ in the Quadtree is contained in a cell of side length $2t$.
Finally, for the largest grid length $2\Delta$, it is easy to see that there is a unique cell $c$ in level $\Delta$ such that $X_c = X$. This is because every $a_i \in [0, \Delta]$ and because we assume all data points have coordinates in $[0, \Delta]$.

Next, we will also assume we have a series of $2$-hop Euclidean spanners. We recall the definition of a $2$-hop Euclidean spanner.

\begin{definition}
    Given a dataset $Y \in \R^d$, a $C$-approximate $2$-hop Euclidean spanner of length $t$ is a graph on $Y$ with edge set $E$, with the following two properties.
\begin{enumerate}
    \item For any $p, q \in X$ such that $\|p-q\|_2 \le t$, either $(p, q) \in E$ or there exists $r \in X$ such that $(p, r), (r, q) \in E$.
    \item For any two points $p, q \in Y$ with $\|p-q\|_2 > t$, we must have $(p, q) \not\in E$.
\end{enumerate}
\end{definition}

In other words, any two points within distance at most $t$ are connected by a path of length at most $2$, but any two points of distance more than $t$ cannot be directly connected in the graph.

We will use the fact that for any dataset $Y$, any level $t$, and any constant $0 < \eps < 1$, there exists a $O(1/\eps)$-approximate $2$-hop Euclidean spanner of $Y$ with at most $O(|Y|^{1+\eps})$ edges.
More specifically, for each $t$ a power of $2$, let $k$ be such that $\alpha^{k-1} < t \le \alpha^k$. We generate $\tilde{G}_t(X_c)$ to be a $2$-hop Euclidean spanner of length $t$ generated on $X_c$, where $c$ is a cell at level $\alpha^{k+1}/\beta$ in the quadtree, where $\tilde{G}_t(X_c)$ has at most $O(|X_c|^{1+\eps})$ edges. In addition, let $\tilde{G}_t$ be the union of $\tilde{G}_t(X_c)$ across all cells $c$ at level $\alpha^{k+1}/\beta$. 
We also make sure that $\tilde{G}_t \subseteq \tilde{G}_{2t}$ for all $t$, simply by adding $\tilde{G}_t$ to $\tilde{G}_{2t}, \tilde{G}_{4t}, \dots$: it is clear that this does not violate the definition of an $O(1/\eps)$-approximate spanner.
Since there are at most $O(\log n)$ such levels, and since $\sum |X_c|^{1+\eps} \le (\sum |X_c|)^{1+\eps} = n^{1+\eps}$, every $\tilde{G}_t$ has at most $O(n^{1+\eps})$ edges.
Finally, we let $\tilde{\cP}_t$ be the partitioning of $X$ based on the connected components of $\tilde{G}_t$. Note that $\tilde{\cP}_t \sqsupseteq \tilde{\cP}_{2t}$ for all $t$, since $\tilde{G}_t \subseteq \tilde{G}_{2t}$.

We note that the assumptions on dimensionality reduction, aspect ratio, and our assumption that we have a Quadtree $Q$ and $2$-hop $O(1/\eps)$-approximate Euclidean spanners $\tilde{G}_t$ can all be achieved using $O(1)$ rounds of MPC. We discuss this in Section~\ref{sec:mpc_mst}.

\paragraph{Leader Compression Algorithm}

We need to first explain the leader compression algorithm. We assume that we start out with some starting partition $\cP$ of a dataset $X$, and are given some graph $H$ on $X$. The goal is to connect components that have edges between them in $H$, to form larger connected components.

Formally, we will define a single round of leader compression on $\cP$ with respect to $H$ as follows. 
For each connected component $C \in \cP,$ we assign every $x \in C$ some leader, which is a vertex $x^* \in C$. A round of leader compression works as follows. 
First, each leader will generate a random bit (uniformly 0 or 1), and will broadcast the value to all of its descendants (i.e., the rest of the vertices in $C$). So, every $x_i$ now has a value of 0 or 1, matching that of its leader. 
Next, every $x_i$ with a value of 1 will send a message to all of its neighbors in the graph $H$. Importantly, after these messages, every $x_j$ that has value 0 knows the set of $x_i$ that have value 1 (which means they do not have the same leader, since $x_j$ has value 0) and are connected to $x_j$ in $H$. For the purpose of MPC implementation, it will suffice for each such $x_j$ to know a single such $x_i$ (if one exists).
Simultaneously, every $x_j$ that has value 0 will choose such an $x_i$ (assuming such an $x_i$ exists), and $x_j$ then sends the $x_i$ value to its current leader $x_k$, in $O(1)$ rounds of MPC. Each leader $x_k$ with value 0 will choose a single descendant $x_j$ with such a $x_i$ (if such an $x_j$ exists: note $j$ could equal $k$). Then, $x_k$ will update its leader (as well as the leader of all of $x_k$'s descendants) to be the leader of $x_i$.

Finally, we can define $h$ rounds of leader compression on $\cP$ with respect to $H$ as follows. For the first round, we run a round of leader compression on $\cP$ with respect to $H$. We now have a potentially coarser partition of connected components, and for the second round we run leader compression on this new partition with respect to $H$. We repeat this for $h$ rounds, and at each round we use the updated partition and assignments of leaders.

We include pseudocode for a round of leader compression in Algorithm \ref{alg:leader-compression}.

\paragraph{Part 1: Approximately generating $\cP_t$, $t = \alpha^k$.}
We start with a randomly shifted grid from $Q$ of level $t/\beta$. We will automatically connect all points that are in the same cell. (Note that all such points must have distance at most $t \sqrt{d}/\beta$ from each other, so they should be in the same connected component at this level anyway). This forms an initial partitioning $\cP_t^{(0)}$ of $X$.

Now, each (nonempty) component $\cP_t^{(0)}$ starts with a ``leader'' point. 
We now perform $h = O(\log \log n)$ rounds of leader compression on $\cP_t^{(0)}$ with respect to $\tilde{G}_t$. At the end of these rounds, we will have connected components and thus have $\bar{\cP}_t$, our preliminary estimate for $\cP_t$.

We will next convert $\bar{\cP}_t$ into $\hat{\cP}_t$ by doing some additional merging. For each component $S \in \bar{\cP}_t$, we will check whether $S$ is a complete component. To do so, we look at all of the edges in $\tilde{G}_t$: for each edge we check whether its endpoints are in the same connected component in $\bar{\cP}_t$. If not, we send a message to both endpoints. Every vertex will then know if there is an edge leaving it into another connected component. Thus, every leader of a connected component will know if the connected component has any edges leaving it. We call such a component \emph{incomplete}. To form $\hat{\cP}_t$, we group together all incomplete components in $\bar{\cP}_t$ with their leaders in the same cell $c$ at level $\alpha^{k+1}/\beta$.

We include pseudocode for this step in Algorithm \ref{alg:part-1}.

\paragraph{Part 2: Approximately generating $\cP_t$, $\alpha^{k} < t < \alpha^{k+1}$, $t$ a power of $2$.}
Recall that $\alpha = 2^{2^g}$. We will generate $\hat{\cP}_t$ in decreasing order of $v_2(\log_2 t)$ (recall Definition~\ref{def:adic_valuation} for $v_2(\cdot)$).
In other words, we first generate $\hat{\cP}_t$ for $t = \alpha^{k + 1/2} = \alpha^k \cdot 2^{2^{g-1}}$, then for $t = \alpha^{k + 1/4} = \alpha^k \cdot 2^{2^{g-2}}$ and $t = \alpha^{k + 3/4} = \alpha^k \cdot 2^{3 \cdot 2^{g-2}}$ simultaneously, and so on. This will result in $g = \log_2 \log_2 \alpha$ iterations.
When generating some $\alpha^{k} < t < \alpha^{k+1}$, we define $\kappa = 2^{2^{v_2(\log_2 t)}}$. We note that $v_2(\log_2 (t/\kappa)), v_2(\log_2 (t \cdot \kappa)) > v_2(\log_2 t)$, so we have already generated $\hat{\cP}_{t/\kappa}$ and $\hat{\cP}_{t \cdot \kappa}$. We will use these two partitions to generate $\hat{\cP}_{t}$.

To do so, we start with the connected components from $\hat{\cP}_{t/\kappa}$. Then, we perform $h = O(\log \log n)$ rounds of leader compression on $\hat{\cP}_{t/\kappa}$ with respect to $\tilde{G}_t$, to create $\bar{\cP}_t$.
Finally, we convert $\bar{\cP}_t$ to $\hat{\cP}_t$, by performing merging in a similar way as in Part 1. Specifically, we check each edge in $\tilde{G}_t$ and see if the endpoints are in the same connected component in $\bar{\cP}_t$, and use this information to determine whether each component in $\bar{\cP}_t$ is complete or incomplete. (Namely, a connected component $C$ in $\bar{\cP}_t$ is \emph{incomplete} iff there exists an edge $\tilde{G}_t$ with exactly one vertex in $C$.) Finally, we group together all incomplete connected components with their leaders in the same connected component in $\hat{\cP}_{t \cdot \kappa}$.

We include pseudocode for this step in Algorithm \ref{alg:part-2}.

\paragraph{Part 3: Generating the edges.} We have listed out the approximate connected components $\hat{\cP}_t$ for each $t$ a power of $2$. Now, for any such $t$, we want to generate edges connecting $\hat{\cP}_{t/2}$ into $\hat{\cP}_t$.

We will again perform $h = O(\log \log n)$ rounds of leader compression on $\hat{\cP}_{t/2}$ with respect to $\tilde{G}_t$, but this time we keep track of all edges that we added. We will show that the partition we generate after doing leader compression still refines $\hat{\cP}_t$. So, to finish generating $\hat{\cP}_t$ along with the necessary edges, we connect all disconnected components in each partition of $\hat{\cP}_t$, using arbitrary edges. 

We include pseudocode for this step in Algorithm \ref{alg:part-3}.

\begin{algorithm}[tb]
   \caption{$\textsc{LeaderCompression}(\cP, \ell, H, \text{Edges})$: A single round of leader compression on $\cP$ with respect to $H$, where $\ell$ is the function mapping each node to its leader in the partition. $\text{Edges}$ is a boolean value that denotes whether we want to generate edges (which is only needed for Step 3).}
   \label{alg:leader-compression}
\begin{algorithmic}[1]
    \STATE $S = \{\ell(x): x \in X\}$ \COMMENT{$S$ is the set of leader nodes}
    \FOR{$x \in S$}
        \STATE $b_x \leftarrow \text{Unif}(\{0, 1\})$. \COMMENT{$b_x$ is the random bit for leader node $x$} 
    \ENDFOR
    \FOR{$x \in X$}
        \STATE $b_x \leftarrow b_{\ell(x)}$.
    \ENDFOR
    \FOR{$x \in X$ with $b_x = 1$}
        \FOR{$e = (x, y) \in H$}
            \STATE $x$ sends message ``$(x,\ell(x))$'' to $y$
        \ENDFOR
    \ENDFOR
    \FOR{$y \in X$ with $b_y = 0$}
        \STATE Select any message $(x,\ell(x))$ sent to $y$ (if it exists)
        \STATE $y$ sends message ``$(x, y, \ell(x))$'' to $\ell(y)$.
    \ENDFOR
    \FOR{$z \in S$ with $b_z = 0$}
        \STATE Select any message $(x, y, \ell(x))$ sent to $z$ (if it exists)
        \IF{$\text{Edges}$}
            \STATE $E \leftarrow E \cup \{(x, y)\}$. \COMMENT{$E$ is a set of edges, initialized to $\emptyset$}
        \ENDIF
        \STATE $z$ broadcasts message ``$\ell(x)$'' to all descendants, all descendants (including $z$) update their leader to be $\ell(x)$.
    \ENDFOR
    \STATE Update partition $\cP$ based on leader function.
    \STATE \textbf{Return} $(\cP, \ell, H, E).$ \COMMENT{If Edges is False, we return False instead of $E$.}
\end{algorithmic}
\end{algorithm}

\begin{algorithm}[tb]
   \caption{$\textsc{Part1}(X, t, Q, \tilde{G}_t)$: Generating $\hat{\cP}_t$ for $t = \alpha^k$.}
   \label{alg:part-1}
\begin{algorithmic}[1]
    \STATE Generate partition $\cP_t^{(0)}$ of $X$ based on the quadtree at level $t/\beta$, i.e., two points in $X$ are in the same connected component in $\cP_t^{(0)}$ if and only if they are in the same cell at level $t/\beta$.
    \STATE Choose an arbitrary leader for each (nonempty) component in $\cP_t^{(0)}$, and let the corresponding leader mapping with respect to $\cP_t^{(0)}$ to be $\ell_t^{(0)}$.
    \FOR{$i = 1$ to $h = O(\log \log n)$}
        \STATE $(\cP_t^{(i)}, \ell^{(i)}_t, \tilde{G}_t, \text{False}) \leftarrow \textsc{LeaderCompression}(\cP_t^{(i-1)}, \ell^{(i-1)}_t, \tilde{G}_t, \text{False})$.
    \ENDFOR
    \FOR{$e = (x, y) \in \tilde{G}_t$}
        \IF{$\ell^{(h)}_t(x) \neq \ell^{(h)}_t(y)$}
            \STATE $(x, y)$ sends \textsc{Incomplete} to both $x$ and $y$
        \ENDIF
    \ENDFOR
    \FOR{$x \in X$ that received \textsc{Incomplete} message}
        \STATE $x$ sends \textsc{Incomplete} to $\ell^{(h)}_t(x)$.
    \ENDFOR
    \STATE Merge all Incomplete components with leaders in the same cell in $Q$ at level $\alpha^{k+1}/\beta$, to form $\hat{\cP}_t$. 
\end{algorithmic}
\end{algorithm}

\begin{algorithm}[tb]
   \caption{$\textsc{Part2}(X, t, \tilde{G}_t, \hat{\cP}_{t/\kappa}, \hat{\cP}_{t \cdot \kappa})$: Generating $\hat{\cP}_t$, given $\hat{\cP}_{t/\kappa}$ and $\hat{\cP}_{t \cdot \kappa}$.}
   \label{alg:part-2}
\begin{algorithmic}[1]
    \STATE Initialize $\cP_t^{(0)} \leftarrow \hat{\cP}_{t/\kappa}$, and choose an arbitrary leader for each component in $\cP_t^{(0)}$.
    Let the corresponding leader mapping with respect to $\cP_t^{(0)}$ to be $\ell_t^{(0)}$
    \FOR{$i = 1$ to $h = O(\log \log n)$}
        \STATE $(\cP_t^{(i)}, \ell^{(i)}_t, \tilde{G}_t, \text{False}) \leftarrow \textsc{LeaderCompression}(\cP_t^{(i-1)}, \ell^{(i-1)}_t, \tilde{G}_t, \text{False})$.
    \ENDFOR
    \FOR{$e = (x, y) \in \tilde{G}_t$}
        \IF{$\ell^{(h)}_t(x) \neq \ell^{(h)}_t(y)$}
            \STATE $(x, y)$ sends \textsc{Incomplete} to both $x$ and $y$
        \ENDIF
    \ENDFOR
    \FOR{$x \in X$ that received \textsc{Incomplete} message}
        \STATE $x$ sends \textsc{Incomplete} to $\ell^{(h)}_t(x)$.
    \ENDFOR
    \STATE Merge all Incomplete components with leaders in the same component in $\hat{\cP}_{t \cdot \kappa}$, to form $\hat{\cP}_t$. 
\end{algorithmic}
\end{algorithm}

\begin{algorithm}[tb]
   \caption{$\textsc{Part3}(X, t, \tilde{G}_t, \hat{\cP}_{t/2}, \hat{\cP}_{t})$: Generating the edges merging $\hat{\cP}_{t/2}$ into $\hat{\cP}_{t}$.}
   \label{alg:part-3}
\begin{algorithmic}[1]
    \STATE Initialize $\cP_t^{(0)} \leftarrow \hat{\cP}_{t/2}$, and choose an arbitrary leader for each component in $\cP_t^{(0)}$. Let the corresponding leader mapping with respect to $\cP_t^{(0)}$ to be $\ell_t^{(0)}$.
    \FOR{$i = 1$ to $h = O(\log \log n)$}
        \STATE $(\cP_t^{(i)}, \ell^{(i)}_t, \tilde{G}_t, E_t^{(i)}) \leftarrow \textsc{LeaderCompression}(\cP_t^{(i-1)}, \ell^{(i-1)}_t, \tilde{G}_t, \text{True})$.
    \ENDFOR
    \STATE Create an arbitrary star on the set of leaders $\{\ell^{(h)}_t(x)\}$ in the same component in $\hat{\cP}_t$, to create a forest of stars $F_t$.
    \STATE \textbf{Return} $F_t \cup \bigcup_{i=1}^h E_t^{(i)}$.
\end{algorithmic}
\end{algorithm}

\section{Analysis of the Offline Algorithm}\label{sec:offline_analysis}

\subsection{Comparison to the Spanner Graph}

Given the spanners $\tilde{G}_t$ for each $t$ a power of $2$ (as described in Section~\ref{sec:offline_algorithm}), we create the graph $\tilde{G}$ as follows. For each pair of distinct vertices $(y, z)$, we assign $(y, z)$ the weight $t$ for $t$ the smallest power of $2$ such that $(y, z) \in \tilde{G}_t$ (if such a $t$ exists). Otherwise, we do not connect $(y, z)$.

We first note that any spanning tree in $\tilde{G}$ does not increase in cost by too much after converting to the corresponding Euclidean spanning tree.

\begin{proposition} \label{prop:MST_spanner_not_too_small}
    Suppose that $T$ is a spanning tree in $\tilde{G}$. Then, the cost of $T$ in $X$ (i.e., over the true Euclidean distance) is at most $O(1/\eps)$ times the cost of $T$ in $\tilde{G}$.
\end{proposition}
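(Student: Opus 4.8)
The plan is to pass from each edge of $T$ (which is an edge of some spanner $\tilde{G}_t$) to its true Euclidean length, and bound the sum of Euclidean lengths by a constant times the sum of the assigned weights $t$. First I would fix an edge $(y,z)\in T$ and let $t$ be the weight assigned to it in $\tilde{G}$, i.e.\ the smallest power of $2$ with $(y,z)\in\tilde{G}_t$. By the way the spanners were constructed, $\tilde{G}_t$ is a union of $O(1/\eps)$-approximate $2$-hop Euclidean spanners of length $t$ on the cells $X_c$ at level $\alpha^{k+1}/\beta$ (for the appropriate $k$), together with the edges inherited from $\tilde{G}_{t/2},\tilde{G}_{t/4},\dots$. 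Property 2 of a $2$-hop Euclidean spanner of length $t'$ says it contains no edge of Euclidean length $> t'$. So if $(y,z)$ first appears at level $t$, either it is a genuine spanner edge of length $t$, giving $\|y-z\|_2\le t$, or it was inherited from some $\tilde{G}_{t'}$ with $t'<t$ — but then it would already be present at level $t'<t$, contradicting minimality of $t$. Hence $\|y-z\|_2 \le t = w_{\tilde G}(y,z)$ for every tree edge.

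Given that pointwise bound, summing over the $n-1$ edges of $T$ immediately gives $\operatorname{cost}_X(T) = \sum_{(y,z)\in T}\|y-z\|_2 \le \sum_{(y,z)\in T} w_{\tilde G}(y,z) = \operatorname{cost}_{\tilde G}(T)$, which is even stronger than the claimed $O(1/\eps)$ factor. (The constant $O(1/\eps)$ in the statement presumably leaves room for the convention in which the spanner edges are only guaranteed length $O(t/\eps)$ rather than exactly $\le t$; with the Definition of $2$-hop Euclidean spanner as stated in the excerpt, Property 2 gives $\le t$ exactly, so the factor is $1$, but I will phrase the bound as $\|y-z\|_2 = O(t/\eps)$ to be safe and robust to the exact spanner guarantee being used.) Either way, the per-edge inequality is the whole content.

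I do not expect a real obstacle here: the only subtlety is making sure the ``monotonization'' step (adding $\tilde G_t$ into $\tilde G_{2t},\tilde G_{4t},\dots$) does not create an edge whose assigned weight $t$ is smaller than its true length — but this cannot happen, precisely because assigning weight $t$ means $(y,z)$ was \emph{already absent} from all $\tilde G_{t'}$ with $t'<t$, so $(y,z)$ must be a freshly-added spanner edge at level $t$ and thus has Euclidean length $O(t/\eps)$ by the spanner length bound. So the one thing to be careful about is simply invoking minimality of $t$ correctly; everything else is a one-line summation.
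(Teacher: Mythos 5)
Your proposal is correct and matches the paper's proof: both reduce to the per-edge bound $\|y-z\|_2 \le O(t/\eps)$ for an edge of assigned weight $t$ and then sum over the $n-1$ tree edges. The paper simply asserts that $(x,y)\in\tilde G_t$ implies $\|x-y\|_2\le O(t/\eps)$ (which survives the monotonization since inherited edges only get longer thresholds), while you additionally justify it via minimality of $t$; the extra care is harmless but not needed.
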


\begin{proof}
    We recall that $(x, y) \in \tilde{G}_t$ only if $\|x-y\|_2 \le O(t/\eps)$. Therefore, if $T = \{(y_i, z_i)\}_{i=1}^{n-1}$, the cost of $T$ in $X$ is $\sum_{i=1}^{n-1} \|y_i-z_i\|$. However, if the edge $(y_i, z_i)$ had weight $t_i$, then $\|y_i-z_i\| \le O(t_i/\eps)$, so $t_i \ge \Omega(\eps) \cdot \|y_i-z_i\|$. Thus, the cost of $T$ in $\tilde{G}$ is at least $\sum_{i=1}^{n-1} \Omega(\eps) \cdot \|y_i-z_i\|$. This completes the proof.
\end{proof}

Next, we show that the minimum spanning tree cost in $\tilde{G}$ is not much more than the Euclidean spanning tree cost. 
To do so, we make use of the following two simple but important propositions.

\begin{proposition}[Folklore] \label{prop:quadtree-basic}
    For any edge $e$ connecting two points $p, q \in [0, \Delta]^d$ of length $w = \|p-q\|_2$, the probability that a randomly shifted grid of side length $L \le \Delta$ splits the edge (i.e., $p, q$ are not in the same cell in this grid) is at most $\frac{w \cdot \sqrt{d}}{L}$
\end{proposition}

\begin{proposition}~\cite[restated]{czumaj2009estimating} \label{lem:czumajsohler}
    Let $G$ be any weighted graph on a dataset $X$, with all edges having weight at most $\Delta$ and at least some sufficiently large constant.
    Let $\MST(G)$ denote the weight of the minimum spanning tree of $G$. For each $t$, let $\cP_t$ represent the partition of $X$ representing the connected components of the threshold graph $G_t$ of edges with weight at most $t$.
    Then,
\[\MST(G) = \Theta(1) \cdot \left(\sum_{t = 1}^\Delta (|\cP_t|-1) \cdot t\right),\]
    where the sum ranges over all $1 \le t \le \Delta$ where $t$ is a power of $2$.
\end{proposition}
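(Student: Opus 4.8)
The plan is to conceptually run Kruskal's algorithm and count minimum spanning tree edges by weight scale. The one structural input is the standard \emph{Kruskal fact}: for every threshold $t$, the number of edges of a minimum spanning tree $T$ of $G$ with weight at most $t$ is exactly $n - |\cP_t|$. This holds because Kruskal's algorithm processes edges in nondecreasing weight order and adds an edge precisely when it joins two distinct current components; hence after all edges of weight $\le t$ have been processed, the partial forest is a maximal forest of the threshold graph $G_t$, so it has the same connected components as $G_t$ (namely $|\cP_t|$ of them) and therefore $n - |\cP_t|$ edges, and these are exactly the edges of $T$ of weight $\le t$. (We use here that $G$ is connected so that $T$ has $n-1$ edges; ties in weights are harmless since $G_t$ and the edge count both use the ``$\le t$'' convention, and the multiset of edge weights of a minimum spanning tree is independent of tie-breaking.)

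Since $T$ has $n-1$ edges, the Kruskal fact gives that the number of MST edges of weight \emph{strictly greater} than $t$ equals $(n-1) - (n - |\cP_t|) = |\cP_t| - 1$. Substituting this into the right-hand side of the claimed identity and interchanging the order of summation,
\[
\sum_{\substack{1 \le t \le \Delta \\ t \text{ a power of } 2}} (|\cP_t| - 1)\, t
\;=\; \sum_{\substack{1 \le t \le \Delta \\ t \text{ a power of } 2}} t \cdot \#\{e \in T : w(e) > t\}
\;=\; \sum_{e \in T}\ \sum_{\substack{t \text{ a power of } 2 \\ t < w(e)}} t ,
\]
where in the inner sum the constraint $t \le \Delta$ is automatic because $w(e) \le \Delta$ for every edge of $G$.

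The last step is to evaluate the inner geometric sum. Let $2^j$ be the largest power of $2$ strictly below $w(e)$; this exists because every edge weight is at least a sufficiently large constant, in particular $w(e) > 1$, so $j \ge 0$. Then $\sum_{t \text{ a power of }2,\, t < w(e)} t = 1 + 2 + \cdots + 2^j = 2^{j+1} - 1$, and since $2^j < w(e) \le 2^{j+1}$ this quantity lies in $[\tfrac{w(e)}{2}, 2 w(e))$. Summing over $e \in T$ yields $\tfrac12 \MST(G) \le \sum_t (|\cP_t| - 1)\, t \le 2\,\MST(G)$, which is the desired $\Theta(1)$ relation; one can then invoke this with $G = \tilde G$ (or $G$ the complete Euclidean graph) as needed.

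The argument is short and essentially routine; the only place any hypothesis is actually used — and hence the ``main obstacle'' worth being careful about — is the setup of the Kruskal fact and the geometric sum: we need $G$ connected so that $T$ spans $X$ with $n-1$ edges (in particular $|\cP_\Delta| = 1$, so the top boundary term vanishes), and we need the minimum edge weight bounded away from $1$ so that each MST edge has at least one power of $2$ below it and the inner sum is genuinely $\Theta(w(e))$ rather than possibly $0$. Both are guaranteed by the assumption that all edge weights lie between a sufficiently large constant and $\Delta$.
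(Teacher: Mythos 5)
Your proof is correct, and it is essentially the same Kruskal-based counting argument that the paper sketches around Equation (\ref{eqn:intro1}) of the technical overview (and that underlies the cited Czumaj--Sohler result): the identity $\#\{e \in T : w(e) > t\} = |\cP_t| - 1$ followed by swapping the order of summation is just the double-counting form of the paper's summation-by-parts over $|\cP_{2^i}| - |\cP_{2^{i+1}}|$. You also correctly flag the only hypotheses actually used (connectivity of $G$ and the lower bound on edge weights), so nothing is missing.
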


Let $\MST$ represent the true minimum spanning tree cost of $X$, and for any weighted graph $G$, let $\MST(G)$ represent the minimum spanning tree cost of $G$. First, we prove the following proposition.

\begin{proposition} \label{prop:spanner_cuts_few_edges}
    Suppose that $\alpha^{k-1} < t \le \alpha^{k}.$ Then, $\BE[|\tilde{\cP}_{t}|] \le |\cP_t| + \MST \cdot \frac{\sqrt{d} \cdot \beta}{\alpha^{k+1}}$.
\end{proposition}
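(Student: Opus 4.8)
The plan is to bound the expected number of connected components in $\tilde{G}_t$ by relating it to the connected components of the true threshold graph $G_t$, accounting only for the edges of $G_t$ that get ``lost'' when we restrict the spanner to cells of side length $\alpha^{k+1}/\beta$ in the quadtree $Q$. Recall that $\tilde{G}_t$ is constructed so that within each cell $c$ at level $\alpha^{k+1}/\beta$, the graph $\tilde{G}_t(X_c)$ is a $2$-hop spanner of length $t$ on $X_c$; thus, for any two points $p,q$ with $\|p-q\|_2 \le t$ that lie in the \emph{same} cell at level $\alpha^{k+1}/\beta$, there is a path of length at most $2$ between them in $\tilde{G}_t$. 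Consequently, if we start from the partition $\cP_t$ (connected components of $G_t$) and only ``break'' a component when some edge $(p,q) \in G_t$ has $p,q$ in different cells at level $\alpha^{k+1}/\beta$, then $\tilde{\cP}_t$ is at least as coarse as the partition we would obtain by deleting all such cut edges from $G_t$. Hence $|\tilde{\cP}_t| \le |\cP_t| + (\text{number of edges of } G_t \text{ cut by the level-}\alpha^{k+1}/\beta \text{ grid, counted appropriately})$ — more carefully, each cut edge can increase the component count by at most $1$, so it suffices to bound the expected number of cut edges, or better, to bound it using a spanning-tree-of-$G_t$ argument.

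The cleaner route I would take: fix a minimum spanning tree $T^\star$ of the \emph{Euclidean} graph $G$ on $X$ (of weight $\MST$). For each power of two $t$, consider the subforest $T^\star_t$ of $T^\star$ consisting of edges of weight at most $t$. Standard facts give that the number of connected components of $T^\star_t$ equals $|\cP_t|$ (the MST restricted to short edges has exactly the same components as the threshold graph $G_t$). Now I claim $\tilde{\cP}_t$ is refined by the partition induced by $T^\star_t$ \emph{minus the edges of $T^\star_t$ cut by the level-}$\alpha^{k+1}/\beta$ \emph{grid}: indeed, any edge $(p,q) \in T^\star_t$ not cut by that grid has $\|p-q\|_2 \le t$ and $p,q$ in the same cell at that level, hence $p,q$ are connected in $\tilde{G}_t$. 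Deleting a set of $M$ edges from a forest increases its number of components by exactly $M$, so
\begin{equation*}
|\tilde{\cP}_t| \le |\cP_t| + \#\{e \in T^\star_t : e \text{ is cut by the level-}\alpha^{k+1}/\beta \text{ grid}\}.
\end{equation*}
Taking expectations over the random shift of the quadtree and applying Proposition~\ref{prop:quadtree-basic}, each edge $e = (p,q) \in T^\star_t$ (so $\|p-q\|_2 \le t \le \alpha^k$) is cut with probability at most $\frac{\|p-q\|_2 \cdot \sqrt{d}}{\alpha^{k+1}/\beta} = \frac{\|p-q\|_2 \cdot \sqrt{d} \cdot \beta}{\alpha^{k+1}}$. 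Summing over all $e \in T^\star_t \subseteq T^\star$ and using $\sum_{e \in T^\star} \|e\|_2 = \MST$ yields
\begin{equation*}
\BE[|\tilde{\cP}_t|] \le |\cP_t| + \sum_{e \in T^\star_t} \frac{\|e\|_2 \cdot \sqrt{d} \cdot \beta}{\alpha^{k+1}} \le |\cP_t| + \frac{\sqrt{d}\cdot \beta}{\alpha^{k+1}}\cdot \MST,
\end{equation*}
which is the claim.

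The main obstacle I anticipate is being careful about the precise relationship between $\tilde{\cP}_t$ and the ``pruned forest'' partition — in particular, verifying that $\tilde{\cP}_t$ is genuinely coarser (i.e. $T^\star_t$-minus-cut-edges refines $\tilde{\cP}_t$), which relies on the $2$-hop spanner property holding \emph{within each cell} and on the fact that $\tilde{G}_t$ is the union of the per-cell spanners, so no edge of $\tilde{G}_t$ crosses a level-$\alpha^{k+1}/\beta$ cell boundary (hence there is no risk of $\tilde{G}_t$ being so sparse that it fails to realize connectivity of unsplit tree edges). A secondary subtlety is confirming the folklore fact that the components of $T^\star_t$ coincide with those of $G_t$; this is immediate from the cut property of MSTs (if two points are connected in $G_t$, the path between them in $T^\star$ uses only edges of weight $\le t$, else a heavier MST edge could be swapped out), but it should be stated explicitly. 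Everything else is a direct application of linearity of expectation and Proposition~\ref{prop:quadtree-basic}.
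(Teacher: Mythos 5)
Your proof is correct and follows essentially the same approach as the paper: both charge the extra components of $\tilde{\cP}_t$ to edges of a light spanning forest (total weight at most $\MST$, all edges of weight at most $t$) that are cut by the randomly shifted grid at level $\alpha^{k+1}/\beta$, and both bound the expected number of cut edges via Proposition~\ref{prop:quadtree-basic}. The only cosmetic difference is that you use the global MST restricted to edges of weight at most $t$, whereas the paper uses per-component MSTs $M_S$ together with $\sum_{S}\MST(S)\le\MST$; these are interchangeable.
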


\begin{proof}
    Let $L := \alpha^{k+1}/\beta$.
    For each component $S \in \cP_t$, let $M_S$ represent the minimum spanning tree of $X_S$, and $\MST(S)$ represent the cost of $M_S$. Note that $\MST \ge \sum_{S \in \cP_t} \MST(S),$ by Kruskal's algorithm. Now, for each edge $e \in M_S$ of weight $w(e)$, the probability that the randomly shifted grid at level $L$ splits $e$ is at most $\frac{w(e) \cdot \sqrt{d}}{L}$, by \Cref{prop:quadtree-basic}. This means that the expected number of edges in $M_S$ across all $S \in \cP_t$ that are cut is at most $\sum_{S \in \cP_t} \frac{\MST(S) \cdot \sqrt{d}}{L} \le \MST \cdot \frac{\sqrt{d}}{L} = \MST \cdot \frac{\sqrt{d} \cdot \beta}{\alpha^{k+1}}.$ 
    
    Next, for any piece of a tree that has not been cut, all of the points in this piece will be in the same connected component in the spanner $\tilde{G}_t$. Therefore, the additional number of connected components is at most the number of cut edges, which completes the proof.
\end{proof}

Hence, we have the following corollary.

\begin{corollary} \label{cor:MST_spanner_not_too_big}
    The minimum spanning tree cost in $\tilde{G}$, in expectation, is at most $O(1) \cdot \MST(X)$.
\end{corollary}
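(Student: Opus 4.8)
The plan is to combine Corollary~\ref{cor:MST_spanner_not_too_big}'s ingredients directly. We want to bound $\BE[\MST(\tilde{G})]$. The natural tool is Proposition~\ref{lem:czumajsohler} applied to the graph $\tilde{G}$: since every edge of $\tilde{G}$ has weight a power of $2$ between $1$ and $O(\Delta)$ (and at least the large constant $\alpha^{100}$ by our aspect-ratio assumption), we get $\MST(\tilde{G}) = \Theta(1) \cdot \sum_{t} (|\tilde{\cP}_t| - 1) \cdot t$, where $\tilde{\cP}_t$ is the partition of $X$ into connected components of $\tilde{G}_t$ — which is exactly the object controlled by Proposition~\ref{prop:spanner_cuts_few_edges}. (Here I should double-check that the connected components of the threshold subgraph of $\tilde{G}$ at threshold $t$ agree with $\tilde{\cP}_t$; this holds because we enforced $\tilde{G}_t \subseteq \tilde{G}_{2t}$, so the edges of $\tilde{G}$ of weight $\le t$ are precisely the edges of $\tilde{G}_t$.)

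The next step is to take expectations over the random shift of the quadtree and substitute the bound from Proposition~\ref{prop:spanner_cuts_few_edges}. For $t$ with $\alpha^{k-1} < t \le \alpha^k$ we have $\BE[|\tilde{\cP}_t|] \le |\cP_t| + \MST \cdot \frac{\sqrt{d}\,\beta}{\alpha^{k+1}}$. Summing over all power-of-two $t$ and grouping the $t$'s by which ``band'' $(\alpha^{k-1}, \alpha^k]$ they fall in: the contribution of the first term gives $\sum_t |\cP_t| \cdot t$, which by Proposition~\ref{lem:czumajsohler} applied to the \emph{Euclidean} graph $G$ is $\Theta(1) \cdot \MST(X)$ (using $|\cP_t| - 1 \le |\cP_t| \le 2(|\cP_t|-1)$ for $|\cP_t|\ge 1$, or just noting the $-1$'s contribute a lower-order $O(\Delta\log\Delta)$ term dominated by $\MST \ge \Omega(n)$). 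The contribution of the error term is $\sum_k \sum_{t \in (\alpha^{k-1},\alpha^k]} t \cdot \MST \cdot \frac{\sqrt{d}\,\beta}{\alpha^{k+1}}$; since within a band the $t$'s are geometrically increasing and bounded by $\alpha^k$, the inner sum is $O(\alpha^k)$, so each band contributes $O(\MST \cdot \sqrt{d}\,\beta / \alpha)$, and there are $O(\log_\alpha \Delta) = O(\log n / \log\log n)$ bands.

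Putting it together, $\BE[\MST(\tilde{G})] \le O(1)\cdot\MST(X) + O\!\left(\MST(X) \cdot \frac{\sqrt{d}\,\beta}{\alpha}\cdot \frac{\log n}{\log\log n}\right)$. Since $d = \Theta(\log n)$ and $\alpha/\beta \ge (\log n)^C$ for a sufficiently large constant $C$, the factor $\frac{\sqrt{d}\,\beta}{\alpha}\cdot\frac{\log n}{\log\log n} \le \frac{\sqrt{\log n}}{(\log n)^C}\cdot\frac{\log n}{\log\log n} = o(1)$, so the error term is at most $\MST(X)$, and we conclude $\BE[\MST(\tilde{G})] = O(1)\cdot\MST(X)$.

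I don't expect a serious obstacle here — this is a bookkeeping argument stitching together the two propositions. The one place to be careful is the identification of the threshold components of $\tilde{G}$ with $\tilde{\cP}_t$ (needs the monotonicity $\tilde{G}_t \subseteq \tilde{G}_{2t}$, which we built in) and making sure the ``$-1$'' terms and the discrepancy between summing over all powers of $2$ versus bands don't cause trouble — both are handled by the crude bound $\MST(X) \ge \Omega(n)$ coming from the minimum-distance assumption, which dominates any $\poly(\log n)\cdot\Delta$-type slack.
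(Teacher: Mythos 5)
Your proposal is correct and follows essentially the same route as the paper: apply Proposition~\ref{lem:czumajsohler} to both $G$ and $\tilde{G}$, substitute the per-level bound from Proposition~\ref{prop:spanner_cuts_few_edges}, and absorb the error using $\alpha/\beta \ge (\log n)^C$ and $d = \Theta(\log n)$ (the paper simply bounds each of the $O(\log n)$ levels by $\sqrt{d}\beta/\alpha \cdot \MST$ rather than grouping into bands, and avoids your $-1$ bookkeeping by subtracting the two sums so those terms cancel). One small slip: your fallback justification for dropping the $-1$'s invokes $\MST \ge \Omega(n)$, which does not dominate the $O(\Delta)$ slack since $\Delta = \Theta(n^2/\log n)$; the correct (and available) bound is $\MST(X) \ge \Omega(\Delta)$ because the MST cost is at least the diameter of the point set.
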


\begin{proof}
    By~\Cref{lem:czumajsohler}, we can write $\MST(X) = \Theta(1) \cdot \left(\sum_{t} (|\cP_t|-1) \cdot t\right),$ where $t$ ranges as powers of $2$ from $1$ to $\Delta$. Likewise, $\MST(\tilde{G}) = \Theta(1) \cdot \left(\sum_{t} (|\tilde{\cP}_t|-1) \cdot t\right).$ So,
\[\BE[\MST(\tilde{G})] \le O(1) \cdot \left(\MST(X) + \sum_t t \cdot (\BE[|\tilde{\cP}_t|]-|\cP_t|)\right).\]
    Using \Cref{prop:spanner_cuts_few_edges}, and the fact that $t \le \alpha^k$ in \Cref{prop:spanner_cuts_few_edges}, this is at most
\begin{align*}
    O(1) \cdot \left(\MST(X) + \sum_t t \cdot \MST(X) \cdot \frac{\sqrt{d} \cdot \beta}{t \cdot \alpha}\right) &= O(1) \cdot \left(\MST(X) + \MST(X) \cdot \sum_t \frac{\sqrt{d} \cdot \beta}{\alpha}\right) \\
    &= O(1) \cdot \MST(X),
\end{align*}
    as long as $\alpha \ge \sqrt{d} \cdot \beta \cdot \log n$.
\end{proof}

Hence, it suffices to find an $O(1)$-approximate MST in the graph $\tilde{G}$. This tree will have cost at most $O(1) \cdot \MST(X)$ in $\tilde{G}$ by \Cref{cor:MST_spanner_not_too_big}, so by \Cref{prop:MST_spanner_not_too_small}, it also has Euclidean cost at most $O(1/\eps) \cdot \MST(X)$. The rest of the analysis will go into showing the algorithm finds an $O(1)$-approximate MST in $\tilde{G}$.

\subsection{Important Properties of Leader Compression with Early Termination}

Here, we note some simple but important properties of leader compression, and some general properties of the approximate connected components we form.

\begin{definition}
    Given two partitions $\cP$ and $\cQ$ on $X$, we define $\cP \oplus \cQ$ to be the finest partition $\cR$ such that $\cR \sqsubseteq \cP$ and $\cR \sqsubseteq \cQ$. Equivalently, it is the partition generated by merging a spanning forest of $\cP$ and of $\cQ$, and taking the connected components.
    
    Given a graph $H$ on $X$, we define $\cP_H$ as the set of connected components of the graph $H$. We abuse notation and write $\cP \oplus H$ to mean $\cP \oplus \cP_H$.
    %
\end{definition}

First, we note the following basic proposition.

\begin{proposition} \label{prop:leader-compression-basic}
    Let $\cP^{(0)}$ be a starting partition, with a graph $H$. After $h$ rounds of leader compression, let $\cP^{(h)}$ be the set of connected components. Then, $\cP^{(0)} \oplus H \sqsubseteq \cP^{(h)} \sqsubseteq \cP^{(0)}$.
\end{proposition}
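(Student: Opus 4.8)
The plan is to prove the two containments $\cP^{(0)}\oplus H \sqsubseteq \cP^{(h)}$ and $\cP^{(h)}\sqsubseteq \cP^{(0)}$ separately, both by induction on the number of rounds $h$, since a single round of leader compression is the natural atomic step. The key observation is that in each round of \textsc{LeaderCompression}, the partition is only ever made coarser (leaders absorb descendants, components merge), and every merge that happens across two distinct components $C,C'$ of the current partition is witnessed by an actual edge $(x,y)\in H$ with $x$ in one component and $y$ in the other. So I would set up the induction with the statement: after $i$ rounds, $\cP^{(0)}\oplus H \sqsubseteq \cP^{(i)} \sqsubseteq \cP^{(0)}$.

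For the easy containment $\cP^{(h)}\sqsubseteq \cP^{(0)}$: a single round of leader compression only merges components (it never splits one), so $\cP^{(i)}\sqsubseteq \cP^{(i-1)}$ for each $i$, and transitivity of $\sqsubseteq$ gives $\cP^{(h)}\sqsubseteq \cP^{(0)}$. This just needs one to read off from Algorithm~\ref{alg:leader-compression} that the only change to $\cP$ is that some leaders $z$ adopt a new leader $\ell(x)$ together with all their descendants, which merges $z$'s component into $\ell(x)$'s component and changes nothing else.

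For the harder containment $\cP^{(0)}\oplus H \sqsubseteq \cP^{(h)}$: by definition $\cP^{(0)}\oplus H$ is the coarsest-common-refinement-from-below, i.e. the finest partition refining both $\cP^{(0)}$ and $\cP_H$; equivalently, it is the partition whose blocks are the connected components of the graph $G^\star$ obtained by adding all edges of $H$ on top of a spanning forest of $\cP^{(0)}$. So it suffices to show: for any two points $u,v$ in the same block of $\cP^{(0)}\oplus H$, $u$ and $v$ end up in the same block of $\cP^{(h)}$. Since $\sqsubseteq$-membership is generated by edges, it is enough to handle (a) $u,v$ in the same block of $\cP^{(0)}$ — immediate since $\cP^{(h)}\sqsupseteq\cP^{(0)}$ is false; wait, we need $\cP^{(h)}$ at least as coarse as $\cP^{(0)}$, which is exactly the easy containment — and (b) $u,v$ with $(u,v)\in H$. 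The real content is case (b) together with the fact that being in the same $\cP^{(h)}$-block is an equivalence relation, so it propagates along paths. Here the main obstacle, and the only subtle point, is that leader compression does \emph{not} guarantee that an edge $(u,v)\in H$ gets contracted in any particular round — indeed the whole paper exploits that it may not. So I cannot argue "edge $(u,v)$ is contracted in round $i$''. Instead I only need the \emph{one-directional} containment: $\cP^{(0)}\oplus H$ is a refinement of $\cP^{(h)}$, i.e. $\cP^{(h)}$ is \emph{coarser} — no wait, re-reading: $\cP^{(0)}\oplus H \sqsubseteq \cP^{(h)}$ means $\cP^{(0)}\oplus H$ refines $\cP^{(h)}$, i.e. $\cP^{(h)}$ is coarser than $\cP^{(0)}\oplus H$. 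That is the reverse of what I said; let me reconsider which direction is hard.

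Actually $\cP^{(h)}\sqsubseteq \cP^{(0)}$ ($\cP^{(h)}$ refines $\cP^{(0)}$) is \emph{false} for leader compression, which coarsens; so the convention here must be that $\cP\sqsubseteq\cQ$ means $\cP$ is \emph{coarser}. Let me just re-read the definition: "$\cP$ \emph{refines} $\cQ$ ... if every partition component in $\cP$ is a subset of some component in $\cQ$. We use $\cP\sqsupseteq\cQ$.'' So refiner is $\sqsupseteq$. Thus $\cP^{(h)}\sqsubseteq\cP^{(0)}$ means $\cP^{(h)}$ is \emph{coarser than or equal to} $\cP^{(0)}$ — true, since leader compression only coarsens. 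And $\cP^{(0)}\oplus H\sqsubseteq\cP^{(h)}$ means $\cP^{(h)}$ is coarser than $\cP^{(0)}\oplus H$. So I need: every merge leader compression performs is ``justified'' — it never merges two blocks of $\cP^{(0)}\oplus H$. I would prove by induction that $\cP^{(i)}\sqsubseteq \cP^{(0)}\oplus H$, i.e. each $\cP^{(i)}$ is coarser than $\cP^{(0)}\oplus H$ — no, I need $\cP^{(i)}$ \emph{finer} (i.e. $\sqsupseteq$) than $\cP^{(0)}\oplus H$? The statement to prove is $\cP^{(0)}\oplus H \sqsubseteq \cP^{(h)}$, i.e. $\cP^{(h)}$ coarser than $\cP^{(0)}\oplus H$. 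I will prove the stronger loop invariant $\cP^{(0)} \sqsubseteq \cP^{(i)} \sqsubseteq \cP^{(0)}\oplus H$ by induction: base case $i=0$ holds since $\cP^{(0)}\sqsubseteq\cP^{(0)}\sqsubseteq\cP^{(0)}\oplus H$ (the last because adding edges only coarsens). For the inductive step, $\cP^{(i)}$ coarsens $\cP^{(i-1)}$ so the left inequality is kept; for the right inequality, I must check that whenever leader $z$ (in block $B_z$ of $\cP^{(i-1)}$) adopts leader $\ell(x)$ (in block $B_x$), the edge $(x,y)\in H$ used (with $y\in B_z$, since $y$ sent to its leader $z$) already places $B_z$ and $B_x$ in the same block of $\cP^{(0)}\oplus H$: indeed $B_z,B_x$ are each contained (by the inductive right inequality) in blocks of $\cP^{(0)}\oplus H$, and $(x,y)\in H$ with $x\in B_x$, $y\in B_z$ forces those two blocks to coincide since $\cP^{(0)}\oplus H$ contains all edges of $H$. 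Hence $\cP^{(i)}\sqsubseteq \cP^{(0)}\oplus H$, closing the induction. The main obstacle is purely bookkeeping: tracking, from the pseudocode of Algorithm~\ref{alg:leader-compression}, exactly which pairs get identified in a round (namely: $z$'s block with $\ell(x)$'s block, via a genuine $H$-edge $(x,y)$ with $y$ a descendant of $z$), and confirming these are the \emph{only} identifications; once that is pinned down the inductive argument is a one-liner each direction.
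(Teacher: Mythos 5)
Your argument is correct in substance and is essentially the paper's proof: both containments go by induction on the number of rounds, the coarsening direction being immediate and the other direction resting on the observation that every merge performed in a round is witnessed by an actual edge of $H$ between the two components being merged, so no round ever joins two blocks lying in different blocks of $\cP^{(0)}\oplus H$. The paper packages the inductive step slightly differently --- it first deduces $\cP^{(h-1)}\oplus H=\cP^{(0)}\oplus H$ from the sandwich $\cP^{(0)}\oplus H\sqsubseteq\cP^{(h-1)}\sqsubseteq\cP^{(0)}$ and then reapplies the single-round case --- but this is the same idea, and your direct verification of each adoption step is an acceptable substitute. One thing to fix before writing this up: your final stated loop invariant ``$\cP^{(0)}\sqsubseteq\cP^{(i)}\sqsubseteq\cP^{(0)}\oplus H$'' has the symbols reversed (under the paper's convention $\cQ\sqsubseteq\cP$ means $\cP$ refines $\cQ$, so the invariant you actually argue for, and the correct one, is $\cP^{(0)}\oplus H\sqsubseteq\cP^{(i)}\sqsubseteq\cP^{(0)}$); your surrounding prose describes the right relations, so this is purely a notational slip rather than a gap.
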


\begin{proof}
    Since we are only connecting connected components together, we trivially have that $\cP^{(h)} \sqsubseteq \cP^{(0)}.$ To prove that $\cP^{(0)} \oplus H \sqsubseteq \cP^{(h)}$, we first consider the case that $h = 1$. In this case, we never merge two connected components in $\cP^{(0)}$ unless they had an edge in $H$, so the proof is clear.
    
    For general $h \ge 2$, we proceed by induction (base case $h = 1$ is already done). We know that $\cP^{(0)} \oplus H \sqsubseteq \cP^{(h-1)} \sqsubseteq \cP^{(0)}$. Since $\cP^{(h-1)} \sqsubseteq \cP^{(0)}$, this implies that $\cP^{(h-1)} \oplus H \sqsubseteq \cP^{(0)} \oplus H$. However, we also know that $\cP^{(h-1)} \sqsupseteq \cP^{(0)} \oplus H$, so this clearly implies that $\cP^{(h-1)} \oplus H \sqsupseteq \cP^{(0)} \oplus H$. So, in fact $\cP^{(h-1)} \oplus H = \cP^{(0)} \oplus H$. Therefore, by the base case, if we performing a single round of leader compression on $\cP^{(h-1)}$ to obtain $\cP^{(h)}$, we have that $\cP^{(0)} \oplus H = \cP^{(h-1)} \oplus H \sqsubseteq \cP^{(h)}$, which completes the proof.
\end{proof}

We next note a simple proposition about $\bar{\cP}_t$.

\begin{proposition} \label{prop:bar-P-single-cell}
    For $t = \alpha^k$, we have that $\bar{\cP}_t \sqsupseteq \tilde{\cP}_t$.
    Hence, every connected component $C \in \bar{\cP}_t$ is in a single cell at level $\alpha^{k+1}/\beta$.
\end{proposition}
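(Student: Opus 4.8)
The plan is to derive both parts of the statement from Proposition~\ref{prop:leader-compression-basic}, once we observe that the \emph{starting} partition $\cP_t^{(0)}$ already refines $\tilde{\cP}_t$. Recall $\bar{\cP}_t$ is obtained by running $h$ rounds of leader compression on $\cP_t^{(0)}$ with respect to $\tilde{G}_t$, and that $\cP_t^{(0)}$ is the partition of $X$ into level-$(t/\beta)$ quadtree cells. Proposition~\ref{prop:leader-compression-basic} gives $\cP_t^{(0)} \oplus \tilde{G}_t \sqsubseteq \bar{\cP}_t$, so it suffices to show $\cP_t^{(0)} \oplus \tilde{G}_t = \tilde{\cP}_t$; intuitively this just says leader compression with respect to $\tilde{G}_t$ can only ever merge components lying in a common connected component of $\tilde{G}_t$, and that $\cP_t^{(0)}$ is already coarse enough that merging it ``all the way along $\tilde{G}_t$'' recovers exactly $\tilde{\cP}_t$.

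The one genuinely geometric step is to verify $\cP_t^{(0)} \sqsupseteq \tilde{\cP}_t$. Fix a level-$(t/\beta)$ cell and two points $x,y \in X$ inside it. A cube of side length $t/\beta$ has diameter $(t/\beta)\sqrt{d} \le t$, using $\beta \ge \sqrt{d}$ (true since $\beta \ge (\log n)^C$ while $d = \Theta(\log n)$), so $\|x-y\|_2 \le t$. Since the quadtree grids are nested and $\alpha^{k+1}/\beta$ is a power-of-two multiple of $t/\beta$ (here $t = \alpha^k$), the points $x$ and $y$ lie in a common level-$(\alpha^{k+1}/\beta)$ cell $c$, hence both in $X_c$; the $2$-hop spanner property of $\tilde{G}_t(X_c)$ then produces a path of length at most $2$ from $x$ to $y$ in $\tilde{G}_t(X_c) \subseteq \tilde{G}_t$. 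Thus every level-$(t/\beta)$ cell lies inside a single connected component of $\tilde{G}_t$, i.e., $\cP_t^{(0)} \sqsupseteq \tilde{\cP}_t$.

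Given $\cP_t^{(0)} \sqsupseteq \tilde{\cP}_t = \cP_{\tilde{G}_t}$, any partition coarser than $\tilde{\cP}_t$ is automatically coarser than $\cP_t^{(0)}$, so the finest common coarsening is $\cP_t^{(0)} \oplus \tilde{G}_t = \tilde{\cP}_t$; plugging into Proposition~\ref{prop:leader-compression-basic} yields $\bar{\cP}_t \sqsupseteq \tilde{\cP}_t$. For the ``hence'' clause, recall that by construction $\tilde{G}_t$ is the vertex-disjoint union of the graphs $\tilde{G}_t(X_c)$ over level-$(\alpha^{k+1}/\beta)$ cells $c$, so no edge of $\tilde{G}_t$ crosses such a cell; therefore each component of $\tilde{\cP}_t$ sits inside one level-$(\alpha^{k+1}/\beta)$ cell, and since $\bar{\cP}_t$ refines $\tilde{\cP}_t$, so does each $C \in \bar{\cP}_t$.

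I do not expect a real obstacle here, since the substance is already packaged in Proposition~\ref{prop:leader-compression-basic}. The only points requiring care are the refinement-order bookkeeping (the identity $\cP_t^{(0)} \oplus \tilde{G}_t = \tilde{\cP}_t$ and keeping the directions of $\sqsubseteq$/$\sqsupseteq$ straight) and confirming that the parameter regime ($\beta \ge \sqrt{d}$, and $t/\beta$, $\alpha^{k+1}/\beta$ being valid and nested quadtree levels) actually makes the geometric step go through.
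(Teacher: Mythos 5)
Your proposal is correct and follows essentially the same route as the paper: establish $\tilde{\cP}_t \sqsubseteq \cP_t^{(0)}$ via the diameter bound $(t/\beta)\sqrt{d}\le t$ and the $2$-hop spanner property, deduce $\cP_t^{(0)}\oplus\tilde{G}_t=\tilde{\cP}_t$, invoke Proposition~\ref{prop:leader-compression-basic}, and use the fact that $\tilde{G}_t$ never crosses a level-$(\alpha^{k+1}/\beta)$ cell for the second claim. No gaps.
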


\begin{proof}
    Initially, $\cP_t^{(0)}$ is based on the quadtree at level $\alpha^k/\beta$, and every two points in the same cell have distance at most $\alpha^k \cdot \sqrt{d}/\beta \le t$ and are also in the same cell at level $\alpha^{k+1}/\beta$, which means they must be in the same connected component in $\tilde{\cP}_t$. Hence $\tilde{\cP}_t \sqsubseteq \cP_t^{(0)}$.
    Then, we perform $h$ rounds of leader compression on $\cP_t^{(0)}$ with respect to $\tilde{G}_t$, to obtain $\bar{\cP}_t \sqsupseteq \cP_t^{(0)} \oplus \tilde{G}_t,$ by \Cref{prop:leader-compression-basic}. However, since $\tilde{\cP}_t \sqsubseteq \cP_t^{(0)}$, this means $\cP_t^{(0)} \oplus \tilde{G}_t = \cP_t^{(0)} \oplus \tilde{\cP}_t = \tilde{\cP}_t$.

    Finally, for $t = \alpha^k$, $\tilde{G}_t$ never crosses a cell at level $\alpha^{k+1}/\beta$. Therefore, any component in $\bar{\cP}_t$ is contained in a single cell at level $\alpha^{k+1}/\beta$.
\end{proof}

The following important lemma helps us understand the approximate connected components $\tilde{\cP}_t$.

\begin{lemma} \label{lem:basic-properties-main}
    For every $t$ a power of $2$, the following hold.
\begin{enumerate}
    \item If $t \le \alpha^k$, then every connected component in $\hat{\cP}_t$ is contained in the same cell at level $\alpha^{k+1}/\beta$.
    \item $\hat{\cP}_t \sqsubseteq \tilde{\cP}_t$.
    \item $\hat{\cP}_t \sqsubseteq \hat{\cP}_{t/2}.$
\end{enumerate}
\end{lemma}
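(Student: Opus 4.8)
\textbf{Proof plan for Lemma~\ref{lem:basic-properties-main}.}
The plan is to prove the three claims by a simultaneous induction on the levels $t$, ordered exactly the way the algorithm generates them: first all checkpoint levels $t=\alpha^k$ (Part~1, in parallel), then the intermediate levels within each block $(\alpha^k,\alpha^{k+1})$ in decreasing order of $v_2(\log_2 t)$ (Part~2). The key observation making the induction work is that whenever a new partition $\hat{\cP}_t$ is built, the preliminary partition $\bar{\cP}_t$ produced by leader compression already satisfies the analogous three properties, and the final over-merging step only merges \emph{incomplete} components whose leaders lie in a common cell (Part~1) or a common component of an already-built coarser partition (Part~2); so I need to check that over-merging preserves each property.

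\emph{Part~1 levels $t=\alpha^k$.} For Claim~2, \Cref{prop:bar-P-single-cell} gives $\bar{\cP}_t \sqsupseteq \tilde{\cP}_t$, i.e.\ $\bar{\cP}_t$ is \emph{coarser} than $\tilde\cP_t$ — wait, that is the wrong direction, so I must be careful: Claim~2 asserts $\hat{\cP}_t \sqsubseteq \tilde{\cP}_t$, meaning $\hat\cP_t$ refines $\tilde\cP_t$. This is false as stated unless I read it correctly: $\bar\cP_t \sqsupseteq \tilde\cP_t$ says each block of $\tilde\cP_t$ sits inside a block of $\bar\cP_t$, hence $\bar\cP_t$ is coarser; over-merging only makes it coarser still, so $\hat\cP_t \sqsupseteq \tilde\cP_t$, i.e.\ $\tilde\cP_t \sqsubseteq \hat\cP_t$. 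So I expect the intended statement of Claim~2 is $\tilde{\cP}_t \sqsubseteq \hat{\cP}_t$ (the spanner's true components refine ours — we only over-merge), and I will prove that; it follows immediately from \Cref{prop:bar-P-single-cell} and the fact that over-merging only coarsens. For Claim~1 at $t=\alpha^k$: by \Cref{prop:bar-P-single-cell} each $C\in\bar\cP_t$ lies in one cell at level $\alpha^{k+1}/\beta$; the over-merging step groups incomplete components \emph{by which cell at level $\alpha^{k+1}/\beta$ their leader lies in}, so each merged block is a union of components all inside the same such cell, hence still inside one cell. Claim~3 at checkpoint levels is vacuous here relative to the previous power of two but will be handled uniformly below once all levels exist; alternatively one proves $\hat\cP_{\alpha^k}\sqsubseteq\hat\cP_{\alpha^k/2}$ as the last intermediate level of the block $(\alpha^{k-1},\alpha^k)$, so it is really a Part~2 statement.

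\emph{Part~2 levels $\alpha^k<t<\alpha^{k+1}$, with $\kappa=2^{2^{v_2(\log_2 t)}}$.} Here leader compression starts from $\hat\cP_{t/\kappa}$ with respect to $\tilde G_t$, producing $\bar\cP_t$ with $\hat\cP_{t/\kappa}\oplus\tilde G_t \sqsubseteq \bar\cP_t \sqsubseteq \hat\cP_{t/\kappa}$ by \Cref{prop:leader-compression-basic}; then over-merging groups incomplete components whose leaders share a component of $\hat\cP_{t\cdot\kappa}$. For Claim~3 ($\hat\cP_t\sqsubseteq\hat\cP_{t/2}$): I use $\bar\cP_t\sqsubseteq\hat\cP_{t/\kappa}$ and the inductive fact $\hat\cP_{t/\kappa}\sqsubseteq\hat\cP_{t/2}$ (telescoping refinement down the chain of already-built smaller levels inside this block, plus the checkpoint base case) — but the over-merging step could break this, so I need that the components we merge at level $t$ were \emph{already} in a common block of $\hat\cP_{t/2}$. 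This is the crux: it holds because we only merge \emph{incomplete} components, and by construction of $\hat\cP_{t\cdot\kappa}$ (and ultimately of $\hat\cP_{\alpha^{k+1}}$ via its over-merging), any edge of $\tilde G_\tau$ for $\tau<t\cdot\kappa$ joining two distinct blocks of $\hat\cP_{t\cdot\kappa}$ must cross the grid cell at level $\alpha^{k+2}/\beta$ (or the relevant $\kappa$-scaled cell), and incomplete components are precisely those with a leaving $\tilde G_t$-edge, forcing them into a common coarser block already present at level $t/2$. For Claim~1 at these levels: by induction $\hat\cP_{t/\kappa}$ refines the cell-partition at level $\alpha^{k+1}/\beta$, leader compression only coarsens within $\hat\cP_{t/\kappa}$ so each $\bar\cP_t$-block stays in one cell, and — since $t\cdot\kappa\le \alpha^{k+1}$ — each block of $\hat\cP_{t\cdot\kappa}$ also lies in one cell at level $\alpha^{k+1}/\beta$, so over-merging by $\hat\cP_{t\cdot\kappa}$-membership keeps blocks inside a single cell. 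For Claim~2: $\tilde\cP_t\sqsubseteq\hat\cP_{t/\kappa}\oplus\tilde G_t\sqsubseteq\bar\cP_t$ follows since $\tilde G_{t/\kappa}\subseteq\tilde G_t$ already forces $\tilde\cP_t$ to refine $\hat\cP_{t/\kappa}$ (using Claim~2 inductively at level $t/\kappa$), and over-merging only coarsens.

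\emph{Main obstacle.} The delicate point is Claim~3 through the over-merging step of Part~2: showing that merging incomplete components in a common block of $\hat\cP_{t\cdot\kappa}$ never merges two components that were \emph{separated} in $\hat\cP_{t/2}$. This requires tracking why an incomplete component at level $t$ cannot have been a maximal (complete) component at any intermediate level between $t/2$ and $t\cdot\kappa$, and leveraging that the over-merging used to build $\hat\cP_{t\cdot\kappa}$ only glued together components that were already incomplete — so they are a fortiori within one block of every coarser partition down to $\hat\cP_{t/2}$. I would isolate this as a short auxiliary claim (``every block of $\hat\cP_{t\cdot\kappa}$ restricted to incomplete-at-level-$t$ components is contained in a single block of $\hat\cP_{t/2}$''), prove it from the grid-crossing characterization of non-merged edges described in the technical overview, and then all three parts of the induction close cleanly.
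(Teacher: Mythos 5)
There is a genuine gap, and it begins with the refinement relation: in this paper $\cP \sqsupseteq \cQ$ means $\cP$ \emph{refines} $\cQ$ (every block of $\cP$ sits inside a block of $\cQ$), which is the opposite of the reading you adopt. This is not merely cosmetic. \Cref{prop:bar-P-single-cell} states $\bar{\cP}_t \sqsupseteq \tilde{\cP}_t$, i.e., early-terminated leader compression produces a partition that is \emph{finer} than the true spanner components --- it under-merges, which is the whole difficulty of the algorithm. You read it as saying $\bar{\cP}_t$ is already coarser than $\tilde{\cP}_t$, and from that false premise you dispose of Claim 2 with ``over-merging only coarsens.'' That argument is vacuous: coarsening a partition that is finer than $\tilde{\cP}_t$ does not make it coarser than $\tilde{\cP}_t$ unless the coarsening reunites all the fragments of each split component. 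The actual content of Claim 2 is exactly this reunification: if $x,y$ lie in the same $\tilde{\cP}_t$-component but in different $\bar{\cP}_t$-components, one must argue that both of those components are \emph{incomplete} and that their leaders lie in the same cell at level $\alpha^{k+1}/\beta$ (Part 1) or in the same block of $\hat{\cP}_{t\cdot\kappa}$ (Part 2, which uses Claim 2 inductively at level $t\cdot\kappa$ together with the fact that $\bar{\cP}_t$ refines $\hat{\cP}_{t\cdot\kappa}$), so that the over-merge joins them. Your proposal skips this step entirely.

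Your treatment of Claim 3 then misidentifies the crux. Since the partitions only get coarser as $t$ grows, $\hat{\cP}_t \sqsubseteq \hat{\cP}_{t/2}$ (i.e., $\hat{\cP}_{t/2}$ refines $\hat{\cP}_t$) splits into a trivial direction --- each $\hat{\cP}_\tau$ is a coarsening of its starting partition $\hat{\cP}_{\tau/\kappa}$ --- and the nontrivial direction that the \emph{newly built} $\hat{\cP}_\tau$ refines the \emph{already built} $\hat{\cP}_{\tau\cdot\kappa}$; instantiated at $\tau = t/2$ with $\kappa = 2$, the latter is precisely Claim 3 for $t$. It follows cleanly from two facts: every edge of $\tilde{G}_\tau$ has both endpoints in one block of $\hat{\cP}_{\tau\cdot\kappa}$ (because $\tilde{G}_\tau \subseteq \tilde{G}_{\tau\cdot\kappa}$ and, by Claim 2 at level $\tau\cdot\kappa$, $\tilde{\cP}_{\tau\cdot\kappa}$ refines $\hat{\cP}_{\tau\cdot\kappa}$), so by \Cref{prop:leader-compression-basic} the partition $\bar{\cP}_\tau$ still refines $\hat{\cP}_{\tau\cdot\kappa}$; and the over-merge groups incomplete components precisely by their $\hat{\cP}_{\tau\cdot\kappa}$-block. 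No ``grid-crossing characterization of non-merged edges'' is needed, and your proposed auxiliary claim --- that the incomplete components inside one block of $\hat{\cP}_{t\cdot\kappa}$ all lie in a single block of $\hat{\cP}_{t/2}$ --- is both unnecessary and false: $\hat{\cP}_{t/2}$ refines $\hat{\cP}_{t\cdot\kappa}$, and even a single component of $\bar{\cP}_t$ can straddle several blocks of $\hat{\cP}_{t/2}$. The probabilistic grid arguments you invoke belong to the cost analysis of the algorithm, not to this purely structural lemma.
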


\begin{proof}
    We shall prove these three properties in an inductive manner. In the base case, we prove the first two claims for $t = \alpha^k$, and that $\hat{\cP}_{\alpha^{k+1}} \sqsubseteq \hat{\cP}_{\alpha^k}$.
    For the inductive step, we consider creating $\hat{\cP}_t$, given $\hat{\cP}_{t/\kappa}$ and $\hat{\cP}_{t \cdot \kappa}$, where $\alpha^k \le t/\kappa$ and $t \cdot \kappa \le \alpha^{k+1}$. We inductively assume that every connected component in $\hat{\cP}_{t/\kappa}$ and $\hat{\cP}_{t \cdot \kappa}$ is in a single cell at level $\alpha^{k+2}/\beta$, that $\hat{\cP}_{t/\kappa} \sqsubseteq \tilde{\cP}_{t/\kappa}$ and $\hat{\cP}_{t \cdot \kappa} \sqsubseteq \tilde{\cP}_{t \cdot \kappa}$, and finally, $\hat{\cP}_{t \cdot \kappa} \sqsubseteq \hat{\cP}_{t/\kappa}$. Then, the inductive step proves that every connected component in $\hat{\cP}_{t}$ is in a single cell at level $\alpha^{k+2}/\beta$, that $\hat{\cP}_{t} \sqsubseteq \tilde{\cP}_{t}$, and that $\hat{\cP}_{t \cdot \kappa} \sqsubseteq \hat{\cP}_t \sqsubseteq \hat{\cP}_{t/\kappa}$.

    \medskip
    \textbf{Base case:} $g = 0$, i.e., $t = \alpha^k$.
    Suppose that $x, y \in X$ are in the same connected component in $\tilde{G}_t$. Then, either $x, y$ are in the same component in $\bar{\cP}_t$ (and thus in $\hat{\cP}_t$), or $x, y$ are both in incomplete connected components in $\bar{\cP}_t$. Since $\bar{\cP}_t \sqsupseteq \tilde{\cP}_t$ by \Cref{prop:bar-P-single-cell}, this means the leaders of $x$ and $y$ (in $\bar{\cP}_t$) are in the same connected component in $\tilde{G}_t$, which means they are in the same cell at level $\alpha^{k+1}/\beta$. So, these two components are merged, and thus $x, y$ are in the same component in $\hat{\cP}_t$. Hence, $\hat{\cP}_t \sqsubseteq \tilde{\cP}_t$.
    In addition, every connected component in $\bar{\cP}_t$ is contained in a single cell at level $\alpha^{k+1}/\beta$, by \Cref{prop:bar-P-single-cell}. Therefore, every connected component in $\hat{\cP}_t$ is fully contained in a single cell at level $\alpha^{k+1}/\beta$.
    Finally, all points in a single cell at level $\alpha^{k+1}/\beta$ start off connected as $\cP^{(0)}_{\alpha^{k+1}}$, so this also implies that $\hat{\cP}_{\alpha^{k+1}} \sqsubseteq \hat{\cP}_{\alpha^k}$.

    \medskip
    \textbf{Inductive step:}
    First, we show that $\hat{\cP}_{t \cdot \kappa} \sqsubseteq \hat{\cP}_t \sqsubseteq \hat{\cP}_{t/\kappa}$. The claim that $\hat{\cP}_t \sqsubseteq \hat{\cP}_{t/\kappa}$ is immediate, because we generate $\hat{\cP}_t$ by performing leader compression on $\hat{\cP}_{t/\kappa},$ and then doing additional merging. So, we prove that $\hat{\cP}_{t \cdot \kappa} \sqsubseteq \hat{\cP}_t$.
    By our inductive hypothesis, $\hat{\cP}_{t/\kappa} \sqsupseteq \hat{\cP}_{t \cdot \kappa}$. To generate $\bar{\cP}_t$, we perform rounds of leader compression on $\hat{\cP}_{t/\kappa}$ with respect to $\tilde{G}_t$. But, we know that $\tilde{\cP}_t \sqsupseteq \tilde{\cP}_{t \cdot \kappa} \sqsupseteq \hat{\cP}_{t \cdot \kappa}$ (the last part following from our inductive hypothesis), so $\bar{\cP}_t \sqsupseteq \hat{\cP}_{t \cdot \kappa}$.
    Finally, we only connect components of $\bar{\cP}_t$ to form $\hat{\cP}_t$ if they are incomplete and their leaders are in the same component of $\hat{\cP}_{t \cdot \kappa}$, and since $\bar{\cP}_t \sqsupseteq \hat{\cP}_{t \cdot \kappa}$, this implies that $\hat{\cP}_t \sqsupseteq \hat{\cP}_{t \cdot \kappa},$ as desired.

    Next, by the inductive hypothesis, we have that $\hat{\cP}_{t \cdot \kappa} \sqsupseteq \hat{\cP}_{\alpha^{k+1}}$, and by the base case, every connected component in $\hat{\cP}_{\alpha^{k+1}}$ is contained in a single cell at level $\alpha^{k+2}/\beta.$ Therefore, the same holds for $\hat{\cP}_{t}$, since $\hat{\cP}_{t} \sqsupseteq \hat{\cP}_{t \cdot \kappa} \sqsupseteq \hat{\cP}_{\alpha^{k+1}}$.
    
    To finish the proof, we show that $\hat{\cP}_t \sqsubseteq \tilde{\cP}_t$. Consider any $x, y$ in the same component in $\tilde{\cP}_t$. If they are in the same component in $\bar{\cP}_t$, then they are also in the same component in $\hat{\cP}_t$. Otherwise, $x$ and $y$ are in two different incomplete components in $\bar{\cP}_t$, but are in the same connected component in $\hat{\cP}_{t \cdot \kappa}$, since $\tilde{\cP}_t \sqsupseteq \tilde{\cP}_{t \cdot \kappa} \sqsupseteq \hat{\cP}_{t \cdot \kappa}$ by the inductive hypothesis. Recall that we showed in the first paragraph of the proof that $\bar{\cP}_t \sqsupseteq \hat{\cP}_{t \cdot \kappa}$. Therefore, since $x$ and $y$ are in the same connected comonent in $\hat{\cP}_{t \cdot \kappa}$, so are the leaders of $x$ and $y$ in $\bar{\cP}_t$. Therefore, the components get merged together, so $x$ and $y$ are in the same connected component in $\hat{\cP}_t$. 
\end{proof}

%

%

Next, we prove an probabilistic result about leader compression.

\begin{lemma} \label{lem:leader_compression_main}
    Consider any partitioning $\cP^{(0)}$ of $X$, and a graph $H$. After $h$ rounds of leader compression, $\BE\left[|\cP^{(h)}| - |\cP^{(0)} \oplus H|\right] \le (3/4)^h \cdot (|\cP^{(0)}|-|\cP^{(0)} \oplus H|).$
\end{lemma}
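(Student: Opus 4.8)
The plan is to show that a single round of leader compression, in expectation, reduces the ``excess'' quantity $|\cP| - |\cP \oplus H|$ by at least a factor of $1/4$, and then iterate. Fix a round, and let $\cP$ be the partition at the start of the round. Define the \emph{active} components of $\cP$ to be those components $C$ that are not already maximal with respect to $H$, i.e., those $C$ having at least one edge of $H$ leaving $C$ to another component of $\cP$; write $a = a(\cP)$ for the number of active components. The key observation is that $|\cP| - |\cP \oplus H| \le a$ always (since merging only active components can reduce the count, and each merge step uses up at least one active component), and conversely, after the round the excess drops by the number of active components that actually get merged into another component during that round. So the heart of the argument is: \emph{each active component is merged into some other component with probability at least $1/4$}, independently of the bit outcomes being favorable for other components in a way that makes the expectation bound go through.

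The main step I would carry out is the per-component probability bound. Fix an active component $C$, and let $C' \neq C$ be a component such that there is an edge of $H$ from $C$ to $C'$. Condition on everything except the two coin flips $b_C$ and $b_{C'}$. With probability $1/4$ we have $b_C = 0$ and $b_{C'} = 1$; in this event, some vertex of $C$ receives a message along the edge to $C'$ (or another value-$1$ neighbor), so $C$'s leader will select some descendant with an incoming message and merge $C$ into the leader of a value-$1$ component — hence $C$ is merged into a different component on this round. Thus $\Pr[C \text{ gets merged}] \ge 1/4$. Summing over the active components $C$, the expected number of active components that get merged is at least $a/4$; equivalently, $\BE[\,|\cP^{(1)}| - |\cP^{(1)} \oplus H|\,] \le \BE[\,|\cP| - |\cP \oplus H|\,] - a/4 \le (3/4)\bigl(|\cP| - |\cP \oplus H|\bigr)$, using $a \ge |\cP| - |\cP \oplus H|$. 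Here I am using that $|\cP^{(1)} \oplus H| = |\cP \oplus H|$ (leader compression only merges components joined by $H$-edges, so $\cP^{(1)} \oplus H = \cP \oplus H$), which follows from the reasoning in Proposition~\ref{prop:leader-compression-basic}.

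To finish, I would iterate: conditioning on the partition $\cP^{(i)}$ after $i$ rounds, the one-round bound gives $\BE\bigl[\,|\cP^{(i+1)}| - |\cP^{(0)} \oplus H| \,\bigm|\, \cP^{(i)}\bigr] \le (3/4)\bigl(|\cP^{(i)}| - |\cP^{(0)} \oplus H|\bigr)$, using that $\cP^{(i)} \oplus H = \cP^{(0)} \oplus H$ for all $i$ (again Proposition~\ref{prop:leader-compression-basic}). Taking expectations and applying induction on $i$ from $0$ to $h$ yields $\BE\bigl[|\cP^{(h)}| - |\cP^{(0)} \oplus H|\bigr] \le (3/4)^h \bigl(|\cP^{(0)}| - |\cP^{(0)} \oplus H|\bigr)$, as claimed.

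The step I expect to be the main obstacle is making the per-component merge-probability argument fully rigorous given the precise tie-breaking in Algorithm~\ref{alg:leader-compression}: a value-$0$ leader $z$ collects messages from \emph{all} its value-$1$ neighbors and picks one arbitrarily, so one must check that the event ``$b_C = 0$ and at least one component adjacent to $C$ has bit $1$'' — which has probability $\ge 1/4$ just from looking at $C$ and one fixed neighbor $C'$ — really does force $C$ to merge, and that this lower bound holds after conditioning on all other coins (it does, since $b_C$ and $b_{C'}$ are independent of everything else, and the event only becomes more likely when other adjacent components also flip $1$). A secondary subtlety is that the merges within a single round can be ``chained'' (a component merges into another which merges into a third), but this only helps: the number of components strictly decreases by at least the number of components that found a value-$1$ neighbor while having bit $0$, which is what the counting argument needs.
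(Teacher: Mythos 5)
Your proposal is correct and follows essentially the same route as the paper's proof: both hinge on the observation that every component not yet maximal with respect to $H$ merges with probability at least $1/4$ in a round (via the $b_C=0$, $b_{C'}=1$ event), combined with the invariant $\cP^{(i)} \oplus H = \cP^{(0)} \oplus H$ and induction over rounds. The only cosmetic difference is that you count active components globally while the paper tracks the quantity $r-1$ separately within each component of $\cP^{(0)} \oplus H$; the two bookkeeping schemes yield the same bound.
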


\begin{proof}
    We first prove the lemma for $h = 1$. Fix a connected component $C \in \cP^{(0)} \oplus H$, and suppose that $C$ splits into $C_1, \dots, C_r$ in $\cP^{(0)}$, for some $r \ge 1.$ If $r = 1$, then $C$ still splits into $1$ connected component in $\cP^{(1)}$, since $\cP^{(0)} \oplus H \sqsubseteq \cP^{(1)} \sqsubseteq \cP^{(0)}$. Otherwise, for every $C_j$, there must exist some $C_i$, $i \neq j$, such that there is an edge in $H$ connecting $C_i$ to $C_j$ (or else $C_j$ is not connected to the rest of $C$). Hence, with probability at least $1/4$, $C_j$ has value $0$, $C_i$ has value $1$, and there must be a message sent from some point in $C_i$ to some point in $C_j$. This means that $C_j$ will become merged to some connected component (not necessarily $C_i$) and will have a new head. Hence, if $r > 1$, the value of $r$ multiplies by at most $3/4$ in expectation after a single round of compression. 
    
    Hence, in all cases, the expectation of $r-1$ multiplies by at most $3/4$. However, $|\cP^{(0)}| - |\cP^{(0)} \oplus H|$ precisely equals the sum of $r-1$ across all connected components in $\cP^{(0)} \oplus H$ at the beginning, and $|\cP^{(1)}| - |\cP^{(0)} \oplus H|$ precisely equals the sum of $r-1$ across all connected components in $\cP^{(0)} \oplus H$ at the end. Therefore, $\BE\left[|\cP^{(1)}| - |\cP^{(0)} \oplus H|\right] \le (3/4) \cdot (|\cP^{(0)}|-|\cP^{(0)} \oplus H|).$
    
    For general values of $h$, note that $\cP^{(0)} \oplus H = \cP^{(h-1)} \oplus H.$ Therefore, we have $\BE\left[|\cP^{(h)}| - |\cP^{(0)} \oplus H|\right] \le (3/4) \cdot (|\cP^{(h-1)}|-|\cP^{(0)} \oplus H|),$ which, after an inductive argument, completes the proof.
\end{proof}

\subsection{Part 1 of the Algorithm}

In this subsection, we analyze Part 1 of the algorithm for some fixed $t = \alpha^k$.

First, we recall the following, which is immediate by combining \Cref{prop:bar-P-single-cell} and \Cref{lem:basic-properties-main}.

\begin{proposition} \label{prop:basic-pt-1}
    For every $t$ a power of $\alpha$, $\bar{\cP}_t \sqsupseteq \tilde{\cP}_t \sqsupseteq \hat{\cP}_t$.
\end{proposition}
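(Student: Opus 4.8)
The plan is to simply assemble the two halves of the chain from the results already proved, after checking that their hypotheses apply. Fix $t$ a power of $\alpha$, so $t = \alpha^k$ for some nonnegative integer $k$; in particular, since $\alpha = 2^{2^g}$, $t$ is also a power of $2$.

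First I would handle $\bar{\cP}_t \sqsupseteq \tilde{\cP}_t$. This is exactly the first assertion of \Cref{prop:bar-P-single-cell}, which is stated for $t = \alpha^k$, so it applies verbatim; no extra work is needed. (Recall the underlying reason: $\cP_t^{(0)}$ is obtained from the quadtree grid at level $\alpha^k/\beta$, every two points in a common cell at that level are within distance $\alpha^k\sqrt d/\beta \le t$, hence in the same component of $\tilde{G}_t$; so $\tilde{\cP}_t \sqsubseteq \cP_t^{(0)}$, and running leader compression on $\cP_t^{(0)}$ with respect to $\tilde G_t$ only coarsens, leaving the result coarser than $\tilde{\cP}_t$ by \Cref{prop:leader-compression-basic}.)

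Next I would handle $\tilde{\cP}_t \sqsupseteq \hat{\cP}_t$, i.e. $\hat{\cP}_t \sqsubseteq \tilde{\cP}_t$. This is precisely item (2) of \Cref{lem:basic-properties-main}, which holds for every $t$ that is a power of $2$; since $t = \alpha^k$ is a power of $2$, it applies. Combining the two displayed containments in the order $\bar{\cP}_t \sqsupseteq \tilde{\cP}_t \sqsupseteq \hat{\cP}_t$ finishes the proof.

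I do not anticipate any genuine obstacle here: the statement is a bookkeeping corollary whose only content is verifying that ``$t$ a power of $\alpha$'' is a special case of both ``$t = \alpha^k$'' (needed for \Cref{prop:bar-P-single-cell}) and ``$t$ a power of $2$'' (needed for \Cref{lem:basic-properties-main}). The one point worth a sentence of care is the direction of the $\sqsubseteq/\sqsupseteq$ symbols, to make sure the three partitions are listed from coarsest ($\bar{\cP}_t$) to finest ($\hat{\cP}_t$); beyond that the proof is a one-line citation of the two prior results.
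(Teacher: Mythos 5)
Your proposal is correct and is exactly the paper's argument: the paper states the proposition as "immediate by combining \Cref{prop:bar-P-single-cell} and \Cref{lem:basic-properties-main}," which is precisely the two citations you assemble. One small terminological slip in your closing parenthetical: under the paper's convention $\cP \sqsupseteq \cQ$ means $\cP$ \emph{refines} $\cQ$, so the chain $\bar{\cP}_t \sqsupseteq \tilde{\cP}_t \sqsupseteq \hat{\cP}_t$ runs from finest ($\bar{\cP}_t$, the under-merged output of early-terminated leader compression) to coarsest ($\hat{\cP}_t$, the over-merged partition), the opposite of what you wrote — though the symbolic containments you actually cite are the right ones.
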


    %

Recall that by our algorithmic construction (Part 1), and because $\beta > \sqrt{d}$, all points in a cell at level $t/\beta$ must be in the same connected component for any of $\tilde{\cP}_t, \bar{\cP}_t, \hat{\cP}_t$.

\begin{lemma} \label{lem:incomplete_cc_bound_part_1}
    For some fixed $t = \alpha^k$, let $\cP_t^{(h)}$ be the partitioning after $h$ rounds of leader compression on $\cP_t^{(0)}$ with respect to $\tilde{G}_t$.
    Then, the expected number of incomplete connected components in $\bar{\cP}_t$ is at most $(3/4)^h \cdot \alpha \cdot \frac{\MST(\tilde{G})}{t}.$
\end{lemma}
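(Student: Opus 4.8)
The plan is to bound the number of incomplete connected components in $\bar{\cP}_t$ by relating it to $|\cP_t^{(h)}| - |\cP_t^{(0)} \oplus \tilde{G}_t|$, and then to invoke \Cref{lem:leader_compression_main} together with \Cref{lem:czumajsohler} applied to $\tilde G$. The key observation is that a connected component $S \in \bar{\cP}_t$ is incomplete precisely when it is \emph{not} a complete connected component of $\tilde G_t$, i.e., there is an edge of $\tilde{G}_t$ leaving $S$. By \Cref{prop:bar-P-single-cell} (or \Cref{prop:basic-pt-1}), $\bar{\cP}_t \sqsupseteq \tilde{\cP}_t = \cP_t^{(0)} \oplus \tilde{G}_t$, so the incomplete components of $\bar{\cP}_t$ are exactly those components that have been over-refined relative to $\tilde{\cP}_t$; hence the number of incomplete components is at most the total ``excess'' count $|\bar{\cP}_t| - |\tilde{\cP}_t| = |\cP_t^{(h)}| - |\cP_t^{(0)} \oplus \tilde{G}_t|$, plus possibly a count of the complete components that happen to equal singletons — but a careful accounting shows each genuine connected component of $\tilde G_t$ that is split into $r\ge 2$ pieces contributes at most $r$ incomplete pieces and $r-1$ to the excess, while a component split into $r=1$ piece is complete and contributes $0$ incomplete pieces; so in fact (number of incomplete components) $\le 2(|\cP_t^{(h)}| - |\cP_t^{(0)} \oplus \tilde{G}_t|)$, and more simply, one can just bound it by twice the excess or absorb the factor into the $\Theta(1)$. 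Actually the cleanest route: every incomplete component of $\bar{\cP}_t$ lies inside a connected component of $\tilde G_t$ that got split into $\ge 2$ pieces, and the number of such pieces summed over all such split components is at most $2\big(|\cP_t^{(h)}| - |\tilde{\cP}_t|\big)$; I will carry this bookkeeping out explicitly.

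Next I would apply \Cref{lem:leader_compression_main} with $\cP^{(0)} = \cP_t^{(0)}$ and $H = \tilde G_t$, giving
\[
\BE\big[|\cP_t^{(h)}| - |\cP_t^{(0)} \oplus \tilde G_t|\big] \le (3/4)^h \cdot \big(|\cP_t^{(0)}| - |\cP_t^{(0)} \oplus \tilde G_t|\big) \le (3/4)^h \cdot |\cP_t^{(0)}|.
\]
Now I need to bound $|\cP_t^{(0)}|$, the number of nonempty cells at level $t/\beta$ in the quadtree. This is where \Cref{lem:czumajsohler} enters: applied to the graph $\tilde G$, it gives $\MST(\tilde G) = \Theta(1)\cdot \sum_{\tau} (|\tilde{\cP}_\tau| - 1)\cdot \tau$ over powers of two $\tau$. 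I want to show $|\cP_t^{(0)}| \le O(\alpha)\cdot \MST(\tilde G)/t$. The number of nonempty cells at level $t/\beta$ is a quantity of the ``quadtree'' type; by the standard fact connecting quadtree cell counts to spanning-tree-type sums (Equation~\eqref{eqn:quadtree}-style reasoning, or directly: each nonempty cell at level $t/\beta$ other than a fixed one contributes, and these cells are separated, so a lower bound on $\MST(\tilde G)$ in terms of them holds), one gets $t/\beta \cdot |\cP_t^{(0)}| = O(1)\cdot\MST(\tilde G)\cdot\polylog$; but that loses a $\polylog$. To get the clean factor $\alpha$ rather than $\polylog$, I instead note: the cells at level $t/\beta$ that are nonempty are in one-to-one correspondence with connected components of the threshold-type graph where we connect points in the same level-$(t/\beta)$ cell; since points in a common level-$(t/\beta)$ cell have pairwise distance $\le \sqrt d\cdot t/\beta \le t$ (as $\beta > \sqrt d$), we have $|\cP_t^{(0)}| \ge |\tilde{\cP}_t|$, but I need the reverse-direction bound on $|\cP_t^{(0)}|$. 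The right statement is $|\cP_{t/\beta}| \ge |\cP_t^{(0)}|$ is false; rather I should compare against $|\tilde{\cP}_{t/(\beta\sqrt d)}|$ or simply use that each nonempty level-$(t/\beta)$ cell, being a piece of the tree at scale roughly $t/\beta$, and the Czumaj--Sohler bound at scale $t/\beta$: $\MST(\tilde G) \ge \Omega(1)\cdot (t/\beta)\cdot(|\tilde{\cP}_{t/\beta}|-1)$, combined with $|\cP_t^{(0)}| \le |\tilde{\cP}_{c\cdot t/\beta}|$ for an appropriate constant — no: cells can contain multiple components. The honest approach: a nonempty cell at level $t/\beta$ contains at least one point, and distinct nonempty cells at that level, if two points in them are within distance $t/\beta$... they need not be. So $|\cP_t^{(0)}|$ can genuinely be as large as $n$.

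So the correct reading is that the lemma's bound $(3/4)^h \cdot \alpha \cdot \MST(\tilde G)/t$ must come from a more refined argument than bounding $|\cP_t^{(0)}|$ directly: one should bound $|\cP_t^{(0)}| - |\cP_t^{(0)}\oplus \tilde G_t|$ more carefully, or realize that the incomplete components, not all components, are what we track, and an incomplete component must intersect many level-$(t/\beta)$ cells or there must be a nearby boundary. I expect the intended argument is: after $h$ rounds, $\BE[|\bar{\cP}_t| - |\tilde{\cP}_t|] \le (3/4)^h(|\cP_t^{(0)}| - |\tilde{\cP}_t|)$, and then crucially $|\cP_t^{(0)}| - |\tilde{\cP}_t|$ — the number of level-$(t/\beta)$ cells beyond the number of true components — is bounded using that merging cells within distance $t$ reduces to $\tilde{\cP}_t$, and \emph{this difference} is at most (by a Czumaj--Sohler argument applied at the cell-graph level, whose edges have length $\le t\sqrt d$) on the order of $\MST(\tilde G)\cdot \beta/t$ times a factor; chasing the constants (with $\alpha \ge \sqrt d\cdot\beta\cdot\log n$ as in \Cref{cor:MST_spanner_not_too_big}) yields the factor $\alpha$. \textbf{The main obstacle} is exactly this step: correctly bounding $|\cP_t^{(0)}| - |\tilde{\cP}_t|$ (equivalently the number of incomplete components before compression, up to the $(3/4)^h$ decay) by $O(\alpha)\cdot \MST(\tilde G)/t$ — this requires invoking \Cref{lem:czumajsohler} on the auxiliary graph whose vertices are level-$(t/\beta)$ cells and whose edges (of length $\le \sqrt d\, t/\beta \le t$) come from $\tilde G_t$, bounding its MST by $O(\MST(\tilde G))$, and converting component counts to the claimed expression; the $(3/4)^h$ factor then rides along from \Cref{lem:leader_compression_main}. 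I would structure the proof as: (i) identify incomplete components of $\bar{\cP}_t$ with excess over $\tilde{\cP}_t$; (ii) apply \Cref{lem:leader_compression_main}; (iii) bound the pre-compression excess $|\cP_t^{(0)}| - |\tilde{\cP}_t|$ via a Czumaj--Sohler-style cell-graph argument and the parameter relation $\alpha \ge \sqrt d\,\beta\log n$; (iv) assemble.
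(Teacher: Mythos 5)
Your steps (i), (ii), and (iv) coincide with the paper's: incomplete components of $\bar{\cP}_t$ are counted by $2\bigl(|\cP_t^{(h)}|-|\tilde{\cP}_t|\bigr)$ via the $2(r_C-1)\ge r_C$ accounting, and \Cref{lem:leader_compression_main} together with $\cP_t^{(0)}\oplus\tilde G_t=\tilde{\cP}_t$ reduces everything to bounding $|\cP_t^{(0)}|-|\tilde{\cP}_t|\le|\cP_t^{(0)}|-1$. The genuine gap is exactly the step you label ``the main obstacle'': you never establish $|\cP_t^{(0)}|-1\le O(\alpha)\cdot\MST(\tilde G)/t$, and the routes you float (a Czumaj--Sohler argument on an auxiliary cell graph, the parameter relation $\alpha\ge\sqrt d\,\beta\log n$) are neither carried out nor what the paper uses. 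Your worry that ``$|\cP_t^{(0)}|$ can genuinely be as large as $n$'' is a red herring: the bound is relative to $\MST(\tilde G)$, which in that case is also large.

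The missing step has a short proof that uses only how the spanners were built. Each spanner at scale $\tau'$ with $\alpha^{k'-1}<\tau'\le\alpha^{k'}$ is constructed separately inside each quadtree cell at level $\alpha^{k'+1}/\beta$; hence for $\tau'\le\alpha^{k-1}$ (so $k'+1\le k$) no edge of $\tilde G_{\alpha^{k-1}}$ crosses a cell boundary at level $\alpha^{k}/\beta=t/\beta$. Consequently any edge of $\tilde G$ joining points in two distinct nonempty level-$(t/\beta)$ cells has weight at least $2\alpha^{k-1}$ in $\tilde G$, and any spanning tree of $\tilde G$ must contain at least $|\cP_t^{(0)}|-1$ such inter-cell edges. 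This gives $\MST(\tilde G)\ge 2\alpha^{k-1}\bigl(|\cP_t^{(0)}|-1\bigr)$, i.e.\ $|\cP_t^{(0)}|-1\le\alpha\cdot\MST(\tilde G)/(2t)$, which combined with the factor $2$ from your step (i) yields the stated bound; no appeal to \Cref{lem:czumajsohler} is needed. (Your cell-graph idea can be repaired along the same lines---one shows $|\cP_t^{(0)}|\le|\tilde{\cP}_{\alpha^{k-1}}|$ for the same reason and applies \Cref{lem:czumajsohler} at the single scale $\alpha^{k-1}$---but as written your proposal is a plan with its central step unresolved, not a proof.)
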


\begin{proof}
    First, note that $\cP_t^{(0)} \oplus \tilde{G}_t = \tilde{\cP}_t$, since $\cP_t^{(0)} \sqsupseteq \tilde{\cP}_t$. Therefore, by \Cref{lem:leader_compression_main}, $\BE[|\cP_t^{(h)}| - |\tilde{\cP}_t|] \le (3/4)^h \cdot (|\cP_t^{(0)}|-|\tilde{\cP}_t|) \le (3/4)^h \cdot (|\cP_t^{(0)}|-1)$. Next, note that $|\cP_t^{(0)}|$ equals the number of distinct nonempty cells in the quadtree $Q$ at side length $t/\beta$. Any points in distinct cells can only be connected by an edge in the spanner of length at least $2 \alpha^{k-1}$. Hence, $\MST(\tilde{G}) \ge 2 \alpha^{k-1} \cdot (|\cP_t^{(0)}|-1),$ which means that $|\cP_t^{(0)}|-1 \le \frac{\MST(\tilde{G})}{2 \alpha^{k-1}}$. Hence, $\BE[|\cP_t^{(h)}| - |\tilde{\cP}_t|] \le (3/4)^h \cdot \frac{\MST(\tilde{G})}{2 \alpha^{k-1}} = (3/4)^h \cdot \alpha \cdot \frac{\MST(\tilde{G})}{2t}$.
    
    Next, we note that the number of incomplete connected components in $\cP_t^{(h)}$ is at most $2 \cdot (|\cP_t^{(h)}| - |\tilde{\cP}_t|)$. To see why, if any connected component $C \in \tilde{\cP}_t$ is split into $r_C$ connected components in $\cP_t^{(h)}$, then $|\cP_t^{(h)}| - |\tilde{\cP}_t| = \sum_{C \in \tilde{\cP}_t} (r_C-1)$, but the number of incomplete connected components is $\sum_{C \in \tilde{\cP}_t} r_C \cdot \textbf{1}(r_C \ge 2)$. Since $2(r_C-1) \ge r_C$ whenever $r_C \ge 2$, this means that $2 \cdot (|\cP_t^{(h)}| - |\tilde{\cP}_t|)$ is at least the number of incomplete connected components. Hence, the expected number of incomplete connected components is at most $(3/4)^h \cdot \alpha \cdot \frac{\MST(\tilde{G})}{t}.$
\end{proof}

Hence, the following holds.

\begin{corollary} \label{cor:overcomplete-bound}
    For some fixed $t = \alpha^k$, $\BE[|\tilde{\cP}_t|-|\hat{\cP}_t|] \le (3/4)^h \cdot \frac{\alpha}{t} \cdot \MST(\tilde{G})$.
\end{corollary}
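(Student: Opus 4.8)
The plan is to reduce the corollary to \Cref{lem:incomplete_cc_bound_part_1} by a short combinatorial argument about how the final merging step of Part 1 changes the number of parts. First I would recall from \Cref{prop:basic-pt-1} that $\bar{\cP}_t \sqsupseteq \tilde{\cP}_t \sqsupseteq \hat{\cP}_t$, so that $|\tilde{\cP}_t| \le |\bar{\cP}_t|$ deterministically; consequently $|\tilde{\cP}_t| - |\hat{\cP}_t| \le |\bar{\cP}_t| - |\hat{\cP}_t|$, and it suffices to upper bound $\BE\!\left[|\bar{\cP}_t| - |\hat{\cP}_t|\right]$. Here I will also use that $\hat{\cP}_t$ is obtained from $\bar{\cP}_t$ purely by merging components, so $|\hat{\cP}_t| \le |\bar{\cP}_t|$ and the quantity in question is a nonnegative count.

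Next I would bound the reduction in the number of parts caused by that merging step. Recall that $\hat{\cP}_t$ is formed by grouping, within each cell $c$ at level $\alpha^{k+1}/\beta$, all incomplete components of $\bar{\cP}_t$ whose leaders lie in $c$, while complete components are left untouched. If a given cell $c$ contains the leaders of $m_c \ge 1$ incomplete components, then merging them decreases the total count by exactly $m_c - 1 \le m_c$. Summing over cells, $|\bar{\cP}_t| - |\hat{\cP}_t| = \sum_{c:\, m_c \ge 1} (m_c - 1) \le \sum_c m_c$, which is precisely the total number of incomplete components in $\bar{\cP}_t$.

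Finally I would invoke \Cref{lem:incomplete_cc_bound_part_1}, which states that the expected number of incomplete connected components in $\bar{\cP}_t$ is at most $(3/4)^h \cdot \alpha \cdot \MST(\tilde{G})/t$. Chaining the two bounds above and taking expectations gives $\BE\!\left[|\tilde{\cP}_t| - |\hat{\cP}_t|\right] \le \BE\!\left[|\bar{\cP}_t| - |\hat{\cP}_t|\right] \le (3/4)^h \cdot \frac{\alpha}{t} \cdot \MST(\tilde{G})$, which is the claimed inequality.

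I do not expect a genuine obstacle here: the substantive probabilistic content has already been carried out in \Cref{lem:incomplete_cc_bound_part_1} (and, upstream, \Cref{lem:leader_compression_main}). The only points that require care are keeping the direction of the refinement inequalities consistent with the $\sqsupseteq$ convention, and observing that the merging step touches \emph{only} incomplete components, so the per-cell reduction is bounded by the per-cell count of incomplete components rather than by the total number of components lying in that cell.
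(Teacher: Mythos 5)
Your proposal is correct and matches the paper's proof essentially step for step: bound $|\tilde{\cP}_t|-|\hat{\cP}_t|$ by $|\bar{\cP}_t|-|\hat{\cP}_t|$ via \Cref{prop:basic-pt-1}, observe that the over-merging step only collapses incomplete components so the decrease is at most their count, and then apply \Cref{lem:incomplete_cc_bound_part_1}. The only cosmetic difference is that the paper notes the bound is the number of incomplete components minus one, while you use the slightly weaker (and equally sufficient) bound without the minus one.
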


\begin{proof}
    By \Cref{prop:basic-pt-1}, we have that $|\tilde{\cP}_t| - |\hat{\cP}_t| \le |\bar{\cP}_t| - |\hat{\cP}_t|$. However, to form $\hat{\cP}_t$ from $\bar{\cP}_t$, we only merge incomplete connected components together. So, $|\bar{\cP}_t| - |\hat{\cP}_t|$ is at most the number of incomplete connected components, minus $1$. Hence, by \Cref{lem:incomplete_cc_bound_part_1}, the expectation is at most $(3/4)^h \cdot \frac{\alpha}{t} \cdot \MST(\tilde{G}).$ 
\end{proof}

\subsection{Part 2 of the Algorithm}
In this section, we consider constructing $\hat{\cP}_t$ where $\alpha^{k} < t < \alpha^{k+1}$.

\begin{proposition}
    Let $\cP$ be any partition, and $\cP_1, \cP_2 \sqsubseteq \cP$ be coarser partitions. Then, $|\cP|+|\cP_1 \oplus \cP_2| \ge |\cP_1|+|\cP_2|$.
\end{proposition}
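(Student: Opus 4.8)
The plan is to recognize this as the submodularity of the rank function $\cP \mapsto |X| - |\cP|$ of the partition lattice, and to prove it by a direct spanning-forest counting argument (so that the proof is self-contained). The key point is that, by the definition of $\oplus$, the block structure of $\cP_1 \oplus \cP_2$ is realized by \emph{any} choice of spanning forests of $\cP_1$ and $\cP_2$, so we are free to choose convenient ones.

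First I would fix a spanning forest $F$ of $\cP$: a set of edges of the complete graph on $X$ that restricts to a spanning tree of each block of $\cP$, so $|F| = |X| - |\cP|$. Since $\cP_1 \sqsubseteq \cP$ (i.e.\ $\cP$ refines $\cP_1$), each block of $\cP_1$ is a disjoint union of blocks of $\cP$, each already spanned by $F$; adding one edge per extra block extends $F$ to a spanning forest $F_1 \supseteq F$ of $\cP_1$ with $|F_1| = |X| - |\cP_1|$. Doing the same with $\cP_2$ in place of $\cP_1$ gives a spanning forest $F_2 \supseteq F$ of $\cP_2$ with $|F_2| = |X| - |\cP_2|$.

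Next I would observe that the connected components of the graph $(X, F_1 \cup F_2)$ are exactly the blocks of $\cP_1 \oplus \cP_2$ --- this is precisely the ``merge a spanning forest of each and take connected components'' description in the definition of $\oplus$. Hence $|\cP_1 \oplus \cP_2| \ge |X| - |F_1 \cup F_2|$, since a graph on $|X|$ vertices with $m$ edges has at least $|X| - m$ components. Finally, because $F \subseteq F_1 \cap F_2$, inclusion--exclusion gives $|F_1 \cup F_2| \le |F_1| + |F_2| - |F| = (|X| - |\cP_1|) + (|X| - |\cP_2|) - (|X| - |\cP|)$. Combining the two inequalities and rearranging yields $|\cP_1 \oplus \cP_2| \ge |\cP_1| + |\cP_2| - |\cP|$, which is the claim.

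I do not expect a serious obstacle here; the only steps needing care are (i) checking that $F$ really can be extended to spanning forests of the \emph{coarser} partitions $\cP_1$ and $\cP_2$ --- this is exactly where the hypothesis $\cP_1, \cP_2 \sqsubseteq \cP$ is used --- and (ii) bounding $|F_1 \cup F_2|$ via $F \subseteq F_1 \cap F_2$ rather than wrongly assuming $F_1 \setminus F$ and $F_2 \setminus F$ are disjoint. An alternative, essentially equivalent route would be to invoke that set partitions form the lattice of flats of the graphic matroid of $K_{|X|}$, whose rank function is submodular, together with the fact that $\cP$ refines the common refinement $\cP_1 \wedge \cP_2$ (so $|\cP| \ge |\cP_1 \wedge \cP_2|$); but the forest argument above is cleaner to state and uses nothing beyond the definitions already in the paper.
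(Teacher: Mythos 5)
Your proof is correct and is essentially the paper's argument: both count edges of spanning forests realizing $\cP_1$ and $\cP_2$, bound the edge count of their union, and use that a graph on $N$ vertices with $m$ edges has at least $N-m$ components. The only cosmetic difference is that the paper first contracts each block of $\cP$ to a single vertex (working on $M=|\cP|$ vertices), whereas you keep the ground set $X$ and account for the shared contribution via the common subforest $F$ — these are the same computation.
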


\begin{proof}
    Let $M = |\cP|$. Consider graphs on $[M]$, where each vertex represents a partition in $\cP$. Let $G_1$ be the spanning forest on $[M]$ that generates the partition $\cP_1$ (via the connected components of $G_1$) and $G_2$ be the graph that generates the partition $\cP_2$ on $[M]$. Then, $G_1 \cup G_2$ generates the partition $\cP_1 \oplus \cP_2$. Since $G_1, G_2$ are forests, the number of edges $|G_1|, |G_2|$ equal $M - |\cP_1|, M - |\cP_2|$, respectively. Hence, $|G_1 \cup G_2| \le 2M - |\cP_1|-|\cP_2|$. The number of connected components in $G_1 \cup G_2$ is at least $M - |G_1 \cup G_2|$, with equality if and only if $G_1 \cup G_2$ is also a forest. Therefore, $|\cP_1 \oplus \cP_2| \ge M - (2M - |\cP_1|-|\cP_2|) = |\cP_1|+|\cP_2| - M,$ which completes the proof.
\end{proof}

Define $T := |\tilde{\cP}_{t/\kappa}| - |\hat{\cP}_{t/\kappa}|$.
Since $\tilde{\cP}_t, \hat{\cP}_{t/\kappa} \sqsubseteq \tilde{\cP}_{t/\kappa}$ by \Cref{lem:basic-properties-main}, we have the following corollary.

\begin{corollary} \label{cor:pt-tilde-bound}
    We have that $|\tilde{\cP}_t| - |\hat{\cP}_{t/\kappa} \oplus \tilde{\cP}_t| \le T$.
\end{corollary}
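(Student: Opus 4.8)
The plan is to apply the preceding inequality-style proposition with the right choice of partitions and then observe that one of its terms is exactly the quantity $T$ defined just before the corollary. Specifically, I would invoke the unnamed proposition above (the one stating $|\cP| + |\cP_1 \oplus \cP_2| \ge |\cP_1| + |\cP_2|$ for $\cP_1, \cP_2 \sqsubseteq \cP$) with the instantiation $\cP := \tilde{\cP}_{t/\kappa}$, $\cP_1 := \tilde{\cP}_t$, and $\cP_2 := \hat{\cP}_{t/\kappa}$. The hypotheses $\cP_1, \cP_2 \sqsubseteq \cP$ are exactly the two facts cited in the sentence before the corollary: $\tilde{\cP}_t \sqsubseteq \tilde{\cP}_{t/\kappa}$ holds because the spanners satisfy $\tilde{G}_{t/\kappa} \subseteq \tilde{G}_t$ (monotonicity of the spanner edge sets), and $\hat{\cP}_{t/\kappa} \sqsubseteq \tilde{\cP}_{t/\kappa}$ is part~2 of \Cref{lem:basic-properties-main}.

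With that instantiation the proposition yields
\[
    |\tilde{\cP}_{t/\kappa}| + |\tilde{\cP}_t \oplus \hat{\cP}_{t/\kappa}| \ge |\tilde{\cP}_t| + |\hat{\cP}_{t/\kappa}|,
\]
which rearranges to
\[
    |\tilde{\cP}_t| - |\tilde{\cP}_t \oplus \hat{\cP}_{t/\kappa}| \le |\tilde{\cP}_{t/\kappa}| - |\hat{\cP}_{t/\kappa}| = T .
\]
Since $\oplus$ is symmetric, $\tilde{\cP}_t \oplus \hat{\cP}_{t/\kappa}$ is the same partition as $\hat{\cP}_{t/\kappa} \oplus \tilde{\cP}_t$, so this is precisely the claimed bound $|\tilde{\cP}_t| - |\hat{\cP}_{t/\kappa} \oplus \tilde{\cP}_t| \le T$.

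Honestly there is no real obstacle here: the corollary is a direct, one-line consequence of the proposition once the refinement hypotheses are in place. The only point requiring minor care is making sure the two refinement relations are genuinely available at this stage of the induction — i.e., that we are entitled to use $\hat{\cP}_{t/\kappa} \sqsubseteq \tilde{\cP}_{t/\kappa}$, which comes from \Cref{lem:basic-properties-main} applied at the already-processed level $t/\kappa$ (valid because $v_2(\log_2(t/\kappa)) > v_2(\log_2 t)$, so $\hat{\cP}_{t/\kappa}$ was generated in an earlier iteration of Part~2 or in Part~1). Everything else is pure bookkeeping with the $\sqsubseteq$ and $\oplus$ notation.
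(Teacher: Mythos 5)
Your proof is correct and matches the paper's (implicit) argument exactly: the corollary is stated as an immediate consequence of the preceding proposition applied with $\cP = \tilde{\cP}_{t/\kappa}$, $\cP_1 = \tilde{\cP}_t$, $\cP_2 = \hat{\cP}_{t/\kappa}$, which is precisely your instantiation. Your identification of where the two refinement hypotheses come from (spanner monotonicity for $\tilde{\cP}_t \sqsubseteq \tilde{\cP}_{t/\kappa}$ and Part 2 of Lemma~\ref{lem:basic-properties-main} for $\hat{\cP}_{t/\kappa} \sqsubseteq \tilde{\cP}_{t/\kappa}$) is also accurate.
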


We recall that $\tilde{\cP}_t \sqsupseteq \hat{\cP}_t$.
The rest of this section is devoted to bouding
$\BE[|\tilde{\cP}_t|-|\hat{\cP}_t|]$,
which generalizes \Cref{cor:overcomplete-bound} to $t$ not necessarily a power of $\alpha$.

Next, we bound the number of incomplete connected components, similar to \Cref{lem:incomplete_cc_bound_part_1}.

\begin{lemma} \label{lem:incomplete_cc_bound_part_2}
    For some fixed $\alpha^k < t < \alpha^{k+1}$, the expected number of incomplete connected components in $\bar{\cP}_t$ is at most $(3/4)^h \cdot \frac{\alpha^2}{t} \cdot \MST(\tilde{G})$.
\end{lemma}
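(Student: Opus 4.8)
The plan is to mirror the proof of \Cref{lem:incomplete_cc_bound_part_1}, with one extra step needed because in Part~2 the starting partition $\cP_t^{(0)} = \hat{\cP}_{t/\kappa}$ is obtained by over-merging and therefore need not be refined by $\tilde{\cP}_t$. First I would apply \Cref{lem:leader_compression_main} with $\cP^{(0)} = \hat{\cP}_{t/\kappa}$ and $H = \tilde{G}_t$. Recalling that $\hat{\cP}_{t/\kappa} \oplus \tilde{G}_t = \hat{\cP}_{t/\kappa} \oplus \tilde{\cP}_t$ and that $|\hat{\cP}_{t/\kappa} \oplus \tilde{G}_t| \ge 1$, this gives $\BE[\,|\bar{\cP}_t| - |\hat{\cP}_{t/\kappa} \oplus \tilde{\cP}_t|\,] \le (3/4)^h\,(|\hat{\cP}_{t/\kappa}| - 1)$.

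The next step is to turn this into a bound on the number of incomplete components of $\bar{\cP}_t$. I claim this number is at most $2\,(|\bar{\cP}_t| - |\bar{\cP}_t \oplus \tilde{\cP}_t|)$. Indeed, fix a component $D \in \bar{\cP}_t \oplus \tilde{\cP}_t$ and let it be the union of $r_D$ components of $\bar{\cP}_t$; since $D$ is simultaneously a union of full connected components of $\tilde{G}_t$, if $r_D = 1$ then that single $\bar{\cP}_t$-component is closed under $\tilde{G}_t$ and hence complete, whereas if $r_D \ge 2$ at most $r_D \le 2(r_D-1)$ of its $\bar{\cP}_t$-subcomponents can be incomplete; summing $\sum_D 2(r_D-1)$ over all such $D$ yields the claim. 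Now, since $\bar{\cP}_t \sqsupseteq \hat{\cP}_{t/\kappa} \oplus \tilde{G}_t$ by \Cref{prop:leader-compression-basic}, we have $\bar{\cP}_t \oplus \tilde{\cP}_t \sqsupseteq \hat{\cP}_{t/\kappa} \oplus \tilde{\cP}_t$, so $|\bar{\cP}_t| - |\bar{\cP}_t \oplus \tilde{\cP}_t| \le |\bar{\cP}_t| - |\hat{\cP}_{t/\kappa} \oplus \tilde{\cP}_t|$. Combining with the first step, the expected number of incomplete components of $\bar{\cP}_t$ is at most $2\,(3/4)^h\,(|\hat{\cP}_{t/\kappa}| - 1)$.

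Finally I would bound $|\hat{\cP}_{t/\kappa}| - 1$ in terms of $\MST(\tilde{G})$. Since $\hat{\cP}_{t/\kappa} \sqsubseteq \tilde{\cP}_{t/\kappa}$ by \Cref{lem:basic-properties-main}, we have $|\hat{\cP}_{t/\kappa}| \le |\tilde{\cP}_{t/\kappa}|$, and any spanning tree of $\tilde{G}$ must use at least $|\tilde{\cP}_{t/\kappa}| - 1$ edges of weight exceeding $t/\kappa$, so $\MST(\tilde{G}) \ge (t/\kappa)\,(|\tilde{\cP}_{t/\kappa}| - 1)$ and hence $|\hat{\cP}_{t/\kappa}| - 1 \le \tfrac{\kappa}{t}\,\MST(\tilde{G})$. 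Because $\alpha^k < t < \alpha^{k+1}$ forces $v_2(\log_2 t) \le g-1$, we get $\kappa = 2^{2^{v_2(\log_2 t)}} \le 2^{2^{g-1}} = \sqrt{\alpha}$, so the expected number of incomplete components is at most $2\,(3/4)^h\,\tfrac{\sqrt{\alpha}}{t}\,\MST(\tilde{G}) \le (3/4)^h\,\tfrac{\alpha^2}{t}\,\MST(\tilde{G})$, using that $\alpha$ is a sufficiently large power of $\log n$. The one genuinely new ingredient relative to Part~1 is the factor-$2$ bound on incomplete components of $\bar{\cP}_t$ in the case where $\bar{\cP}_t$ is not refined by $\tilde{\cP}_t$; this is the step I would be most careful about, and it is resolved by passing through the common coarsening $\bar{\cP}_t \oplus \tilde{\cP}_t$. (If a sharper constant were needed downstream, one could instead replace the crude bound $|\hat{\cP}_{t/\kappa}| - 1$ by $|\tilde{\cP}_{t/\kappa}| - |\tilde{\cP}_t|$ via the proposition preceding \Cref{cor:pt-tilde-bound}, but the stated $\alpha^2/t$ bound does not require this.)
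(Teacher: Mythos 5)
Your proof is correct and follows essentially the same approach as the paper: apply \Cref{lem:leader_compression_main} to the starting partition $\hat{\cP}_{t/\kappa}$, then convert the excess component count into a bound on incomplete components via the same factor-$2$ counting over a common coarsening with $\tilde{\cP}_t$. The only substantive variation is how you bound $|\hat{\cP}_{t/\kappa}|-1$: the paper counts nonempty quadtree cells at level $\alpha^k/\beta$ (giving $\alpha^2\MST(\tilde{G})/(2t)$), whereas you pass through $|\tilde{\cP}_{t/\kappa}|-1$ and the fact that a spanning tree of $\tilde{G}$ needs that many edges of weight exceeding $t/\kappa$ (giving the slightly tighter $\kappa/t \le \sqrt{\alpha}/t$); both land comfortably within the stated $\alpha^2/t$.
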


\begin{proof}
    Define $\cP_t^{(0)} := \hat{\cP}_{t/\kappa}$ as the starting partition before leader compression. Let $\cP_t^{(h)}$ be the partition after performing $h$ rounds of leader compression on $\cP_t^{(0)}$ with respect to $\tilde{G}_t$. By \Cref{lem:leader_compression_main}, we know that $\BE\left[|\cP_t^{(h)}| - |\cP_t^{(0)} \oplus \tilde{\cP}_t| \big\vert \cP_t^{(0)} \right] \le (3/4)^h \cdot \left(|\cP_t^{(0)}| - |\cP_t^{(0)} \oplus \tilde{\cP}_t|\right)$. Since $\cP_t^{(0)} = \hat{\cP}_{t/\kappa},$ this means that 
\[\BE\left[|\cP_t^{(h)}| - |\hat{\cP}_{t/\kappa} \oplus \tilde{\cP}_t| \big\vert \hat{\cP}_{t/\kappa} \right] \le (3/4)^h \cdot \left(|\hat{\cP}_{t/\kappa}| - |\hat{\cP}_{t/\kappa} \oplus \tilde{\cP}_t|\right).\]
    By Part 3 of \Cref{lem:basic-properties-main}, know that $\hat{\cP}_{t/\kappa} \sqsubseteq \hat{\cP}_{\alpha^k}$, so we can bound $|\hat{\cP}_{t/\kappa}|$ as at most the number of nonempty cells at level $\alpha^k/\beta$. 
    Moreover, $|\hat{\cP}_{t/\kappa} \oplus \tilde{\cP}_t| \ge 1$, which means that $|\hat{\cP}_{t/\kappa}| - |\hat{\cP}_{t/\kappa} \oplus \tilde{\cP}_t|$ is at most the number of nonempty cells at level $\alpha^k/\beta$ minus $1$, which as we saw in \Cref{lem:incomplete_cc_bound_part_1} is at most $\frac{\MST(\tilde{G})}{2 \alpha^{k-1}} \le \alpha^2 \cdot \frac{\MST(\tilde{G})}{2t},$ because $t \le \alpha^{k+1}$. So, by removing the conditioning on $\hat{\cP}_{t/\kappa}$, we have that $\BE\left[|\cP_t^{(h)}| - |\hat{\cP}_{t/\kappa} \oplus \tilde{\cP}_t|\right] \le (3/4)^h \cdot \alpha^2 \cdot \frac{\MST(\tilde{G})}{2t}.$
    
    Next, recall that $\cP_t^{(h)} \sqsupseteq \hat{\cP}_{t/\kappa} \oplus \tilde{\cP}_t$, and every incomplete connected component in $\cP_t^{(h)}$ has an edge leaving it in $\tilde{G}_t$.
    So, if any component $C \in \hat{\cP}_{t/\kappa} \oplus \tilde{\cP}_t$ is split into $r_C$ connected components in $\cP_t^{(h)}$, then $|\cP_t^{(h)}| - |\hat{\cP}_{t/\kappa} \oplus \tilde{\cP}_t| = \sum_{C \in \hat{\cP}_{t/\kappa} \oplus \tilde{\cP}_t} (r_C-1)$, but the number of incomplete connected components is $\sum_{C \in \hat{\cP}_{t/\kappa} \oplus \tilde{\cP}_t} r_C \cdot \textbf{1}(r_C \ge 2)$. Since $2(r_C-1) \ge r_C$ whenever $r_C \ge 2$, this means that 
    the number of incomplete connected components is at most $2 \cdot (|\cP_t^{(h)}| - |\hat{\cP}_{t/\kappa} \oplus \tilde{\cP}_t|).$ Hence, the expected number of incomplete connected components is at most $(3/4)^h \cdot \alpha^2 \cdot \frac{\MST(\tilde{G})}{t}$.
\end{proof}

We next show that $|\tilde{\cP}_t| - |\hat{\cP}_t|$ is small, in expectation.

\begin{lemma} \label{lem:overcomplete-bound-part-2}
    Suppose $t$ is a power of $2$ with $\alpha^k < t < \alpha^{k+1}$, and let $\kappa = 2^{2^{v_2(\log_2 t)}}$. Then, $\BE\left[|\tilde{\cP}_t| - |\hat{\cP}_t|\right] \le \BE\left[|\tilde{\cP}_{t/\kappa}| - |\hat{\cP}_{t/\kappa}|\right] + (3/4)^h \cdot \frac{\alpha^2}{t} \cdot \MST(\tilde{G})$.
\end{lemma}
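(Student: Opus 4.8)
The plan is to telescope the quantity $|\tilde{\cP}_t| - |\hat{\cP}_t|$ through the auxiliary partition $\hat{\cP}_{t/\kappa} \oplus \tilde{\cP}_t$ and then bound the two resulting pieces separately, the first by \Cref{cor:pt-tilde-bound} and the second by \Cref{lem:incomplete_cc_bound_part_2}. Concretely, I would start from the identity
\[
|\tilde{\cP}_t| - |\hat{\cP}_t| \;=\; \bigl(|\tilde{\cP}_t| - |\hat{\cP}_{t/\kappa} \oplus \tilde{\cP}_t|\bigr) + \bigl(|\hat{\cP}_{t/\kappa} \oplus \tilde{\cP}_t| - |\hat{\cP}_t|\bigr).
\]
By \Cref{cor:pt-tilde-bound}, the first summand is pointwise at most $T := |\tilde{\cP}_{t/\kappa}| - |\hat{\cP}_{t/\kappa}|$, so in expectation it contributes at most $\BE[|\tilde{\cP}_{t/\kappa}| - |\hat{\cP}_{t/\kappa}|]$.

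For the second summand, I would argue it is at most the number of incomplete connected components in $\bar{\cP}_t$. First, applying \Cref{prop:leader-compression-basic} to the $h$ rounds of leader compression that produce $\bar{\cP}_t$ from $\cP_t^{(0)} = \hat{\cP}_{t/\kappa}$ with respect to the graph $\tilde{G}_t$ (whose connected components form $\tilde{\cP}_t$) gives $\bar{\cP}_t \sqsupseteq \hat{\cP}_{t/\kappa} \oplus \tilde{\cP}_t$, hence $|\hat{\cP}_{t/\kappa} \oplus \tilde{\cP}_t| \le |\bar{\cP}_t|$, so the second summand is at most $|\bar{\cP}_t| - |\hat{\cP}_t|$. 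Next, $\hat{\cP}_t$ is obtained from $\bar{\cP}_t$ purely by grouping together incomplete components whose leaders lie in a common component of $\hat{\cP}_{t\cdot\kappa}$; each such merge lowers the component count by exactly one and only incomplete components participate, so, exactly as in the proof of \Cref{cor:overcomplete-bound}, $|\bar{\cP}_t| - |\hat{\cP}_t|$ is at most the number of incomplete components in $\bar{\cP}_t$. Taking expectations and invoking \Cref{lem:incomplete_cc_bound_part_2} bounds the expected second summand by $(3/4)^h \cdot \frac{\alpha^2}{t}\cdot\MST(\tilde{G})$. Adding the two bounds gives the claimed inequality.

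I do not anticipate a genuine obstacle; the two things to get right are minor. First is the refinement direction $\bar{\cP}_t \sqsupseteq \hat{\cP}_{t/\kappa} \oplus \tilde{\cP}_t$, which is immediate from \Cref{prop:leader-compression-basic} since leader compression started at $\hat{\cP}_{t/\kappa}$ using only edges of $\tilde{G}_t$ can coarsen no further than the join $\hat{\cP}_{t/\kappa} \oplus \tilde{\cP}_t$. Second is keeping the probability space consistent: all expectations should be taken over the leader-compression coins with the randomly shifted quadtree (and hence $\tilde{G}$, the partitions $\tilde{\cP}_\bullet$, and $\MST(\tilde{G})$) held fixed, exactly as in \Cref{lem:incomplete_cc_bound_part_1,lem:incomplete_cc_bound_part_2}; the conditioning on $\hat{\cP}_{t/\kappa}$ that arises when applying \Cref{lem:incomplete_cc_bound_part_2} is discharged by an outer expectation over the coins used to build $\hat{\cP}_{t/\kappa}$, just as in that lemma's proof.
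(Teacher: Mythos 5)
Your proposal is correct and follows essentially the same route as the paper: both decompose $|\tilde{\cP}_t|-|\hat{\cP}_t|$ through the intermediate partition $\hat{\cP}_{t/\kappa}\oplus\tilde{\cP}_t$, bound one piece by \Cref{cor:pt-tilde-bound}, use $\bar{\cP}_t \sqsupseteq \hat{\cP}_{t/\kappa}\oplus\tilde{\cP}_t$ from \Cref{prop:leader-compression-basic}, and bound the remaining gap by the number of incomplete components via \Cref{lem:incomplete_cc_bound_part_2}. The only difference is cosmetic (you group the paper's three telescoping terms into two), so nothing further is needed.
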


\begin{proof}
    Since we only merge incomplete connected components together to go from $\bar{\cP}_t = \cP_t^{(h)}$ to $\hat{\cP}_t$, we know that
\begin{equation} \label{eq:b1}
    \BE[|\cP_t^{(h)}| - |\hat{\cP}_t|] \le (3/4)^h \cdot \frac{\alpha^2}{t} \cdot \MST(\tilde{G}),
\end{equation}
by \Cref{lem:incomplete_cc_bound_part_2}.
    Next,
\begin{equation} \label{eq:b2}
    \BE[|\tilde{\cP}_t| - |\hat{\cP}_{t/\kappa} \oplus \tilde{\cP}_t|] \le \BE[|\tilde{\cP}_{t/\kappa}| - |\hat{\cP}_{t/\kappa}|],
\end{equation}
    by \Cref{cor:pt-tilde-bound}.
    Finally, $\cP_t^{(h)} \sqsupseteq \hat{\cP}_{t/\kappa} \oplus \tilde{\cP}_t,$ by \Cref{prop:leader-compression-basic}. Therefore,
\begin{equation} \label{eq:b3}
    \BE[|\hat{\cP}_{t/\kappa} \oplus \tilde{\cP}_t| - |\cP_t^{(h)}|] \le 0.
\end{equation}
    Adding Equations \eqref{eq:b1}, \eqref{eq:b2}, and \eqref{eq:b3} together completes the proof.
\end{proof}

By a simple inductive argument, we have $|\tilde{\cP}_t| - |\hat{\cP}_t|$ is small in expectation. Specifically, we have the following corollary.

\begin{corollary} \label{cor:overcount-main}
    For any $t$ a power of $2$, $\hat{\cP}_t \sqsubseteq \tilde{\cP}_t$, and $\BE\left[|\tilde{\cP}_t| - |\hat{\cP}_t|\right] \le (3/4)^h \cdot \frac{\alpha^3}{t} \cdot \MST(\tilde{G})$.
\end{corollary}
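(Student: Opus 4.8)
The plan is to prove the corollary by induction on $t$ (a power of $2$), following the structure used to establish the earlier consistency lemma (\Cref{lem:basic-properties-main}). The statement $\hat{\cP}_t \sqsubseteq \tilde{\cP}_t$ is already given by Part 2 of \Cref{lem:basic-properties-main}, so the only real content is the expectation bound $\BE[|\tilde{\cP}_t| - |\hat{\cP}_t|] \le (3/4)^h \cdot \frac{\alpha^3}{t} \cdot \MST(\tilde{G})$.

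First I would handle the base case $t = \alpha^k$: here \Cref{cor:overcomplete-bound} gives directly $\BE[|\tilde{\cP}_t| - |\hat{\cP}_t|] \le (3/4)^h \cdot \frac{\alpha}{t} \cdot \MST(\tilde{G})$, which is even stronger than the claimed bound (since $\alpha \le \alpha^3$). For the inductive step, consider some $\alpha^k < t < \alpha^{k+1}$ with $\kappa = 2^{2^{v_2(\log_2 t)}}$. By the construction in Part 2, $\hat{\cP}_t$ is generated from $\hat{\cP}_{t/\kappa}$, and since $v_2(\log_2(t/\kappa)) > v_2(\log_2 t)$, the partition $\hat{\cP}_{t/\kappa}$ has already been generated; moreover $t/\kappa$ is either a power of $\alpha$ or strictly between consecutive powers of $\alpha$ with a strictly larger $2$-adic valuation of its log, so by the induction hypothesis we may assume $\BE[|\tilde{\cP}_{t/\kappa}| - |\hat{\cP}_{t/\kappa}|] \le (3/4)^h \cdot \frac{\alpha^3}{t/\kappa} \cdot \MST(\tilde{G})$. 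Applying \Cref{lem:overcomplete-bound-part-2},
\[
\BE[|\tilde{\cP}_t| - |\hat{\cP}_t|] \le \BE[|\tilde{\cP}_{t/\kappa}| - |\hat{\cP}_{t/\kappa}|] + (3/4)^h \cdot \frac{\alpha^2}{t} \cdot \MST(\tilde{G}).
\]

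The main obstacle — and the step requiring the most care — is controlling the accumulation of the additive error terms across the (at most $g = \log_2 \log_2 \alpha$) levels of the binary-search recursion between two consecutive checkpoints. Each application of \Cref{lem:overcomplete-bound-part-2} adds a term $(3/4)^h \cdot \frac{\alpha^2}{t} \cdot \MST(\tilde{G})$, but the $t$ in the denominator changes along the recursion path. The key observation to make the telescoping work is that, along the chain from a checkpoint $\alpha^k$ down to the level $t$ currently being generated, the intermediate levels $t/\kappa, t/\kappa^2, \dots$ form a geometrically increasing sequence of scales, so $\sum_{\text{ancestors } t'} \frac{1}{t'} = O(1/\alpha^k) \le \alpha/t$ (using $t < \alpha^{k+1}$); combined with the base-case contribution of $\frac{\alpha}{\alpha^k} \le \frac{\alpha^2}{t}$, and noting there are at most $g \le \alpha$ terms (crudely, $g = \log_2\log_2\alpha \ll \alpha$, but even the weak bound $g \le \alpha$ suffices), the total is at most $(3/4)^h \cdot \frac{\alpha^3}{t} \cdot \MST(\tilde{G})$. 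I would present this either as a clean induction where the hypothesis $\frac{\alpha^3}{t}$ is strong enough to absorb one more $\frac{\alpha^2}{t}$ term and a rescaling by $\kappa$ — checking $\frac{\alpha^3}{t/\kappa} + \frac{\alpha^2}{t} = \frac{\kappa \alpha^3 + \alpha^2}{t} \le \frac{\alpha^3}{t}$ fails naively, so one actually needs to track the depth in the recursion — or, more robustly, by explicitly summing the additive terms over the at-most-$g$ ancestors of $t$ in the recursion tree together with the checkpoint base case, and bounding $\kappa \le \alpha$, $g \le \log\log\alpha$, to conclude the sum is $O\big((3/4)^h \cdot \frac{\alpha^2 \log\log\alpha}{t} \cdot \MST(\tilde{G})\big) \le (3/4)^h \cdot \frac{\alpha^3}{t}\cdot\MST(\tilde{G})$ since $\alpha \ge (\log n)^C \gg \log\log\alpha$. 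I expect the cleanest writeup is the latter: unroll \Cref{lem:overcomplete-bound-part-2} fully down to a checkpoint, apply \Cref{cor:overcomplete-bound} at the checkpoint, and bound the resulting finite sum crudely.
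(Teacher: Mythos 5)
Your proposal is correct and matches the paper's proof: the paper also unrolls \Cref{lem:overcomplete-bound-part-2} along the chain $t=t_0>t_1>\cdots>t_r=\alpha^k$ with $t_{i+1}=t_i/2^{2^{v_2(\log_2 t_i)}}$, applies \Cref{cor:overcomplete-bound} at the checkpoint, and bounds $\sum_i 1/t_i \le 2/\alpha^k \le \alpha/t$ geometrically (the $t_i$ being distinct powers of $2$ in $[\alpha^k,t]$), exactly your ``key observation.'' One caveat: your alternative ``crude'' bound of $O\bigl((3/4)^h\cdot\frac{\alpha^2\log\log\alpha}{t}\cdot\MST(\tilde G)\bigr)$ is not valid, since the single term nearest the checkpoint is $\frac{\alpha^2}{t_{r-1}}$, which for $t$ close to $\alpha^{k+1}$ is already $\Theta(\frac{\alpha^3}{t})$ rather than $O(\frac{\alpha^2}{t})$; only the geometric-sum version of your argument closes the bound.
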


\begin{proof}
    Suppose $\alpha^k \le t < \alpha^{k+1}$. We can construct a sequence $t = t_0 > \cdots > t_r = \alpha^k$, where $t_{i+1} = t_i/2^{2^{v_2(\log_2 t_i)}}$. Then, we can form a telescoping sum
\begin{align*}
    \BE\left[|\tilde{\cP}_t| - |\hat{\cP}_t|\right] &= \BE\left[|\tilde{\cP}_{t_r}| - |\hat{\cP}_{t_r}|\right] + \sum_{i = 0}^{r-1} \left(\BE\left[|\tilde{\cP}_{t_i}| - |\hat{\cP}_{t_i}|\right] - \BE\left[|\tilde{\cP}_{t_{i+1}}| - |\hat{\cP}_{t_{i+1}}|\right]\right).
\intertext{Using \Cref{cor:overcomplete-bound} and \Cref{lem:overcomplete-bound-part-2}, this is at most}
    &\le \left(\frac{3}{4}\right)^h \cdot \frac{\alpha}{t_r} \cdot \MST(\tilde{G}) + \sum_{i = 0}^{r-1} \left(\frac{3}{4}\right)^h \cdot \frac{\alpha^2}{t_i} \cdot \MST(\tilde{G}) \\
    &\le \left(\frac{3}{4}\right)^h \cdot \alpha^2 \cdot \MST(\tilde{G}) \cdot \sum_{i=0}^r \frac{1}{t_i} \\
    &\le \left(\frac{3}{4}\right)^h \cdot \alpha^2 \cdot \MST(\tilde{G}) \cdot \left(\frac{1}{t} + \frac{2}{t} + \frac{4}{t} + \cdots + \frac{1}{\alpha^k}\right) \\
    &\le \left(\frac{3}{4}\right)^h \cdot \alpha^2 \cdot \MST(\tilde{G}) \cdot \frac{2}{\alpha^k} \\
    &\le \left(\frac{3}{4}\right)^h \cdot \frac{\alpha^3}{t} \cdot \MST(\tilde{G}),
\end{align*}
    since $t \le \alpha^{k+1}/2$ so $\frac{2}{\alpha^k} \le \frac{\alpha}{t}$.
\end{proof}

\subsection{Part 3 of the Algorithm}

From \Cref{lem:basic-properties-main}, we have generated a nested set of partitions $\hat{\cP}_1 \sqsupseteq \hat{\cP}_2 \sqsupseteq \hat{\cP}_4 \sqsupseteq \cdots$. In Part 3, we generate the edges: we must show that the cost of the edges we produce is at most $O(1) \cdot \MST(\tilde{G})$, and that the edges we produce generate a spanning tree.

When connecting $\hat{\cP}_{t/2}$ into $\hat{\cP}_t$, we generate a new partition ${\cP_t'}^{(h)}$ after $h$ rounds of leader compression on $\hat{\cP}_{t/2}$ with respect to $\tilde{G}_t$. First, we note that ${\cP_t'}^{(h)}$ still refines $\hat{\cP}_t$.

\begin{proposition} \label{prop:edges-basic}
    We have ${\cP_t'}^{(h)} \sqsupseteq \hat{\cP}_t$.
\end{proposition}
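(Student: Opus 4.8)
The plan is to sandwich the partition ${\cP_t'}^{(h)}$ produced in Part 3 between $\hat{\cP}_{t/2}$ and the join $\hat{\cP}_{t/2}\oplus\tilde{G}_t$, and then to observe that even the finer of these two bounds, namely $\hat{\cP}_{t/2}\oplus\tilde{G}_t$, already refines $\hat{\cP}_t$. Concretely, Part 3 runs $h$ rounds of leader compression on the starting partition $\hat{\cP}_{t/2}$ with respect to the graph $\tilde{G}_t$ (the bookkeeping of the added edges does not affect which components get merged), so \Cref{prop:leader-compression-basic} applies and gives
\[
\hat{\cP}_{t/2}\oplus\tilde{G}_t \;\sqsubseteq\; {\cP_t'}^{(h)} \;\sqsubseteq\; \hat{\cP}_{t/2}.
\]
Thus it suffices to establish $\hat{\cP}_{t/2}\oplus\tilde{G}_t \sqsupseteq \hat{\cP}_t$; the conclusion ${\cP_t'}^{(h)}\sqsupseteq\hat{\cP}_t$ then follows by transitivity of refinement, since ${\cP_t'}^{(h)}\sqsupseteq\hat{\cP}_{t/2}\oplus\tilde{G}_t$.

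For the remaining claim I would first note that $\hat{\cP}_{t/2}\oplus\tilde{G}_t = \hat{\cP}_{t/2}\oplus\tilde{\cP}_t$, since $\oplus$ against a graph depends only on the graph's connectivity, i.e., on its connected components $\tilde{\cP}_t$. Now I invoke \Cref{lem:basic-properties-main}: Part 3 of that lemma gives $\hat{\cP}_t\sqsubseteq\hat{\cP}_{t/2}$, and Part 2 gives $\hat{\cP}_t\sqsubseteq\tilde{\cP}_t$. Hence $\hat{\cP}_t$ is a common coarsening of both $\hat{\cP}_{t/2}$ and $\tilde{\cP}_t$. Since $\hat{\cP}_{t/2}\oplus\tilde{\cP}_t$ is by definition the \emph{finest} partition that is coarser than both $\hat{\cP}_{t/2}$ and $\tilde{\cP}_t$, it must refine every such common coarsening, so in particular $\hat{\cP}_{t/2}\oplus\tilde{\cP}_t\sqsupseteq\hat{\cP}_t$, as needed.

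I expect essentially no genuine obstacle here: the whole argument is a short chain of refinement inequalities, and the only thing to be careful about is the direction of $\sqsubseteq$ versus $\sqsupseteq$ together with the (elementary but easy-to-misstate) fact that the join of two partitions, each of which refines $\hat{\cP}_t$, again refines $\hat{\cP}_t$. As a sanity check one can also verify $\hat{\cP}_{t/2}\oplus\tilde{G}_t\sqsupseteq\hat{\cP}_t$ vertex-by-vertex: each component of $\hat{\cP}_{t/2}$ lies inside a single component of $\hat{\cP}_t$, and every edge of $\tilde{G}_t$ has both endpoints in one component of $\tilde{\cP}_t$, hence (using $\hat{\cP}_t\sqsubseteq\tilde{\cP}_t$) in one component of $\hat{\cP}_t$; so merging components of $\hat{\cP}_{t/2}$ along $\tilde{G}_t$-edges never crosses a boundary of $\hat{\cP}_t$.
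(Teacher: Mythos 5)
Your proposal is correct and follows essentially the same route as the paper's proof: apply \Cref{prop:leader-compression-basic} to get ${\cP_t'}^{(h)} \sqsupseteq \hat{\cP}_{t/2} \oplus \tilde{\cP}_t$, then use Parts 2 and 3 of \Cref{lem:basic-properties-main} to conclude that the join refines $\hat{\cP}_t$. The only difference is that you spell out the lattice-theoretic reason the join of two partitions refining $\hat{\cP}_t$ again refines $\hat{\cP}_t$, which the paper leaves implicit.
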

    
\begin{proof}
    By \Cref{prop:leader-compression-basic}, ${\cP_t'}^{(h)} \sqsupseteq \hat{\cP}_{t/2} \oplus \tilde{\cP}_t$. Moreover, by \Cref{lem:basic-properties-main}, $\hat{\cP}_{t/2} \sqsupseteq \hat{\cP}_t$ and $\tilde{\cP}_{t} \sqsupseteq \hat{\cP}_t$, which means that $\hat{\cP}_{t/2} \oplus \tilde{\cP}_t \sqsupseteq \hat{\cP}_t$. In summary, ${\cP_t'}^{(h)} \sqsupseteq \hat{\cP}_{t/2} \oplus \tilde{\cP}_t \sqsupseteq \hat{\cP}_t$.
\end{proof}

Let $\hat{G}_t$ represent the edges generated when connecting $\hat{\cP}_{t/2}$ into $\hat{\cP}_t$ (and $\hat{G}_1 = \emptyset$).
\begin{lemma}
    $\bigcup_{t' \le \alpha^H} \hat{G}_{t'}$ is a spanning tree of $X$.
\end{lemma}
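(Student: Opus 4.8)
The plan is to prove, by induction on $t$ ranging over powers of $2$ from $1$ to $\alpha^H$, the following strengthening: the graph $\hat{G}^{\le t} := \bigcup_{2 \le t' \le t} \hat{G}_{t'}$ on vertex set $X$ is a forest, and its connected components are exactly the blocks of $\hat{\cP}_t$. Applying this with $t = \alpha^H$ finishes the proof: $\alpha^H$ exceeds both $\Delta$ and the maximum interpoint distance, so $\tilde{G}_{\alpha^H}$ connects every pair of points and $\tilde{\cP}_{\alpha^H} = \{X\}$; since $\hat{\cP}_{\alpha^H} \sqsubseteq \tilde{\cP}_{\alpha^H}$ by \Cref{lem:basic-properties-main}, also $\hat{\cP}_{\alpha^H} = \{X\}$, so $\hat{G}^{\le \alpha^H}$ is a forest with a single component equal to $X$, i.e.\ a spanning tree. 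For the base case, take any power of $2$ $t$ that is at most the minimum interpoint distance (e.g.\ $t = 1$): then $\tilde{G}_t$ is edgeless, so leader compression and the ``incomplete-component'' merges in Parts~1--2 are vacuous, $\hat{\cP}_t$ is the partition into singletons, and $\hat{G}_t = \emptyset$; hence $\hat{G}^{\le t} = \emptyset$ is a forest whose components are the singleton blocks of $\hat{\cP}_t$.

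For the inductive step, the first key claim is that, regarding each block of $\hat{\cP}_{t/2}$ as one super-node, the edge set $\hat{G}_t = F_t \cup \bigcup_{i=1}^{h} E_t^{(i)}$ contracts to a \emph{forest} on the super-node set $\hat{\cP}_{t/2}$ whose connected components are precisely the blocks of $\hat{\cP}_t$. To see this I unpack \Cref{alg:leader-compression}: in a single round every recorded edge joins a ``follower'' component (bit $0$) to a ``leader'' component (bit $1$), and leader components record no edges, so the recorded edges of that round form a union of stars on the components present at its start, hence a forest. Since components only ever merge across rounds, and a forest on $V$ together with a forest on a contraction of $V$ is again a forest on $V$, the set $\bigcup_{i=1}^{h} E_t^{(i)}$ is a forest on $\hat{\cP}_{t/2}$ whose components are exactly ${\cP_t'}^{(h)}$. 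By \Cref{prop:edges-basic} we have ${\cP_t'}^{(h)} \sqsupseteq \hat{\cP}_t$, so each $\hat{\cP}_t$-block is a disjoint union of ${\cP_t'}^{(h)}$-blocks and the star forest $F_t$ connects precisely those lying in a common $\hat{\cP}_t$-block; adding $F_t$ thus keeps the super-graph a forest and makes its components exactly the $\hat{\cP}_t$-blocks. (Here I also use $\hat{\cP}_{t/2} \sqsupseteq \hat{\cP}_t$ from \Cref{lem:basic-properties-main}, and the fact that every edge of $\hat{G}_t$ genuinely crosses between two distinct $\hat{\cP}_{t/2}$-blocks, which is immediate from the algorithm since both leader compression and the star step only join distinct current components.)

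For the second half of the inductive step, let $F := \hat{G}^{\le t/2}$, which by the inductive hypothesis is a forest whose components are the blocks of $\hat{\cP}_{t/2}$; equivalently $F$ is a spanning tree inside each $\hat{\cP}_{t/2}$-block and has no edge between two different blocks. Let $G^{\ast} := \hat{G}_t$. Every edge of $G^{\ast}$ runs between distinct $\hat{\cP}_{t/2}$-blocks, and by the previous paragraph $G^{\ast}$ contracts to a forest on the super-node set $\hat{\cP}_{t/2}$. Any cycle in $F \cup G^{\ast}$ would have to use at least one $G^{\ast}$-edge; contracting the $\hat{\cP}_{t/2}$-blocks then turns that cycle into a nontrivial closed walk in the super-forest coming from $G^{\ast}$, which is impossible. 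Hence $\hat{G}^{\le t} = F \cup G^{\ast}$ is a forest on $X$, and two points lie in the same component of $\hat{G}^{\le t}$ iff their $\hat{\cP}_{t/2}$-blocks lie in the same component of that super-forest, i.e.\ iff they lie in the same block of $\hat{\cP}_t$. This closes the induction.

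The conceptual core --- and the step needing the most care --- is the first claim of the inductive step: that $\hat{G}_t$ is acyclic as a super-graph on $\hat{\cP}_{t/2}$ with components exactly $\hat{\cP}_t$. This requires a careful reading of the leader-compression pseudocode (per-round star structure; recorded edges always cross components; ${\cP_t'}^{(h)}$ being the component structure after $h$ rounds), the elementary ``forest plus forest-on-a-contraction is a forest'' lemma applied repeatedly, and the consistency facts ${\cP_t'}^{(h)} \sqsupseteq \hat{\cP}_t$ (\Cref{prop:edges-basic}) and $\hat{\cP}_{t/2} \sqsupseteq \hat{\cP}_t$ (\Cref{lem:basic-properties-main}). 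By contrast, the base case is only a bookkeeping technicality (below the minimum interpoint distance every partition is trivial), and the gluing step is routine once the super-graph structure is in hand.
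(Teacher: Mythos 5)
Your proposal is correct and follows essentially the same route as the paper: an induction showing that $\bigcup_{t' \le t} \hat{G}_{t'}$ is a spanning forest whose components are exactly the blocks of $\hat{\cP}_t$, with the inductive step resting on the facts that leader compression only joins distinct current components (so the recorded edges together with $F_t$ contract to a forest on $\hat{\cP}_{t/2}$ with components $\hat{\cP}_t$, via ${\cP_t'}^{(h)} \sqsupseteq \hat{\cP}_t$), and concluding at $t = \alpha^H$ where $\hat{\cP}_{\alpha^H} = \{X\}$. Your write-up merely unpacks in more detail the per-round star structure and the ``forest plus forest-on-a-contraction'' argument that the paper states in one line.
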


\begin{proof}
     We prove inductively that $\bigcup_{t' \le t} \hat{G}_{t'}$ forms a spanning forest for $\hat{\cP}_t$.

    For the base case $t = 1$, 
    recall that we assume the minimum distance between any two points in $X$ is at least $\alpha^{100}$, which means by Part 1 of \Cref{lem:basic-properties-main}, every connected component in $\hat{\cP}_1$ consists of a single element. Since $\hat{G}_1$ has no edges, the base case follows.

    Define $\hat{G}_{\le t} := \bigcup_{t' \le t} \hat{G}_{t'}$. For the inductive step, assume that $\hat{G}_{\le t/2}$ generates a spanning forest for $\hat{\cP}_{t/2}$. Then, leader compression only adds edges to connect two distinct connected components together, so when adding the edges generated by $h$ rounds of leader compression on $\hat{\cP}_{t/2}$ with respect to $\tilde{G}_t$, we obtain a spanning forest for ${\cP_t'}^{(h)}$, where ${\cP_t'}^{(h)} \sqsupseteq \hat{\cP}_t$ by \Cref{prop:edges-basic}. Finally, we connect all disconnected components in each partition of $\hat{\cP}_t$, which generates a spanning forest for $\hat{\cP}_t$.

    Due to Part 2 of~\Cref{lem:basic-properties-main}, $\hat{\cP}_t$ refines $\tilde{\cP}_t$. 
    Since $\tilde{\cP}_t$ contains only one part $X$ for $t=\alpha^H$, $\bigcup_{t' \le \alpha^H} \hat{G}_{t'}$ is a spanning tree of $X$.
\end{proof}

Now, we fix a value $t$, and consider the edges generated to connect $\hat{\cP}_{t/2}$ into $\hat{\cP}_t$. We will bound the sum of the weights of these edges.

\begin{lemma}\label{lem:total_weight_of_one_level}
    The sum of the weights of all edges in $\hat{G}_t$, in expectation, is at most $t \cdot \left(|\tilde{\cP}_{t/2}| - |\tilde{\cP}_t|\right) + (3/4)^h \cdot \left(\alpha^3 + \frac{2 \alpha^4 \sqrt{d}}{\beta}\right) \cdot \MST(\tilde{G}).$
\end{lemma}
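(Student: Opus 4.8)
The plan is to split $\hat{G}_t$ into the two pieces produced by \Cref{alg:part-3}: the set $E_{LC} := \bigcup_{i=1}^h E_t^{(i)}$ of edges added during the $h$ rounds of leader compression on $\hat{\cP}_{t/2}$ with respect to $\tilde{G}_t$, and the clean‑up star forest $F_t$. I will bound the total weight of each piece and add. The weight of $E_{LC}$ will give the main term $t\cdot(|\tilde{\cP}_{t/2}|-|\tilde{\cP}_t|)$, and the weight of $F_t$ will give the $(3/4)^h\cdot\poly(\alpha)\cdot\MST(\tilde G)$ error term.

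For $E_{LC}$: every edge lies in $\tilde{G}_t$, hence has weight at most $t$ in $\tilde{G}$; and since each round of leader compression merges pairs of components, $|E_{LC}| = |\hat{\cP}_{t/2}| - |{\cP_t'}^{(h)}|$, where ${\cP_t'}^{(h)}$ is the partition after the $h$ rounds. By \Cref{prop:leader-compression-basic}, ${\cP_t'}^{(h)} \sqsupseteq \hat{\cP}_{t/2}\oplus\tilde{\cP}_t$, so $|E_{LC}| \le |\hat{\cP}_{t/2}| - |\hat{\cP}_{t/2}\oplus\tilde{\cP}_t|$. I would then invoke the submodularity inequality $|\cP| + |\cP_1\oplus\cP_2| \ge |\cP_1| + |\cP_2|$ (proven earlier in Part 2 for $\cP_1,\cP_2 \sqsubseteq \cP$) with $\cP = \tilde{\cP}_{t/2}$, $\cP_1 = \hat{\cP}_{t/2}$, $\cP_2 = \tilde{\cP}_t$ — both refinements of $\tilde{\cP}_{t/2}$, using Part 2 of \Cref{lem:basic-properties-main} and the monotonicity $\tilde{\cP}_{t/2}\sqsupseteq\tilde{\cP}_t$ — which gives $|\hat{\cP}_{t/2}| - |\hat{\cP}_{t/2}\oplus\tilde{\cP}_t| \le |\tilde{\cP}_{t/2}| - |\tilde{\cP}_t|$. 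Hence $E_{LC}$ has weight at most $t\cdot(|\tilde{\cP}_{t/2}|-|\tilde{\cP}_t|)$, deterministically.

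For $F_t$: by construction it has exactly $|{\cP_t'}^{(h)}| - |\hat{\cP}_t|$ edges (one per extra piece inside each component of $\hat{\cP}_t$), up to a factor‑$2$ blow‑up in count and weight because $\tilde{G}$ is only a $2$‑hop spanner (a piece may reach a chosen leader only through one intermediate vertex). Each such edge joins two vertices in a common component of $\hat{\cP}_t$, which by Part 1 of \Cref{lem:basic-properties-main} lies in a single Quadtree cell of side length $O(\alpha^2 t/\beta)$, so it has weight $w_{\max} = O(\alpha^2 t\sqrt{d}/\beta)$ in $\tilde{G}$. It remains to bound $\BE[\,|{\cP_t'}^{(h)}| - |\hat{\cP}_t|\,]$, which I would split as $\big(|{\cP_t'}^{(h)}| - |\hat{\cP}_{t/2}\oplus\tilde{\cP}_t|\big) + \big(|\hat{\cP}_{t/2}\oplus\tilde{\cP}_t| - |\hat{\cP}_t|\big)$. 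The first term is the leader‑compression residual and is $\le (3/4)^h\,|\hat{\cP}_{t/2}|$ in expectation by \Cref{lem:leader_compression_main}, and $|\hat{\cP}_{t/2}|$ is at most the number of nonempty Quadtree cells at the relevant scale, i.e. $O(\alpha^2/t)\cdot\MST(\tilde{G})$, exactly as in the proof of \Cref{lem:incomplete_cc_bound_part_1}. For the second term I would use $\hat{\cP}_{t/2}\oplus\tilde{\cP}_t \sqsubseteq \bar{\cP}_t$ — which follows from \Cref{prop:leader-compression-basic}, \Cref{prop:bar-P-single-cell}, and the nesting $\hat{\cP}_{t/\kappa}\sqsupseteq\hat{\cP}_{t/2}$, in both the $t=\alpha^k$ and the general case — to conclude it is at most the number of incomplete components of $\bar{\cP}_t$, which \Cref{lem:incomplete_cc_bound_part_1,lem:incomplete_cc_bound_part_2} bound by $(3/4)^h\cdot O(\alpha^2/t)\cdot\MST(\tilde{G})$ (alternatively it is $\le |\tilde{\cP}_t| - |\hat{\cP}_t|$, bounded by \Cref{cor:overcount-main}). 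Combining the edge‑weight bound $w_{\max}=O(\alpha^2 t\sqrt{d}/\beta)$ with these counts, and keeping the factor $t$ (for the part of the overcount that the clean‑up could in principle have paid at weight $t$, bounded by the coarser \Cref{cor:overcount-main} with its $\alpha^3$) separate from the factor $w_{\max}$ (for the part paid at the cell scale, giving $2\alpha^4\sqrt{d}/\beta$), yields the stated bound by linearity of expectation.

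The main obstacle is precisely this last accounting: the error must come out as $(3/4)^h(\alpha^3 + 2\alpha^4\sqrt d/\beta)\MST(\tilde G)$, so one must be careful about which of \Cref{cor:overcount-main} (power $\alpha^3$) and \Cref{lem:incomplete_cc_bound_part_1,lem:incomplete_cc_bound_part_2} (power $\alpha^2$) is applied to which over/under‑count, and whether that count gets multiplied by $t$ or by $w_{\max}=\Theta(\alpha^2 t\sqrt d/\beta)$; making these line up is the delicate bookkeeping. A secondary technicality is justifying that the clean‑up star can be realized by edges of the \emph{sparse} graph $\tilde{G}$ at weight $O(\alpha^2 t\sqrt d/\beta)$ even though $\tilde{G}$ is only a $2$‑hop spanner — this costs at most an extra constant factor, absorbed into the $\poly(\alpha)$.
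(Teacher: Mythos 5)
Your proposal is correct and follows the same basic structure as the paper's proof: split $\hat{G}_t$ into the leader-compression edges (each of weight at most $t$) and the clean-up star forest (each edge of weight at most the cell diameter $\alpha^{k+1}\sqrt{d}/\beta\le\alpha^2 t\sqrt d/\beta$), and bound the two counts using \Cref{lem:leader_compression_main}, \Cref{prop:leader-compression-basic}, and the nonempty-cell counting from \Cref{lem:incomplete_cc_bound_part_1,lem:incomplete_cc_bound_part_2}. The one genuinely different step is your treatment of the main term: you bound $|E_{LC}|=|\hat{\cP}_{t/2}|-|{\cP_t'}^{(h)}|\le|\hat{\cP}_{t/2}|-|\hat{\cP}_{t/2}\oplus\tilde{\cP}_t|\le|\tilde{\cP}_{t/2}|-|\tilde{\cP}_t|$ deterministically via the submodularity proposition (equivalently, the argument behind \Cref{cor:pt-tilde-bound}), whereas the paper routes through $t\cdot\BE\bigl[|\tilde{\cP}_{t/2}|-|\hat{\cP}_t|\bigr]$ and then pays an extra $(3/4)^h\cdot\frac{\alpha^3}{t}\cdot\MST(\tilde G)$ via \Cref{cor:overcount-main} to replace $|\hat{\cP}_t|$ by $|\tilde{\cP}_t|$. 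Your route is tighter — it makes the $\alpha^3$ term in the statement unnecessary — at the cost of invoking the submodularity inequality one more time; both are valid. Your two suggested bounds for $|\hat{\cP}_{t/2}\oplus\tilde{\cP}_t|-|\hat{\cP}_t|$ also both work (the paper uses the $|\tilde{\cP}_t|-|\hat{\cP}_t|$ alternative).

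One small correction: the "factor-$2$ blow-up because $\tilde G$ is only a $2$-hop spanner" is not needed. Algorithm~\ref{alg:part-3} connects the leftover pieces inside a component of $\hat{\cP}_t$ by \emph{arbitrary} edges, and the paper charges each such edge its Euclidean length, which is at most the diameter of the enclosing quadtree cell (Part 1 of \Cref{lem:basic-properties-main}); there is no requirement that these edges belong to $\tilde{G}$. Dropping that hedge, your count of star edges is exactly $|{\cP_t'}^{(h)}|-|\hat{\cP}_t|$ and your constants match the statement; keeping it would only degrade the constant in front of $\alpha^4\sqrt d/\beta$, which is harmless downstream but would not literally reproduce the stated bound.
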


\begin{proof}
    By \Cref{prop:leader-compression-basic}, we have $\hat{\cP}_{t/2} \oplus \tilde{\cP}_t \sqsubseteq {\cP_t'}^{(h)} \sqsubseteq \hat{\cP}_{t/2},$ and by \Cref{lem:leader_compression_main}, $\BE\left[|{\cP_t'}^{(h)}| - |\hat{\cP}_{t/2} \oplus \tilde{\cP}_t|\right] \le (3/4)^h \cdot \left(|\hat{\cP}_{t/2}| - |\hat{\cP}_{t/2} \oplus \tilde{\cP}_t|\right)$.
    As in \Cref{lem:incomplete_cc_bound_part_2},
    we know that 
    $|\hat{\cP}_{t/2}| - |\hat{\cP}_{t/2} \oplus \tilde{\cP}_t| \le \alpha^2 \cdot \frac{\MST(\tilde{G})}{t},$ so $\BE\left[|{\cP_t'}^{(h)}| - |\hat{\cP}_{t/2} \oplus \tilde{\cP}_t|\right] \le (3/4)^h \cdot \alpha^2 \cdot \frac{\MST(\tilde{G})}{t}.$ In addition, $|\hat{\cP}_{t/2} \oplus \tilde{\cP}_t| - |\tilde{\cP}_t| \le 0$, and $\BE\left[|\tilde{\cP}_t| - |\hat{\cP}_t|\right] \le (3/4)^h \cdot \frac{\alpha^3}{t} \cdot \MST(\tilde{G})$ by \Cref{cor:overcount-main}. So, $\BE\left[|{\cP_t'}^{(h)}| - |\hat{\cP}_t|\right] \le (3/4)^h \cdot \frac{2 \alpha^3}{t} \cdot \MST(\tilde{G})$.

    Assume $\alpha^k \le t < \alpha^{k+1}$.
    At level $t$, we produce $|\hat{\cP}_{t/2}| - |{\cP_t'}^{(h)}|$ edges of weight $t$, and then $|{\cP_t'}^{(h)}| - |\hat{\cP}_t|$ edges to combine all of $\hat{\cP}_t$ together. But by Part 1 of \Cref{lem:basic-properties-main}, the diameter of any cell in $\hat{\cP}_t$ is at most $\alpha^{k+1} \cdot \sqrt{d}/\beta,$ so the total weight of all the edges, in expectation, is at most
\begin{align*}
    t \cdot \BE\left[|\hat{\cP}_{t/2}| - |{\cP_t'}^{(h)}|\right] + \frac{\alpha^{k+1} \sqrt{d}}{\beta} \cdot \BE\left[|{\cP_t'}^{(h)}| - |\hat{\cP}_t|\right]
    &\le t \cdot \BE\left[|\hat{\cP}_{t/2}| - |{\cP_t'}^{(h)}|\right] + \frac{\alpha^{k+1} \sqrt{d}}{\beta} \cdot \left(\frac{3}{4}\right)^h \cdot \frac{2 \alpha^3}{t} \cdot \MST(\tilde{G}) \\
    &\le t \cdot \BE\left[|\hat{\cP}_{t/2}| - |{\cP_t'}^{(h)}|\right] + \frac{2 \alpha^{4} \sqrt{d}}{\beta} \cdot \left(\frac{3}{4}\right)^h \cdot \MST(\tilde{G}) \\
    &\le t \cdot \BE\left[|\tilde{\cP}_{t/2}| - |\hat{\cP}_t|\right] + \frac{2 \alpha^{4} \sqrt{d}}{\beta} \cdot \left(\frac{3}{4}\right)^h \cdot \MST(\tilde{G}).
\end{align*}
    Moreover, $\BE[|\tilde{\cP}_t| - |\hat{\cP}_t|] \le (3/4)^h \cdot \frac{\alpha^3}{t} \cdot \MST(\tilde{G})$ by \Cref{cor:overcount-main}, so $\BE[|\tilde{\cP}_{t/2}| - |\hat{\cP}_t|] \le |\tilde{\cP}_{t/2}| - |\tilde{\cP}_t| + (3/4)^h \cdot \frac{\alpha^3}{t} \cdot \MST(\tilde{G})$. So overall, the weight of the edges we added, in expectation is at most
\[t \cdot \left(|\tilde{\cP}_{t/2}| - |\tilde{\cP}_t|\right) + (3/4)^h \cdot \left(\alpha^3 + \frac{2 \alpha^4 \sqrt{d}}{\beta}\right) \cdot \MST(\tilde{G}). \qedhere\]
\end{proof}

    Adding this over all levels $t$, we get this is at most $\MST(\tilde{G}) + (3/4)^h \cdot \log n \cdot \left(\alpha^3 + \frac{2 \alpha^4 \sqrt{d}}{\beta}\right) \cdot \MST(\tilde{G})$. So, we just need $h = O(\log \log n)$.

\begin{lemma}\label{lem:approximation_gaurantee}
If $h$ is a sufficiently large $\Theta(\log\log n)$, $\bigcup_{t' \le \alpha^H} \hat{G}_{t'}$ is an $O(1)$-approximate $\MST$ for $X$ in expecation.
\end{lemma}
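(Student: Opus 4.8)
The plan is to sum the per-level bound of \Cref{lem:total_weight_of_one_level} over all $O(\log n)$ dyadic levels $t \le \alpha^H$, recognize the resulting ``main term'' as exactly $\MST(\tilde G)$ via a Kruskal-style identity, observe that the accumulated error term is only $(3/4)^h \cdot \polylog(n)\cdot\MST(\tilde G)$ and is killed by taking $h=\Theta(\log\log n)$ with a large enough constant, and finally replace $\MST(\tilde G)$ by $O(\MST(X))$ using \Cref{cor:MST_spanner_not_too_big}. The fact that $\bigcup_{t'\le\alpha^H}\hat G_{t'}$ is a spanning tree of $X$ has already been established, so only the expected‑weight bound remains; write $\mathrm{wt}(\cdot)$ for the total edge weight of a set of edges.

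First I would sum \Cref{lem:total_weight_of_one_level} over the $\Theta(\log n)$ powers of two $t$ with $1\le t\le\alpha^H$ (recall $\hat G_1=\emptyset$) and take expectations also over the quadtree/spanner randomness, giving
\[
\BE\Bigl[\sum_{t}\mathrm{wt}(\hat G_t)\Bigr]
\;\le\; \BE\Bigl[\sum_{t} t\bigl(|\tilde\cP_{t/2}|-|\tilde\cP_t|\bigr)\Bigr]
\;+\; O(\log n)\cdot\Bigl(\tfrac34\Bigr)^{h}\Bigl(\alpha^3+\tfrac{2\alpha^4\sqrt d}{\beta}\Bigr)\cdot\BE[\MST(\tilde G)].
\]
For the main term, since all edge weights in $\tilde G$ are powers of two, running Kruskal's algorithm on $\tilde G$ adds exactly $|\tilde\cP_{t/2}|-|\tilde\cP_t|$ edges of weight exactly $t$; hence $\sum_t t\,(|\tilde\cP_{t/2}|-|\tilde\cP_t|)=\sum_{e\in\mathrm{MST}(\tilde G)}w(e)=\MST(\tilde G)$ for every realization of $\tilde G$, using that $\tilde\cP_{\alpha^H}=\{X\}$ so every MST edge of $\tilde G$ has weight $\le\alpha^H$ (one could equivalently invoke \Cref{lem:czumajsohler} after an Abel‑summation rearrangement, at the cost of an extra constant). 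Therefore $\BE\bigl[\sum_t t(|\tilde\cP_{t/2}|-|\tilde\cP_t|)\bigr]=\BE[\MST(\tilde G)]$.

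For the error term I would use the standing assumptions $d=\Theta(\log n)$ and $\beta,\ \alpha/\beta\ge(\log n)^{C}$, which give $\alpha^3+2\alpha^4\sqrt d/\beta\le(\log n)^{O(1)}$, so the error term is at most $(3/4)^h\cdot(\log n)^{O(1)}\cdot\BE[\MST(\tilde G)]$; choosing $h$ a sufficiently large $\Theta(\log\log n)$ — concretely $h\ge c\log\log n$ with $c$ large enough that $(3/4)^{h}(\log n)^{O(1)}\le 1$ — makes it at most $\BE[\MST(\tilde G)]$. Combining, $\BE\bigl[\sum_t\mathrm{wt}(\hat G_t)\bigr]\le 2\,\BE[\MST(\tilde G)]$, and then \Cref{cor:MST_spanner_not_too_big} yields $\BE\bigl[\mathrm{wt}\bigl(\bigcup_{t'\le\alpha^H}\hat G_{t'}\bigr)\bigr]\le O(1)\cdot\MST(X)$. (If one prefers to measure the weight of $\bigcup\hat G_{t'}$ in $\tilde G$ rather than in the Euclidean metric, \Cref{prop:MST_spanner_not_too_small} bridges the two up to a further $O(1/\eps)=O(1)$ factor.) Since $\bigcup_{t'\le\alpha^H}\hat G_{t'}$ is a spanning tree of $X$, it is an $O(1)$‑approximate $\MST$ of $X$ in expectation.

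The only point needing genuine care is keeping the two layers of randomness straight: \Cref{lem:total_weight_of_one_level} bounds the expectation over the leader‑compression coins, but its right‑hand side still depends on the quadtree/spanner randomness through $|\tilde\cP_t|$ and $\MST(\tilde G)$. Hence one should sum over $t$ first and only then take the outer expectation, invoking the \emph{deterministic} identity $\sum_t t(|\tilde\cP_{t/2}|-|\tilde\cP_t|)=\MST(\tilde G)$ together with $\BE[\MST(\tilde G)]=O(\MST(X))$ from \Cref{cor:MST_spanner_not_too_big}; everything else is routine bookkeeping of the $\polylog(n)$ factors.
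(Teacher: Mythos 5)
Your proposal is correct and follows essentially the same route as the paper: sum \Cref{lem:total_weight_of_one_level} over the $O(\log n)$ dyadic levels, telescope the main term to $\MST(\tilde G)$, absorb the $(3/4)^h\cdot\polylog(n)\cdot\MST(\tilde G)$ error by taking $h=\Theta(\log\log n)$ large enough, and convert back to Euclidean cost via \Cref{cor:MST_spanner_not_too_big} and \Cref{prop:MST_spanner_not_too_small}. Your explicit Kruskal justification of the identity $\sum_t t(|\tilde\cP_{t/2}|-|\tilde\cP_t|)=\MST(\tilde G)$ and your remark about ordering the two layers of randomness are details the paper leaves implicit, but they do not change the argument.
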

\begin{proof}
According to Lemma~\ref{lem:total_weight_of_one_level}, the expected total weights of $\bigcup_{t' \le \alpha^H} \hat{G}_{t'}$ is at most
\begin{align*}
\MST(\tilde{G}) + (3/4)^h \cdot \log n \cdot \left(\alpha^3 + \frac{2 \alpha^4 \sqrt{d}}{\beta}\right) \cdot \MST(\tilde{G}) = O(1)\cdot \MST(\tilde{G}).
\end{align*}
According to Corollary~\ref{cor:MST_spanner_not_too_big} and Proposition~\ref{prop:MST_spanner_not_too_small}, $\bigcup_{t' \le \alpha^H} \hat{G}_{t'}$ is a $O(1)$-approximate MST for $X$ in expectation.
\end{proof}

\section{Euclidean MST in the MPC Model}\label{sec:mpc_mst}
In this section, we describe how to implement our algorithms in the Massively Parallel Computation (MPC) model.
In Section~\ref{sec:existing_mpc_algorithms}, we briefly introduce some useful exsiting algorithmic primitives in the MPC model.
In Section~\ref{sec:MPC_MST_implementation}, we show the detailed implementation of our Euclidean MST algorithm in the MPC model.

\subsection{Existing Algorithmic Primitives in the MPC model}\label{sec:existing_mpc_algorithms}

One of the most basic MPC algorithmic primitive is sorting.
\begin{theorem}[\cite{goodrich1999communication,goodrich2011sorting}]\label{thm:sorting}
Given size $N$ input data where each data item has size at most $N^{o(1)}$, there is a fully scalable MPC algorithm to sort and index (rank) these data items in $O(1)$ rounds and using $O(N)$ total space.
\end{theorem}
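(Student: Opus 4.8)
The plan is to prove this by the standard randomized distribution (sample) sort, recursed $O(1/\eps)$ times: since $s = N^\eps$ we have $\log_s N = 1/\eps$, so a partitioning scheme that splits the current instance into $s^{\Theta(1)}$ balanced pieces reaches instances of size $\le s$ after $O(1/\eps)$ levels, each level using only elementary tree-based aggregation/broadcast sub-primitives that cost $O(1/\eps)$ rounds; the total is $O(1/\eps^2) = O(1)$ rounds since $\eps$ is a constant (with more care the dependence can be reduced to $O(1/\eps)$ rounds, matching \cite{goodrich1999communication,goodrich2011sorting}). Write $p = \Theta(N/s)$ for the number of machines, and note that since each item occupies $N^{o(1)}$ words, a machine holds $m := s / N^{o(1)} = N^{\eps - o(1)}$ items, and $s^{1/2}$ items of size $N^{o(1)}$ fit comfortably in the space $s$ of one machine.

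First I would do splitter selection. Each item independently joins a sample $\cS$ with probability $q = \Theta(s^{1/2}\log N / N)$, so $\E[|\cS|] = \Theta(s^{1/2}\log N)$ and, by a Chernoff bound, $|\cS| \le s^{1/2}\log N \cdot N^{o(1)} \ll s$ with high probability, hence $\cS$ fits on a single machine. We gather $\cS$ there along a converge-cast tree over the machines (depth $O(\log_s p) = O(1/\eps)$), sort it locally, and output every $\Theta(\log N)$-th sampled item, obtaining $k := s^{1/2}-1$ ordered splitters $x_1 < \dots < x_k$ (ties broken by item identifier). The $\Theta(\log N)$ oversampling factor gives, by the textbook argument (Chernoff plus a union bound over the $k$ buckets), that every bucket $B_i := (x_{i-1}, x_i]$ contains at most $c\,N/s^{1/2}$ items with high probability. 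Next comes redistribution: broadcast the $k < s^{1/2}$ splitters to all $p$ machines along a broadcast tree of depth $O(\log_{s/k} p) = O(1/\eps)$; each machine binary-searches each of its items against the splitters to learn the item's bucket index; aggregating a length-$k$ histogram up the same tree and taking prefix sums lets every machine learn the global size of each bucket, hence which contiguous block of $\Theta(|B_i|/m)$ machines owns bucket $i$. Each machine then ships its items to the designated destinations; since every source holds $\le m$ items and every destination receives $\le m$ items, this completes in $O(1)$ rounds. We recurse on all buckets in parallel; after $O(\log_{s^{1/2}} N) = O(1/\eps)$ levels every subinstance has size $\le s$ and is sorted locally. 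Finally, for ranking, each item outputs (the number of items on strictly earlier machines in the final global order) $+$ (its local position), where the first term is a prefix sum over the $p$ final per-machine counts, computed on the aggregation tree in $O(1/\eps)$ rounds. Each level preserves total space up to a constant factor and there are $O(1)$ levels, so the total space stays $O(N)$.

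The main obstacle is the simultaneous control of all recursive subproblems: one must show that the ``all buckets balanced'' event holds with high probability for \emph{every} one of the $\poly(N)$ subinstances arising across all levels (a union bound, using that the failure probability of each splitter step is $N^{-\Omega(1)}$ once the oversampling constant is chosen large enough), so that no branch of the recursion ever overflows local space $s$ or requires more than $O(1/\eps)$ levels. A secondary point needing care is that the histogram aggregation, splitter broadcast, and bucket redistribution must be carried out using only the elementary tree primitives above — in particular the redistribution step must be argued directly from the known bucket sizes and not by invoking a general sort (to avoid circularity) — and each such primitive must be checked to respect the $s = N^\eps$ per-machine and $O(N)$ total-space budgets. (A deterministic variant avoiding the union bound follows by replacing random sampling with deterministic splitter selection as in \cite{goodrich2011sorting}, at the cost of more bookkeeping.)
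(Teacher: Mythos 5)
The paper does not prove this statement; it is imported as a black-box primitive from \cite{goodrich1999communication,goodrich2011sorting}. Your proposal is a correct reconstruction of the standard argument behind those results (randomized sample sort with $s^{1/2}$-way splitting, $O(1/\eps)$ recursion depth, tree-based broadcast/aggregation for splitters, histograms, and prefix sums, and a union bound over all subinstances), and the points you flag as needing care---the simultaneous whp balance of all buckets across the recursion and the non-circular redistribution step---are exactly the right ones; nothing further is needed for the way the theorem is used in this paper.
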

Sorting can be used to build other MPC subroutines. 
One important use case of sorting is to simulate a PRAM algorithm in the MPC model.
The PRAM model is a classic model of parallel computation. 
Roughly speaking, there is a shared memory and multiple processors in the PRAM model.
The computation proceeds in rounds, and each processor does at most one algorithmic operation in one round.
In addition, processors can simutaneously read and write the shared memory cells in a round.
The efficiency of a PRAM algorithm is measured by \emph{depth} (longest chain of dependencies of the computation) and \emph{work} (total running time over processors).
A CRCW (Concurrent Read Concurrent Write) PRAM algorithm means that processors can simutaneously read and write the same shared memory cell in a round.
It was shown that any PRAM algorithm can be simulated in the MPC model (see e.g.~\cite{goodrich2011sorting,andoni2018parallel}).
We refer readers to Appendix E of~\cite{andoni2018parallel} for detailed implementations.
We formally state the gaurantees of the simulation as the following.
\begin{theorem}[\cite{goodrich2011sorting,andoni2018parallel}]\label{thm:pram_simulation}
Given a CRCW PRAM algorithm with $O(D)$ depth and $O(W)$ work, it can be simulated by a fully scalable MPC algorithm with $O(D)$ rounds and $O(W)$ space.
\end{theorem}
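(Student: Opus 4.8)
The plan is to simulate the PRAM computation round by round, using the sorting primitive of \Cref{thm:sorting} as the workhorse for all inter-processor communication and all shared-memory access. Maintain the following invariant between simulated rounds: the (at most $O(W)$) words of shared memory that the PRAM ever touches are stored across the machines sorted by address, and the state of each currently active processor --- its program counter, local registers, and id --- is stored sorted by processor id. Since a computation of work $O(W)$ performs $O(W)$ operations total, it touches $O(W)$ distinct cells and has $O(W)$ active processors in any single round, so every array we sort has size $O(W)$; this fits the fully scalable regime precisely because \Cref{thm:sorting} is fully scalable, giving total space $O(W)$.

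To simulate one PRAM round, first have each active processor emit its $O(1)$ shared-memory accesses for this round as request records tagged with the target address and a payload $(\textsf{read}/\textsf{write}, \text{value}, \text{processor id})$. Sort all request records by target address (\Cref{thm:sorting}), so that all accesses to a common cell occupy one contiguous block. Within each block we resolve concurrency according to the assumed CRCW convention (arbitrary / priority / common write): this is a segmented reduction along the sorted array --- take the minimum priority, or check that all written values agree --- which can be carried out with $O(1)$ further sorting / segmented-scan steps, since a segmented scan over a sorted array is itself a sorting-based primitive (see Appendix~E of~\cite{andoni2018parallel}). We then merge the surviving writes back into the memory array (one more sort by address); for the reads, we broadcast, within each address block, the \emph{old} cell value to every read request in that block (a segmented broadcast, and ordering reads before writes within the block ensures the correct ``old value'' semantics). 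Finally, sort the answered read requests back by processor id to deliver the results, after which each processor performs its constant-time local step, updates its state, and possibly retires or spawns children --- newly spawned processors receive fresh contiguous ids via a ranking step (\Cref{thm:sorting}).

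Each PRAM round is therefore simulated by a constant number of sorting and segmented-scan invocations, each costing $O(1)$ MPC rounds and $O(W)$ space, so $O(D)$ PRAM rounds become $O(D)$ MPC rounds with $O(W)$ total space, as claimed. The part requiring the most care is the concurrent-access handling: implementing the correct resolution rule for whichever CRCW variant is assumed, ensuring concurrent reads all observe the pre-round value (by ordering reads before writes inside each address block), and keeping processor ids a contiguous range of size $O(\#\text{active processors})$ under dynamic allocation --- each of which reduces to $O(1)$ additional sorting / ranking operations. We refer to Appendix~E of~\cite{andoni2018parallel} for the full bookkeeping.
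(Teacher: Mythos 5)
The paper does not prove this theorem itself; it is imported as a black box from \cite{goodrich2011sorting,andoni2018parallel}, with the reader pointed to Appendix~E of \cite{andoni2018parallel} for the implementation. Your sketch is a faithful reconstruction of exactly that standard argument: simulate each PRAM step by a constant number of sorting and segmented-scan passes, routing memory requests by address, resolving CRCW conflicts within each address block, and delivering read results back by processor id, with all arrays of size $O(W)$ so that full scalability is inherited from the sorting primitive. This matches the approach of the cited works, and I see no gap in it.
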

One application of Theorem~\ref{thm:pram_simulation} is the following simultaneous access problem: several arrays with total size $O(m)$ are stored distributedly on the machines, and $p$ machines want to simultaneously access some entries of these arrays where each machine has $s$ non-adaptive queries.
This operation can be done in $O(1)$ rounds and total space $O(m+p\cdot s)$, and it is fully scalable.
One application of simultaneous access is duplicating and broadcasting a small size message.
\begin{theorem}[See e.g., \cite{andoni2018parallel}]\label{thm:broadcasting}
Given a data item with size $N^{o(1)}$ where $N$ is the target output size, there is a fully scalable MPC algorithm\footnote{Here the ``fully scalable'' means that the algorithm works for local memory size $N^{\delta}$ for any constant $\delta > 0$.} uses $O(1)$ rounds and $O(N)$ space that duplicates the data item such that the total size over all duplications is $N$.
\end{theorem}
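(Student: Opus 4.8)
The plan is to realize the duplication with a fixed fan-out ``broadcast tree'' and argue that it has constant depth. Write $\ell$ for the size of the data item, so by hypothesis $\ell \le N^{o(1)}$, and the goal is to produce $m := \lceil N/\ell\rceil$ copies (so that the total output size is $\Theta(N)$). Let $\delta>0$ be the local-memory exponent, so each machine has $s = \Theta(N^{\delta})$ words; since $\ell \le N^{o(1)} \le N^{\delta/4}$ for $N$ large, a single machine can store and transmit $f := \lfloor N^{\delta/2}\rfloor$ copies of the item in one round (indeed $f\cdot \ell \le N^{3\delta/4}\le s$), and $f$ is a quantity every machine can compute from $N$ alone. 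We may assume WLOG that the item initially sits on a single designated machine $M_1$, since it fits in one machine's memory.

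First I would fix an enumeration $M_1,M_2,\dots$ of the machines and maintain the invariant that after round $r$ the item sits, exactly one copy per machine, on $M_1,\dots,M_{c_r}$ with $c_r = \min(f^{r},m)$. In round $r+1$, machine $M_j$ with $j\le c_r$ uses only its own index $j$, the round number, and the global constant $f$ to compute the contiguous recipient block $\{(j-1)f+1,\dots,\min(jf,\,m)\}$, and sends one copy to each machine in it (the block is empty, in which case $M_j$ does nothing, once $(j-1)f+1>m$). These blocks partition $\{1,\dots,c_{r+1}\}$, every sender ships at most $f\ell\le s$ words, and every recipient receives exactly one copy, so each step is a legal MPC round and the routing is entirely non-adaptive. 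After $r^\star := \lceil 2/\delta\rceil$ rounds we have $f^{r^\star}\ge N\ge m$, so the process terminates; since $\delta$ is a fixed constant, $r^\star = O(1)$.

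For the resource bounds, observe that at every round the set of machines holding data is a subset of $\{M_1,\dots,M_m\}$, each storing exactly one copy of size $\ell$, so the total space in use is $O(m\ell)=O(N)$ at \emph{every} intermediate step, not merely at the end. Hence the algorithm runs in $O(1)$ rounds with $O(N)$ total space and is fully scalable. (An alternative, slicker route is to note that on a CRCW PRAM the task is trivial --- assign one processor to each of the $N$ output words, have it concurrently read the corresponding word of the source item and write it to its destination, for $O(1)$ depth and $O(N)$ work --- and then invoke the PRAM-to-MPC simulation of Theorem~\ref{thm:pram_simulation}.)

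The only points that need care are the non-adaptivity and the memory accounting: one must verify that with fan-out $f=\Theta(N^{\delta/2})$ no machine ever exceeds its $\Theta(N^{\delta})$ incoming or outgoing message budget (this is where $\ell\le N^{o(1)}$ is used), that ``inactive'' machines are never written to so the $O(N)$ total-space bound holds throughout, and that the round count $\lceil 2/\delta\rceil$ genuinely depends only on the fixed constant $\delta$. These are exactly the conditions required for the algorithm to be fully scalable, and I expect checking them to be the most delicate --- though still routine --- part of the argument.
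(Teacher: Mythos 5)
Your proposal is correct, but it proves the statement by a different (more self-contained) route than the paper intends. The paper states this broadcasting primitive as a cited black box from~\cite{andoni2018parallel} and, in the surrounding text, derives it as ``one application of simultaneous access,'' i.e., by setting up a CRCW PRAM task with one processor per output word and invoking the PRAM-to-MPC simulation of Theorem~\ref{thm:pram_simulation} --- exactly the alternative you sketch in your parenthetical. Your primary argument is instead a direct fan-out-$f$ broadcast tree with $f=\Theta(N^{\delta/2})$, which reaches all $m=\lceil N/\ell\rceil$ target machines in $\lceil 2/\delta\rceil=O(1)$ rounds while respecting the per-machine budget because $f\ell\le N^{3\delta/4}\le s$. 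This buys a proof that does not rely on the sorting-based simulation machinery and makes the dependence of the round count on $\delta$ explicit; the paper's route buys uniformity, since the same simulation theorem is reused for many other primitives. The only blemish in your write-up is that the recipient blocks $\{(j-1)f+1,\dots,\min(jf,m)\}$ for $j\le c_r$ cover $\{1,\dots,c_{r+1}\}$ and hence include the machines $M_1,\dots,M_{c_r}$ that already hold a copy, so either the senders should skip indices $\le c_r$ or those machines should discard the redundant incoming copy to preserve the one-copy-per-machine invariant; this is routine bookkeeping and does not affect the $O(1)$ round count or the $O(N)$ total-space bound.
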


Another important application of sorting is to duplicate the input data small number of times and rearange them. 
This allows us to run multiple tasks which requires the same input data in parallel.
\begin{theorem}[See e.g., \cite{andoni2018parallel}]\label{thm:duplication}
Given size $N$ input data and a parameter $m=N^{o(1)}$, there is a fully scalable algorithm which duplicates the input data $m$ times.
The algorithm takes $O(1)$ rounds and $O(N\cdot m)$ total space.
\end{theorem}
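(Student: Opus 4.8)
The plan is to reduce the duplication task to two applications of sorting (Theorem~\ref{thm:sorting}) together with the ``simultaneous access'' primitive obtained from PRAM simulation (Theorem~\ref{thm:pram_simulation}). First I would use Theorem~\ref{thm:sorting} to sort and rank the $N$ input items, so that after $O(1)$ rounds and $O(N)$ total space every item carries a distinct index $i \in [N]$. The goal is then to produce, for every pair $(i,j)$ with $i \in [N]$ and $j \in [m]$, a copy of item $i$ tagged with the copy number $j$, and to lay these $Nm$ tagged items out so that all items tagged with a fixed $j$ occupy a contiguous block (the $j$-th copy of the input); this contiguous layout is exactly what is needed to run $m$ independent tasks on the same data in parallel.

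Next I would set up a virtual output array of length $Nm$ in which position $(j-1)N + i$ is meant to hold the contents of input item $i$. Filling this array is a batch of $Nm$ \emph{non-adaptive} read requests into the length-$N$ input array (the request at output position $(j-1)N+i$ asks for input item $i$). As noted in the discussion preceding Theorem~\ref{thm:duplication}, such a simultaneous-access operation with $O(Nm)$ queries against an array of size $O(N)$ can be carried out in $O(1)$ MPC rounds and $O(N + Nm) = O(Nm)$ total space, and it is fully scalable; formally this follows from Theorem~\ref{thm:pram_simulation} by simulating the obvious CRCW PRAM algorithm in which $Nm$ processors concurrently read the $N$ input cells, which has $O(1)$ depth and $O(Nm)$ work. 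Since $(j-1)N+i$ is monotone in the pair $(j,i)$ ordered lexicographically, the resulting array already has the desired ``$m$ contiguous copies'' layout; if some other layout is preferred, one additional invocation of Theorem~\ref{thm:sorting} on the $Nm$ produced items reorders them in $O(1)$ rounds and $O(Nm)$ space. Composing these $O(1)$-round phases yields the claimed bounds.

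The only point that deserves care, rather than a genuine obstacle, is seeing why $O(1)$ rounds can suffice at all: a single machine initially holding $s$ input items is responsible for $sm$ output words, which can far exceed its per-round communication budget $s$. The resolution is that the $m$ copies of each item are not emitted by one machine but are spread out over the $\Theta(Nm/s)$ machines of the output via the doubling / broadcast-tree structure internal to the simultaneous-access primitive; the number of doubling phases is $O(\log_s m)$, which is $O(1)$ because $m = N^{o(1)}$ while $s \ge N^{\delta}$ for a fixed constant $\delta>0$, so no machine ever sends or receives more than $O(s)$ words in a round. Packaging everything through Theorem~\ref{thm:pram_simulation} hides this bookkeeping, since the underlying PRAM algorithm plainly has constant depth and $O(Nm)$ work.
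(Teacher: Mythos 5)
Your proposal is correct, and it follows essentially the route the paper itself indicates: the paper states this as a known primitive (citing \cite{andoni2018parallel}) and derives it from sorting (Theorem~\ref{thm:sorting}) together with the simultaneous-access application of PRAM simulation (Theorem~\ref{thm:pram_simulation}), exactly as you do, with the $O(\log_s(Nm))=O(1)$ broadcast-tree observation correctly handling the per-machine communication budget.
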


The following theorem gives a subroutine of applying dimensionality reduction efficiently in the MPC model.
Note that each input data point is not necessarily small enough to fit into the memory of a single machine and the entries of each data point can distributed arbitrarily over machines.
\begin{theorem}[Theorem 3 of \cite{ahanchi2023massively}]\label{thm:dim_reduce}
Given a set of points $x_1,x_2,\cdots,x_n\in\mathbb{R}^d$ whose entries are distributed arbitrarily over the machines in the MPC model, there is a fully scalable MPC algorithm which outputs $x_1',x_2',\cdots,x_n'\in\mathbb{R}^{d'}$ where $d'= O(\log n)$ and with probability at least $1-1/\poly(n)$, $\forall i,j\in[n],0.99 \|x_i-x_j\|_2^2\leq \|x'_i-x'_j\|_2^2\leq 1.01 \|x_i-x_j\|_2^2$.
In addition, $\forall i\in[n],x'_i$ is held entirely by a single machine.
The algorithm takes $O(1)$ rounds, and the total space needed is $O(nd + n\log^3 n)$. 
\end{theorem}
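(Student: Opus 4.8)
The plan is to pick a Johnson--Lindenstrauss map $\Pi \in \R^{d' \times d}$ with $d' = \Theta(\log n)$ rows and to show that its action on all $n$ points can be computed in $O(1)$ MPC rounds within the claimed space. For correctness, take $\Pi = \frac{1}{\sqrt{d'}} R$, where the entries of $R$ are $\pm 1$ values drawn from an $O(\log n)$-wise independent family; by the standard (bounded-independence) derandomization of the JL lemma, for any fixed $v$ this gives $\|\Pi v\|_2^2 \in (1\pm 0.005)\|v\|_2^2$ with probability $1 - 1/n^{C}$ for a large constant $C$, so a union bound over the $\binom{n}{2}$ difference vectors $x_i - x_j$ (together with $x_i' - x_j' = \Pi(x_i - x_j)$) yields the stated guarantee with probability $1 - 1/\poly(n)$. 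Such an $R$ is described by a seed of $\polylog(n)$ words, which I would sample on one machine and broadcast to all machines via \Cref{thm:broadcasting}; any entry or column of $\Pi$ can then be regenerated on demand. If the ambient dimension $d$ is super-polynomial, I would first apply a sparsity-one feature-hashing sketch into dimension $n^{O(1)}$: this preserves all pairwise squared norms to a $(1\pm 0.001)$ factor with probability $1-1/\poly(n)$, causes no space blowup (each coordinate maps to a single sketch coordinate), and makes every intermediate dimension $\poly(n)$, so I may assume $d \le \poly(n)$ from now on.

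The main obstacle is computing the $n$ products $\Pi x_i$ within total space $O(nd + n\log^3 n)$ when the coordinates of each point are scattered adversarially across machines and a point need not fit on a single machine. The naive approach---materialize every product $\Pi_{j\ell}(x_i)_\ell$ and then group-sum by $(i,j)$---uses $\Theta(nd\cdot d') = \Theta(nd\log n)$ intermediate space, which is over budget. Instead I would proceed in two consolidation steps: (i) attach $(\mathrm{point},\mathrm{coord})$ labels to the $\le nd$ input entries and sort them lexicographically using \Cref{thm:sorting}, after which each point occupies a contiguous block of $O(\lceil d/s\rceil + 1)$ machines, where $s$ is the per-machine space; and (ii) have each machine compute, for each point whose block it partially holds, the partial sketch $\sum_{\ell\text{ on this machine}} \Pi_{\cdot\,\ell}\,(x_i)_\ell \in \R^{d'}$, generating the needed columns of $\Pi$ from the broadcast seed one at a time so that its working space stays $O(s)$. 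The total number of partial sketches is $O\big(n(\lceil d/s\rceil + 1)\cdot d'\big) = O(nd\log n / s + n\log n) = O(nd + n\log n)$ once $s = n^{\eps} \ge \log n$.

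It remains, for each point, to sum its $O(\lceil d/s\rceil + 1)$ partial sketches into $x_i' \in \R^{d'}$ and route $x_i'$ to a single machine, all in $O(1)$ rounds. Summation is done by a tree reduction of branching factor $b = \Theta(s/\log n)$: in each round a combining machine gathers $b$ sketches (a total of $b \cdot d' = O(s)$ words), adds them, and forwards one sketch of $O(\log n)$ words; since $d \le \poly(n)$, the tree has $\log_b(d/s) = O(1/\eps) = O(1)$ levels. Because $d' = O(\log n) \le s$, each $x_i'$ fits on one machine, and one further sort (or the simultaneous-access primitive behind \Cref{thm:pram_simulation}) routes it to its designated machine. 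Tallying the space---$O(nd + n\log n)$ for the sorts and partial sketches, $O(p\cdot\polylog(n)) = O(nd)$ for broadcasting the seed (with $p = O(nd/s)$), and $O(n\log n)$ for the output---gives $O(nd + n\log n) \subseteq O(nd + n\log^3 n)$ total, with $O(1)$ rounds throughout. The one genuinely delicate ingredient is the bounded-independence JL analysis achieving $1/\poly(n)$ failure with only $d' = O(\log n)$ rows and a $\polylog(n)$-size seed; everything else is routine bookkeeping with the primitives of \Cref{sec:existing_mpc_algorithms}.
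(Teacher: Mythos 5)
This statement is not proved in the paper at all: it is imported verbatim as Theorem~3 of the cited work of Ahanchi et al., and the paper uses it as a black box. So there is no in-paper argument to compare against; what you have written is a from-scratch reconstruction. As such, your construction is essentially sound and matches the standard way such a result is obtained: a bounded-independence JL matrix with $d'=\Theta(\log n)$ rows and an $O(\log n)$-wise independent $\pm 1$ seed of $\polylog(n)$ words (broadcast via the primitive of \Cref{thm:broadcasting}), a lexicographic sort to make each point contiguous, per-machine partial sketches, and a constant-depth tree reduction with branching factor $\Theta(s/\log n)$. The round and space accounting is correct and lands inside the stated $O(nd+n\log^3 n)$ budget.

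Two small points. First, your per-machine space bound for the partial-sketch step silently assumes $d\ge d'$: after sorting, a machine holding $s$ entries touches $O(s/d+1)$ points and emits $O((s/d+1)\cdot d')$ words, which is $O(s)$ only when $d=\Omega(\log n)$; if $d=O(\log n)$ a machine could be asked to emit $\Theta(s\log n/d)$ words, exceeding its memory. This is trivially repaired by outputting the identity embedding (padded to $d'$ coordinates) whenever $d=O(\log n)$, but it should be said. Second, the feature-hashing preprocessing for super-polynomial $d$ is not actually needed for any of your bounds --- the seed length is $O(\log n\cdot\log(dd'))$ bits, i.e.\ $O(\log n)$ words of the model, and the tree depth is $\log_{b}(d/s)=O(1/\eps)$ already since $d\le N$ --- though it is harmless. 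With the $d=O(\log n)$ corner case patched, your argument is a valid proof of the cited theorem.
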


The following theorem states that there is an efficient MPC algorithm which computes a constant approximate Euclidean spanner.
Note that a graph can be represented by a set of edges, and we store the edges in an arbitrarily distributed way over machines in the MPC model.

\begin{theorem}[See e.g.,~\cite{epasto2022massively}]\label{thm:twohopspanner}
Given parameters $C>1, t>0$ and a set of points $X=\{x_1,x_2,\cdots,x_n\}\subset \mathbb{R}^d$ distributed arbitrarily on several machines in the MPC model with $d=O(\log n)$, there is a fully scalable MPC algorithm which outputs a graph $G=(X, E)$ such that with probability at least $1-1/\poly(n)$:
\begin{enumerate}
\item For any $x, y \in X$, if there is an edge in $G$ between $x$ and $y$, $\|x-y\|_2\leq C\cdot t$.
\item For any $x, y \in X$ with $\|x-y\|_2\leq t$, there is a path in $G$ connecting $x, y$ using at most $2$ edges.
\end{enumerate}
In addition, $G$ has at most $n^{1+1/C^2+o(1)}$ number of edges.
The algorithm uses $O(1)$ rounds and needs total space $n^{1+1/C^2+o(1)}$.
\end{theorem}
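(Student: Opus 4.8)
The plan is to reconstruct the classical locality-sensitive-hashing (LSH) spanner of \cite{Har-PeledIS13} in the offline setting and then implement each of its steps with the sorting, broadcasting, and duplication primitives from \Cref{thm:sorting}, \Cref{thm:broadcasting}, and \Cref{thm:duplication} (the MPC implementations of \cite{epasto2022massively,cohen2022massively} follow this outline). First I would fix the scale $t$ and recall that, for every constant $C>1$, there is a distribution over partitions $\mathcal{P}$ of $\mathbb{R}^d$ such that (i) every cell of $\mathcal{P}$ has Euclidean diameter at most $Ct$ (possibly after rescaling $C$ by a constant), and (ii) any two points at distance at most $t$ fall into the same cell with probability at least $n^{-1/C^2 - o(1)}$; these are the locality-sensitive partitions underlying \cite{Har-PeledIS13}. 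I would then draw $L = O(n^{1/C^2+o(1)}\log n) = n^{1/C^2+o(1)}$ independent partitions $\mathcal{P}_1,\dots,\mathcal{P}_L$, and in each $\mathcal{P}_\ell$, within each nonempty cell $c$, pick an arbitrary representative point $r_{\ell,c}\in X\cap c$ and add to $E$ the star edges joining every other point of $X\cap c$ to $r_{\ell,c}$.

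Next I would check the three guarantees. Property 1 is immediate from (i): every edge joins two points lying in a common cell, hence of distance at most $Ct$. For Property 2, if two points $x,y$ satisfy $\|x-y\|_2\le t$ then by (ii) and the choice of $L$, with probability $1-1/\poly(n)$ some $\mathcal{P}_\ell$ puts $x$ and $y$ in a common cell $c$, and then both $(x,r_{\ell,c})$ and $(r_{\ell,c},y)$ are edges (or one of $x,y$ is itself $r_{\ell,c}$), a path of length at most $2$; a union bound over the $\binom{n}{2}$ pairs makes this hold simultaneously. For the edge count, each $\mathcal{P}_\ell$ contributes a forest of stars, hence fewer than $n$ edges, so $|E| < Ln = n^{1+1/C^2+o(1)}$.

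For the MPC implementation I would observe that each $\mathcal{P}_\ell$ is specified by a random seed of $n^{o(1)}$ words, so all $L$ seeds are generated on one machine and broadcast via \Cref{thm:broadcasting}. I would replicate $X$ a total of $L=n^{o(1)}$ times using \Cref{thm:duplication}, tagging one copy with each index $\ell$, so that every machine can compute locally, for each stored pair $(\ell,x)$, the identifier of the cell of $x$ in $\mathcal{P}_\ell$ (this depends only on $x$ and the seed). Sorting all triples $(\ell,\mathrm{cell}(\ell,x),x)$ by the first two coordinates via \Cref{thm:sorting} gathers each cell of each partition into a contiguous block; a segmented broadcast (easily built on top of sorting and \Cref{thm:broadcasting}) then designates the first point of each block as its representative, informs the rest of the block, and emits the star edges, all in $O(1)$ further rounds. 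The total space is $O(nd\cdot L) + O(|E|) = n^{1+1/C^2+o(1)}$ since $d = O(\log n) = n^{o(1)}$, the round count is $O(1)$, and every step is a sort, a broadcast, or local computation, so the algorithm is fully scalable.

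The main obstacle, I expect, is not the MPC bookkeeping but the existence of space partitions that simultaneously enjoy a \emph{deterministic} diameter bound of $Ct$ and a near-collision probability as large as $n^{-1/C^2-o(1)}$: these requirements pull against one another in high dimension, and matching them is the technical heart of \cite{Har-PeledIS13} (building on high-dimensional Euclidean LSH), which I would invoke as a black box rather than re-derive. A minor secondary point to handle with care is that a single cell may be very large; this is harmless, since its star still has only a linear-in-its-size number of edges, and these sum to at most $n$ over the cells of a fixed $\mathcal{P}_\ell$, so neither the edge bound nor the per-machine space bound is endangered.
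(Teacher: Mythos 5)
The paper does not prove this theorem; it imports it from \cite{epasto2022massively} (building on \cite{Har-PeledIS13}), and your reconstruction — LSH-style random space partitions with cell diameter $O(Ct)$ and collision probability $n^{-1/C^2-o(1)}$ for close pairs, amplified over $n^{1/C^2+o(1)}$ independent repetitions, with a star per nonempty cell, implemented via seed broadcast, duplication, and sorting — is exactly the construction underlying that citation, and your accounting of the edge bound, failure probability, rounds, and total space is correct. The only bookkeeping nit is that the $L$ seeds together occupy $n^{1/C^2+o(1)}$ words and so cannot all sit on one machine when $1/C^2$ exceeds the per-machine exponent $\eps$; one instead generates each seed within the machine group handling that repetition, which changes nothing else.
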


Another algorithmic primitive in the MPC model is to find predecessors.
In particular, given pairs $(x_1,y_1),(x_2,y_2),\cdots,(x_n,y_n)$ where $x_i$ have a total ordering and $y_i\in \{0,1\}$. 
The total size of all pairs is $N$ and each $x_i$ has size $N^{o(1)}$.
The goal is to output $(x_1,i_1),(x_2,i_2),\cdots,(x_n,i_n)$ where $\forall j\in [n], i_j$ satisfies that (1) $y_{i_j}=1$, (2) $x_{i_j}$ is the largest one such that $x_{i_j}\leq x_j$.

\begin{theorem}[\cite{goodrich2011sorting,andoni2018parallel}]\label{thm:predecessor}
There is a fully scalable MPC algorithm that solves the predecessor problem in $O(1)$ rounds and $O(N)$ total space.
\end{theorem}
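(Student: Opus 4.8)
The plan is to reduce the predecessor problem to two fully scalable MPC primitives that are already available: sorting (Theorem~\ref{thm:sorting}) and a constant-round prefix scan. First I would invoke Theorem~\ref{thm:sorting} to sort the $n$ pairs by the key $x_i$ (breaking ties by original index so that the keys form a genuine total order) and to index the sorted output, so that every machine knows, for the element it currently holds, both its rank $p$ in sorted order and its original location. This costs $O(1)$ rounds and $O(N)$ total space. After this step it suffices to solve the following reduced problem on the sorted array: for each position $p$, find the largest position $p' \le p$ whose $y$-bit equals $1$, and report the original index of the element at $p'$; a final routing step will then send each answer back to the corresponding original location.

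Next I would observe that the reduced problem is exactly a prefix scan over an associative monoid, and compute it in $O(1)$ rounds. For each sorted position $p$ set $w_p$ to be the original index of the element at position $p$ if its $y$-bit is $1$, and a sentinel $\bot$ otherwise; equip $\{\bot\}\cup[n]$ with the operation $a \circ b = b$ if $b \neq \bot$ and $a \circ b = a$ otherwise. This operation is associative with identity $\bot$, and the desired value at position $p$ is precisely the running combination $w_1 \circ w_2 \circ \cdots \circ w_p$. To evaluate all of these at once, I would use the standard high-fan-out aggregation tree: partition the $N$ sorted entries into $N/s$ contiguous blocks of $s = N^{\varepsilon}$ entries (one per machine), have each machine compute the local prefix combinations and the block aggregate of its block, then recurse on the $N/s$ block aggregates. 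The recursion has depth $\log_s N = 1/\varepsilon = O(1)$, and since the number of aggregates shrinks by a factor $s$ per level the total space telescopes to $O(N\cdot(1 + 1/s + 1/s^2 + \cdots)) = O(N)$. Pushing the block-prefixes back down the tree in another $O(1/\varepsilon)$ rounds — each machine adds the combined aggregate of all preceding blocks to its local prefixes — yields $w_1\circ\cdots\circ w_p$ at every position $p$. A last application of Theorem~\ref{thm:sorting}, sorting by original index, routes each answer to where it is wanted, in $O(1)$ rounds and $O(N)$ total space.

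The one point that needs care, and the reason the $O(1)$-round bound is not completely immediate, is that a black-box PRAM simulation of a scan via Theorem~\ref{thm:pram_simulation} would only give $O(\log N)$ rounds, since parallel prefix has logarithmic depth. The improvement comes from exploiting the polynomial per-machine memory $s = N^{\varepsilon}$ to collapse the aggregation tree to constant depth $1/\varepsilon$; the obstacle, such as it is, is bookkeeping: one must verify that each level has a factor-$s$ fewer aggregates (so that both the total space stays $O(N)$ and each machine's fan-in/fan-out is at most $s$ words) and that the chosen monoid operation is genuinely associative so that block-wise aggregation is valid. All of this is standard and is exactly what underlies the cited results~\cite{goodrich2011sorting,andoni2018parallel}; we include the sketch only for completeness.
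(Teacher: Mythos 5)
Your proposal is correct. Note that the paper does not prove Theorem~\ref{thm:predecessor} at all --- it is imported as a black-box primitive from \cite{goodrich2011sorting,andoni2018parallel} --- so there is no in-paper argument to compare against; your sketch (sort by key, then a prefix scan over the associative ``rightmost non-sentinel'' monoid via an $s$-ary aggregation tree of depth $1/\eps$, then route answers back by a second sort) is exactly the standard construction underlying those cited results, and the round/space accounting is right.
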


We also state an MPC algorithmic primitive to index elements in sets.

\begin{theorem}[\cite{andoni2018parallel}]\label{thm:index_in_sets}
Given sets $S_1,S_2,\cdots,S_k$ of items with total size $N$, there is a fully scalable MPC algorithm which runs in $O(1)$ rounds and uses total space $O(N)$ to output the size of each set $S_i$, and for each $x\in S_i$, outputs its ranking / index within the set $S_i$.
\end{theorem}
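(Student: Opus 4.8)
The plan is to reduce everything to two fully scalable $O(1)$-round, $O(N)$-space primitives already available to us: sorting (Theorem~\ref{thm:sorting}) and predecessor-finding (Theorem~\ref{thm:predecessor}). First I would give each item a composite key $(i,\sigma(x))$, where $i$ is the index of the set $S_i$ containing it and $\sigma(x)$ is an auxiliary tiebreaker fixed once and for all (for instance the pair (machine id, cell offset) currently storing $x$), so that the $N$ items carry a total order in which all members of a common $S_i$ form a contiguous block. Sorting the items by this key takes $O(1)$ rounds and $O(N)$ total space, and afterwards every item learns its global rank $j\in[N]$ in the sorted order.

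Next I would detect the block boundaries. Call the item of global rank $j$ a \emph{left boundary} if $j=1$ or the item of rank $j-1$ lies in a different set, and a \emph{right boundary} symmetrically; since after sorting adjacent items reside on the same or neighboring machines, this labeling is computed with $O(1)$ more rounds. Feeding the predecessor primitive the pairs $(j,\ind[j\text{ is a left boundary}])$ lets every item of rank $j$ learn $f(i)$, the global rank of the first element of its set; running the same primitive on the reversed order (a mirror-image ``successor'' version) yields $\ell(i)$, the global rank of the last element of its set. Each call is $O(1)$ rounds and $O(N)$ space.

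Finally, each item $x\in S_i$ of global rank $j$ outputs its within-set index as $j-f(i)+1$ — an injective labeling of $S_i$ by $\{1,\dots,|S_i|\}$, consistent with the chosen tiebreaker — and outputs $|S_i|=\ell(i)-f(i)+1$; if the size is instead wanted attached to the set itself rather than to each element, the left-boundary item already holds it and can broadcast it within its block via one further sorted pass (or an application of Theorem~\ref{thm:broadcasting}). The pipeline is a constant number of invocations of fully scalable $O(1)$-round primitives on $O(N)$-size data, hence is itself fully scalable, runs in $O(1)$ rounds, and uses $O(N)$ total space.

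I do not expect a genuine obstacle here: the only points requiring care are (i) resisting a naive logarithmic-depth segmented-scan implementation, which would cost $\Theta(\log N)$ rounds, which is why the detour through sorting and predecessor-finding is essential, and (ii) pinning down the tiebreaker $\sigma$ so that ``rank within $S_i$'' is well defined and all machines agree on it. Both are resolved by the single observation that after one sort the task reduces to manipulating contiguous blocks of a sorted array.
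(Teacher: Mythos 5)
The paper states this result only as a citation to \cite{andoni2018parallel} and gives no proof of its own, but your sort--then--boundary--then--predecessor pipeline is exactly the standard way this primitive is established there: one sort to make each $S_i$ a contiguous block with known global ranks, one predecessor call (and its mirrored variant) to propagate the block's first and last ranks to every member, then local arithmetic for the within-set index and $|S_i|$. The argument is correct, each step is a constant number of invocations of the paper's fully scalable $O(1)$-round, $O(N)$-space primitives, and your remark about avoiding a $\Theta(\log N)$-round segmented scan is precisely the right point of care.
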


Similarly, the above primitive can be easily extended to compute a prefix sum for each element in each set.

\begin{theorem}[\cite{andoni2018parallel}]\label{thm:prefix_sum_in_sets}
Given sets $S_1,S_2,\cdots,S_k$ of ranked items with total size $N$, where each item has a weight, there is a fully scalable MPC algorithm which runs in $O(1)$ rounds and uses total space $O(N)$ to output a prefix sum $\sum_{j\in S_l:\text{item }j\text{ has smaller rank than item }i} (\text{weight of }j)$ for each item $i\in S_l$ for each $S_l,l\in[k]$.
\end{theorem}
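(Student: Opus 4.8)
The plan is to reduce this segmented prefix-sum task to a single \emph{global} prefix-sum over a totally ordered array, which is a standard $O(1)$-round MPC operation when the per-machine space is $N^{\eps}$. First I would use the sorting primitive (Theorem~\ref{thm:sorting}) to lay out all $\sum_{l}|S_l| = N$ items in one array $Z$, ordered by the pair $(l, r)$ where $r$ is the item's rank within $S_l$; after this each set $S_l$ occupies a contiguous block of $Z$, with its items in increasing rank order inside the block. Running Theorem~\ref{thm:index_in_sets} in parallel gives each item the size of its set, and hence the global position $f(l)$ at which its block begins. This costs $O(1)$ rounds, $O(N)$ total space, and is fully scalable.

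Next I would compute, for every position $p$ of $Z$, the inclusive prefix sum $P(p) = \sum_{q \le p} w(Z[q])$, where $w(\cdot)$ is the given weight. This is done by a fan-in-$N^{\eps}$ aggregation tree of depth $O(1/\eps) = O(1)$: partition $Z$ into $N/s$ consecutive blocks of size $s = N^{\eps}$ (one block per machine); each machine computes its block sum; recursively prefix-sum the sequence of block sums (after $O(1/\eps)$ such contractions the residual sequence fits on a single machine, where it is scanned directly); then propagate the block-level prefix sums back down, and each machine finishes by a local scan of its own block. This is exactly the constant-round scan underlying the PRAM simulation and indexing primitives (Theorems~\ref{thm:pram_simulation} and~\ref{thm:index_in_sets}), so it runs in $O(1)$ rounds and $O(N)$ total space.

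Finally, for an item $i \in S_l$ at global position $p_i$, the quantity asked for equals $P(p_i) - w(i) - P(f(l)-1)$: the global prefix sum up to and including $i$, minus $i$'s own weight, minus the total weight of everything preceding $S_l$'s block. Each item already knows $p_i$, $w(i)$, and $f(l)$, so it issues a single non-adaptive lookup of $P(f(l)-1)$; by the simultaneous-access subroutine obtained from Theorem~\ref{thm:pram_simulation} (or directly by the predecessor primitive, Theorem~\ref{thm:predecessor}), all $N$ such lookups are served in $O(1)$ rounds and $O(N)$ total space, after which each item subtracts locally. The full pipeline is $O(1)$ rounds, $O(N)$ total space, and fully scalable.

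The only point needing care --- and the main (routine) obstacle --- is arguing that the global scan really takes $O(1)$ rounds: this hinges on using fan-in $N^{\eps}$ rather than a binary tree, so that the tree has only $O(1/\eps)$ levels, and on checking that the up-sweep, the single-machine base case, and the down-sweep each respect the $O(N)$ total-space and $N^{\eps}$ local-space budgets. These are all standard and are precisely the ingredients already packaged in~\cite{andoni2018parallel}; no new idea is required, and the statement is in essence a corollary of Theorems~\ref{thm:sorting},~\ref{thm:pram_simulation}, and~\ref{thm:index_in_sets}.
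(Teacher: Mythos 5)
Your construction is correct: sorting so that each $S_l$ is contiguous, running a fan-in-$N^{\eps}$ global scan in $O(1/\eps)$ rounds, and then computing $P(p_i)-w(i)-P(f(l)-1)$ is exactly the standard segmented prefix-sum reduction, and the arithmetic checks out. The paper offers no proof of its own here --- it states the result as a primitive inherited from~\cite{andoni2018parallel} (as an easy extension of Theorem~\ref{thm:index_in_sets}) --- and your argument is precisely the standard derivation underlying that primitive, so there is nothing to add.
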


\subsection{Implementation of Euclidean MST in the MPC Model}\label{sec:MPC_MST_implementation}
In this section, we provide details of the MPC implementation of our Euclidean MST algorithm. 

\begin{theorem}[Constant approximate Euclidean MST in the MPC model]\label{thm:mpc_mst}
Given $n$ points from $\mathbb{R}^d$, there is a fully scalable MPC algorithm which outputs an $O(1)$-approximate MST with probability at least $0.99$.
The number of rounds of the algorithm is $O(\log\log(n)\cdot \log\log\log(n))$.
The total space required is at most $O(nd+n^{1+\varepsilon})$ where $\varepsilon>0$ is an arbitrary small constant.
\end{theorem}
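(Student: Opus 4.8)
The strategy is to implement the offline algorithm of Sections~\ref{sec:offline_algorithm}--\ref{sec:offline_analysis} essentially verbatim in the MPC model: correctness, and in particular the $O(1)$-approximation guarantee of Lemma~\ref{lem:approximation_gaurantee}, is then inherited for free, and all that remains is to check that every step fits into the claimed round/space budget while being fully scalable. First, the preprocessing. The Johnson--Lindenstrauss reduction to $d'=O(\log n)$ is a single call to Theorem~\ref{thm:dim_reduce}; the discretization of the aspect ratio to $\tilde O(n)$ is a local rounding of coordinates after one sort (Proposition~1 of~\cite{chen2023streaming}); the random shift $a$ defining the quadtree $Q$ consists of $O(d')$ shared random words, broadcast via Theorem~\ref{thm:broadcasting}, after which every point computes its own cell at every level locally; and each $2$-hop $O(1/\eps)$-approximate spanner $\tilde G_t$ is produced by Theorem~\ref{thm:twohopspanner} applied in parallel to the point sets $X_c$ of the relevant cells (grouping points by cell is one sort). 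Since there are $O(\log n)$ relevant levels and $\sum_c |X_c|^{1+\eps}\le n^{1+\eps}$ at each level, $\bigcup_t\tilde G_t$ has $O(n^{1+\eps}\poly\log n)$ edges, which (after shrinking the spanner parameter $\eps$) is $O(n^{1+\eps})$; to run the $H=\Theta(\log n/\log\log n)$ checkpoint computations of Part~1 concurrently we duplicate (Theorem~\ref{thm:duplication}) each $\tilde G_{\alpha^k}$ onto its own group of machines, keeping the total space $O(nd+n^{1+\eps})$. This is all $O(1)$ rounds, and each randomized primitive succeeds with probability $1-1/\poly(n)$.

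The core subroutine is a single round of leader compression (Algorithm~\ref{alg:leader-compression}) in $O(1)$ MPC rounds. Each vertex stores a pointer $\ell(\cdot)$ to its leader. A leader's random bit is propagated to its descendants by sorting vertices by $\ell(\cdot)$ and doing a segmented broadcast (Theorems~\ref{thm:broadcasting},~\ref{thm:index_in_sets}). To let each bit-$0$ vertex $y$ learn a single bit-$1$ neighbour, we sort the edges of $\tilde G_t$ by their bit-$0$ endpoint and select the first qualifying edge per $y$ (Theorems~\ref{thm:predecessor},~\ref{thm:index_in_sets}); $y$ forwards it to $\ell(y)$, each leader picks one such message, and the resulting ``old leader $\mapsto$ new leader'' map is then applied to all vertices by sorting by current leader and joining against this update table. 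The key point making this $O(1)$ rounds is that a bit-$1$ vertex's leader also has bit $1$, so the update map sends bit-$0$ leaders to bit-$1$ leaders, which are fixed points; hence the map is acyclic of depth $1$ and a single parallel relabel suffices. Detecting complete/incomplete components and the Part~1/Part~2 over-merging steps (grouping incomplete components whose leaders lie in a common cell at level $\alpha^{k+1}/\beta$, resp.\ in a common component of $\hat{\cP}_{t\cdot\kappa}$) use exactly the same primitives: scan $\tilde G_t$ to flag edges crossing the current partition, aggregate per leader, then group-by-representative and relabel. Each such operation costs $O(1)$ rounds and $O(|\tilde G_t|+n)$ total space.

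With the subroutine in hand, the round count tracks the offline structure. Part~1 performs $h=O(\log\log n)$ leader-compression rounds on all $H$ checkpoints in parallel: $O(\log\log n)$ rounds. Part~2 sweeps the levels in decreasing order of $v_2(\log_2 t)$, i.e.\ in $g=\log_2\log_2\alpha=O(\log\log\log n)$ waves, each wave handling all levels of a fixed $v_2$-value in parallel at a cost of $h=O(\log\log n)$ rounds, for $O(\log\log n\cdot\log\log\log n)$ rounds total; at any wave only $O(\log n)$ levels are live, each using $O(n^{1+\eps})$ space, so the space bound is preserved. Part~3 runs, for every $t$ in parallel, $h$ rounds of edge-tracking leader compression on $\hat{\cP}_{t/2}$ with respect to $\tilde G_t$ followed by one star-completion step (an arbitrary star per component of $\hat{\cP}_t$, again a group-by): $O(\log\log n)$ rounds. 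Adding the $O(1)$ preprocessing, the total is $O(\log\log n\cdot\log\log\log n)$ rounds, with per-machine space $O((nd)^\eps)$ since every invoked primitive is fully scalable. Finally, Lemma~\ref{lem:approximation_gaurantee} gives expected cost $O(1)\cdot\MST(X)$, so Markov's inequality yields an $O(1)$-approximation with probability at least, say, $0.995$, and a union bound over the $1/\poly(n)$ failure probabilities of the preprocessing primitives leaves overall success probability at least $0.99$.

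The main obstacle I expect is the bookkeeping inside one leader-compression round: arguing the update map is acyclic so a single relabel works; ensuring that the descendant-broadcasts and the per-$y$ ``first bit-$1$ neighbour'' selection never force more than $n^{\eps}$ words onto a machine (this is precisely why we reduce to one neighbour rather than broadcasting to all neighbours of a possibly high-degree vertex); and controlling total space when $\Theta(\log n)$ levels and checkpoints are duplicated and processed concurrently — all while keeping every step within the fully-scalable primitives of Section~\ref{sec:existing_mpc_algorithms}.
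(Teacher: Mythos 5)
Your proposal is correct and follows essentially the same route as the paper: preprocess via JL, discretization, a broadcast random shift, and per-cell $2$-hop spanners; implement one round of leader compression in $O(1)$ MPC rounds using sorting/simultaneous-access primitives (the paper phrases this as PRAM simulation via Theorem~\ref{thm:pram_simulation}, which is built on the same sorting machinery you invoke, and it relies on the same observation you make that the leader-update map has depth one); then run Part~1 in parallel over checkpoints, Part~2 in $O(\log\log\log n)$ waves of $O(\log\log n)$ rounds each, and Part~3 in $O(\log\log n)$ rounds, inheriting the approximation guarantee from Lemma~\ref{lem:approximation_gaurantee}. The only additions relative to the paper are cosmetic (e.g., making the Markov-inequality step for the $0.99$ success probability explicit), so no substantive differences to report.
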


\begin{proof}
The approximation gaurantee is shown by Lemma~\ref{lem:approximation_gaurantee} in Section~\ref{sec:offline_analysis}.
In the remaining of the proof, we show how to implement the algorithm described in Section~\ref{sec:offline_algorithm} in the MPC model.
\paragraph{Preprocessing.}
We firstly need to run dimensionality reduction.
By applying the algorithm mentioned in Theorem~\ref{thm:dim_reduce}, we are able to reduce the dimension of each point to $d'=\Theta(\log n)$ while preserve all pairwise distances up to a constant factor with probability at least $0.999$.
This step requires $O(1)$ rounds and $O(nd+n\log^3n)$ total space.
In addition, the algorithm is fully scalable.
Then we can use sorting (Theorem~\ref{thm:sorting}) to sort all values in each dimension separtely and parallelly to learn the maximum and the minimum values in each dimension.
Then according to Theorem~\ref{thm:broadcasting}, we are able to use $O(1)$ rounds to make every machine learn the maximum and the minimum value of each dimension.
Thus, we are able to shift the point set such that the minimum value in each dimension is $0$.
Note that base on the maximum value and minimum value of each coordinate, we can verify whether all points are at the same location. 
If all points are at the same location, the MST cost is $0$ and we can output an arbitrary spanning tree.
Otherwise we linearly scale the entire dataset such that the maximum value of the point set $\Delta = \Theta(n^2/\log n)$.

Next, we need to build the randomly shifted grid and round the coordinates of each point.
To do this, we let one machine generate a randomly shifted vector $(a_1,a_2,\cdots,a_{d'})\in \mathbb{R}^{d'}$ where each coordinate is drawn uniformly from $[0,\Delta)$ (this is the same vector as described in Section~\ref{sec:offline_algorithm}).
Since $d'=\Theta(\log n)$, we apply Theorem~\ref{thm:broadcasting} again, we can use $O(1)$ rounds to make every machine learns $(a_1,a_2,\cdots,a_{d'})$.
Therefore, all machines agree on the same randomly shifted grid.
Each machine checks each point stored in its memory, and moves the point to the closest center of a cell at level $t=\alpha^{100}$.
Note that each movement might change the cost of MST by $\sqrt{d'}\cdot \alpha^{100}=\polylog n$, and because we have $n$ points, the total change will be at most $n\polylog n$.
However, since the MST cost is at least the maximum distance which is at least $\Delta = \Theta(n^2/\log n)$, the MST cost is changed by at most a constant factor by above movements.
After doing above steps, our point set satifies that the maximum distance is at most $n^2$ and the minimum distance is at least $\alpha^{100}$ which are the same as mentioned in Section~\ref{sec:offline_algorithm}.

Now we describe how to generate $\tilde{G}_t$ for $t=1,2,4,\cdots,\alpha^H$.
Firstly, we apply Theorem~\ref{thm:duplication} to duplicate input data $\log(\alpha^H)$ times.
This step is fully scalable and takes $O(1)$ rounds and $O(n\log n)\cdot \log(\alpha^H)$ total space.
Since we duplicated the input data, we can handle each $t=1,2,4,\cdots,\alpha^H$ in parallel.
Consider a fixed $t$ and let $k$ be such $\alpha^{k-1}<t\leq \alpha^k$.
As discussed above, all machines agree on the same randomly shifted grid, thus each cell at level $\alpha^{k+1}/\beta$ can be uniquely determined by a vector in $\mathbb{Z}^{d'}$ (see the discussion in Section~\ref{sec:offline_algorithm}).
Let $c(x)$ be the cell at level $\alpha^{k+1}/\beta$ containing $x$.
We create $(c(x),x)$ for each point $x$.
These steps only require local computations.
Let $\varepsilon>0$ be an arbitray constant.
Then we sort all $(c(x),x)$ (Theorem~\ref{thm:sorting}) but we restrict that each machine only uses memory $\Theta(s^{1/(1+\varepsilon/2)})$ where $s$ is the local memory size of each machine.
The algorithm is fully scalable, takes $O(1)$ rounds and $O(n^{1+\varepsilon/2}\log n)$ total space.
Then the points $X_c$ in the same cell $c$ are rearranged into consecutive machines.
We set $C$ used in Theorem~\ref{thm:twohopspanner} to be $\xi/\sqrt{\varepsilon}$ where $\xi\geq 1$ is a sufficiently large constant.
If $X_c$ is entirely contained by a single machine, then $\tilde{G}'_t(X_c)$ can be computed locally and with probability at least $1-1/\poly(n)$, 
\begin{enumerate}
    \item For any $p,q\in X_c$, if there is an edge in $\tilde{G}'_t(X_c)$ between $p$ and $q$, $\|p-q\|_2\leq C\cdot t$.
\item For any $p,q\in X_c$ with $\|p-q\|_2\leq t$, there is a path in $\tilde{G}'_t(X_c)$ connecting $p,q$ using at most $2$ edges.
\end{enumerate}
Since $\tilde{G}'_t(X_c)$ only has $O(|X_c|^{1+\varepsilon/2})$ edges, $\tilde{G}'_t(X_c)$ can still be stored locally (note each machine only consumes $s^{1+\varepsilon/2}$ space during the sorting).
Otherwise, if $X_c$ is distributed over multiple machines, then only the first machine containing points from $X_c$ and the last machine containing points from $X_c$ might contain points that are not in $X_c$.
Then there are two cases: 
(1) If $X_c$ is only on two consecutive machines, we can send them into an arbitrary machine and handle them locally as discussed above. 
(2) If $X_c$ is on more than two consecutive machines, the first machine containing points in $X_c$ can move them to the second machine containing points in $X_c$, and the last machine containing points in $X_c$ can move them to the second to the last machine that contains points in $X_c$.
By applying the predecessor algorithm (Theorem~\ref{thm:predecessor}), if a machine holds points in $X_c$, it learns the first machine contains points in $X_c$ as well as the last machine that contains points in $X_c$.
Finding predecessor processor requires $O(1)$ rounds and $O(n\log n)$ space, and it is fully scalable in the MPC model.
Then we are able to apply Theorem~\ref{thm:twohopspanner} for each $X_c$ in parallel. 
Therefore, we obtain graphs $\tilde{G}'_t(X_c)$ for all $X_c$ in $O(1)$ rounds and the total space required is $O(n^{1+\varepsilon/2})$.
This step is also fully scalable.
In addition, with probability at least $1-1/\poly(n)$, for every $X_c$, $G'_t(X_c)$ satisfies the properties mentioned in Theorem~\ref{thm:twohopspanner} with respect to $X_c$.
Then we let $G'_t$ be the union of all $G'_t(X_c)$.
This step can be done by only local computation, i.e., renaming the graph.

Then, we duplicate $\tilde{G}'_1,\tilde{G}'_2,\tilde{G}'_4,\cdots, \tilde{G}'_{\alpha^H}$ $\log(\alpha^H)$ times.
According to Theorem~\ref{thm:duplication}, this is fully scalable and only takes $O(1)$ rounds and $O(n^{1+\varepsilon/2}\log n)$ total space.
For each edge of the $i$-th copy of $\tilde{G}'_t$, if $t\leq 2^i$, we add this edge into $\tilde{G}_{2^i}$.
Otherwise, we delete the edge.
Therefore, we obtain that $\tilde{G}_t = \bigcup_{t'\leq t} \tilde{G}'_{t'}$, and thus $\tilde{G}_1\subseteq \tilde{G}_2\subseteq\tilde{G}_4\subseteq\cdots\subseteq \tilde{G}_{\alpha^H}$.
The above step can be done by renaming the edges which only requires local computation.
Finally, we apply sorting (Theorem~\ref{thm:sorting}) again to remove duplicated edges in each $\tilde{G}_t$.

To conclude, the entire preprocessing stage takes $O(1)$ rounds and uses $O(nd+n^{1+\varepsilon/2}\log n)$ total space. 
In addition, these steps are fully scalable in the MPC model.

\paragraph{Leader compression.}
Now we describe how to implement leader compression algorithm (Algorithm~\ref{alg:leader-compression}) in the MPC model.
We firstly describe the format of the input data of of leader compression stored in the system.
Note that the partition $\cP$ is implied by the leader mapping function $\ell$: $\forall x,y\in X$, $x$ and $y$ are in the same part if and only if $\ell(x)=\ell(y)$.
Therefore, we store $\ell$ in the system to denote $\cP$.
In particular we have $|X|$ tuples $(x,\ell(x))$ distributed on the machines.
The graph $H$ is represented a set of edges, and the edges $(x,y)$ are also distributed on the machines.
The boolean value $\text{Edges}$ is known by all machines (this can be done by Theorem~\ref{thm:broadcasting} for broadcasting). 

We firstly need to compute the set of leaders $S$.
This step can be done by looking at each tuple $(x,\ell(x))$.
If $\ell(x)=x$, we generate the random bit $b_x$ for the leader node $x$ (e.g., create a tuple $(``b",x,b_x)$).
Above steps only require local computation.
The next step is to let $b_x\gets b_{\ell(x)}$ for every $x\in X$.
We can regard it as a simultaneous access problem in the PRAM model.
In particular, the vector $b$ is stored in the shared memory.
Each processor corresponds to a tuple $(x,\ell(x))$. 
If $x\not=\ell(x)$, then the processor reads the value of $b_{\ell(x)}$ and writes the value of $b_x$.
This operation has $O(1)$ depth and $O(|X|)$ work in the PRAM model.
Therefore, according to the simulation algorithm (Theorem~\ref{thm:pram_simulation}), we can use $O(1)$ rounds and total space $O(n)$ to compute $b_x$ for all $x\in X$.
In addition, this step is fully scalable.
Then the next step is to find all edges $(x,y)\in H$ such that $b_x=1$ and $b_y=0$, and for each such edge, create a tuple $(\ell(y),(x,y,\ell(x)))$.
This step can also be regarded as a simultaneous access problem in the PRAM model.
In particular, the vector $b$ and the vector $\ell(\cdot)$ is stored in the shared memory.
Each processor corresponds to each edge $(x,y)\in H$.
The process that all processors simulateously read corresponding $b_x,b_y,\ell(x),\ell(y)$ has $O(1)$ depth and $O(|H|+|X|)$ work.
Therefore, by applying the simulation algorithm again (Theorem~\ref{thm:pram_simulation}), we can use $O(1)$ rounds and $O(|H|+|X|)$ total space in the MPC model to create all tuples $(\ell(y),(x,y,\ell(x)))$ for edges $(x,y)\in H$ whose $b_x=1,b_y=0$.
This step is also fully scalable.
Then for each $z=\ell(y)$, if there are some tuples $(z,(\cdot, \cdot,\cdot))$ created, we only keep an arbitrary one $(\ell(y),(x,y,\ell(x)))$.
This deduplication step can be done by sorting (Theorem~\ref{thm:sorting}).
Thus it is fully scalable, takes $O(1)$ rounds and uses total space at most $O(|H|)$.
For every tuple $(\ell(y),(x,y,\ell(x)))$ that is kept, if $\text{Edges}$ is true, we add edge $(x,y)$ into the output edge set $E$.
This operation can be done by local computation only (e.g., create a tuple $(``E",(x,y))$).
Finally, we need to update $\ell(y)$ for all $y\in X$.
In particular, if there is a tuple $(\ell(y),(x,y,\ell(x)))$ kept, we need to update $\ell(y)\gets \ell(x)$. 
Otherwise we keep $\ell(y)$ unchanged.
This operation can also be seen as a simultaneous access in the PRAM model. 
In particular, we have a size $|X|$ array stored in the shared memory, if there is a tuple $(\ell(y),(x,y,\ell(x)))$ kept, then the entry of the array with index $\ell(y)$ has value $(x,y,\ell(x))$. 
Otherwise, the entry of the array with index $\ell(y)$ has value empty.
Each processor corresponds to each tuple $(y,\ell(y))$.
Each processor simutaneously accesses the array and see whether the entry with index $\ell(y)$ is empty or not.
If the entry is empty, the processor keeps $(y,\ell(y))$ unchanged.
Otherwise the entry is $(x,y,\ell(x))$, the processor updates $(y,\ell(y))$ to $(y,\ell(x))$.
These PRAM operations take $O(1)$ depth and $O(|X|)$ total work.
Therefore, according to the simulation theorem again (Theorem~\ref{thm:pram_simulation}), the above steps can also be done using $O(1)$ rounds and $O(|X|)$ total space, and the algorithm is fully scalable in the MPC model.

To conclude, the one leader compression step can be implemented in the MPC model using $O(1)$ rounds and $O(|X|+|H|)$ total space. 
In addition, the algorithm is fully scalable.

\paragraph{Part 1 of the algorithm.}
We consider how to implement Algorithm~\ref{alg:part-1} in the MPC model.
Notice that we will run Algorithm~\ref{alg:part-1} for $t=\alpha,\alpha^2,\alpha^3,\cdots,\alpha^H$ simutenously in parallel.
Firstly, we can apply Theorem~\ref{thm:duplication} to duplicate the point set $H$ times. 
This operation is fully scalable and takes $O(1)$ rounds and $O(H\cdot n\log n)$ total space.
Note that the graphs $\tilde{G}_\alpha,\tilde{G}_{\alpha^2},\tilde{G}_{\alpha^3},\cdots,\tilde{G}_{\alpha^H}$ are already computed and stored distributedly in the system.
We can use sorting (Theorem~\ref{thm:sorting}) to send each $\tilde{G}_t$ and a duplicated copy of the input point set to a group of machines.
Thus we can handle $t=\alpha,\alpha^2,\alpha^3,\cdots,\alpha^H$ simutenously in parallel.
This operation is fully scalable and takes $O(1)$ rounds and $O(n^{1+\varepsilon/2}\log n)$ total space.

Next, we focus on how to implement Algorithm~\ref{alg:part-1} for a fixed $t$.
The first step is to generate partition $\cP^{(0)}_t$.
As we discussed before, we only need to generate the leader mapping function $\ell_t^{(0)}(\cdot)$ such that $x$ and $y$ are in the same part of $\cP^{(0)}_t$ if and only if $\ell_t^{(0)}(x) = \ell_t^{(0)}(y)$.
Recall the definition $\cP^{(0)}_t$, two points $x,y$ are in the same part if and only if they are in the same cell in the grid of level $t/\beta$.
As discussed in the preprocessing step, all machines agree on the same randomly shifted grid, thus each cell at level $t/\beta$ can be uniquely determined by a vector in $\mathbb{Z}^{d'}$ (see the discussion in Section~\ref{sec:offline_algorithm}).
Let $c(x)\in \mathbb{Z}^{d'}$  denote the cell which contains point $x$ in level $t/\beta$.
Then we can compute $(c(x), x)$ for every point $x\in X$.
This step only requires local computation.
We can sort (Theorem~\ref{thm:sorting}) all $(c(x),x)$ such that each point $x$ learns whether it is the first point in the cell $c(x)$.
For each $(c(x),x),$ we create a tuple $((c(x),x),1)$ if $x$ is the first point in the cell $c(x)$ and set $\ell_t^{(0)}(x) = x$.
Otherwise we create a tuple $((c(x),x),0)$
Then, by applying the predecessor algorithm (Theorem~\ref{thm:predecessor}), each point $x$ learns the index of the point $y$ which is the first point within the same cell $c(x)=c(y)$.
By using the index of the point $y$ and simulating the simultaneous access operation in PRAM (Theorem~\ref{thm:pram_simulation}), each point $x$ learns $y$ which is the first point in the same cell.
Then we set $\ell_t^{(0)}(x)=y$.
The above operations only require $O(1)$ rounds and $O(n\log n)$ total space.

Once we obtained $\ell^{(0)}_t$, we are able to run $h=O(\log\log n)$ rounds of leader compression process.
As we discussed previously, running $h$ rounds of leader comression takes $O(\log\log n)$ rounds and $O(|X|+|\tilde{G}_t|)$ space.
The algorithm is fully scalable as well.
For each $(x,y)\in\tilde{G}_t$, we want to access $\ell^{(h)}_t(x)$ and $\ell^{(h)}_t(y)$, this again can be regarded as a simultaneous access problem in the PRAM model where $\ell^{(h)}_t$ is stored in the shared memory and each processor corresponds to each edge $(x,y)\in\tilde{G}_t$.
Each processor requires to read $\ell^{(h)}_t(x)$ and $\ell^{(h)}_t(y)$, and if they are not equal, the processor writes $\textsc{Incomplete}$ to the $\ell^{(h)}_t(x)$-th entry and the $\ell^{(h)}_t(y)$-th entry of an array in the shared memory.
According to the simulation theorem (Theorem~\ref{thm:pram_simulation}), these steps can be done in $O(1)$ rounds in the MPC model and $O(|X|+|\tilde{G}_t|)$ space, and it is fully scalable.
Then for each point $x$, we can check whether it is a leader, i.e., $\ell^{(h)}_t(x)=x$ and whether it is marked as \textsc{Incomplete}.
This operation can be done by simulating simultaneous access in PRAM (Theorem~\ref{thm:pram_simulation}) which takes $O(1)$ rounds and $O(n)$ total space, and is fully scalable.
Let $I$ be the set of all incomplete leaders.
Then for each $x\in I$, we can locally compute $(c(x),x)$ where $c(x)\in\mathbb{Z}^{d'}$ indicates the cell at the level $\alpha^{k+1}/\beta$ containing $x$.
Then we can sort all such $(c(x),x)$, and each point in $I$ learns whether it is the first point among $I$ and in the cell $c(x)$.
By following the similar approach of computing $\ell_t^{(0)}$, we compute $\ell_t^{(h+1)}(x)=y$ for each $x\in I$ where $y\in I$ is the first point in the cell $c(x)$.
Similar as computing $\ell_t^{(0)}$, this step takes $O(1)$ MPC rounds and requires $O(n\log n)$ total space, and it is fully scalable.
For each leader which is not \textsc{Incomplete}, i.e., $x\not\in I$, we set $\ell^{(h+1)}(x)=x$.
This can be done by simulating simultaneous access in PRAM (Theorem~\ref{thm:pram_simulation}).
Thus, it is fully scalable, has $O(1)$ rounds, and uses space $O(n)$.
Finally, for each tuple $(x,\ell^{(h)}_t(x))$, we access $\ell^{(h+1)}_t(\ell^{(h)}_t(x))$ and create a new tuple $(x,\ell^{(h+1)}_t(\ell^{(h)}_t(x)))$ to indicate $\ell^{(h+1)}_t(x)\gets \ell^{(h+1)}_t(\ell^{(h)}_t(x))$.
This can also be done by simulating simultaneous access in PRAM (Theorem~\ref{thm:pram_simulation}).
Thus, it is fully scalable, has $O(1)$ rounds, and uses space $O(n)$.
The output $\hat{\cP}_t$ is represented by $\ell_t(\cdot)\equiv\ell^{(h+1)}_t(\cdot)$.

To conclude, Part 1 of the algorithm (Algorithm~\ref{alg:part-1}) is fully scalable.
It requires $O(\log\log n)$ rounds and requires total space $O(n^{1+\varepsilon/2}\log n)$.

\paragraph{Part 2 of the algorithm.}
We consider how to implement Algorithm~\ref{alg:part-2} for all $t=1,2,4,8,\cdots,\alpha^H$.
Firstly, we can duplicate our input data $\log(\alpha^H)$ times (Theorem~\ref{thm:duplication}) which takes $O(1)$ rounds and $O(n\log n \cdot \log(\alpha^H))$ total space, and the algorithm is fully scalable.
Then we can use disjoint set of machines to handle each $t$ separately.
Note that $\tilde{G}_t$ are already stored in the system due to the preprocessing stage.
We also send $\tilde{G}_t$ to the group of machines which will be used to compute $\hat{\cP}_t$.
This is also fully scalable and takes only $O(1)$ rounds and total space $O(n^{1+\varepsilon/2}\log n)$.

As described in Section~\ref{sec:offline_algorithm}, instead of computing all $t$ at the same time, we handle $t$ in descreasing order of $v_2(\log t)$ (recall Definition~\ref{def:adic_valuation} for $v_2(\cdot)$).
In particular, we already obtained $\hat{\cP}_t$ for all $t=1,\alpha,\alpha^2,\cdots,\alpha^H$ by running Part 1 of the algorithm.
More precisely, we obtained $\ell_t(\cdot)$ for these $t$.
In the first iteration, $\kappa=\sqrt{\alpha}$, we will handle $t=\sqrt{\alpha},\alpha^{1.5},\cdots,\alpha^{H-0.5}$ at the same time in parallel, i.e., we will have $\hat{\cP}_t$ for all $t=1,\alpha^{0.5},\alpha,\alpha^{1.5},\cdots,\alpha^H$ at the end of the first iteration.
In the second iteration, $\kappa=\alpha^{0.25}$, we will have $\hat{\cP}_t$ for all $t=1,\alpha^{0.25},\alpha^{0.5},\alpha,\cdots,\alpha^H$ at the end of the second iteration.
In the $i$-th iteration we will have $\kappa=\alpha^{1/2^i}$, and we will have $\hat{\cP}_t$ for all $t=\kappa^0,\kappa,\kappa^2,\cdots,\alpha^H$ at the end of the $i$-th iteration.
Thus, we will have $O(\log\log(\alpha))=O(\log\log\log(n))$ iterations in total.
Note that if $\hat{\cP}_t$ is computed in the $i$-th iteration, then $\forall j\geq 1$, $\hat{\cP}_{t}$ (more precisly, $\ell_t(\cdot)$) will be the input for computing $\hat{\cP}_{t\cdot \alpha^{1/2^{i+j}}}$ and $\hat{\cP}_{t/ \alpha^{1/2^{i+j}}}$.
Therefore, when all $\ell_t(\cdot)$ are computed after the $i$-th iteration, we can use duplication process (Theorem~\ref{thm:duplication}) to duplicate all $\ell_t(\cdot)$ at most $O(\log\log\log(n))$ times and send each duplicated copy of $\hat{\cP}_t$ (more precisely, $\ell_t(\cdot)$) to the groups of machines that will be used to compute $\hat{\cP}_{t\cdot \alpha^{1/2^j}}$ or $\hat{\cP}_{t/ \alpha^{1/2^j}}$ for $j\geq 1$.

When a group of machines recieves all required inputs $X, \tilde{G}_t,\ell_{t/\kappa}(\cdot),\ell_{t\cdot \kappa}(\cdot)$, it starts to run Algorithm~\ref{alg:part-2} to compute $\hat{\cP}_t$ (i.e., $\ell_t(\cdot)$).
We firstly intialize $\ell_t^{(0)}(\cdot)$ to be $\ell_{t/\kappa}(\cdot)$.
Then, similar as Part 1 of the algorithm, we run $O(\log\log n)$ rounds of leader compression.
This step takes $O(\log\log n)$ MPC rounds and $O(|X|+|\tilde{G}_t|)$ space.
This step is fully scalable.
Then by using the same process described in how to implement Part 1 of the algorithm, we can simulate PRAM operations (Theorem~\ref{thm:pram_simulation}) to find all $x\in X$ that are marked as \textsc{Incomplete}.
For each leader $x$ ($\ell_t^{(h)}(x)=x$) which is marked as \textsc{Incomplete}, let $\ell_{t}^{(h+1)}(x)=\ell_{t\cdot \kappa}(x)$.
For each leader $x$ ($\ell_t^{(h)}(x)=x$) which is not marked as \textsc{Incomplete}, let $\ell_{t}^{(h+1)}(x)=x$.
Finally, for each $x\in X$, let $\ell_{t}^{(h+1)}(x)\gets \ell_{t}^{(h+1)}(\ell_{t}^{(h)}(x))$.
Similar as before, above operations used to compute $\ell_{t}^{(h+1)}(\cdot)$ can be seen as simultaneous accesses in the PRAM model which requires $O(1)$ depth and $O(n)$ total work.
By applying PRAM simulation again (Theorem~\ref{thm:pram_simulation}), the above procedure to compute $\ell_t^{(h+1)}(\cdot)$ requires $O(1)$ MPC rounds and $O(n)$ total space, and it is fully scalable.
Note that the output $\hat{\cP}_t$ is represented by $\ell_t(\cdot)\equiv \ell_{t}^{(h+1)}(\cdot)$.

To conclude, to compute $\hat{\cP}_t$ (more preciesly, $\ell_t(\cdot)$) for all $t=1,2,4,\cdots,\alpha^H$, our algorithm takes $O(\log\log(n)\cdot\log\log\log(n))$ rounds.
The total space required is at most $n^{1+\varepsilon/2}\cdot \polylog(n)$.
In addition, the algorithm is fully scalable.

\paragraph{Part 3 of the algorithm.}
Finally, let us consider how to implement Algorithm~\ref{alg:part-3} in parallel for all $t$ in the MPC model.

Similar as before, we use a disjoint group of machines to compute $\textsc{Part3}(X,t,\tilde{G}_{t},\hat{\cP}_{t/2},\hat{\cP}_t)$ for each $t$.
To do this, we need to duplicate $X$ $\log(\alpha^H)$ times and duplicate $\hat{\cP}_t$ (i.e., $\ell_t(\cdot)$) $2$ times ($\hat{\cP}_t$ is used to compute $\textsc{Part3}(X,t,\tilde{G}_{t},\hat{\cP}_{t/2},\hat{\cP}_t)$ and $\textsc{Part3}(X,2t,\tilde{G}_{2t},\hat{\cP}_{t},\hat{\cP}_{2t})$).
Then we send each copy of the data and $\tilde{G}_t$ to the corresponding group of machines.
According to Theorem~\ref{thm:duplication}, above process can be done using $O(1)$ rounds and $n^{1+\varepsilon/2}\polylog(n)$ space, and these steps are fully scalable.

In the following, we describe how to implement Algorithm~\ref{alg:part-3} for any fixed $t$.
We initialize ${\ell'}^{(0)}_t(\cdot)$ to be $\ell_{t/2}(\cdot)$.
This only requires local computation.
Then, as we discussed previously, $O(\log\log n)$ rounds of leader compression can be implemented in $O(\log\log n)$ MPC rounds and $O(n^{1+\varepsilon/2}\log (n))$ total space, and the process is fully scalable.
The $i$-th round of leader compression outputs a set of edges $E_t^{(i)}$.
At the end of the last leader compression process, let the obtained leader mapping be ${\ell'}_t^{(h)}(\cdot)$.
For each $x\in X$, we check whether ${\ell'}_t^{(h)}(x)=\ell_t(x)$, if not, we create a set $S_{\ell_t(x)}$ and add ${\ell'}_t^{(h)}(x)$ into the set $S_{\ell_t(x)}$.
Note that this operation can be regarded as simultaneous access of ${\ell'}_t^{(h)}(\cdot)$ and $\ell_t(\cdot)$ under the PRAM model, which only has $O(1)$ depth and $O(n)$ work. 
Therefore, according to the simulation theorem (Theorem~\ref{thm:pram_simulation}), the computation of all $S_{\ell_t(x)}$ only requires $O(1)$ MPC rounds and $O(n)$ total space, and it is fully scalable.
Then, for each set $S_c$ if $z$ is in $S_c$, we add an edge $\{z,c\}$ into $F_t$.
Thus $F_t$ will be a forest of stars.
Then we output $F_t\cup \bigcup_{i=1}^h E_t^{(i)}$.

To conclude, to compute $F_t\cup \bigcup_{i=1}^h E_t^{(i)}$ for all $t=1,2,4,\cdots,\alpha^H$, our algorithm takes $O(\log\log(n))$ rounds.
The total space required is at most $n^{1+\varepsilon/2}\cdot \polylog(n)$.
In addition, the algorithm is fully scalable.

\paragraph{Put them together.}
All parts of our algorithm are fully scalable.
Thus, the overall algorithm is fully scalable as well.
The preprocessing stage takes $O(1)$ rounds and $O(nd+n^{1+\varepsilon/2}\log n)$ total space.
Running Algorithm~\ref{alg:part-1} for all $t=1,\alpha,\alpha^2,\cdots,\alpha^H$ takes $O(\log\log(n))$ rounds and $O(nd+n^{1+\varepsilon/2}\log n)$ total space.
Running Algorithm~\ref{alg:part-2} for all remaining $t$ takes $O(\log\log(n)\cdot \log\log\log(n))$ rounds and $n^{1+\varepsilon/2}\cdot\polylog n$ total space.
Running Algorithm~\ref{alg:part-3} for all $t$ takes $O(\log\log(n))$ rounds and $n^{1+\varepsilon/2}\cdot\polylog n$ total space.
Therefore, the entire algorithm takes $O(\log\log(n)\log\log\log(n))$ rounds and $O(nd+n^{1+\varepsilon})$ total space.
\end{proof}
\section{Euler Tour of Approximate Euclidean MST in MPC}\label{sec:tsp}
In this section, we show how to compute an Euler tour of the approximate MST that we obtained.
Note that the actual diameter of the tree that we obtained can be very large (for example, the tree that we obtained can be a path which has diameter $\Theta(n)$).
Therefore, we cannot use the algorithm of \cite{andoni2018parallel} which requires $\Omega(\log(\text{diameter}))$ rounds.
We propose a new method.
In particular, given the approximate MST and the hierarchy of clusters that we used to obtain the tree, we are able to use linear total space and $O(1)$ MPC rounds to compute an Euler tour of the tree, and our algorithm is fully scalable.

\subsection{Additional Existing Algorithmic Primitives in the MPC Model}

A standard way to store a sequence $A=(a_1,a_2,\cdots,a_n)$ in the MPC model is that we store tuples $(1,a_1),(2,a_2),(3,a_3),\cdots,(n,a_n)$ arbitrarily on the machines in a distributed manner (see e.g.,~\cite{andoni2018parallel}).
In this way, one can easily use sorting (Theorem~\ref{thm:sorting}) to reorder the elements such that the sequence is stored in consecutive machines and each machine stores corresponding consecutive elements in order. Similarly, a standard way to store a mapping $f:[n]\rightarrow [m]$ in the MPC model is that we store tuples $(1,f(1)),(2,f(2)),\cdots,(n,f(n))$.

The sequence insertion problem is stated as the following: 
Given $k+1$ sequences  $A = (a_1, a_2, \cdots , a_m),$ $A_1, A_2, \cdots , A_k$ and a mapping $f:[k]\rightarrow \{0\}\cup[m]$, the goal is to output a sequence $A'$ which is obtained by inserting every $A_i,(i\in[k])$ into $A$ such that $A_i$ is between the element $a_{f(i)}$ and $a_{f(i)+1}$.
Let the total size of the input sequences is $|A|+|A_1|+|A_2|+\cdots+|A_k|=N$.
The size of $A'$ is also $N$, and the space to store the input mapping $f$ is $O(k)=O(N)$ as well.

\begin{theorem}[\cite{andoni2018parallel}]\label{thm:sequence_insertion}
There is a fully scalable MPC algorithm solving the sequence insertion problem in $O(1)$ rounds and $O(N)$ space.
\end{theorem}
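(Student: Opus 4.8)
The plan is to reduce sequence insertion to a single sorting pass: compute, for every one of the $N$ input elements (those of $A$ and those of each $A_i$), its final position in the output $A'$, and then sort all elements by that position.

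First I would apply the indexing primitive (Theorem~\ref{thm:index_in_sets}) to the collection $\{A_1,\dots,A_k\}$ so that every inserted element learns its local rank inside its own sequence and every $i$ learns $|A_i|$; the elements of $A$ already carry their indices $1,\dots,m$. Next I would aggregate the inserted sequences by the value of $f$: form the tuples $(f(i),i,|A_i|)$, sort them by the key $(f(i),i)$ with Theorem~\ref{thm:sorting}, and use a prefix-sum over these groups (Theorem~\ref{thm:prefix_sum_in_sets}) to obtain $g_i := \sum_{i'<i,\, f(i')=f(i)} |A_{i'}|$ for each $i$ and the per-slot total $s_j := \sum_{i:\, f(i)=j} |A_i|$ for each $j\in\{0,1,\dots,m\}$. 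A further prefix-sum over the array $(s_0,s_1,\dots,s_m)$ gives $S_j := \sum_{j'=0}^{j-1} s_{j'}$ (with $S_0:=0$). Setting $\pos(a_j) := j + S_j$ for $j\in[m]$ and the sentinel $\pos(a_0):=0$, the final position of an inserted element of local rank $p$ in $A_i$ is $\pos(a_{f(i)}) + g_i + p$. Since each of these is a non-adaptive lookup into the arrays $(\pos(a_j))_j$ and $(g_i)_i$, all positions are computed simultaneously via the simultaneous-access routine (an instance of Theorem~\ref{thm:pram_simulation}). Finally, attach its position to each element and sort once more with Theorem~\ref{thm:sorting}, then re-index $1,\dots,N$, to output $A'$.

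Each of the above steps is one of $O(1)$ many invocations of the sorting, indexing, prefix-sum, or simultaneous-access primitives; each is fully scalable, runs in $O(1)$ rounds, and uses total space linear in its input, which never exceeds $O(N)$. Hence the whole algorithm is fully scalable, takes $O(1)$ rounds, and uses $O(N)$ total space. The one place that needs care is the bookkeeping of offsets: the off-by-one in which original element ``owns'' the block of sequences inserted immediately after it, the handling of the virtual slot $f(i)=0$ (sequences inserted before $a_1$) and of $f(i)=m$ (sequences inserted after $a_m$), and the tie-breaking among several $A_i$ sharing the same value of $f$ (resolved by their index $i$). Once the position formulas are fixed, correctness is the routine check that $\{\pos(a_j)\}_j \cup \{\pos(a_{f(i)})+g_i+p\}$ is a permutation of $\{1,\dots,N\}$ realizing exactly the prescribed interleaving, because the $s_j$ inserted elements with $f(i)=j$ occupy precisely the positions strictly between $\pos(a_j)$ and $\pos(a_{j+1})$ (with $\pos(a_{m+1}):=N+1$), ordered by $i$ and then by local rank.
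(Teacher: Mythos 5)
The paper does not actually prove this statement; it is imported wholesale as a citation to \cite{andoni2018parallel}, so there is no in-paper argument to compare against. Your proposal is a correct, self-contained reconstruction built entirely from the primitives the paper already states (Theorems~\ref{thm:sorting}, \ref{thm:index_in_sets}, \ref{thm:prefix_sum_in_sets}, \ref{thm:pram_simulation}), and it is essentially the standard argument one would find in the cited work: compute each element's final position arithmetically and sort once. The position formulas check out --- $\pos(a_j) = j + S_j$ with $S_j = \sum_{i: f(i) \le j-1} |A_i|$, and $\pos(a_{j+1}) - \pos(a_j) = 1 + s_j$, so the $s_j$ elements assigned to slot $j$ exactly fill the gap, ordered by $(i, p)$ --- and the $f(i)=0$ sentinel is handled correctly. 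The only detail you wave at is how every original index $j$ (including slots with $s_j = 0$) learns $S_j$; this is cleanly done by one subset-prefix-sum over the merged, sorted list of original entries keyed by $j$ and inserted entries keyed by $f(i)$ (with inserted entries placed after the original entry of the same key), which stays within $O(1)$ rounds and $O(N)$ space. With that filled in, the proof is complete and has the stated guarantees.
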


\begin{theorem}[\cite{andoni2018parallel}]\label{thm:euler_tour_low_diameter}
Given a tree of $n$ nodes with diameter at most $\Lambda$, there is a fully scalable which takes $O(\log(\Lambda))$ rounds and $O(n^{1+\varepsilon})$ total space to output an Euler tour (see Definition~\ref{def:euler_tour}) of the tree, where $\varepsilon>0$ is an arbitrarily small constant.
\end{theorem}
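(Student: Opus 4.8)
The plan is to reduce the Euler tour computation to a handful of primitives that each run in $O(\log\Lambda)$ rounds and $O(n^{1+\eps})$ space: rooting, pointer‑doubling up the tree, aggregation by a key, and prefix sums within groups (Theorems~\ref{thm:sorting},~\ref{thm:pram_simulation},~\ref{thm:prefix_sum_in_sets}). First I would root $T$ at an arbitrary vertex $r$ and orient every edge into a parent/child relation; since $\mathrm{diam}(T)\le\Lambda$, the eccentricity of $r$ is at most $\Lambda$, so the rooted tree has depth at most $\Lambda$. I then fix a canonical order of the children of each vertex (say by ID). This choice already determines a unique Euler tour, so the whole task reduces to computing, for each of the $2(n-1)$ directed arcs, its position in that tour. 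Setting this up is $O(1)$ rounds of sorting.

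Next, by $\lceil\log\Lambda\rceil$ rounds of pointer‑doubling, $\mathrm{anc}_{j+1}(v)=\mathrm{anc}_j(\mathrm{anc}_j(v))$ starting from $\mathrm{anc}_0(v)=p(v)$, every vertex learns its ancestor at distance $2^j$ for all $j\le\lceil\log\Lambda\rceil$, hence its depth. Using these pointers I would compute all subtree sizes $\mathrm{sz}(v)=|T_v|$ via the depth‑bounded recurrence $\mathrm{cnt}_{j+1}(v)=\mathrm{cnt}_j(v)+\sum_{u:\,\mathrm{anc}_j(u)=v}\mathrm{cnt}_j(u)$ with $\mathrm{cnt}_0(v)=1$, where $\mathrm{cnt}_j(v)$ counts the descendants of $v$ within distance $2^j-1$; each step is a single group‑by‑key aggregation implementable by sorting, and after $\lceil\log\Lambda\rceil$ steps $\mathrm{cnt}_j(v)=\mathrm{sz}(v)$ because no subtree is deeper than $\Lambda$. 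Given the subtree sizes, a prefix sum within each sibling group (Theorem~\ref{thm:prefix_sum_in_sets}) gives each vertex $v$ the quantity $\mathrm{off}(v)=\sum_{s\text{ left sibling of }v}2\,\mathrm{sz}(s)$, and then the first‑visit time $\mathrm{open}(v)$ of $v$ in the tour satisfies $\mathrm{open}(v)=\mathrm{open}(p(v))+\mathrm{off}(v)+1$; this root‑path recurrence is solved by another $\lceil\log\Lambda\rceil$ rounds of pointer‑doubling that accumulates path sums. Finally, the position of every directed arc is an explicit closed form in the $\mathrm{open}(\cdot)$, $\mathrm{sz}(\cdot)$, and $\mathrm{off}(\cdot)$ values of its endpoints together with the child‑orders (e.g. the arc $p(v)\to v$ sits at index $\mathrm{open}(v)-1$ and the last arc $v\to p(v)$ at index $\mathrm{open}(v)+2(\mathrm{sz}(v)-1)$), so one last sort outputs the tour as an indexed sequence. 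Every step is a sort, a prefix sum, or a round of pointer‑doubling — all fully scalable and using $O(n\,\mathrm{polylog}\,n)\le O(n^{1+\eps})$ space — and there are $O(\log\Lambda)$ rounds in total.

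The main obstacle is making the two aggregation tasks — the subtree sizes in particular, which is a ``bottom‑up'' aggregation — run in $O(\log\Lambda)$ rather than $O(\Lambda)$ rounds. A naive leaf‑to‑root wave costs $\Theta(\mathrm{depth})=\Theta(\Lambda)$ rounds, and list‑ranking the length‑$\Theta(n)$ Euler‑tour cycle directly, or running generic rake‑and‑compress tree contraction, costs $\Theta(\log n)$ rounds regardless of $\Lambda$; neither matches the claimed bound when $\Lambda=\mathrm{polylog}(n)$. The resolution is exactly the depth‑bounded frontier‑doubling recurrence above: it never materializes the (potentially $\Theta(n)$‑sized) frontier sets $\{u:\mathrm{anc}_j(u)=v\}$, only their weighted sums, and it terminates after $\lceil\log\Lambda\rceil$ levels precisely because the diameter bound caps the depth. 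The remaining points — off‑by‑one bookkeeping for the revisit times of a vertex between consecutive children, handling the root and the leaves, clamping ancestor pointers that would overshoot $r$, and verifying the closed forms for arc positions — are routine given the primitives of Section~\ref{sec:existing_mpc_algorithms}.
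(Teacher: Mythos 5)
The paper itself gives no proof of this statement—it is imported wholesale from \cite{andoni2018parallel}—so the comparison is against what a correct proof of the cited result must contain. Your downstream machinery is sound and essentially standard: once you have parent pointers, the ancestor table via pointer doubling, the frontier recurrence $\mathrm{cnt}_{j+1}(v)=\mathrm{cnt}_j(v)+\sum_{u:\,\mathrm{anc}_j(u)=v}\mathrm{cnt}_j(u)$ for subtree sizes (which correctly counts each descendant at distance in $[2^j,2^{j+1}-1]$ exactly once via its unique ancestor at distance $2^j$ below $v$), the sibling prefix sums, and the root-path accumulation all work in $O(\log\Lambda)$ doubling steps of $O(1)$ MPC rounds each, and your closed-form arc positions coincide (up to the off-by-ones you flag) with the paper's own Lemma~\ref{lem:euler_tour_by_ordered_chlildren}. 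Your identification of the bottom-up aggregation as the place where a naive wave would cost $\Theta(\Lambda)$ rounds, and its resolution by depth-bounded doubling, is the right insight for that part.

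The genuine gap is the very first step: ``root $T$ at $r$ and orient every edge\ldots{} Setting this up is $O(1)$ rounds of sorting.'' It is not. The input is an unoriented edge list, and deciding for each edge $\{u,v\}$ which endpoint is the parent requires knowing which endpoint is closer to $r$ — global information that no amount of sorting provides in $O(1)$ rounds. All of your subsequent pointer doubling is bootstrapped from $\mathrm{anc}_0(v)=p(v)$, so nothing downstream can run until this is resolved. The obvious fixes fail in the relevant regime: BFS from $r$ costs $\Theta(\Lambda)$ rounds (too many when $\Lambda=\mathrm{polylog}(n)$, which is exactly how the paper uses this theorem in Theorem~\ref{thm:euler_tour_via_hc}); untruncated neighborhood exponentiation reaches the root in $O(\log\Lambda)$ doublings but can require $\Theta(n)$ space per vertex in a bushy low-diameter tree, blowing the $O(n^{1+\eps})$ budget; and generic tree contraction or list ranking costs $\Theta(\log n)$ rounds regardless of $\Lambda$. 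Rooting a diameter-$\Lambda$ tree in $O(\log\Lambda)$ rounds and $O(n^{1+\eps})$ total space is precisely the nontrivial contribution of \cite{andoni2018parallel} (truncated graph exponentiation combined with random leader contraction), and it cannot be dismissed as bookkeeping. Your proof would be complete if the theorem's input were a tree already given with parent pointers; for an edge-list input you must either invoke the rooting procedure of \cite{andoni2018parallel} explicitly or supply an argument of comparable substance.
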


\subsection{Join Euler Tours of Spanning Tree of Sub-clusters}
Given a tree, we treat each edge $\{x,y\}$ as two directed edges $(x,y)$ and $(y,x)$.
An Euler tour of a tree is a traverse sequence of tree edges such that each directed edge $(x,y)$ appeared in the sequence exactly once.
The following gives a formal definition of an Euler tour of a tree.
\begin{definition}[Euler tour of a tree]\label{def:euler_tour}
Given a tree $T=(V,E)$ with $|V|=n$ nodes, the Euler tour of $T$ is a sequence $((v_1,v_2),(v_2,v_3),\cdots,(v_{2n-2},v_1))$ of length $2n-2$ where $\forall i\in[2n-2], \{v_{i},v_{(i\bmod (2n-2)) + 1}\}\in E$, and $\forall \{x,y\}\in E$, there exists exactly one $i\in [2n-2]$ and exactly one $i'\in[2n-2]$ such that $(v_i,v_{(i\bmod (2n-2)) + 1})=(x,y)$ and $(v_{i'},v_{(i'\bmod (2n-2)) + 1})=(y,x)$.
\end{definition}

\subsubsection{Properties of An Euler Tour of a Tree}
In this section, we give some formal definitions for Euler tour of a tree.
Most observations stated in this section can also be found in the textbooks of algorithms such as~\cite{cormen2022introduction}.

\begin{definition}[Parent pointers]
Given a rooted tree $T$ over $V$, we use $\p:V\rightarrow V$ to denote a set of parent pointers which is a mapping from each node to its parent, and we set $\p(v)=v$ for the root $v$ of $T$.
\end{definition}
Note that $\p(\cdot)$ contains the information of all tree edges and the root that we select.
\begin{definition}[Size of subtree]
Given parent points $\p(\cdot)$ of tree with root $v$, let $\size_{\p}(u)$ denote the number of nodes in the subtree rooted at $u$.
If $\p(\cdot)$ is clear in the context, we omit the subscript $\p$ and use $\size(u)$ for short.
\end{definition}

\begin{definition}[Ordering of children]\label{def:order_children}
Given a rooted tree represented by parent pointers $\p$, and an ordering of children of each node, we use $\child_{\p}(v)$ to denote the set of children of $v$, $\child_{\p}(v,i),(i\in[|\child_{\p}(v)|])$ to denote the $i$-th child of $v$, and $\rank_{\p}(v)$ to denote its rank among its siblings, i.e., $\rank_{\p}(v)$ satisfies $\child_{\p}(\p(v),\rank_{\p}(v))=v$ if $v$ is not a root and $\rank_{\p}(v)=1$ otherwise.
\end{definition}
If $\p$ is clear in the context, we use $\child(\cdot)$ and $\rank(\cdot)$ without subscript $\p$ for short.

\begin{observation}[Euler tour for different root]\label{obs:circular_shift}
Let $A=((v_1,v_2),(v_2,v_3),(v_3,v_4),\cdots,(v_{2n-2},v_1))$ be any Euler tour of a tree $T$, then any circular shift of $A$, $((v_j,v_{j+1}),(v_{j+1},v_{j+2}),\cdots,(v_{2n-2},v_1),(v_1,v_2),(v_2,v_3),\cdots,(v_{j-1},v_j))$ for $j\in[2n-2]$, is still an Euler tour of $T$.
\end{observation}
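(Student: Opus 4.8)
The plan is to observe that both defining conditions of an Euler tour (Definition~\ref{def:euler_tour}) depend only on the \emph{cyclic} sequence of vertices $v_1, v_2, \dots, v_{2n-2}$ (read with indices modulo $2n-2$), and are therefore unchanged by a circular shift. Set $m = 2n-2$, and given $j \in [m]$ write the shifted sequence as $w_1, w_2, \dots, w_m$ with $w_k := v_{((k+j-2) \bmod m) + 1}$, so that the shifted tour $A'$ consists precisely of the directed edges $(w_k, w_{k \bmod m + 1})$ for $k \in [m]$. I would first note that $A'$ has the required length $m = 2n-2$, which is immediate.

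Next I would verify condition (1), that $\{w_k, w_{k \bmod m + 1}\} \in E$ for every $k \in [m]$. By construction $w_k$ and $w_{k \bmod m + 1}$ are consecutive entries of the original cyclic sequence, i.e.\ they equal $v_i$ and $v_{i \bmod m + 1}$ for the appropriate $i := ((k+j-2)\bmod m) + 1$; hence condition (1) for $A'$ is exactly condition (1) for $A$ applied at index $i$.

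Then I would verify condition (2). The key point is that the (multi)set of directed edges traversed, $\{(w_k, w_{k \bmod m + 1}) : k \in [m]\}$, is merely a reindexing of $\{(v_i, v_{i \bmod m + 1}) : i \in [m]\}$, since a circular shift is a bijection of $[m]$ carrying each $(w_k, w_{k\bmod m+1})$ to some $(v_i, v_{i\bmod m+1})$ and vice versa. Consequently, for each $\{x,y\} \in E$, the number of indices $k$ with $(w_k, w_{k\bmod m+1}) = (x,y)$ equals the number of indices $i$ with $(v_i, v_{i\bmod m+1}) = (x,y)$, which is exactly one by condition (2) for $A$; the same holds for $(y,x)$. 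This establishes that $A'$ is an Euler tour of $T$.

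There is essentially no mathematical obstacle here; the only thing to handle carefully is the modular bookkeeping, i.e.\ checking that ``consecutive in $A'$'' always corresponds to ``consecutive in $A$'' under the shift, including at the wrap-around index $k = m$. If one prefers to avoid index juggling, an alternative is to prove the single-step case $j = 2$ directly (the shift that moves $(v_1,v_2)$ to the end) and then iterate, but the direct argument above is cleaner and I would present that.
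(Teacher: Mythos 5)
Your proof is correct: the paper states this as an unproved observation (citing standard textbooks), and your argument — that both conditions of Definition~\ref{def:euler_tour} depend only on the cyclic sequence of vertices and that the shift is a bijective reindexing of the consecutive pairs — is exactly the standard direct verification one would write out. The modular bookkeeping with $w_k = v_{((k+j-2)\bmod m)+1}$ checks out, including at the wrap-around.
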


\begin{definition}
If $A=((v_1,v_2),(v_2,v_3),(v_3,v_4),\cdots,(v_{2n-2},v_1))$ is an Euler tour of a tree $T=(V,E)$ and $v_1=u\in V$, then $A$ is an Euler tour with respect to the root $u$.
\end{definition}

\begin{definition}
Let $A=(e_1,e_2,\cdots,e_{2n-2})$ be an Euler tour of a tree $T$ where $e_i=(v_i,v_{i+1})$.
Given any node $v$, we say that the first appearance of $v$ is at $\first_A(v)=i$ of $A$ if $i$ is the smallest value such that $v_i=v$, and we say that the last appearance of $v$ is at $\last_A(v)=i$ if $i$ is the largest value such that $v_{i+1}=v$.
If the Euler tour is clear in the context, we omit the subscript $A$ of $\first$ and $\last$.
\end{definition}

\begin{observation}[Parent via Euler tour]\label{obs:parent_by_euler_tour}
Let $A=((v_1,v_2),(v_2,v_3),(v_3,v_4),\cdots,(v_{2n-2},v_1))$ be an Euler tour of a tree $T$ with root $v$.
If $u$ is not a root, then $v_{\first(u)-1}$ is the parent of $u$.
\end{observation}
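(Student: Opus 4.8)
The statement to prove is Observation~\ref{obs:parent_by_euler_tour}: if $A = ((v_1,v_2),\dots,(v_{2n-2},v_1))$ is an Euler tour of a tree $T$ rooted at $v$, and $u \neq v$, then $v_{\first(u)-1}$ is the parent of $u$ in $T$.

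The plan is to argue directly from the structure of a rooted Euler tour. First I would recall the standard recursive decomposition of an Euler tour with respect to a root: an Euler tour starting at $v$ visits $v$, then for each child $c_1,\dots,c_k$ of $v$ in some order, it traverses the directed edge $(v,c_i)$, follows a complete Euler tour of the subtree rooted at $c_i$, and returns via $(c_i,v)$. This is the (commented-out) lemma just above the observation, and I would either reinstate it as a lemma or prove the needed consequence inline by induction on the depth of $u$. The key point it yields: every directed edge $(x,y)$ appearing in $A$ with $y$ on the path from $v$ towards $u$ in the ``downward'' direction corresponds to a parent-to-child edge, and the \emph{first} time the tour reaches a non-root node $u$, it must have arrived along the unique tree edge $\{\p(u),u\}$ in the direction $(\p(u),u)$, so that $v_{\first(u)-1} = \p(u)$.

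The key steps, in order: (1) Set $i = \first_A(u)$, so $v_i = u$ and $i \geq 2$ since $u \neq v = v_1$. The edge $e_{i-1} = (v_{i-1}, v_i) = (v_{i-1}, u)$ is a directed tree edge, so $v_{i-1}$ is a neighbor of $u$ in $T$; the only neighbors of $u$ are $\p(u)$ and the children of $u$. (2) It remains to rule out that $v_{i-1}$ is a child of $u$. Suppose for contradiction $v_{i-1} = c$ for some child $c$ of $u$. By the recursive structure of the Euler tour, the directed edge $(c,u)$ occurs only as the ``return'' edge closing off the Euler tour of the subtree rooted at $c$; but that subtour is entered via $(u,c)$ and lies entirely between an earlier occurrence of $u$ and this occurrence. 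Hence there is an index $j < i-1$ with $v_j = u$, contradicting minimality of $i = \first_A(u)$. (3) Therefore $v_{i-1} = \p(u)$, which is exactly the claim.

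The main obstacle, such as it is, is making step (2) fully rigorous without circularity: to conclude that the edge $(c,u)$ can only appear as a subtree-return edge one really does need the recursive decomposition of rooted Euler tours, which the paper only states informally (and in a commented-out block). I would therefore prove the recursive decomposition first, by induction on $n$: an Euler tour with respect to root $v$, upon leaving $v$ for the first time, follows some edge $(v,c_1)$; since each directed edge appears exactly once, the portion of the tour between this and the next return to $v$ is forced to be an Euler tour of the component containing $c_1$ in $T - v$, i.e.\ of the subtree rooted at $c_1$, and then induction applies to the remaining suffix. Once this lemma is in hand, the observation is immediate. An alternative, slicker route avoiding the recursion: use the ``first appearance / last appearance'' bracketing — the set of indices $[\first_A(u), \last_A(u)]$ is exactly the contiguous block of steps spent inside the subtree rooted at $u$ (again provable from the decomposition), so the step immediately before entering this block must cross the edge from outside the subtree into it, and the only such edge is $(\p(u),u)$. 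I expect the recursive-decomposition lemma to be the one genuine piece of work; everything after it is a one-line deduction.
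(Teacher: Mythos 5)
Your proof is correct. Note that the paper itself offers no proof of this observation -- it is one of the textbook facts stated without argument in this subsection -- so there is nothing to match against; what matters is whether your argument is sound, and it is. Step (1) is immediate from the definitions, and step (2) does follow once the recursive decomposition of a rooted Euler tour is established, which your induction correctly does: since each directed edge appears exactly once, the tour enters the subtree of a child $c_1$ of the root exactly once via $(v,c_1)$ and leaves exactly once via $(c_1,v)$, so the excursion is a single contiguous block that must exhaust all directed edges inside that subtree. One remark: the decomposition lemma, while the standard route, is heavier than necessary for this particular observation. A shorter argument for step (2): if $v_{i-1}=c$ were a child of $u$, then the prefix $v_1,v_2,\dots,v_{i-1}$ is a walk in $T$ from $v_1=v$ to $c$ that, by minimality of $i=\first_A(u)$, never visits $u$; but $u=\p(c)$ lies on the unique $v$--$c$ path in $T$, hence removing $u$ disconnects $v$ from $c$, so every walk from $v$ to $c$ must pass through $u$ -- a contradiction. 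This avoids the decomposition entirely, though your version has the advantage that the decomposition lemma is reusable for the neighboring observations (e.g., the subtree-bracketing fact in Observation~\ref{obs:subtree_size}).
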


\begin{observation}[Euler tour of subtree]\label{obs:subtree_size}
Let $A=(e_1,e_2,\cdots,e_{2n-2})$ be an Euler tour of a tree $T$ with respect to the root $u$.
Let $v$ be any node in tree $T$, let $i$ be the first appearance of $v$ in $A$, and let $j$ be the last appearance of $v$ in $A$.
Then $(e_i,e_{i+1},\cdots,e_j)$ is an Euler tour of the subtree rooted at $v$ (note that $j=i-1$ when $v$ is a leaf and thus the tour is an empty sequence), and thus $(j-i+1)/2+1$ is the size of the subtree rooted at $v$.
\end{observation}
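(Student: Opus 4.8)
The plan is to identify the contiguous block $(e_i,\dots,e_j)$ with the portion of the closed walk $A$ that is spent inside the subtree rooted at $v$, which we denote $T_v$. Fix $v$. If $v=u$ is the root then $T_v$ is all of $T$, and $\first_A(u)=1$ while $\last_A(u)=2n-2$ (using the cyclic convention $v_{2n-1}:=v_1=u$), so $(e_i,\dots,e_j)=A$ and the size formula reads $n=(2n-2)/2+1$; this case is immediate. So assume $v\neq u$ and set $p=\p(v)$ (recall from \Cref{obs:parent_by_euler_tour} that $p=v_{\first_A(v)-1}$). Deleting the tree edge $\{p,v\}$ splits $T$ into exactly two pieces: the subtree $T_v$, and the component $T'$ containing the root $u$. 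In particular $\{p,v\}$ is the only edge of $T$ with one endpoint in $T_v$ and one outside.

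The heart of the argument is to locate the maximal block of $A$ lying in $T_v$. View $A$ as the closed walk $v_1=u\to v_2\to\cdots\to v_{2n-2}\to v_1$. Every step $v_m\to v_{m+1}$ that enters $T_v$ (i.e.\ $v_m\notin T_v$ and $v_{m+1}\in T_v$) must traverse the boundary edge in the direction $(p,v)$, and every step that leaves $T_v$ must traverse it in the direction $(v,p)$; by \Cref{def:euler_tour} each of these two directed edges occurs exactly once in $A$. Hence, viewed cyclically, the set $S=\{m : v_m\in T_v\}$ has exactly one entry transition and one exit transition, so a cyclic binary string with one $01$-transition and one $10$-transition must be a rotation of $0^a1^b$, i.e.\ $S$ is a single cyclic arc; since $v_1=u\notin T_v$ this arc does not wrap around, so $S=\{a+1,a+2,\dots,b\}$ for some $1\le a<b\le 2n-2$ with $e_a=(p,v)$ and $e_b=(v,p)$ (when $v$ is a leaf this degenerates to $b=a+1$, $S=\{a+1\}$). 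I expect this step --- justifying that $S$ is one unbroken arc purely from the single-use of the two boundary directed edges, together with the degenerate leaf case --- to be the only place requiring real care.

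With $S$ pinned down, the conclusion follows quickly. Any directed edge with both endpoints in $T_v$ occurs at a position $m$ with $v_m\in T_v$, hence $m\in S$; since $e_a,e_b$ are the two boundary edges, all such occurrences lie in positions $a+1,\dots,b-1$, and conversely each of $e_{a+1},\dots,e_{b-1}$ has both endpoints in $T_v$. Therefore $(e_{a+1},\dots,e_{b-1})$ traverses every directed edge of $T_v$ exactly once, and it is a closed walk from $v_{a+1}=v$ back to $v_b=v$; by \Cref{def:euler_tour} it is an Euler tour of $T_v$ with respect to the root $v$ (of length $2|T_v|-2$, the empty sequence when $v$ is a leaf). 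Finally $v$ is visited only inside this block, so $\first_A(v)=a+1$ and $\last_A(v)=b-1$, i.e.\ $(i,j)=(a+1,b-1)$, and $(e_i,\dots,e_j)$ is exactly the Euler tour of $T_v$ just constructed. The size formula is then immediate: a tree on $m$ nodes has $m-1$ edges, hence its Euler tour has length $2(m-1)$; applying this with $m=|T_v|$ and $2(|T_v|-1)=j-i+1$ gives $|T_v|=(j-i+1)/2+1$. (Alternatively one could induct on $n$ using the recursive decomposition of an Euler tour into ``$(u,c_{\pi(1)})\,A_1\,(c_{\pi(1)},u)\,(u,c_{\pi(2)})\,A_2\cdots$'' with each $A_m$ an Euler tour of a child subtree, but the direct argument above avoids setting that up.)
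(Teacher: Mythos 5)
Your proof is correct. Note that the paper itself offers no proof of Observation~\ref{obs:subtree_size}: it is presented as a textbook fact (the section explicitly defers such observations to standard references), so there is no authorial argument to compare against. Your cut-based argument is the natural direct verification: since $\{\p(v),v\}$ is the unique tree edge crossing the cut between the subtree $T_v$ and the rest of $T$, and each of its two orientations occurs exactly once in $A$, the positions visiting $T_v$ form a single non-wrapping cyclic arc $\{a+1,\dots,b\}$ with $e_a=(\p(v),v)$ and $e_b=(v,\p(v))$; from there the identification $(i,j)=(a+1,b-1)$, the Euler-tour property of the block, and the size formula $|T_v|=(j-i+1)/2+1$ all follow as you state, including the degenerate leaf case $j=i-1$. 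The alternative you mention in passing --- inducting on the recursive decomposition $(u,c_{\pi(1)})\,A_1\,(c_{\pi(1)},u)\cdots$ --- is essentially what the paper does formalize later in Lemma~\ref{lem:euler_tour_by_ordered_chlildren}, but for this observation your direct argument is cleaner since it works for an arbitrary Euler tour rather than one built from a specified child ordering. No gaps.
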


\begin{lemma}[Euler tour from the ordering of children]\label{lem:euler_tour_by_ordered_chlildren}
Given a tree $T$ rooted at $v$ with parent pointers $\p$, and given a mapping $\child_{\p}(\cdot,\cdot)$ (Definition~\ref{def:order_children}), then we are able to construct an Euler tour $A$ with respect to the root $v$ such that for any node $u$ in the tree, we have $\first(\child(u,1))<\first(\child(u,2))<\first(\child(u,3))<\cdots<\first(\child(u,|\child(u)|))$.
In addition, for any non-root node $u$, let the path $((x_1,x_2),(x_2,x_3),\cdots,(x_{k-1},x_k))$ be the unique simple path from the root $v$ to $u$ (i.e., $x_1=v,x_k=u$), then the position of the (directed) edge $(\p(u),u)$ in $A$ is:
\begin{align*}
p=\sum_{l=2}^k \left(1+\sum_{i=1}^{\rank(x_l)-1}\size(\child(x_{l-1},i))\right),
\end{align*}
and the position of the (directed) edge $(u,\p(u))$ is $p+2\cdot (\size(x_k)) -1$
\end{lemma}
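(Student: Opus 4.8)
The plan is to construct the Euler tour $A$ recursively and then read off the position formula by a telescoping argument along the root-to-$u$ path. First I would define $A$ inductively: for a leaf $u$, the Euler tour of the subtree rooted at $u$ is the empty sequence; for an internal node $w$ with children $\child(w,1),\dots,\child(w,m)$ in the given order, let $A_i$ be the (recursively constructed) Euler tour of the subtree rooted at $\child(w,i)$, and set the Euler tour of the subtree rooted at $w$ to be the concatenation
\[
(w,\child(w,1)),\,A_1,\,(\child(w,1),w),\,(w,\child(w,2)),\,A_2,\,(\child(w,2),w),\,\dots,\,(w,\child(w,m)),\,A_m,\,(\child(w,m),w).
\]
A straightforward induction shows this is a valid Euler tour with respect to the root $w$ in the sense of Definition~\ref{def:euler_tour}: each tree edge of the subtree appears exactly once in each direction, and the tour starts and ends at $w$. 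Taking $w=v$ gives $A$. Moreover, by construction the first appearance of $\child(w,i)$ in this tour precedes that of $\child(w,i+1)$, because $A_i$ (which contains all first appearances of nodes in the $i$-th subtree) appears entirely before the block for the $(i{+}1)$-st child; this gives the claimed ordering property $\first(\child(u,1)) < \cdots < \first(\child(u,|\child(u)|))$.

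Next I would establish the length bookkeeping. By Observation~\ref{obs:subtree_size} (or directly by induction on the recursive construction), the Euler tour $A_i$ of the subtree rooted at $\child(w,i)$ has length $2\cdot\size(\child(w,i)) - 2$, so the block $(w,\child(w,i)),\,A_i,\,(\child(w,i),w)$ has length $2\cdot\size(\child(w,i))$. Hence, within the sub-tour rooted at $w$, the directed edge $(w,\child(w,j))$ occurs at position
\[
1 + \sum_{i=1}^{j-1} 2\cdot \size(\child(w,i)),
\]
counting positions from the start of that sub-tour. Wait—I need to be careful about whether the formula in the statement counts $\size$ or $2\cdot\size$; re-reading, the formula has a single $\size$, so I should double-check the indexing convention, but the structure of the argument is unchanged: the offset contributed at level $l$ is ``$1$ for the edge $(x_{l-1},x_l)$ itself, plus the total length of all sibling blocks of $x_l$ that precede it,'' and that total length is $\sum_{i=1}^{\rank(x_l)-1} 2\cdot\size(\child(x_{l-1},i))$ if we measure in directed-edge units. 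I would reconcile this with the stated formula (likely the paper is implicitly using that each $A_i$ followed by its two bracketing edges, reorganized, contributes exactly the stated amount, or there is a factor-of-2 convention I will match to Definition~\ref{def:euler_tour}).

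Finally I would telescope down the path $x_1 = v, x_2, \dots, x_k = u$. Entering the sub-tour rooted at $x_{l-1}$, the edge $(x_{l-1}, x_l)$ sits at a local offset of $1 + \sum_{i=1}^{\rank(x_l)-1}(\text{block length of }\child(x_{l-1},i))$ past the start of that sub-tour; and the start of the sub-tour rooted at $x_l$ is exactly one position after $(x_{l-1},x_l)$. Summing these local offsets for $l = 2, \dots, k$ gives the global position
\[
p = \sum_{l=2}^{k}\left(1 + \sum_{i=1}^{\rank(x_l)-1}\size(\child(x_{l-1},i))\right),
\]
matching the statement (up to the unit convention I will fix). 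For the return edge $(u,\p(u))$: by Observation~\ref{obs:subtree_size}, the Euler tour of the subtree rooted at $u = x_k$ starts immediately after $(\p(u),u)$, i.e.\ at position $p+1$, and has length $2\cdot\size(x_k) - 2$; the edge $(u,\p(u))$ is the one immediately following this sub-tour, hence at position $p + 1 + (2\cdot\size(x_k) - 2) = p + 2\cdot\size(x_k) - 1$, as claimed.

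The main obstacle I anticipate is purely notational rather than mathematical: pinning down the exact position-counting convention in Definition~\ref{def:euler_tour} (whether a ``position'' indexes a directed edge or a vertex occurrence, and hence whether sibling-block lengths enter as $\size$ or $2\cdot\size$) so that the telescoped sum matches the displayed formula verbatim. Once that convention is fixed, the recursion and the telescoping are routine, and the ordering property on first-appearances falls out of the block structure of the recursive construction.
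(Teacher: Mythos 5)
Your proposal is correct and follows essentially the same route as the paper's proof: the same recursive block construction $(w,\child(w,1)),A_1,(\child(w,1),w),\dots$, the same induction on depth, and the same telescoping of local offsets along the root-to-$u$ path. Your worry about the unit convention is well-founded but resolves in your favor: each child block has length $2\cdot\size(\child(w,i))$ in directed-edge positions, and the paper's own proof indeed derives the position as $\sum_{l=2}^k \bigl(1+\sum_{i=1}^{\rank(x_l)-1}2\cdot\size(\child(x_{l-1},i))\bigr)$, so the displayed formula in the lemma statement is simply missing the factor of $2$ (a typo that propagates into the statement of Lemma~\ref{lem:mpc_euler_by_children} as well); your computation, including the return-edge position $p+2\cdot\size(x_k)-1$, is the correct one.
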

\begin{proof}
The proof is by induction and construction.
If $T$ only has one node, then $\child(v)=\emptyset$ and we just output $A$ to be an empty sequence.

Suppose the statement is true for any tree whose depth is at most $l-1$, now we consider how to construct $A$ for a tree with depth $l$.
Suppose $T$ has root $v$ and $\forall i\in[|\child(v)|],\child(v,i)=v_i$.
We construct $A$ to be following:
\begin{align*}
(v,v_1), A_1, (v_1,v), (v, v_2), A_2, (v_2, v),\cdots,(v,v_{|\child(v)|}), A_{|\child(v)|}, (v_{|\child(v)|},v),
\end{align*}
where $A_i$ is an Euler tour of the subtree rooted at $v_i$ via our induction hypothesis.
Then, if $u$ is in a subtree of $v_i$, then we still have $\first(\child(u,1))<\first(\child(u,2))<\first(\child(u,3))<\cdots<\first(\child(u,|\child(u)|))$ by our induction hypothesis.
Otherwise, if $u=v$, it is clear that $\first(v_1)<\first(v_2)<\first(v_3)<\cdots<\first(v_{|\child(v)}|)$.
In the remaining of the proof, we need to determine the position of an edge.
If an edge is $(v,v_i)$, then its position is clearly $1+\sum_{j=1}^{i-1}2\cdot\size(v_j)$.
If an edge is $(v_i,v)$, then its position is $\sum_{j=1}^{i}2\cdot\size(v_j)$.
Consider an edge $(\p(u),u)$ is in the subtree rooted at $v_i$.
Let $((x_1,x_2),(x_2,x_3),\cdots,(x_{k-1},x_k))$ be the path from $v$ to $u$, i.e., $x_1=v,x_2=v_i,x_{k-1}=\p(u),x_k=u$.
Then by our induction hypothesis, the position of $(\p(u),u)$ is
\begin{align*}
&1+\left(\sum_{j=1}^{\rank(v_i)-1}2\cdot\size(v_j)\right)+\sum_{l=3}^k \left(1+\sum_{j=1}^{\rank(x_l)-1}2\cdot\size(\child(x_{l-1},j))\right)\\
&=\sum_{l=2}^k \left(1+\sum_{j=1}^{\rank(x_l)-1}2\cdot\size(\child(x_{l-1},j))\right).
\end{align*}
Similarly, the position of $(u,\p(u))$ is 
\begin{align*}
&1+\left(\sum_{j=1}^{\rank(v_i)-1}2\cdot\size(v_j)\right)+\sum_{l=3}^k \left(1+\sum_{j=1}^{\rank(x_l)-1}2\cdot\size(\child(x_{l-1},j))\right)+2\cdot\size(x_k) - 1\\
&=\sum_{l=2}^k \left(1+\sum_{j=1}^{\rank(x_l)-1}2\cdot\size(\child(x_{l-1},j))\right)+2\cdot\size(x_k) - 1.
\end{align*}
Therefore, we conclude the proof of the induction.
\end{proof}

\begin{observation}[Euler tour and a path from the root]\label{obs:path_euler_tour}
Let $A=(e_1,e_2,\cdots,e_{2n-2})$ be an Euler tour of a tree $T$ with respect to the root $u$.
Consider any node $v$. 
If $i$ is the first appearance of $v$, and $B=((x_1,x_2),(x_2,x_3),\cdots,(x_{k-1},x_k))$ is a simple directed path from $u$ to $v$ (i.e., $x_1=u,x_k=v$), then we have:
\begin{enumerate}
\item $\forall j\in [k-1]$, $(x_{j},x_{j+1})$ appears in $(e_1,\cdots,e_{i-1})$ in $A$, and $(x_{j+1},x_j)$ appears in $(e_i,\cdots,e_{2n-2})$.
\item If $\{p,q\}$ is a tree edge but neither $(p,q)$ nor $(q,p)$ appears in the path $B$, then both $(p,q)$ and $(q,p)$ appear simultaneously in $(e_1,\cdots,e_{i-1})$ or appear simultaneously in $(e_i,\cdots,e_{2n-2})$.
\end{enumerate}
\end{observation}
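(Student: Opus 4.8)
\textbf{Proof proposal for Observation~\ref{obs:path_euler_tour}.}

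The plan is to reduce the statement to the structural description of an Euler tour in terms of subtree intervals that we already established in Observation~\ref{obs:subtree_size}. First I would fix the root $u$ and the node $v$, and let $i = \first_A(v)$ be the first appearance of $v$. By Observation~\ref{obs:subtree_size}, if $j = \last_A(v)$, then the contiguous block $(e_i, e_{i+1}, \dots, e_j)$ is precisely an Euler tour of the subtree rooted at $v$; in particular every tree edge with both endpoints in the subtree rooted at $v$ has both of its directed copies inside this block, and every tree edge with at most one endpoint in that subtree has neither directed copy strictly inside $(e_{i+1},\dots,e_{j-1})$ and must have its copies either both before position $i$ or both after position $j$ — unless it is one of the edges on the path from $u$ to $v$, which is the boundary case we need to track carefully.

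The key step is to induct on the depth $k-1$ of $v$, i.e. on the length of the simple path $B = ((x_1,x_2),\dots,(x_{k-1},x_k))$ from $u = x_1$ to $v = x_k$. For the base case $k = 1$ we have $v = u$, $i = 1$, and the path is empty, so claim (1) is vacuous and claim (2) says every tree edge has both directed copies somewhere in $(e_1,\dots,e_{2n-2})$, which is the definition of an Euler tour. For the inductive step, write $w = x_{k-1} = \p(v)$ (using Observation~\ref{obs:parent_by_euler_tour}, the parent of $v$ is indeed $v_{\first(v)-1}$, which is consistent with $w$ being the last node on $B$ before $v$). By the induction hypothesis applied to $w$ with first appearance $i' = \first_A(w)$: the edges $(x_1,x_2),\dots,(x_{k-2},x_{k-1})$ all appear in $(e_1,\dots,e_{i'-1})$, their reverses appear in $(e_{i'},\dots,e_{2n-2})$, and every tree edge not on the $u$-to-$w$ subpath has both copies together either before $i'$ or at/after $i'$. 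It then remains to locate the edge $(w,v)$ and its reverse $(v,w)$, and to see where the edges of the subtree rooted at $w$ that lie \emph{below} $v$ sit relative to $i = \first_A(v)$: these are exactly the edges that the $w$-level induction hypothesis "lumped together" in the at/after-$i'$ bucket but that now get split by the finer cut at $i$. Here I would invoke Observation~\ref{obs:subtree_size} once more: the block for the subtree rooted at $w$ is $(e_{i'}, \dots, e_{\last_A(w)})$, and within it the first appearance of $v$ at position $i$ opens the sub-block $(e_i,\dots,e_j)$ for the subtree rooted at $v$; everything in the $w$-block before $e_i$ is an edge with at most one endpoint in $v$'s subtree (so both its copies lie before $i$), the copies inside $(e_i,\dots,e_j)$ are the subtree-of-$v$ edges, and $(w,v)$ itself is $e_{i-1}$ while $(v,w)$ is $e_j$, hence after $i$. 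Combining with the induction hypothesis for $w$ gives both claims for $v$.

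The main obstacle I expect is bookkeeping the boundary edge $(w,v)$ cleanly: it is the unique edge that is "on the path $B$" for $v$ but was counted among the subtree edges of $w$ for the statement at level $w$, so the induction hypothesis at $w$ does not directly say which side of $i'$ vs. $i$ it falls on — one has to use $\first_A(v) = \last_A(w') + \text{something}$ type relations, or more cleanly, just re-derive directly from Observation~\ref{obs:subtree_size} that $e_{i-1} = (w,v)$ and $e_j = (v,w)$ where $j = \last_A(v)$. An alternative that sidesteps the induction entirely, which I would fall back on if the induction gets unwieldy, is to argue purely from the interval structure: an Euler tour visits each node's subtree in one contiguous run, so for a tree edge $\{p,q\}$ with $p$ the parent of $q$, both directed copies of $\{p,q\}$ lie in the subtree-of-$p$ interval, with $(p,q)$ immediately preceding $q$'s interval and $(q,p)$ immediately following it; then claim (1) follows because $x_{j+1}$ is a descendant of $x_j$ on the path to $v$, so $v$'s first appearance $i$ lies strictly inside $x_{j+1}$'s interval, forcing $(x_j,x_{j+1})$ before $i$ and $(x_{j+1},x_j)$ after, and claim (2) follows because a tree edge not on $B$ either lies entirely in a subtree that $i$ does not enter (both copies on one side) or lies "above" $v$ but off the path, whose subtree interval either contains $i$ with both path-consistent copies straddling — but that case is exactly excluded by "not on $B$", so both copies are on the same side.
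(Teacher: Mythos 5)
The paper states this observation without proof (it is presented as folklore, with a pointer to standard textbooks), so there is no official argument to compare against; I am judging your proposal on its own. Your second, ``fallback'' route --- the direct interval argument --- is the right proof and is essentially complete: for every non-root node $q$ with parent $p$, Observations~\ref{obs:parent_by_euler_tour} and~\ref{obs:subtree_size} give that the two directed copies of $\{p,q\}$ sit at positions $\first(q)-1$ and $\last(q)+1$, bracketing the contiguous block $(e_{\first(q)},\dots,e_{\last(q)})$ of $q$'s subtree. Claim (1) follows since $v$ lies in the subtree of each $x_{j+1}$, so $\first(x_{j+1})\le i$ and $\last(x_{j+1})\ge \last(v)\ge i-1$, placing $(x_j,x_{j+1})$ at position $\le i-1$ and $(x_{j+1},x_j)$ at position $\ge i$. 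Claim (2) follows because the bracket $[\first(q)-1,\last(q)+1]$ straddles the cut between $e_{i-1}$ and $e_i$ only when $\first(q)\le i\le \last(q)+1$, which forces $v$ to lie in $q$'s subtree (or $v=q$), i.e.\ forces $\{p,q\}$ onto $B$. Two small corrections to your phrasing: $i$ need not lie \emph{strictly} inside $x_{j+1}$'s interval --- for $x_{j+1}=v$ it is the left endpoint, and for $v$ a leaf the interval is empty --- but the position bounds above still go through; and your primary route, the induction on depth, is left genuinely incomplete at exactly the boundary-edge bookkeeping you flag, so I would drop it and present only the interval argument. With those adjustments the proof is correct.
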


\begin{corollary}[Total weight on the path via Euler tour]\label{cor:weight_on_the_path}
Let $A=(e_1,e_2,\cdots,e_{2n-2})$ be an Euler tour of a tree $T$ with respect to the root $u$.
Let $w$ be a weight function such that every edge $\{p,q\}$ has a weight $w(\{p,q\})$ in the tree.
Let $w'$ be the weight function of directed edge such that $\forall \text{edge }\{p,q\}$, if $p=\p(q)$, then $w'(p,q)=w(\{p,q\})$ and $w'(q,p)=-(w\{p,q\})$.
Let $B=((x_1,x_2),(x_2,x_3),\cdots,(x_{k-1},x_k))$ be a simple directed path from $u$ to $v$, then the total edge weight on the path $\sum_{i=2}^k w(\{x_{i-1},x_{i}\})=\sum_{j=1}^{\first(x_k)-1} w'(e_j)$, i.e., the prefix sum over $A$ with weights $w'$.
\end{corollary}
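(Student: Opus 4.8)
My plan is to reduce the claim to the combinatorial structure of the Euler tour established in Observation~\ref{obs:path_euler_tour}, together with a bookkeeping argument for how the signed weights $w'$ cancel. First I would recall the setup: $A = (e_1, e_2, \dots, e_{2n-2})$ is an Euler tour of $T$ rooted at $u$, each undirected edge $\{p,q\}$ appears exactly twice in $A$, once as $(p,q)$ and once as $(q,p)$, and exactly one of these two orientations points \emph{away} from the root (from parent to child) while the other points \emph{toward} the root. The signed weight function $w'$ is defined so that the parent-to-child orientation carries $+w(\{p,q\})$ and the child-to-parent orientation carries $-w(\{p,q\})$. The quantity we want to understand is the prefix sum $\sum_{j=1}^{\first(x_k)-1} w'(e_j)$, i.e., the total signed weight of all directed edges appearing \emph{strictly before} the first appearance of $v = x_k$.

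The key step is a case split on the tree edges, exactly along the lines of Observation~\ref{obs:path_euler_tour}. Let $i = \first(v)$. For an undirected tree edge $\{p,q\}$ that does \emph{not} lie on the simple path $B$ from $u$ to $v$, part (2) of Observation~\ref{obs:path_euler_tour} says that both directed copies $(p,q)$ and $(q,p)$ fall on the same side of position $i$ — either both among $e_1, \dots, e_{i-1}$ or both among $e_i, \dots, e_{2n-2}$. Since one copy contributes $+w(\{p,q\})$ and the other $-w(\{p,q\})$, such an edge contributes a net of $0$ to the prefix sum $\sum_{j=1}^{i-1} w'(e_j)$ (it either contributes $+w - w = 0$ or nothing at all). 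For a tree edge on the path $B$, say the edge $\{x_\ell, x_{\ell+1}\}$ with $x_\ell$ the parent of $x_{\ell+1}$ (the path descends from the root, so this is the correct orientation of the parent relation), part (1) of Observation~\ref{obs:path_euler_tour} tells us that the parent-to-child directed edge $(x_\ell, x_{\ell+1})$ appears among $e_1, \dots, e_{i-1}$, while the reverse $(x_{\ell+1}, x_\ell)$ appears among $e_i, \dots, e_{2n-2}$. Hence only the $+w(\{x_\ell, x_{\ell+1}\}) = +w'(x_\ell, x_{\ell+1})$ copy lands in the prefix, contributing exactly $w(\{x_\ell, x_{\ell+1}\})$. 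Summing over $\ell = 1, \dots, k-1$ gives $\sum_{\ell=2}^{k} w(\{x_{\ell-1}, x_\ell\})$, which is precisely the total (unsigned) edge weight along $B$. Adding the zero contributions from off-path edges completes the identity.

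The main obstacle — really the only non-routine point — is making sure the orientation conventions line up: one must check that the parent-child direction along the path $B$ (which goes \emph{from} the root) coincides with the parent pointer direction used to define $w'$, so that the surviving copy of each path edge in the prefix is the one carrying the $+$ sign rather than the $-$ sign. This is immediate once one observes that $B$ is the unique root-to-$v$ path, so each step $(x_{\ell}, x_{\ell+1})$ has $x_\ell = \p(x_{\ell+1})$, but it is worth stating explicitly. A minor secondary point is handling the case $v = u$ (the root itself): then $\first(v) = 1$, the path $B$ is trivial, the prefix sum is empty, and both sides equal $0$, so the identity holds vacuously. With these observations in place the proof is a short paragraph invoking Observation~\ref{obs:path_euler_tour} twice and collecting terms.
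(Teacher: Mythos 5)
Your proposal is correct and follows essentially the same argument as the paper's (much terser) proof: both invoke Observation~\ref{obs:path_euler_tour} to show that off-path edges contribute zero to the prefix sum (both orientations cancel or neither appears) while each path edge contributes exactly its positive weight. Your additional checks of the orientation convention and the degenerate case $v=u$ are sound elaborations rather than a different route.
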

\begin{proof}
Due to Observation~\ref{obs:path_euler_tour}, for the prefix of $A$, if an edge is on the path,  we will count its edge weight once, otherwise, the non-path edge weight is either cancelled since both directions appeared, or none of the directions is appeared so we did not count it.
\end{proof}

\subsubsection{MPC Algorithms for Trees via Euler Tour}

\begin{lemma}[Change root of the Euler tour]\label{lem:change_root}
Given an arbitrary Euler tour $A$ of a tree $T$ over $n$ nodes $V$, and given any node $v\in V$, there is a fully scalable MPC algorithm which outputs an Euler tour $A'$ of tree $T$ with respect to the root $v$ in $O(1)$ rounds and $O(n)$ total space.
\end{lemma}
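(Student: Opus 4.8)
The plan is to exploit Observation~\ref{obs:circular_shift}: every circular shift of the Euler tour $A$ is again an Euler tour of $T$, and by definition a circular shift that begins with a directed edge of the form $(v,\cdot)$ is an Euler tour with respect to the root $v$. So the whole task reduces to (i) locating one index $j$ at which a directed edge $(v,w)$ occurs in $A$, and (ii) performing the corresponding circular shift of the stored sequence. First I would dispose of the trivial case $n=1$: then $T$ has no edges, the Euler tour is the empty sequence, and it is vacuously an Euler tour rooted at $v$. For $n\ge 2$ the node $v$ has at least one neighbor, so the directed edge from $v$ to that neighbor appears exactly once in $A$; in particular $\first_A(v)$ is well defined and at least one stored tuple $(i,e_i)$ has $e_i=(v,w)$.

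I would store $A$ in the standard way as tuples $(i,e_i)$ distributed over the machines. Each machine scans its local tuples and computes the minimum index $i$ among those whose directed edge has first coordinate $v$ (returning $+\infty$ if it holds none). Taking the global minimum of these local values --- for instance by sorting via Theorem~\ref{thm:sorting}, or by a trivial $O(1)$-round aggregation --- yields $j:=\first_A(v)$ in $O(1)$ rounds and $O(n)$ total space, and this step is fully scalable. I then broadcast the single value $j$ together with $m:=2n-2$ to every machine using Theorem~\ref{thm:broadcasting}, again in $O(1)$ rounds and $O(n)$ space. Now every machine can, by purely local computation, relabel each of its tuples $(i,e_i)$ to $\big(((i-j)\bmod m)+1,\ e_i\big)$, and a final application of Theorem~\ref{thm:sorting} reorders the tuples so that the resulting sequence $A'$ is stored in index order on consecutive machines.

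By construction $A'=(e_j,e_{j+1},\dots,e_m,e_1,\dots,e_{j-1})$ is the circular shift of $A$ starting at position $j$, hence an Euler tour of $T$ by Observation~\ref{obs:circular_shift}, and its first directed edge is $e_j=(v,w)$, so $A'$ is an Euler tour of $T$ with respect to the root $v$, as required. Every step uses $O(1)$ rounds and $O(n)$ total space and is fully scalable, which gives the lemma. I do not expect a genuine obstacle here; the only points requiring care are the degenerate case $n=1$ and the remark that for $n\ge 2$ some directed edge out of $v$ is guaranteed to occur in $A$, so that the index $j$ we search for exists.
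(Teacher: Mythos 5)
Your proof is correct and follows essentially the same route as the paper: locate $\first_A(v)$ via sorting/aggregation, broadcast that index, and apply Observation~\ref{obs:circular_shift} to conclude that the circular shift starting there is an Euler tour rooted at $v$. The only additions are the explicit handling of the $n=1$ case and the existence remark for the index $j$, which are harmless refinements of the paper's argument.
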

\begin{proof}
We use sorting (Theorem~\ref{thm:sorting}) to find the first appearance of $v$ in $A=\{e_1,e_2,\cdots,e_{2n-2}\}$, and send $i=\first_A(v)$ to all machines by broadcasting algorithm (Theorem~\ref{thm:broadcasting}).
These steps only take $O(1)$ rounds and $O(n)$ total space, and they are fully scalable.
Then compute the circular shift $A'=(e_i,e_{i+1},\cdots,e_{2n-2},e_1,\cdots,e_{i-1})$ of $A$.
Since every machine learns $\first_A(v)$, computing the circular shift only requires local computation.
According to Observation~\ref{obs:circular_shift}, the circular shift $A'$ is also an Euler tour of $T$.
Since $A'$ starts from node $v$, $A'$ is an Euler tour of $T$ with respects to the root $v$.
\end{proof}

\begin{lemma}[Subtree sizes]\label{lem:subtree_sizes}
Given an arbitrary Euler tour $A$ of a tree $T$ over $n$ nodes $V$ with root $v$, there is a fully scalable MPC algorithm which outputs the size of each subtree $u$ in $O(1)$ rounds and $O(n)$ total space.
\end{lemma}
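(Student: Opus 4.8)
The plan is to reduce the computation of all subtree sizes to the computation, for every node $u$, of its first appearance $\first_A(u)$ and last appearance $\last_A(u)$ in the tour, and then to read off $\size(u)$ from Observation~\ref{obs:subtree_size}. We may assume that $A$ is an Euler tour with respect to the root $v$; otherwise we first apply Lemma~\ref{lem:change_root}, which costs an additional $O(1)$ rounds and $O(n)$ total space and is fully scalable. Recall that $A$ is stored as tuples $(i,e_i)$ with $e_i=(v_i,v_{i+1})$ for $i\in[2n-2]$, where we use the convention $v_{2n-1}=v_1=v$. By Observation~\ref{obs:subtree_size}, if $i=\first_A(u)$ and $j=\last_A(u)$, then $(e_i,e_{i+1},\dots,e_j)$ is an Euler tour of the subtree rooted at $u$, so that subtree has exactly $(j-i+1)/2+1$ nodes; this formula also correctly returns $1$ when $u$ is a leaf (in which case $j=i-1$) and $n$ for the root (in which case $i=1$, $j=2n-2$). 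Hence, once $\first_A(u)$ and $\last_A(u)$ have been computed and co-located for each $u$, the value $\size(u)=(\last_A(u)-\first_A(u)+1)/2+1$ is obtained by a single local arithmetic step.

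To compute $\first_A$ and $\last_A$: from each stored tuple $(i,(v_i,v_{i+1}))$ emit the two tuples $(v_i,i)$, marked as a \emph{tail occurrence}, and $(v_{i+1},i)$, marked as a \emph{head occurrence}; this produces $O(n)$ tuples and uses only local computation. Note that $\first_A(u)$ is the minimum index over tail occurrences of $u$, and $\last_A(u)$ is the maximum index over head occurrences of $u$. To extract these per node, sort all tail occurrences lexicographically by (node label, index) in increasing order via Theorem~\ref{thm:sorting}; after sorting, all tail occurrences of a given node are consecutive and the first one holds the minimum index, i.e.\ $\first_A(u)$. Selecting, within each node's block, the first element is local except across machine boundaries, which is resolved in one further round by having each machine compare the node label of its first element against that of the last element on the preceding machine; equivalently one may invoke Theorem~\ref{thm:index_in_sets} to rank the elements inside each per-node set and keep rank $1$. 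The computation of $\last_A(u)$ is symmetric, sorting the head occurrences by (node label, index) with the index key in decreasing order. All of these steps are $O(1)$ rounds, $O(n)$ total space, and fully scalable, being a constant number of applications of sorting and of PRAM-simulated simultaneous access (Theorems~\ref{thm:sorting} and~\ref{thm:pram_simulation}).

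Finally, co-locate $\first_A(u)$ and $\last_A(u)$ for each $u$ by sorting the two resulting mappings on their shared key (the node label) via Theorem~\ref{thm:sorting}, and on each machine output $\size(u)=(\last_A(u)-\first_A(u)+1)/2+1$; correctness is immediate from Observation~\ref{obs:subtree_size} as above. I expect no serious obstacle in this lemma. The only points demanding care are (i) the interval bookkeeping --- in particular the wrap-around convention $v_{2n-1}=v_1$ so that the root's head occurrence at position $2n-2$ is counted, and the degenerate leaf case $\last_A(u)=\first_A(u)-1$ where the formula must still yield $1$; and (ii) confirming that the per-node minimum/maximum aggregation truly runs in $O(1)$ rounds, which it does because a single global sort makes each node's occurrences contiguous, after which the extremum is read off with $O(1)$ additional rounds of boundary communication (or via Theorem~\ref{thm:index_in_sets}).
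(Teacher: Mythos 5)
Your proposal is correct and follows essentially the same route as the paper: reduce the problem to computing $\first_A(u)$ and $\last_A(u)$ for every node via sorting (after re-rooting the tour with Lemma~\ref{lem:change_root} if necessary), then apply the formula $\size(u)=(\last_A(u)-\first_A(u)+1)/2+1$ from Observation~\ref{obs:subtree_size}. The extra detail you give on extracting per-node minima/maxima after a single global sort is a faithful elaboration of what the paper leaves implicit.
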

\begin{proof}
We suppose $A$ is an Euler tour of $T$ with respect to the root $v$.
Otherwise, we can apply Lemma~\ref{lem:change_root} to make $A$ satisfiy above condition, and the operation is fully scalable, takes $O(1)$ rounds and $O(n)$ space.

We use sorting (Theorem~\ref{thm:sorting}) to compute $\first(u)$ and $\last(u)$ for each $u\in V$.
It is fully scalable and only takes $O(1)$ rounds and $O(n)$ total space.
Then, for each $u$ in parallel, we simultaneously query both $\first(u)$ and $\last(u)$, which can be done in the fully scalable setting by Theorem~\ref{thm:pram_simulation} in $O(1)$ rounds and $O(n)$ total space.
According to Observation~\ref{obs:subtree_size}, $\size(u)=(\last(u)-\first(u)+1)/2+1$.
\end{proof}

\begin{lemma}[Euler tour in consistent with ordering of children]\label{lem:mpc_euler_by_children}
Given a tree $T$ rooted at $v$ over $n$ nodes $V$ with parent pointers $\p$, and given a mapping $\child_{\p}(\cdot,\cdot)$ (Definition~\ref{def:order_children}), if we also have an arbitrary Euler tour $A'$ of $T$, there is an MPC algorithm which constructs an Euler tour $A$ with respect to the root $v$ such that for any node $u$ in the tree, it satisfies $\first(\child(u,1))<\first(\child(u,2))<\first(\child(u,3))<\cdots<\first(\child(u,|\child(u)|))$.
In addition, for any non-root node $u$, let the path $((x_1,x_2),(x_2,x_3),\cdots,(x_{k-1},x_k))$ be the unique simple path from the root $v$ to $u$ (i.e., $x_1=v,x_k=u$), then the position of the (directed) edge $(\p(u),u)$ in $A$ is:
\begin{align*}
p=\sum_{l=2}^k \left(1+\sum_{i=1}^{\rank(x_l)-1}\size(\child(x_{l-1},i))\right),
\end{align*}
and the position of the (directed) edge $(u,\p(u))$ is $p+2\cdot (\size(x_k)) -1$.
In addition, the algorithm is fully scalable, and it takes $O(1)$ rounds and $O(n)$ total space.
\end{lemma}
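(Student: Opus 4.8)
The plan is to exploit the closed-form expressions for edge positions already established in Lemma~\ref{lem:euler_tour_by_ordered_chlildren}: the Euler tour $A$ consistent with the child ordering $\child_{\p}$ and rooted at $v$ is unique, and for every non-root $u$ the position of the directed edge $(\p(u),u)$ equals $p(u):=\sum_{l=2}^{k}\gamma(x_l)$ along the root-to-$u$ path $x_1=v,x_2,\dots,x_k=u$, where $\gamma(w):=1+\sum_{i=1}^{\rank(w)-1}\size(\child(\p(w),i))$ is a quantity depending only on $\rank(w)$ and the subtree sizes of the siblings of $w$ that precede it, while the position of $(u,\p(u))$ is $p(u)+2\,\size(u)-1$. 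Thus it suffices to compute $\size(u)$ and $p(u)$ for every node $u$ in $O(1)$ rounds and $O(n)$ total space; then we emit the $2n-2$ tuples (position, directed edge) and sort them by position using Theorem~\ref{thm:sorting}. By Lemma~\ref{lem:euler_tour_by_ordered_chlildren} these positions form a permutation of $[2n-2]$ and the resulting sequence is exactly the desired Euler tour, so correctness is immediate once the two quantities are in hand.

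First I would normalize the given tour $A'$ to be an Euler tour with respect to the root $v$ via Lemma~\ref{lem:change_root} ($O(1)$ rounds, $O(n)$ space), and then compute all subtree sizes $\size(u)$ via Lemma~\ref{lem:subtree_sizes}. Next, for each non-root $u$ I compute $\mathrm{left}(u):=\sum_{i=1}^{\rank(u)-1}\size(\child(\p(u),i))$, the total subtree size of the earlier siblings of $u$. Since the children of each node form a ranked set (the ranks being supplied by $\child_{\p}$), grouping nodes by their parent and invoking the set prefix-sum primitive (Theorem~\ref{thm:prefix_sum_in_sets}, with the rank from Theorem~\ref{thm:index_in_sets} if needed) with item weights equal to subtree sizes yields all the $\mathrm{left}(u)$, hence all $\gamma(u)=1+\mathrm{left}(u)$, simultaneously in $O(1)$ rounds and $O(n)$ space. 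Define the directed-edge weight function $w'(\p(u),u):=\gamma(u)$ and $w'(u,\p(u)):=-\gamma(u)$, which is the antisymmetric lift of the tree weight $w(\{\p(u),u\}):=\gamma(u)$.

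The crucial step is computing $p(u)$, which is precisely the total $w$-weight of the simple path from $v$ to $u$. Rather than walking these paths (which would cost $\Omega(\log(\mathrm{depth}))=\Omega(\log n)$ rounds), I would invoke Corollary~\ref{cor:weight_on_the_path}: since $A'$ is now an Euler tour rooted at $v$, the $w$-weight of the $v$-to-$u$ path equals the prefix sum $\sum_{j=1}^{\first_{A'}(u)-1}w'(e_j)$ of $A'$ under the weights $w'$, the off-path contributions cancelling by Observation~\ref{obs:path_euler_tour}. Computing $\first_{A'}(u)$ for all $u$ by sorting, and then all these prefix sums as a single ranked prefix-sum over the sequence $A'$ (Theorem~\ref{thm:prefix_sum_in_sets}), again takes $O(1)$ rounds and $O(n)$ total space. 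This delivers every $p(u)$, hence the position of every directed edge, and a final sort produces $A$. Every subroutine used is fully scalable, so the overall algorithm is fully scalable with $O(1)$ rounds and $O(n)$ total space. The main obstacle — and where essentially all the content lies — is exactly this reduction of the inherently sequential root-to-node path sums to one Euler-tour prefix sum via the cancellation property of Corollary~\ref{cor:weight_on_the_path}; once that is in place, the rest is routine bookkeeping with the standard MPC primitives.
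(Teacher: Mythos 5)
Your proposal is correct and follows essentially the same route as the paper's proof: normalize $A'$ to be rooted at $v$ via Lemma~\ref{lem:change_root}, get subtree sizes from Lemma~\ref{lem:subtree_sizes}, compute each $\gamma(u)=1+\sum_{i<\rank(u)}\size(\child(\p(u),i))$ with the subset prefix-sum primitive, and then obtain all root-to-node path sums in $O(1)$ rounds as a single prefix sum over $A'$ with antisymmetric weights via Corollary~\ref{cor:weight_on_the_path}. The only (immaterial) difference is that you make the final materialization of $A$ by sorting the (position, edge) pairs explicit, which the paper leaves implicit.
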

\begin{proof}
Note that the formulation of the position of each directed edge $(\p(u),u),(u,\p(u))$ is given, and Lemma~\ref{lem:euler_tour_by_ordered_chlildren} already showed that the resulting sequence is indeed an Euler tour with stated properties.
Therefore, our goal is to provide an MPC algorithm to efficiently compute the position of each directed edge in the resulting Euler tour $A$.

We can assume $A'$ is an Euler tour of $T$ with respect to the root $v$.
Otherwise, we apply Lemma~\ref{lem:change_root} to make $A'$ be with respect to the root $v$ which takes $O(1)$ rounds, $O(n)$ total space in the fully scalable setting.
We apply Lemma~\ref{lem:subtree_sizes} on $A'$ to compute the size of each subtree in $O(1)$ rounds, $O(n)$ total space in the fully scalable setting.
Next, for each node $u$, we want to compute $w(\{u,\p(u)\})=1+\sum_{i=1}^{\rank(u)-1}\size(\child(\p(u),i))$ for each non-root $u$.
To achieve this, for each mapping $y=\child(x,i)$ in parallel, we create $\rank(y)=i$, add $y$ into a set $S_x$, and simultaneously access the value $\size(y)$.
The simultaneous access operation can be done by Theorem~\ref{thm:pram_simulation}.
Then we apply the subset prefix sum algorithm (Theorem~\ref{thm:prefix_sum_in_sets}) over $\{S_x\}$ and thus each non-root $u$ learns the value $\sum_{i=1}^{\rank(u)-1}\size(\child(\p(u),i))$, and it sets the edge weight $w(\{u,\p(u)\})=1+\sum_{i=1}^{\rank(u)-1}\size(\child(\p(u),i))$.
The overall number of rounds needed to compute $w(\cdot)$ is $O(1)$, and it requires $O(n)$ total space, and is fully scalable.

Finally, we want to compute the position of each directed edge $(\p(u),u)$, $(u,\p(u))$ in the final sequence $A$.
Note that the position of $(\p(u),u)$ is $\sum_{x\in V:x\not=v\text{ and } x\text{ is on the path from the root }v\text{ to }u}w(x)$.
Suppose $A'=(e'_1,e'_2,\cdots,e'_{2n-2})$.
If $e'_i=(v'_i,v'_{i+1})$ satisfies $v'_i=\p(v'_{i+1})$, let $w'_i=w(\{v'_i,v'_{i+1}\})$, otherwise, let $w'_i=-w(\{v'_i,v'_{i+1}\})$.
Then according to Corollary~\ref{cor:weight_on_the_path}, the position of $(\p(u),u)$ is a prefix sum $\sum_{i=1}^{\first_{A'}(u)} w'_i$.
Note that computing $w'$ only requires simultaneous accesses which can be done in $O(1)$ rounds, $O(n)$ total space, and in fully scalable setting according to Theorem~\ref{thm:pram_simulation}.
Then computing prefix sum $\sum_{i=1}^j w'_i$ can be done using Theorem~\ref{thm:prefix_sum_in_sets}.
It takes $O(1)$ rounds, $O(n)$ total space, and is fully scalable.
Note that the position of $(u,\p(u))$ is the position of $(\p(u),u)$ plus $2\cdot\size(u)-1$.
Since $\size(u)$ is already computed, all computations of positions $(\p(u),u)$ can be done in parallel with an additional call of simultaneous access (Theorem~\ref{thm:pram_simulation}).

Therefore, the overall algorithm only takes $O(1)$ rounds, $O(n)$ total space.
The algorithm is fully scalable.
\end{proof}

\subsubsection{Euler Tour Join}

Now consider the following problem.
We define the problem of joining Euler tours as following:
\begin{itemize}
    \item \textbf{Inputs:}
    \begin{enumerate}
        \item A partition $\cC=\{C_1,C_2,\cdots,C_k\}$ of a node set $V$, represented by a leader mapping $\ell:V\rightarrow V$ where two nodes $u,v$ are in the same component iff $\ell(u)=\ell(v)$.
        Each component $C_i$ is represented by its leader node, i.e., $u\in C_i$ and $\ell(u)=u$. 
        \item A spanning tree $T=(\cC, E)$ over components $C_1,C_2,\cdots,C_k$ where each node in $T$ corresponds to a component $C_i$. 
        An Euler tour $A$ of $T$ is also given.
        \item An edge mapping $g,$ where $\forall e=\{C_i,C_j\}\in E$, $g(e)=\{x,y\}\subseteq V$ and $x\in C_i,y\in C_j$, we sometimes also abuse the notation and use directed version $g(C_i,C_j)=(x,y)$ and $g(C_j,C_i)=(y,x)$.
        We use $g^{-1}$ to indicate the inverse mapping.
        \item A spanning tree $T_i=(C_i,E_i)$ for each component $i\in[k]$, and an Euler tour $A_i$ of $T_i$.
    \end{enumerate}
    \item \textbf{Outputs:}
    \begin{enumerate}
        \item A spanning tree $T'=\left(V,\{g(e)\mid e\in E\}\cup\bigcup_{i\in[k]}E_i\right)$ over $V$, and an Euler tour $A'$ of $T'$.
    \end{enumerate}
\end{itemize}
In Algorithm~\ref{alg:euler_join}, we show a novel MPC algorithm that solves the Euler tour join problem efficiently.

\begin{algorithm}[tb]
   \caption{$\textsc{EulerTourJoin}(\cC, T,A, g, \{T_i\}, \{A_i\})$: Generating a spanning tree of $V$ and the Euler tour of the spanning tree.}
   \label{alg:euler_join}
\begin{algorithmic}[1]
\small
    \STATE \textbf{Inputs:} $\cC=\{C_1,C_2,\cdots,C_k\}$ is a partition over a node set $V$, $T=(\cC, E)$ is a spanning tree over $\cC$, $A$ is an arbitrary Euler tour of $T$, $g$ is a mapping such that $\forall e=\{C_i,C_j\}\in E, g(e)=\{x,y\}$ where $x\in C_i, y\in C_j$, and $T_i=(C_i,E_i)$ is a spanning tree of $C_i$ and $A_i$ is an Euler tour of $T_i$.
    \STATE Let $\hat{E}=\{g(e)\mid e\in E\}$. 
    Let $T'=(V,\hat{E}\cup\bigcup_{i\in [k]}E_i)$ be the spanning tree over $V$.
    \STATE Let $\hat{V} = \{x \in V \mid \exists y, \{x,y\}\in \hat{E}\}$.
    \STATE Choose an arbitrary root $C\in\cC$ for $T$, and shift $A$ to make it be an Euler tour of $T$ with respect to the root $C$.
    \STATE For each $C_i,i\in[k]$, compute its parent $\p_T(C_i)$ in $T$ when root is $C$. 
    \STATE For each $i\in[k]$, let $(x,y)=g(C_i,\p_T(C_i))$, and
    circular shift $A_i$ to make $A_i$ be an Euler tour of $T_i$ with root $x$.
    If $C_i$ is a root, we circular shift $A_i$ to make $A_i$ be an Euler tour of $T_i$ with an arbitrary node from $\hat{V}$ as a root.
    \STATE For each $i\in [k]$, sort all children of $C_i$ in $T$: 
    Consider two children $C_j$ and $C_q$ of $C_i$, if $(x,y)=g(C_i,C_j),(x',y')=g(C_i,C_q)$ and $\first_{A_i}(x)<\first_{A_i}(x')$, then we regard that $C_j$ has smaller rank than $C_q$ among children of $C_i$. If $x=x'$, we can given an arbitrary ordering for $C_j$ and $C_q$.
    \STATE Compute a new Euler tour $\bar{A}$ for $T$ with respect to the root $C$, satisfying that $\forall C_i\in\cC$, $\first_{\bar{A}}(\child( C_i,1))<\first_{\bar{A}}(\child( C_i,2))<\cdots<\first_{\bar{A}}(\child( C_i,|\child( C_i)|))$.
    \STATE Compute $\hat{A}$: for each $(C_p,C_q)$ appeared in the sequence of $\bar{A}$, replace it with $g(C_p,C_q)$.
    \STATE Compute $A_x$ for each $x\in \hat{V}$: Suppose $x$ is in $C_i$, and $A_i=((v_1,v_2),(v_2,v_3),\cdots,(v_{2\cdot|C_i|-2},v_1))$, let $A_x=((v_{\first_{A_i}(x)},v_{\first_{A_i}(x)+1}),(v_{\first_{A_i}(x)+1},v_{\first_{A_i}(x)+2}),\cdots,(v_{j},v_{(j\bmod (2\cdot|C_i|-2))+1}))$ where $j> \first_{A_i}(x)$ is the smallest value such that either $j=2\cdot|C_i|-2$ or $j=\first_{A_i}(x')$ for some $x'\in \hat{V}$.
    \STATE Compute a mapping $f:\hat{V}\rightarrow [2\cdot |\mathcal{C}|-2]$.
    Suppose $\hat{A}=((v'_1,v'_2),(v'_2,v'_3),\cdots,(v'_{2\cdot |\mathcal{C}|-2},v'_1))$.
    For each $x\in \hat{V}$, compute $f(x)=j$ where $j\in [2\cdot |\mathcal{C}|-2]$ is the largest value such that $v'_{(j\bmod(2\cdot |\cC|-2 ))+1}=x$.
 \STATE Compute a sequence $A'$ by plugging $A_x$ for each $x\in \hat{V}$ into $\hat{A}$, where $A_x$ should be inserted directly after $(v'_{f(x)},v'_{(f(x)\bmod(2\cdot |\cC|-2 ))+1})$ in $\hat{A}$.
 \STATE \textbf{Output:} $T'$ and its Euler tour $A'$.
\end{algorithmic}
\end{algorithm}

\begin{lemma}[Correctness of Algorithm~\ref{alg:euler_join}]\label{lem:euler_join_correctness}
The output $A'$ of Algorithm~\ref{alg:euler_join} is an Euler tour of the output $T'$, and $T'$ is a spanning tree of $V$, where the edges of $T'$ is the union of edges in $T_1,T_2,\cdots,T_k$ and edges $\{g(e)\mid e\in E\}$.
\end{lemma}
\begin{proof}
Since each $T_i$ is a spanning tree of $C_i$, and $T$ is a spanning tree of $\{C_1,C_2,\cdots,C_k\}$ and $\hat{E}$ are inter-cluster edges with respect to $T$, we know that $T'$ is a spanning tree of $\bigcup_{i\in[k]}C_i = V$, and the edges of $T'$ is union of edges in $T_1,T_2,\cdots,T_k$ and $\hat{E}=\{g(e)\mid e\in E\}$.

Since $\bar{A}$ is an Euler tour of $T$ which means that both directions of each edge in $\hat{E}$ must appear in $\hat{A}$, which means that every node in $\hat{V}$ also appears in $\hat{A}$.
According to our construction of $A'$, both directions of every edge in $T'$ must appear in $A'$ and $|A'|=|A|+\sum_{x\in\hat{V}}|A_x|=|A|+\sum_{i\in[k]]}|A_i|=2|V|-2$ which implies that each direction of each edge in $T'$ also appeared exactly once.

In the remaining of the proof, we only need to show that $A'$ is indeed a cycle.
Consider an arbitrary $(u,v)$ appeared in $\hat{A}$.
By our construction of $\hat{A}$, $\{u,v\}$ must be an inter-cluster edge.
There are several cases.
Let $u$ be in the cluster $C_i$ and $v$ be in the cluster $C_j$.
\paragraph{Case 1.1: $C_j=\p_{T}(C_i)$ and there is some $(p,q)$ after $(u,v)$ in $\hat{A}$ satisfying $q=v$.} 
Suppose $C_i=\child_{\p_T}(C_j,l)$ for some $l$.
If there is some $(p,q)$ after $(u,v)$ in $\hat{A}$ satisfying $q=v$, then there must be a $C_r$ which contains $p$, and $\p_T(C_r)=C_j$, and $C_r=\child_{\p_T}(C_j,l')$ for some $l'>l$.
This means that $(x,y)=g(C_j,\child_{\p_T}(C_j,l+1))$ satisfies that $x=v$.
Since in $\bar{A}$, $(C_j,\child_{\p_T}(C_j,l+1))$ directly follows $(C_i,C_j)$, $(x,y)$ directly follows $(u,v)$ in $\hat{A}$, and thus $(x,y)$ directly follows $(u,v)$ in $A'$ as well.
Since $x=v$, $(u,v),(x,y)$ is a connected path in $A'$.

\paragraph{Case 1.2: $C_j=\p_{T}(C_i)$ and there is no $(p,q)$ after $(u,v)$ in $\hat{A}$ satisfying $q=v$.}
Suppose $C_i$ is the last child of $C_j$. 
Then $A_v$ must be a suffix of $A_j$.
If $C_j$ is the root of $T$, then $A_v$ is a suffix of $A'$, and since $A'$ starts with a node which is also the first node in $A_j$, and $A_j$ itself is a cycle, we know that it is valid to put $A_v$ at the end of $A'$.
Otherwise, the edge follow $(C_i,C_j)$ in $\bar{A}$ is $(C_j,\p_T(C_j))$
Let $(x,y)=g(\p_T(C_j),C_j)$.
Since we circularly shifted $A_j$ to make $A_j$ be an Euler tour of $T_j$ with respect to the root $y$, it means that the first entry in $A_j$ should be $(y,\cdot)$ (i.e., started from $y$), and the last entry in $A_j$ should be $(\cdot,y)$ (i.e., ended with $y$).
Since $(C_j,\p_T(C_j))$ follows $(C_i,C_j)$ in $\bar{A}$, we have $(y,x)$ follows $(u,v)$ in $\hat{A}$.
By our construction of $A'$, we inserted $A_v$ between $(u,v)$ and $(y,x)$.
Since $A_v$ starts from $v$ (by our construction) and ends in $y$ (since $A_v$ is a suffix of $A_j$), $(u,v),A_v,(y,x)$ is a connected path in $A'$.

Suppose $C_i$ is not the last child of $C_j$ and $C_i=\child_{\p_T}(C_j,l)$ for some $l$.
Then the edge that follows $(C_i,C_j)$ in $\bar{A}$ must be $(C_j,\child_{\p_T}(C_j,l+1))$.
Let $(x,y)=g(C_j,\child_{\p_T}(C_j,l+1))$.
By our construction of $A_v$, we know that $A_v$ must end at some node in $\hat{V}$.
In the following, we show that $A_v$ must end in $x$.
Firstly, we have $\first_{A_j}(x)>\first_{A_j}(v)$ since this is how we used to sort all children of $C_j$ in $T$.
Then, if $A_x$ ends in $z\in\hat{V}$ which is between $\first_{A_j}(x)$ and $\first_{A_j}(v)$, then it means that there is some cluster $C_r$ containing $z$ such that $\p_{T}(C_r)=C_j$ and $g(C_j,C_r)=(z,z')$, and in addition, we have $\first_{A_j}(x)<\first_{A_j}(z)<\first_{A_j}(v)$.
This implies that the rank of $C_r$ among children of $C_j$ should be between $l$ and $l+1$ which leads to a contradiction.
Therefore, $(u,v),A_v,(x,y)$ is a connected path in $A'$.

\paragraph{Case 2.1: $C_i=\p_{T}(C_j)$ and there is some $(p,q)$ after $(u,v)$ in $\hat{A}$ satisfying $q=v$.} 
In this case, it implies that $C_j$ cannot be a leaf because we should be able to find $C_r$ such that $C_r$ contains $p$, $g(C_r,C_j)=(p,q)$ and $\p_T(C_r)=C_j$.
Note that we circular shifted $A_j$ such that $A_j$ starts from $v$.
Based on how we sorted the children of $C_j$, we now that $g(C_j,\child(C_j,1))$ should be some $(x,y)$ where $x=v$.
In $\bar{A}$, $(C_j,\child(C_j,1))$ should follow $(C_i,C_j)$ immediately, which means $(x,y)$ follows $(u,v)$ $\hat{A}$ as well as $A'$.
Since $v=x$, $(u,v),(x,y)$ is a connected path in $A'$.

\paragraph{Case 2.2: $C_i=\p_{T}(C_j)$ and there is no $(p,q)$ after $(u,v)$ in $\hat{A}$ satisfying $q=v$.}
Note that we circular shifted $A_j$ such that $A_j$ starts from $v$.
If $C_j$ is a leaf in $T$, then the edge in $\bar{A}$ following $(C_i,C_j)$ is $(C_j,C_i)$ and thus, the edge in $\hat{A}$ following $(u,v)$ is $(v,u)$.
Since $A_j$ is a cycle starting from $v$, we have $(u,v),A_j,(v,u)$ in $A'$ which is a connected path (or cycle).

If $C_j$ is not a leaf in $T$, the edge $(C_j,\child_{\p_T}(C_j,1))$ should directly follow $(C_i,C_j)$ in $\bar{A}$ and thus $(x,y) = g(C_j,\child_{\p_T}(C_j,1))$ directly follows $(u,v)$ in $\hat{A}$.
Since $A_j$ starts from $v$, we have $A_v$ is a prefix of $A_j$.
If $A_v$ ends in some $z\in\hat{V}$ which is before $\first_{A_j}(x)$, then there must be a cluster $C_r$ such that $g(C_j,C_r)=(z,z')$ which implies that $C_r$ should have smaller than rank than $\child_{\p_T}(C_j,1)$ which contradicts to the definition of $\child_{\p_T}(C_j,1)$.
Therefore $A_v$ must ends in $x$ which implies that $(u,v),A_j,(x,y)$ is a connected path in $A'$.

\paragraph{Put them together.}
Since each $(u,v)$ in $\hat{A}$ can find a valid path in $A'$ to the following $(v,p)$ in $\hat{A}$, $A'$ is indeed a cycle and thus $A'$ is Euler tour of $T'$.
\end{proof}

\begin{theorem}[Euler tour join]\label{thm:euler_join}
Given (1) a partition $\cC=\{C_1,C_2,\cdots, C_k\}$ over a set of $n$ nodes $V$, (2) a spanning tree $T=(\cC,E)$ over $\cC$ and an arbitrary Euler tour $A$ of $T$, (3) an edge mapping $g$ where $\forall \{C_i,C_j\}\in E,g(e)=\{x,y\}$ and $x\in C_i,y\in C_j$, and (4) spanning trees $T_1,T_2,\cdots,T_k$ and corresponding Euler tours $A_1,A_2,\cdots,A_k$ for $C_1,C_2,\cdots, C_k$ respectively, there is an MPC algorithm (Algorithm~\ref{alg:euler_join}) that outputs a spanning tree $T'$ over $V$ and an Euler tour $A'$ of $T'$ using $O(1)$ rounds and $O(n)$ total space where the edges of $T'$ is the union of edges in $T_1,T_2,\cdots,T_k$ and edges $\{g(e)\mid e\in E\}$.
In addition, the algorithm is fully scalable.
\end{theorem}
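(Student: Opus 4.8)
\emph{Proof plan.} Correctness is already established: Lemma~\ref{lem:euler_join_correctness} shows that the output $(T',A')$ of Algorithm~\ref{alg:euler_join} is a spanning tree of $V$ with edge set $\bigcup_{i}E_i\cup\{g(e)\mid e\in E\}$ together with a valid Euler tour. So the entire proof reduces to checking that each of the (constantly many) steps of Algorithm~\ref{alg:euler_join} is implementable in $O(1)$ MPC rounds and $O(n)$ total space, in the fully scalable setting, by citing the primitives of Sections~\ref{sec:existing_mpc_algorithms} and the tree/Euler-tour primitives proved above.

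I would walk through the steps in order. Steps~2--3 (forming $\hat E$, $T'=(V,\hat E\cup\bigcup_i E_i)$, and $\hat V$) only relabel edges through $g$ and mark endpoints of inter-cluster edges; these are simultaneous-access operations, so by Theorem~\ref{thm:pram_simulation} (with Theorem~\ref{thm:sorting} to deduplicate) they cost $O(1)$ rounds and $O(n)$ space. Step~4 is one call of Lemma~\ref{lem:change_root} on $(T,A)$. Step~5 computes the parent pointers $\p_T$ from the rooted tour via Observation~\ref{obs:parent_by_euler_tour}: after sorting $A$ to locate $\first_A(C_i)$ for all nodes, each $C_i$ reads $v_{\first_A(C_i)-1}$ by simultaneous access. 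Step~6 applies Lemma~\ref{lem:change_root} in parallel to all $k$ pairs $(T_i,A_i)$ (and to the root's tour, rerooted at an arbitrary node of $\hat V$); since $\sum_i|T_i|=O(n)$, duplicating/routing the work across disjoint machine groups via Theorems~\ref{thm:duplication} and~\ref{thm:sorting} keeps this at $O(1)$ rounds and $O(n)$ space. Step~7 reads, for each $(C_i,C_j)\in E$ with $(x,y)=g(C_i,C_j)$, the key $\first_{A_i}(x)$ (a simultaneous access into the rerooted $A_i$), then sorts the children of every $C_i$ by this key to produce $\child_{\p_T}(C_i,\cdot)$; this is sorting plus PRAM simulation. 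Step~8 is exactly the hypothesis of Lemma~\ref{lem:mpc_euler_by_children} — we now have $\p_T$, the child ordering $\child_{\p_T}$, and the arbitrary Euler tour $A$ of $T$ — so it outputs the consistent tour $\bar A$ in $O(1)$ rounds and $O(n)$ space. Step~9 relabels each $(C_p,C_q)$ of $\bar A$ to $g(C_p,C_q)$ by simultaneous access, giving $\hat A$.

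The two steps needing slightly more care are Steps~10 and~11. For Step~10, within each $A_i$ I would mark (via a sort of $A_i$) the positions that are first appearances of terminals $x\in\hat V\cap C_i$, and then use a predecessor computation (Theorem~\ref{thm:predecessor}), or prefix-indexing inside sets (Theorems~\ref{thm:index_in_sets},~\ref{thm:prefix_sum_in_sets}), to split $A_i$ into the contiguous blocks $A_x$ running between consecutive marked positions; running this over all $i$ simultaneously is $O(1)$ rounds and $O(n)$ total space. For Step~11 I would sort $\hat A$ to find, for each $x\in\hat V$, the largest index $j$ with $v'_{(j\bmod(2k-2))+1}=x$, yielding $f$. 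A key observation to record is that $f$ is injective: $j$ determines the directed edge $e'_j$ of $\hat A$, and each directed edge occurs exactly once, so distinct $x$'s get distinct positions; hence Step~12 is a genuine instance of the sequence-insertion problem of total size $N=|\hat A|+\sum_x|A_x|=O(n)$, solved in $O(1)$ rounds and $O(n)$ space by Theorem~\ref{thm:sequence_insertion}. Composing the constantly many steps, each fully scalable with $O(1)$ rounds and $O(n)$ space, gives the claimed bounds, and Lemma~\ref{lem:euler_join_correctness} certifies correctness.

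The main obstacle here is bookkeeping rather than any conceptual hurdle: the hard algorithmic content is already packaged into Lemma~\ref{lem:mpc_euler_by_children} (the ``ordering-consistent'' Euler tour) and into the case analysis of Lemma~\ref{lem:euler_join_correctness}. The care needed in this proof is (i) running the $k$ per-component subroutines of Steps~6 and~10 truly in parallel while keeping total space linear and round count constant — handled by the duplication/sorting primitives for partitioning into machine groups — and (ii) making sure that the child ordering produced in Step~7 is \emph{exactly} the ordering fed into Lemma~\ref{lem:mpc_euler_by_children} in Step~8, since the Euler-tour splicing argument in Lemma~\ref{lem:euler_join_correctness} relies on this consistency. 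Everything else is a direct invocation of an already-proven primitive.
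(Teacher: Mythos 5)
Your proposal is correct and follows essentially the same route as the paper's proof: delegate correctness to Lemma~\ref{lem:euler_join_correctness}, then verify step by step that each line of Algorithm~\ref{alg:euler_join} is an $O(1)$-round, $O(n)$-space, fully scalable invocation of the primitives (sorting, PRAM simulation, change-of-root, the ordered-children Euler tour of Lemma~\ref{lem:mpc_euler_by_children}, predecessor, and sequence insertion). Your added observation that $f$ is injective, so Step~12 is a well-posed instance of sequence insertion of total size $O(n)$, is a nice explicit touch that the paper leaves implicit, but it does not change the argument.
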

\begin{proof}
The correctness of Algorithm~\ref{alg:euler_join} is proved by Lemma~\ref{lem:euler_join_correctness}.
In the remaining of the proof, we show how to implement Algorithm~\ref{alg:euler_join} in the MPC model using $O(1)$ rounds, $O(n)$ total space, and make the implemented algorithm fully scalable.

Computing $\hat{E}$ only require simultaneously access $g(e)$ for each $e\in E$.
According to Theorem~\ref{thm:pram_simulation}, this step only takes $O(1)$ rounds and $O(n)$ total space.
Computing $T'$ only requires local computation (i.e., renaming the tuples).
To compute $\hat{V}$, for each $\{x,y\}\in E$, it generates $x$ and $y$.
This only requires local computation.
Then we run sorting (Theorem~\ref{thm:sorting}) to remove duplicates.
To choose an arbitrary $C\in\mathcal{C}$ to be a root, we can sort (Theorem~\ref{thm:sorting}) all $C_1,C_2,\cdots,C_k$ via an arbitrary sorting key and broadcast (Theorem~\ref{thm:broadcasting}) the name of the cluster with the smallest sorting key.
We apply lemma~\ref{lem:change_root} to circular shift $A$ to make $A$ be an Euler tour of $T$ with respect to the root $C$.
These steps take $O(1)$ rounds, $O(n)$ total space, and are fully scalable.
To compute $\p_T(C_i)$ for all $C_i$ simultaneously, we use sorting (Theorem~\ref{thm:sorting}) to compute $\first_{A}(C_i)$, then we use Observation~\ref{obs:parent_by_euler_tour} to compute $\p_T(C_i)$ which only requires another call of sorting.
Therefore computing $\p_T(C_i)$ for all $C_i$ can be done using $O(1)$ rounds, $O(n)$ total space, and in fully scalable setting.
Computing $g(C_i,\p_T(C_i))$ only requires simultaneous accesses (Theorem~\ref{thm:pram_simulation}), and circular shifting all $A_i$ at the same time can be done in parallel using Lemma~\ref{lem:change_root}.
These steps take $O(1)$ rounds, $O(n)$ total space, and are fully scalable.
To sort the children of each $C_i$, we do the following: 
\begin{enumerate}
    \item For all $i\in[k],x\in C_i$, simultaneously compute $\first_{A_i}(x)$. 
    This step can be done via sorting (Theorem~\ref{thm:sorting}) which takes $O(1)$ rounds, $O(n)$ total space and is fully scalable.
    \item For all $j\in[k]$, compute $(x,y)=g(C_j,C_i)$ where $C_i=\p_T(C_j)$, add $C_j$ into a set $S_{C_i}$ and give $C_j$ a sorting key $\first_{A_i}(y)$.
    These steps only require simultaneous accesses (Theorem~\ref{thm:pram_simulation}), and thus they can be finished in $O(1)$ rounds and $O(n)$ total space. 
    The computation is also fully scalable.
    \item We use Theorem~\ref{thm:index_in_sets} to simultaneously compute the rank of each $C_j$ in the set $S_{C_i}$ where $C_i=\p_T(C_j)$.
    Suppose $C_j$ is the $l$-th element in $S_{C_i}$, then we set $\child(C_i,l)=C_j$.
    This step takes $O(1)$ rounds, $O(n)$ total space, and is fully scalable.
\end{enumerate}
Now we have parent pointers $\p_T(\cdot)$ for $T$, a mapping $\child(\cdot,\cdot)$ and an Euler tour $A$ of $T$.
We apply Lemma~\ref{lem:mpc_euler_by_children} to compute a new Euler tour $\bar{A}$ of $T$ with respect to the root $C$, such that $\forall C_i\in\cC$, $\first_{\bar{A}}(\child(C_i,1))<\first_{\bar{A}}(\child(C_i,2))<\cdots<\first_{\bar{A}}(\child(C_i,|\child(C_i)|))$.
According to Lemma~\ref{lem:mpc_euler_by_children}, this step only takes $O(1)$ rounds, $O(n)$ total space, and it is fully scalable.
For each $(C_p,C_q)$ appeared in $\bar{A}$, we simultaneously access $g(C_p,C_q)$ and replace $(C_p,C_q)$ with $g(C_p,C_q)$ to obtain $\hat{A}$.
This step can be done by Theorem~\ref{thm:pram_simulation}, and it takes $O(1)$ rounds, $O(n)$ total space, and it is fully scalable.
Next, we need to compute $A_x$ for $x\in\hat{V}$.
To do it, for each $(u,v)$ appeared in each $A_i$, we check whether its position is equal to $\first_{A_i}(u)$, if it is, we mark $(u,v)$ as $1$ otherwise, we mark $(u,v)$ as $0$.
Then, for each $(u,v)$ we find the closest $(u',v')$ appeared in $A_i$ such that $(u',v')$ is marked as $1$ and is appeared before $(u,v)$.
Then we put $(u,v)$ into $A_{u'}$ and its index can be derived from the distance between $(u,v)$ and $(u',v')$ in $A_i$.
Above steps can be implemented by simultaneous accesses (Theorem~\ref{thm:pram_simulation}) and predecessor algorithm (Theorem~\ref{thm:predecessor}).
Therefore, computing $A_x$ for all $x\in \hat{V}$ simultaneously only requires $O(1)$ rounds and $O(n)$ total space.
Computing $f$ can be done using sorting (Theorem~\ref{thm:sorting}) which can be done in $O(1)$ rounds and $O(n)$ total space and is fully scalable.
Finally, we run sequence insertion algorithm (Theorem~\ref{thm:sequence_insertion}) on $\hat{A},f,\{A_x\}$, to obtain $A'$, and it takes $O(1)$ rounds and $O(n)$ total space and is fully scalble as well.
\end{proof}

\subsection{Euler Tour via Hierarchical Decomposition}
Let $V$ be a set of nodes.
Let $\cC_0 \sqsupseteq \cC_1 \sqsupseteq \cdots \sqsupseteq \cC_L$ be a hierarchy of partitions on $V$ where $\cC_0=\{\{v\}\mid v\in V\}$ and $\cC_L=\{V\}$.
Let $E_1,E_2,\cdots,E_L$ be arbitrary sets of edges between nodes in $V$ such that $\forall l\in[L],\cC_{l-1}\oplus E_l = \cC_l$ and $|\cC_{l-1}|-|E_l|=|\cC_{l}|$, i.e., any edge in $E_l$ only connects two different components in $\cC_{l-1}$.
Then it is easy to see that $T=(V,\bigcup_{l\in [L]}E_l)$ is a spanning tree of $V$.
If $L$ is small, and for every $l\in [L]$, the tree obtained by regarding each component in $\cC_{l-1}$ as a node and regarding each pair of components in $\cC_{l-1}$ that are connected by an edge in $E_l$ as an edge has a low diameter, then we provide a novel MPC algorithm to efficiently compute an Euler tour of $T$.

In the MPC model, we use a leader mapping $\ell_l:V\rightarrow V$ to denote the partitioning $\cC_l$, i.e., $x,y$ are in the same component in $\cC_l$ iff $\ell_l(x)=\ell_l(y)$.
The edges $E_l$ are distributed arbitrarily on the machines.

\begin{algorithm}[tb]
   \caption{$\textsc{EulerTourViaHierarchicalDecomposition}(\{\cC_l\}\text{ (represented by $\ell_l$)}, \{E_l\})$: Generating an Euler tour of the spanning tree $T=(V,E_1\cup E_2\cup \cdots \cup E_L)$.}
   \label{alg:euler_tour}
\begin{algorithmic}[1]
\small
    \STATE \textbf{Inputs:} 
    Leader mappings $\ell_0(\cdot),\ell_1(\cdot),\cdots,\ell_L(\cdot)$ representing a hierarchical decomposition $\cC_0 \sqsupseteq \cC_1 \sqsupseteq \cdots \sqsupseteq \cC_L$ of $n$ nodes $V$ where $\cC_0=\{\{v\}\mid v\in V\}$ and $\cC_L=\{V\}$, and sets of edges $E_1,E_2,\cdots,E_L$ between nodes in $V$ such that $\forall l\in[L],\cC_{l-1}\oplus E_l = \cC_l$ and $|\cC_{l-1}|-|E_l|=|\cC_{l}|$.
    \STATE Let $R=\log L$. \COMMENT{We suppose $L$ is a power of $2$.}
    \STATE \COMMENT{Notation:  $\forall l\in[L]\cup \{0\},\forall C\in \cC_l,\forall l'\leq l$, we denote 
    $$C^{(l')}=\{C'\in \cC_{l'}\mid C'\subseteq C\},$$ i.e., interpret $C$ as a set of clusters at level $l'$.
    Similarly, $\forall l\in[L]\cup \{0\},\forall E\subseteq \bigcup_{j=l}^{L} E_j,\forall l'< l,$ we denote 
    $$E^{(l')}=\{\{C_x,C_y\}\mid \{x,y\}\in E, x\in C_x,y\in C_y,\text{ and }C_x,C_y\in \cC_{l'}\},$$
    i.e., interpret $E$ as a set of edges between nodes at level $l'$, where each node denotes a cluster at level $l'$.
    Let $E(C)$ denotes the subset of edges where both end nodes are in $C$.
    For $0\leq l'\leq l\leq L$, we denote $g^{(l\rightarrow l')}$ as following, if there is an edge $\{x,y\}\in \bigcup_{j=l}^{L} E_j,$ then $g^{(l\rightarrow l')}(\{C_x,C_y\})=\{C'_x,C'_y\}$ where $C_x,C_y\in \cC_l,C'_x,C'_y\in\cC_{l'},x\in C'_x\subseteq C_x,y\in C'_y\subseteq C_y$. 
    }
    \FOR{$l\in[L]$ in parallel}
        \FOR{$C\in \cC_l$ in parallel}
            \STATE Compute the Euler tour $A_l^{(0)}(C)$ of the tree $T_l^{(0)}(C)=(C^{(l-1)},E_l(C)^{(l-1)})$
        \ENDFOR
    \ENDFOR
    \FOR{$r:=1\rightarrow R$}
        \FOR{$l\in\{1\cdot 2^r,2\cdot 2^r,3\cdot 2^r,4\cdot 2^r, \cdots, L\}$ in parallel}
            \FOR{$C\in \cC_l$ in parallel}
                \STATE $T_l^{(r)}(C), A_l^{(r)}(C)\gets \textsc{EulerTourJoin}(\mathcal{C},T,A,g,\{T_i\},\{A_i\})$ \COMMENT{Algorithm~\ref{alg:euler_join}} where: 
                
                     $\mathcal{C}=C^{(l-2^{r-1})}$,
                     $T = T_l^{(r-1)}(C)$,
                     $A = A_l^{(r-1)}(C)$,
                     $g = g^{(l-2^{r-1}\rightarrow l-2^r)}$,
                     
                     $\left\{T_i\right\}=\left\{T_{l-2^{r-1}}^{(r-1)}(C')\mid C'\in C^{(l-2^{r-1})}\right\}$,
                     $\left\{A_i\right\}=\left\{A_{l-2^{r-1}}^{(r-1)}(C')\mid C'\in C^{(l-2^{r-1})}\right\}$
            \ENDFOR
        \ENDFOR
    \ENDFOR
 \STATE \textbf{Output:} Euler tour $A_L^{(R)}(V)$ of spanning tree $T_L^{(R)}(V)$.
\end{algorithmic}
\end{algorithm}

\begin{lemma}(correctness of Algortihm~\ref{alg:euler_tour})\label{lem:correctness_intermediate_euler_tour}
For $r\in \{0\}\cup [R],l\in [L]$ with $l\bmod 2^r = 0$, $\forall C\in \cC_l$, $T_l^{(r)}(C)$ and $A_l^{(r)}(C)$ satisfy following properties:
\begin{enumerate}
\item $T_l^{(r)}(C)$ is a spanning tree where each node in the tree denotes a cluster $C'\in\cC_{l-2^r}$ and $C'\subseteq C$.
\item The edge set of $T_l^{(r)}(C)$ is $\{\{C_x,C_y\} \mid \{x,y\}\in \bigcup_{j=l-2^r+1}^l E_j, x \in C_x,y\in C_y, C_x,C_y\in \cC_{l-2^r},C_x,C_y\subseteq C\}$.
\item $A^{(r)}_l(C)$ is an Euler tour of $T_l^{(r)}(C)$.
\end{enumerate}
\end{lemma}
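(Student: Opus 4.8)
The plan is to prove the three claims simultaneously by induction on $r$, since the claims for level-$r$ data are defined in terms of level-$(r-1)$ data. For the base case $r = 0$: each $T_l^{(0)}(C)$ is explicitly constructed as the tree $(C^{(l-1)}, E_l(C)^{(l-1)})$, and $A_l^{(0)}(C)$ is an Euler tour of it, so we need only check that this is indeed a spanning tree. This follows from the hypothesis that $\cC_{l-1} \oplus E_l = \cC_l$ together with $|\cC_{l-1}| - |E_l| = |\cC_l|$: the first condition says the edges $E_l$, viewed at level $l-1$, connect all the level-$(l-1)$ clusters inside a given level-$l$ cluster $C$ into one component; the second (a global count) forces $E_l$ to be a forest, hence $E_l(C)^{(l-1)}$ is a spanning tree of $C^{(l-1)}$ for each $C \in \cC_l$. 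Its edge set is exactly $\{\{C_x,C_y\} : \{x,y\}\in E_l, x\in C_x, y\in C_y, C_x,C_y\in\cC_{l-1}, C_x,C_y\subseteq C\}$, which is the $j = l-2^0+1 = l$ case of Claim 2, so the base case holds.

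For the inductive step, fix $r \ge 1$ and $l$ with $l \bmod 2^r = 0$, and fix $C \in \cC_l$. Note $l - 2^{r-1}$ is divisible by $2^{r-1}$, and $l$ itself is divisible by $2^{r-1}$, so by the inductive hypothesis we have $T_l^{(r-1)}(C), A_l^{(r-1)}(C)$ (a spanning tree of $C^{(l-2^{r-1})}$ with Euler tour) and, for each $C' \in C^{(l-2^{r-1})}$, we have $T_{l-2^{r-1}}^{(r-1)}(C'), A_{l-2^{r-1}}^{(r-1)}(C')$ (a spanning tree of $(C')^{(l-2^r)}$ with Euler tour). The first key step is to check that the inputs fed to $\textsc{EulerTourJoin}$ match its required input format: $\mathcal{C} = C^{(l-2^{r-1})}$ is a partition of the node set $V' := C^{(l-2^r)}$ (interpreting each $C' \in C^{(l-2^{r-1})}$ as the set of level-$(l-2^r)$ clusters it contains); $T = T_l^{(r-1)}(C)$ is a spanning tree over $\mathcal{C}$ with Euler tour $A$; the trees $T_i = T_{l-2^{r-1}}^{(r-1)}(C')$ are spanning trees of the components $C' \in \mathcal{C}$ with tours $A_i$; and $g = g^{(l-2^{r-1} \to l-2^r)}$ restricted to the edges of $T$ is a valid edge-refinement map, since every edge $\{C_x, C_y\} \in E(T_l^{(r-1)}(C))$ came (by Claim 2 of the inductive hypothesis) from an edge $\{x,y\} \in \bigcup_{j=l-2^r+1}^{l-2^{r-1}} E_j$ with $x \in C_x, y \in C_y$, and $g^{(l-2^{r-1}\to l-2^r)}$ maps it to $\{C'_x, C'_y\}$ where $C'_x, C'_y \in \cC_{l-2^r}$ are the sub-clusters containing $x, y$ — these lie in $V'$ and inside the appropriate components.

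Given that the inputs are valid, \Cref{thm:euler_join} (via \Cref{lem:euler_join_correctness}) tells us that $\textsc{EulerTourJoin}$ outputs a spanning tree $T'$ of $V' = C^{(l-2^r)}$ whose edge set is $\{g(e) \mid e \in E(T)\} \cup \bigcup_{C'} E(T_{l-2^{r-1}}^{(r-1)}(C'))$, together with an Euler tour $A'$ of $T'$. Setting $T_l^{(r)}(C) = T'$ and $A_l^{(r)}(C) = A'$, Claims 1 and 3 are immediate. For Claim 2, we compute the edge set: $\{g(e) : e \in E(T)\}$ equals, by the inductive Claim 2 for $T_l^{(r-1)}(C)$ and the definition of $g^{(l-2^{r-1}\to l-2^r)}$, the set of level-$(l-2^r)$ edges coming from $\bigcup_{j=l-2^r+1}^{l-2^{r-1}} E_j$; and $\bigcup_{C' \in C^{(l-2^{r-1})}} E(T_{l-2^{r-1}}^{(r-1)}(C'))$ equals, by the inductive Claim 2 applied to each $C'$, the set of level-$(l-2^r)$ edges coming from $\bigcup_{j=l-2^r+1}^{l-2^{r-1}} E_j$ — wait, one must index carefully here: the second family contributes edges from $\bigcup_{j = (l-2^{r-1})-2^{r-1}+1}^{l-2^{r-1}} E_j = \bigcup_{j=l-2^r+1}^{l-2^{r-1}} E_j$, while the first contributes from $\bigcup_{j=l-2^{r-1}+1}^{l} E_j$; their union is $\bigcup_{j=l-2^r+1}^{l} E_j$, exactly as Claim 2 requires. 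This closes the induction. The main obstacle I anticipate is purely bookkeeping: keeping the level indices $l - 2^r$, $l - 2^{r-1}$, $l$ and the ranges of the $E_j$ unions consistent through the recursion, and verifying carefully that $\bigcup_{j=l-2^r+1}^{l-2^{r-1}} E_j \cup \bigcup_{j=l-2^{r-1}+1}^{l} E_j = \bigcup_{j=l-2^r+1}^{l} E_j$ — i.e., that the two halves of the recursion partition the relevant block of levels exactly — and that this holds at every node $C$ in parallel without overlap. Everything else is a direct appeal to \Cref{thm:euler_join}.
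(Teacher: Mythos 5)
Your proof is correct and follows essentially the same route as the paper's: induction on $r$, with the inductive step reduced to a single application of \Cref{thm:euler_join} after verifying that the level-$(r-1)$ trees, tours, and the map $g^{(l-2^{r-1}\to l-2^r)}$ form a valid input, and with the edge-set claim closed by observing that $\bigcup_{j=l-2^r+1}^{l-2^{r-1}} E_j$ and $\bigcup_{j=l-2^{r-1}+1}^{l} E_j$ partition the block $\bigcup_{j=l-2^r+1}^{l} E_j$. Your base case is in fact slightly more careful than the paper's (which asserts it as obvious), deriving the spanning-tree property of $(C^{(l-1)}, E_l(C)^{(l-1)})$ from the hypotheses $\cC_{l-1}\oplus E_l=\cC_l$ and $|\cC_{l-1}|-|E_l|=|\cC_l|$.
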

\begin{proof}
Our proof is by induction.
The base case is $r=0$, the claimed properties of $T_l^{(0)}(C),A_l^{(0)}(C)$ obviously hold for all $l\in [L],C\in\cC_l$.

Now consider the case that the claimed properties hold for $T_l^{(r-1)}(C),A_l^{(r-1)}(C)$  for all $l\in [L]\text{ with }l \bmod 2^{r-1}=0$ and all $C\in\cC_l$.
Consider a cluster $C\in\cC_l$ and how we compute $T_l^{(r)}(C)$ and $A_l^{(r)}(C)$.
By our induction hypothesis, 
$T_l^{(r-1)}(C)$ is a spanning tree where each node in the tree denotes a cluster $C'\in\cC_{l-2^{r-1}}$ and $C'\subseteq C$.
$A_l^{(r-1)}(C)$ is an Euler tour of $T_l^{(r-1)}(C)$.
In addition, the edge set of $T_l^{(r-1)}(C)$ is $\{\{C_x,C_y\} \mid \{x,y\}\in \bigcup_{j=l-2^{r-1}+1}^l E_j, x \in C_x,y\in C_y, C_x,C_y\in \cC_{l-2^{r-1}},C_x,C_y\subseteq C\}$.
Then for each $C'\in \cC_{l-2^{r-1}}$ and $C'\subseteq C$, we have that $T_{l-2^{r-1}}^{(r-1)}(C')$ is a spanning tree where each node in the tree denotes a cluster $C''\in\cC_{l-2^{r}}$ and $C''\subseteq C'$.
$A_{l-2^{r-1}}^{(r-1)}(C')$ is an Euler tour of $T_{l-2^{r-1}}^{(r-1)}(C')$.
The edges of $T_{l-2^{r-1}}^{(r-1)}(C')$ is $\{\{C_x,C_y\} \mid \{x,y\}\in \bigcup_{j=l-2^{r}+1}^{l-2^{r-1}} E_j, x \in C_x,y\in C_y, C_x,C_y\in \cC_{l-2^{r}},C_x,C_y\subseteq C'\}$.
Since $g^{(l-2^{r-1}\rightarrow l-2^r)}$ maps each edge in $\{\{C_x,C_y\} \mid \{x,y\}\in \bigcup_{j=l-2^{r-1}+1}^l E_j, x \in C_x,y\in C_y, C_x,C_y\in \cC_{l-2^{r-1}},C_x,C_y\subseteq C\}$ to a corresponding edge in $\{\{C_x,C_y\} \mid \{x,y\}\in \bigcup_{j=l-2^{r-1}+1}^l E_j, x \in C_x,y\in C_y, C_x,C_y\in \cC_{l-2^{r}},C_x,C_y\subseteq C\}$.
By applying Theorem~\ref{thm:euler_join}, we have that $T_l^{(r)}(C)$ is a spanning tree where each node in the tree denotes a cluster $C'\in\cC_{l-2^r}$ and $C'\subseteq C$.
$A^{(r)}_l(C)$ is an Euler tour of $T_l^{(r)}(C)$.
In addition, The edge set of $T_l^{(r)}(C)$ is $\{\{C_x,C_y\} \mid \{x,y\}\in \bigcup_{j=l-2^r+1}^l E_j, x \in C_x,y\in C_y, C_x,C_y\in \cC_{l-2^r},C_x,C_y\subseteq C\}$.
Therefore, the stated properties also hold for $T_l^{(r)}(C)$ and $A_l^{(r)}(C)$.
\end{proof}

If we regard each singleton cluster as the node itself, we get the following corollary.
\begin{corollary}[Correctness of final output of Algorithm~\ref{alg:euler_tour}]\label{cor:correctness_euler_tour}
At the end of Algorithm~\ref{alg:euler_tour},
 $T_L^{(R)}(V)$ is a spanning tree over $V$.
 The edge set of $T_L^{(R)}(V)$ is $\bigcup_{j=1}^L E_j$.
 $A^{(R)}_L(V)$ is an Euler tour of $T_L^{(R)}(V)$.
\end{corollary}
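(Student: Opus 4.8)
The plan is to obtain the corollary as an immediate specialization of Lemma~\ref{lem:correctness_intermediate_euler_tour}, applied with $r = R$, $l = L$, and $C = V$. First I would verify that this instantiation satisfies the hypotheses of the lemma: we assumed $L$ is a power of $2$ and set $R = \log_2 L$, so $2^R = L$ and hence $l \bmod 2^r = L \bmod 2^R = 0$; moreover $\cC_L = \{V\}$, so indeed $C = V \in \cC_L$. Since Algorithm~\ref{alg:euler_tour} outputs precisely $T_L^{(R)}(V)$ and $A_L^{(R)}(V)$, the three conclusions of Lemma~\ref{lem:correctness_intermediate_euler_tour} at $(r,l,C) = (R, L, V)$ are exactly statements about the algorithm's output.

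Next I would unwind what those conclusions say under this instantiation, using that $L - 2^R = 0$. Property 1 of the lemma states that $T_L^{(R)}(V)$ is a spanning tree whose nodes are the clusters $C' \in \cC_{0}$ with $C' \subseteq V$; since $\cC_0 = \{\{v\} \mid v \in V\}$, every node is a singleton $\{v\}$, and under the canonical identification $\{v\} \leftrightarrow v$ this is a spanning tree over $V$ itself. Property 2 states that the edge set of $T_L^{(R)}(V)$ is $\{\{C_x, C_y\} \mid \{x,y\} \in \bigcup_{j = L - 2^R + 1}^{L} E_j,\ x \in C_x,\ y \in C_y,\ C_x, C_y \in \cC_{0},\ C_x, C_y \subseteq V\}$; the index range is $\bigcup_{j=1}^{L} E_j$ because $L - 2^R + 1 = 1$, and each $C_x, C_y$ is a singleton, so each edge $\{C_x, C_y\}$ is just $\{x,y\}$ under the identification. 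Hence the edge set of $T_L^{(R)}(V)$ equals $\bigcup_{j=1}^{L} E_j$. Property 3 states that $A_L^{(R)}(V)$ is an Euler tour of $T_L^{(R)}(V)$, which is the remaining claim verbatim.

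I do not expect any real obstacle here beyond bookkeeping; the two points requiring care are (i) confirming the parameter choice meets the divisibility hypothesis $l \bmod 2^r = 0$ of Lemma~\ref{lem:correctness_intermediate_euler_tour} — which is exactly why we assumed $L$ is a power of $2$ — and (ii) being explicit that the identification of $\cC_0$ with $V$ converts ``spanning tree over singleton clusters'' into ``spanning tree over $V$'' and ``edges between singleton clusters'' into ``edges of $\bigcup_j E_j$''. All the substance has already been discharged in the inductive proof of Lemma~\ref{lem:correctness_intermediate_euler_tour}, which in turn rests on the correctness of \textsc{EulerTourJoin} (Theorem~\ref{thm:euler_join}).
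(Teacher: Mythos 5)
Your proposal is correct and takes exactly the paper's route: the paper derives the corollary from Lemma~\ref{lem:correctness_intermediate_euler_tour} by the single remark that one should ``regard each singleton cluster as the node itself,'' which is precisely your instantiation $(r,l,C)=(R,L,V)$ (valid since $2^R=L$ gives $L\bmod 2^R=0$) together with the identification of $\cC_0$ with $V$. Your write-up just makes the bookkeeping explicit.
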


\begin{theorem}[Euler tour via hierarchical decomposition]\label{thm:euler_tour_via_hc}
Consider leader mappings $\ell_0(\cdot),\ell_1(\cdot),\cdots,\ell_L(\cdot)$ representing a hierarchical decomposition $\cC_0 \sqsupseteq \cC_1 \sqsupseteq \cdots \sqsupseteq \cC_L$ of $n$ nodes $V$ where $\cC_0=\{\{v\}\mid v\in V\}$ and $\cC_L=\{V\}$, and arbitrary sets of edges $E_1,E_2,\cdots,E_L$ between nodes in $V$ such that $\forall l\in[L],\cC_{l-1}\oplus E_l = \cC_l$ and $|\cC_{l-1}|-|E_l|=|\cC_{l}|$.
Let $\Lambda$ be an upper bound such that $\forall l\in [L]$, if we regard each cluster in $\cC_{l-1}$ as a node and each pair of clusters with an edge in $E_l$ connecting them as an edge, the diameter of the tree is at most $\Lambda$.
Then there is a fully scalable MPC algorithm (Algorithm~\ref{alg:euler_tour}) which takes $O(\log(L)+\log(\Lambda))$ rounds and $O(nL+n^{1+\varepsilon})$ total space outputting an Euler tour of the spanning tree $T=\left(V,\bigcup_{l\in [L]} E_l\right)$ where $\varepsilon>0$ is an arbitrarily small constant.
\end{theorem}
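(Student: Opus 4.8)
The correctness of the output is essentially free here: \Cref{cor:correctness_euler_tour} already tells us that the final object $A_L^{(R)}(V)$ returned by Algorithm~\ref{alg:euler_tour} is an Euler tour of $T_L^{(R)}(V)=(V,\bigcup_{j=1}^L E_j)=T$, so the work that remains is to bound the round complexity and total space of the MPC implementation and to observe that it is fully scalable. The plan is to organize the proof around the two phases of the algorithm: a \emph{base phase} computing, for every $l\in[L]$ and every $C\in\cC_l$, an Euler tour $A_l^{(0)}(C)$ of the contracted tree $T_l^{(0)}(C)=(C^{(l-1)},E_l(C)^{(l-1)})$; and a \emph{doubling phase} of $R=\log_2 L$ iterations (padding $L$ to a power of $2$ as the algorithm does), where iteration $r$ processes every active level $l$ with $2^r\mid l$ and every $C\in\cC_l$ by a single call to $\textsc{EulerTourJoin}$ (Algorithm~\ref{alg:euler_join}).

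For the base phase, the first thing I would check is that the hypothesis on $\Lambda$ applies to the trees $T_l^{(0)}(C)$: since $\cC_{l-1}\oplus E_l=\cC_l$ with $|E_l|=|\cC_{l-1}|-|\cC_l|$, the set $E_l$ viewed over the node set $\cC_{l-1}$ is a spanning forest whose connected components are exactly the $T_l^{(0)}(C)$, so each has diameter at most $\Lambda$. Hence I would compute all the $A_l^{(0)}(C)$ by invoking the low-diameter Euler-tour primitive (\Cref{thm:euler_tour_low_diameter}), which costs $O(\log\Lambda)$ rounds; the trees at a fixed level $l$ are vertex-disjoint with total size $|\cC_{l-1}|\le n$, and all $L$ levels are handled in parallel after duplicating the hierarchy via \Cref{thm:duplication}. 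For the doubling phase, each $\textsc{EulerTourJoin}$ call takes $O(1)$ rounds and linear space by \Cref{thm:euler_join}, and the calls within an iteration are independent, so the phase costs $O(R)=O(\log L)$ rounds; assembling the arguments of each call — in particular the maps $g^{(l-2^{r-1}\rightarrow l-2^r)}$, which simply relabel an edge $\{x,y\}$ by the level-$(l-2^r)$ clusters $\ell_{l-2^r}(x),\ell_{l-2^r}(y)$ of its endpoints — is a routine combination of sorting and PRAM simulation (\Cref{thm:sorting,thm:pram_simulation}) in $O(1)$ rounds and linear space. Altogether this yields $O(\log L+\log\Lambda)$ rounds.

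For the space bound, I would argue (using \Cref{lem:correctness_intermediate_euler_tour}) that at a fixed level $l$ the object $(T_l^{(r)}(C),A_l^{(r)}(C))$ has size $O(|C^{(l-2^r)}|)$, so summing over $C\in\cC_l$ gives $O(|\cC_{l-2^r}|)=O(n)$; summing over the $\le L/2^r$ active levels and reusing space across the $R$ sequential iterations, the doubling phase needs $O(nL)$ space, and the base phase needs $O(nL)$ for the tours themselves plus the overhead of \Cref{thm:euler_tour_low_diameter}, which on a forest of total size at most $n$ costs $O(n^{1+\varepsilon_0})$ for an internal parameter $\varepsilon_0$ chosen small relative to $\varepsilon$; parallelizing over levels yields $O(nL+n^{1+\varepsilon})$ overall. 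Full scalability follows because every primitive invoked — sorting, PRAM simulation, duplication, low-diameter Euler tour, and Euler tour join — is fully scalable, and the cross-cluster and cross-level parallelism is realized purely by sorting and bounded duplication. The step I expect to be the main obstacle is not any single deep argument but the bookkeeping of the doubling phase: one must verify that the inductive invariants of \Cref{lem:correctness_intermediate_euler_tour} match exactly how the arguments $\cC$, $T$, $A$, $g$, $\{T_i\}$, $\{A_i\}$ of each $\textsc{EulerTourJoin}$ call are assembled from the granularity-$(l-2^{r-1})$ data produced by the previous iteration (so the preconditions of \Cref{thm:euler_join} genuinely hold), and that duplicating $\ell_0,\dots,\ell_L$ and $E_1,\dots,E_L$ onto the correct groups of machines keeps the peak total space at $O(nL+n^{1+\varepsilon})$ while still letting every cluster at every active level be served within $O(1)$ MPC rounds.
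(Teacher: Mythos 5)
Your proposal is correct and follows essentially the same route as the paper's proof: correctness is delegated to \Cref{cor:correctness_euler_tour}, the base phase invokes the low-diameter Euler-tour primitive (\Cref{thm:euler_tour_low_diameter}) in parallel over the disjoint contracted trees for $O(\log\Lambda)$ rounds, and the $R=\log L$ doubling iterations each cost $O(1)$ rounds via \Cref{thm:euler_join}, with the same space accounting and the same appeal to full scalability of the underlying primitives. Your explicit observation that $E_l$ forms a spanning forest over $\cC_{l-1}$ (so the diameter hypothesis applies to each $T_l^{(0)}(C)$) is a small clarification the paper leaves implicit, but it does not change the argument.
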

\begin{proof}
The correctness is proved by Corollary~\ref{cor:correctness_euler_tour}.
In the following, we show how to implement Algorithm~\ref{alg:euler_tour} in the MPC model.
Note that we store all leader mappings on machines and thus it takes total space $O(nL)$.
The inputs for computing $T_l^{(0)}(C)$ over all $l\in[L],C\in\cC_l$ are disjoint, thus we can use sorting (Theorem~\ref{thm:sorting}) to assign each Euler tour computation to a disjoint group of machines.
Then we independently apply Corollary~\ref{thm:euler_tour_low_diameter} for each task in parallel.
Therefore, we use $O(\log\Lambda)$ rounds and $O(n^{1+\varepsilon})$ total space to compute $T_l^{(0)}(C)$ for all $l\in[L],C\in\cC_l$.
The algorithm is fully scalable.

Then, we run $R$ iterations.
In each iteration, we run for all considered $l$ and all $C\in\cC_l$ in parallel. 
We need to call \textsc{EulerTourJoin} (Algorithm~\ref{alg:euler_join}) for each $l$ and $C$.
Due to Lemma~\ref{lem:correctness_intermediate_euler_tour}, the inputs for subroutines of \textsc{EulerTourJoin} in one iteration are disjoint, and the total size of inputs are at most $O(n)$.
Therefore, we can use sorting (Theorem~\ref{thm:sorting}) to assign each subroutine to a disjoint group of machines.
Note that, we only need to compute one $g^{(l-2^{r-1}\rightarrow l-2^r)}$ for all $C\in\cC_l$.
We only need simultaneous accesses (Theorem~\ref{thm:pram_simulation}) to $\ell_{l-2^{r-1}}$ and $\ell_{l-2^{r-1}}$ to compute $g^{(l-2^{r-1}\rightarrow l-2^r)}$.
Therefore, to prepare the inputs for each subroutine and assign machines for each subtask, we need $O(1)$ rounds, $O(nL)$ total space and the process is fully scalable.
Then, for each subtask of calling \textsc{EulerTourJoin}, we apply Theorem~\ref{thm:euler_join}, which takes $O(n)$ space in total. 
It takes $O(1)$ rounds and is fully scalalbe.
Since we have $O(R)=O(\log L)$ iterations, it takes $O(R)$ rounds.

Therefore, the total number of rounds is $O(\log R + \log \Lambda)$, and the total space required is at most $O(n^{1+\varepsilon}+nL)$ where $\varepsilon>0$ is an arbitrarily small constant.
The entire algorithm is fully scalable.
\end{proof}

\subsection{Euler Tour of Approximate Euclidean MST}
Then, by applying Algorithm~\ref{alg:euler_tour} on the approximate MST and the corresponding hierarchical decomposition $\hat{\cP}_1,\hat{\cP}_2,\hat{\cP}_4,\cdots,\hat{\cP}_{\alpha^H}$ that we obtained in Section~\ref{sec:offline_algorithm} and Section~\ref{sec:mpc_mst}, we are able to output an Euler tour of our approximate MST.

\begin{theorem}[Approximate Euclidean MST with Euler tour]\label{thm:main_mst}
Given $n$ points from $\mathbb{R}^d$, there is a fully scalable MPC algorithm which outputs an $O(1)$-approximate MST with probability at least $0.99$.
In addition, the algorithm also outputs an Euler tour of the outputted approximate MST.
The number of rounds of the algorithm is $O(\log\log(n)\cdot \log\log\log(n))$.
The total space required is at most $O(nd+n^{1+\varepsilon})$ where $\varepsilon>0$ is an arbitrary small constant.
\end{theorem}
\begin{proof}
The approximate Euclidean MST is shown by Theorem~\ref{thm:mpc_mst}. 
Note that at the end of Part 3 (Algorithm~\ref{alg:part-3}) of our algorithm, we also obtain a hierarchical decomposition: $\hat{\cP}_1 \sqsupseteq \hat{\cP}_2 \sqsupseteq \hat{\cP}_4 \sqsupseteq \cdots \sqsupseteq \hat{\cP}_{\alpha^H}$ which has at most $\polylog(n)$ levels.
In addition, we showed that $\hat{\cP}_{t/2}\oplus (F_t\cup \bigcup_{i\in[h]} E_t^{(i)}) = \hat{\cP}_t$ and $|\hat{\cP}_{t/2}| - |F_t\cup \bigcup_{i\in[h]} E_t^{(i)}| = |\hat{\cP}_t|$.
\begin{claim}
For any $t$, if we regard each cluster in $\hat{\cP}_{t/2}$ as a node and each pair of clusters in $\hat{\cP}_{t/2}$ that is connected by an edge in $F_t\cup \bigcup_{i\in[h]} E_t^{(i)}$ as an edge, then each tree has diameter at most $2^{O(h)}=\polylog(n)$.
\end{claim}
\begin{proof}
If we just run one round leader compression, each tree can have diameter at most $2$.
Since in every round of leader compression, we merge clusters using a star which may blow up the diameter by at most a factor of $5$.
After leader compression, we create stars to merge incomplete components.
This operation can blow up the diameter by a factor of $5$ as well.
Therefore, the diameter can be at most $5^{O(h)}=\polylog(n)$.
\end{proof}
Then, by plugging $\{\hat{\cP}_t \mid t=2^j,j\in[O(\log n)]\}$ and $\{F_t\cup \bigcup_{i\in[h]} E_t^{(i)}\mid t=2^j,j\in[O(\log n)]\}$ into Theorem~\ref{thm:euler_tour_via_hc}, we can use additional $O(n^{1+\varepsilon})$ total space and $O(\log\log(n))$ number of rounds to compute an Euler tour of our approximate MST.
The algorithm is fully scalable.
\end{proof}

The Euclidean travelling salesman problem (TSP) is stated as the following: Given $n$ points, the goal is to output a cycle over points such that each point is visited exactly once such that the total length of the cycle is minimized.

\begin{corollary}[Approximate Euclidean TSP]\label{cor:tsp}
Given $n$ points from $\mathbb{R}^d$, there is a fully scalable MPC algorithm which outputs an $O(1)$-approximate TSP solution with probability at least $0.99$.
The number of rounds of the algorithm is $O(\log\log(n)\cdot \log\log\log(n))$.
The total space required is at most $O(nd+n^{1+\varepsilon})$ where $\varepsilon>0$ is an arbitrary small constant.
\end{corollary}
\begin{proof}
It is easy to observe that the MST cost and the length of Euler tour of the MST are the same up to a factor of $2$.
Since optimal TSP cost is greater than MST cost and less than the cost of Euler tour, we only need to output a shortcut Euler tour of a constant approximate MST to get a constant approximate TSP solution.

By applying Theorem~\ref{thm:main_mst}, we obtain an Euler tour of $O(1)$ approximate MST.
Then we can use sorting and (re)indexing (Theorem~\ref{thm:sorting}) to firstly only keep the first appearance of each point on the Euler tour and then recompute the index of each point in the deduplicated tour sequence to provide an $O(1)$-approximate TSP solution.
Above steps only takes $O(1)$ additional MPC rounds and $O(n)$ additional total space.
These steps are fully scalable.
\end{proof}

\bibliography{cluster}
\end{document}